\RequirePackage{silence}

\documentclass[11pt]{article}
\usepackage[in]{fullpage}
\usepackage[normalem]{ulem}
\usepackage{parskip}
\usepackage{graphicx}

\usepackage{amsmath}
\usepackage{amsthm}
\usepackage{amssymb}
\usepackage{amsfonts}
\usepackage{bbm}
\usepackage{enumitem}
\usepackage[T1]{fontenc}
\usepackage{keytheorems}
\usepackage{mathtools}
\usepackage{mfirstuc}
\usepackage[table]{xcolor}
\usepackage{xspace}

\usepackage{alglib-tomer}
\makeatletter
\newcounter{mylabelcounter}
\newcommand{\mkname}[2]{%
\refstepcounter{mylabelcounter}%
\immediate\write\@auxout{%
  \string\newlabel{#2}{{1}{\thepage}{{\unexpanded{#1}}}{mylabelcounter.\number\value{mylabelcounter}}{}}%
}%
}
\makeatother

\newkeytheorem{LEMMA}      [parent=section, title=Lemma]
\newkeytheorem{theorem}    [sibling=LEMMA,  title=Theorem]
\newkeytheorem{observation}[sibling=LEMMA,  title=Observation]

\theoremstyle{definition}
\newkeytheorem{DEFINITION}[sibling=LEMMA, title=Definition]

\newcommand\reflemma[1]{\ref{lemma:#1}}
\newcommand\pagenumlemma[1]{\pageref{lemma:#1}}

\newcommand\refalg[1]{\ref{fig:alg:#1}}

\newcommand\defproc[2]{\mkname{\textsf{#2}}{procname:#1}}
\makeatletter
\newcommand\proc@starred[1]{\nameref*{procname:#1}\xspace}
\newcommand\proc@unstarred[1]{{\nameref{procname:#1}}\xspace}
\newcommand\proc{\@ifstar{\proc@starred}{\proc@unstarred}}
\makeatother

\newenvironment{proc-algo}[3][]{
\ifstrempty{#1}{\begin{algo}}{\begin{algo}[#1]}%
\procname{$\proc*{#2}(\ensuremath{#3})$}%
\label{fig:alg:#2}%
}{\end{algo}}
\newenvironment{lemma}[2][ ]{\begin{LEMMA}[note={#1}, store=lemma:#2, label=lemma:#2]}{\end{LEMMA}}
\newenvironment{definition}[2][ ]{\begin{DEFINITION}[note={#1}, store=def:#2, label=def:#2]}{\end{DEFINITION}}
\newcommand\repprovelemma[1]{\label{proofof:lemma:#1} \getkeytheorem{lemma:#1}}

\newcommand\algforprocshort[1]{The pseudocode for \proc{#1} is provided as Algorithm \ref{fig:alg:#1}.\xspace}
\newcolumntype{Y}{>{\centering\arraybackslash}X}

\newcommand\accept[0]{\textsc{accept}\xspace}
\newcommand\reject[0]{\textsc{reject}\xspace}
\newcommand\abs[1]{\left|{#1}\right|}
\newcommand\cond{\middle |}
\newcommand\floor[1]{{\left\lfloor{#1}\right\rfloor}}
\newcommand\ceil[1]{{\left\lceil{#1}\right\rceil}}
\newcommand\est{\mathrm{est}}

\newcommand\supp{\mathrm{supp}}
\newcommand\poly{\mathrm{poly}}

\newcommand{\Bin}{\mathrm{Bin}}
\newcommand{\Poi}{\mathrm{Poi}}

\newcommand\E{\mathop{{\rm E}\/}}
\newcommand\Var{\mathop{{\rm Var}\/}}

\newcommand\dtv{\ensuremath{d_\mathrm{TV}}}
\newcommand\kl{\ensuremath{\mathrm{KL}}}
\newcommand\dkl{\ensuremath{D_\kl}}
\DeclarePairedDelimiterX{\infdivx}[2]{(}{)}{%
  #1\;\delimsize\|\;#2%
}
\newcommand{\DKL}{\dkl\infdivx*}
\newcommand\eps{\varepsilon}
\usepackage{hyperref}
\hypersetup{
	colorlinks,
	citecolor=blue,
	filecolor=black,
	linkcolor=black,
	urlcolor=black,
	linktoc=all
}

\newcommand\algimplicitamp{\algcontract{Amplification}{implicit (at the cost of $O(\log \eta^{-1})$ penalty)}}
\newcommand\algmoments[1]{\algcontract{Moments}{#1}}
\newcommand\algunbiasedness[1]{\algcontract{Unbiasedness}{#1}}

\title{Instance-optimal estimation of $L_2$-norm}
\author{Tomer Adar\thanks{Technion - Israel Institute of Technology, Israel. Email: \href{mailto:tomer-adar@campus.technion.ac.il}{tomer-adar@campus.technion.ac.il}.}}

\begin{document}

\begin{titlepage}
    \maketitle
    \thispagestyle{empty}
    \pagestyle{empty}

    \begin{abstract}
        The $L_2$-norm, or collision norm, is a core entity in the analysis of distributions and probabilistic algorithms. Batu and Canonne (FOCS 2017) presented an extensive analysis of algorithmic aspects of the $L_2$-norm and its connection to uniformity testing. However, when it comes to estimating the $L_2$-norm itself, their algorithm is not always optimal compared to the instance-specific second-moment bounds, $O(1/(\eps\|\mu\|_2) + t_\mu/\eps^2)$, for $t_\mu = \|\mu\|_3^3 / \|\mu\|_2^4 - 1$, as stated by Batu (WoLA 2025, open problem session).

        In this paper, we present an unbiased $L_2$-estimation algorithm whose sample complexity matches the instance-specific second-moment analysis. Additionally, we show that $\Omega(1/(\eps \|\mu\|_2) + t_\mu / \eps^2)$ is indeed the per-instance lower bound for estimating the norm of a distribution $\mu$ by sampling (even for non-unbiased estimators).
    \end{abstract}

    \newpage
    \nonumber
    \tableofcontents
\end{titlepage}

\section{Introduction}

The collision norm $\|\mu\|_2^2$, which equals to the probability that two independent samples from a distribution $\mu$ are the same, is a fundamental entity in the analysis of distributions and probabilistic algorithms. For example, it lies in the algorithmic core of uniformity testing \cite{paninski2008coincidence}, as the uniform distribution has the lowest collision norm among the distributions over the same domain. Uniformity testing is used as an essential module in various testing algorithms, such as \cite{batuFFKRW2001,DK16,goldreich2020uniform}. A few additional perspectives of uniformity testing, as well as its generalization of testing identity to a known distribution, have been studied by \cite{vv17automatic, DKN15, ADK15, DGPP16}.

Instance-specific testing, in which the complexity bounds are analyzed for every input (more accurately, for every \emph{congruence class} of inputs), has been studied in \cite{valiant2011testing} for histogram testing and in \cite{VV13} for identity testing, in which we test an unknown input distribution for being equal to an explicit distribution $\mu$. For example, for identity testing, while the worst-case complexity is known to be $\Theta(\sqrt{N}/\eps^2)$ (both upper and lower bound) where $\mu$ is defined over $N$-element domain, the worst-case upper bound for an individual hard-coded $\mu$ (and an unknown input $\nu$) can be much lower, since we can use knowledge about the exact structure of $\mu$ to enhance the test.

In \cite{BC17} there is a study of uniformity testing of distributions over unknown, possibly infinite discrete domains. As stated by Batu\footnote{Tugkan Batu, eg. \cite{BC17}} in the open-problem session of WOLA 2025\footnote{Workshop on Local Algorithms at Toyota Technological Institute at Chicago (TTIC), August 18-20, 2025, Chicago, IL. \url{https://people.csail.mit.edu/joanne/WOLA25}}, even though second-moment analysis shows that $O(1/\eps \|\mu\|_2 + (\|\mu\|_3^3 - \|\mu\|_2^4)/\eps^2\|\mu\|_2^4)$ samples suffice for estimating the collision norm within a $(1 \pm \eps)$-multiplicative factor, there is no algorithm that guarantees this sample complexity (in expectation) for every given $\mu$ and $\eps > 0$. The reason is that without any prior knowledge about $\mu$, it is difficult to figure out this sufficient number of samples. In particular, estimating $(\|\mu\|_3^3 - \|\mu\|_2^4)/\|\mu\|_2^4$ within a $(1 \pm O(1))$-factor seems to be as hard as the initial task of estimating $\|\mu\|_2^2$ within a $(1 \pm \eps)$-factor.

Note that, when drawing an element $i$ according to $\mu$, the expected value of $\mu(i)$ is $\|\mu\|_2^2$ and the variance is $\|\mu\|_3^3 - \|\mu\|_2^4$. In other words, the term $(\|\mu\|_3^3 - \|\mu\|_2^4)/\|\mu\|_2^4$, denoted by $t_\mu$, corresponds to Chebyshev's ratio $\Var/\E^2$.

When drawing samples from an unknown distribution $\mu$ over an unknown domain, the algorithm can only base its decisions on the \emph{fingerprint}, also known as ``histogram of histograms''. In other words, it is a sequence $(a_1,\ldots)$ for which, for every $i \ge 1$, there are exactly $a_i$ elements that appear exactly $i$ times in the sample sequence. ($\sum_{i=1}^\infty i \cdot a_i$ is the number of samples). When the number of samples is not determined in advance, for every $m \ge 1$ the algorithm can consider the fingerprint of the first $m$ samples and decide whether to terminate or to draw the $m+1$st sample.

The following algorithm for estimating $\|\mu\|_2$ appears in \cite{BC17}: we choose some integer $k$, and then draw samples until we have at least $k$ collisions. The result is $k/\binom{M}{2}$, where $M$ is the number of samples. Considering the fingerprint $(a_1,\ldots)$, the number of collisions is $\sum_{i=2}^\infty \binom{i}{2} a_i$.

When fixing the number of samples to be $m$, the expected number of collisions is $m\|\mu\|_2^2$. If $\binom{m}{2} \approx k/\|\mu\|_2^2$, then this expected value is $\approx k$, and therefore, we expect to find the $k$th collision in a range ``around $m$''.

We would prefer a small choice of $k$, since the expected sample complexity of the algorithm till termination is $\Theta(\sqrt{k} / \|\mu\|_2)$. However, if $k$ is too small, then the ``likely termination range'' around $m$ would be too wide, since we have to consider the variance as well. For $m$ samples, the variance is close to $\frac{1}{2}m^2\|\mu\|_2^2 + m^3\left(\|\mu\|_3^3 - \|\mu\|_2^4\right)$.

When considering the worst-case, as \cite{BC17} do, we can use $k = C/\eps^4$. By Chebyshev's inequality, the likely termination range is $(1 \pm \eps)m_0$, where $m_0 = \Theta(1 / \eps^2 \|\mu\|_2)$ is the (non-integer) solution for $\binom{m_0}{2}\|\mu\|_2^2 = k$. Note that higher-order norms can be estimated using a similar algorithmic approach.

When considering \emph{instance-specific} sample complexity, we assume that the algorithm receives a $\Theta(1)$-approximation of $\|\mu\|_3^3 - \|\mu\|_2^4$ as an advice. This way, the knowledge about the variance, which is possibly much smaller than its worst-case, allows the algorithm to choose smaller values of $k$. Given this advice, we can use the worst-case algorithm to estimate $\|\mu\|_2^2$ within a $(1\pm 1/2)$-multiplicative error, and then choose $k = \min\{C_1 / \eps^2, C_2 (\|\mu\|_3^3 - \|\mu\|_2^4)^2 / \eps^4 \|\mu\|_2^6 \}$. This reduces the expected sample complexity to $O(1/\eps \|\mu\|_2 + (\|\mu\|_3^3 - \|\mu\|_2^4)/\eps^2 \|\mu|_2^4)$, which can be rewritten as $O(1 / \eps \|\mu\|_2 + t_\mu / \eps^2)$. % We denote the coefficient $\frac{\|\mu\|_3^3 - \|\mu\|_2^4}{\|\mu\|_2^4}$ by $t_\mu$ (short form: $t$), and rewrite the instance reference complexity as $O(1/\eps \|\mu\|_2 + t_\mu / \eps^2)$.

\begin{theorem}[note={Short form of Lemma \reflemma{estimate-L2-top-level}}]
    For every $0 < \eps \le 1$, there exists an unbiased estimator for $\|\mu\|_2^2$ whose input is a sampling access from a discrete distribution $\mu$ over an unknown domain such that:
    \begin{itemize}
        \item With probability at least $2/3$, the output is in the range $(1 \pm \eps)\|\mu\|_2^2$.
        \item The expected sample complexity is $O(1 / \eps \|\mu\|_2 + t_\mu / \eps^2)$, where $t_\mu = \|\mu\|_3^3 / \|\mu\|_2^4 - 1$.
    \end{itemize}
\end{theorem}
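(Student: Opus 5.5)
The plan is to separate \emph{choosing} the sample size from \emph{estimating}. The stopping decision will be made on an independent stream of samples, which preserves unbiasedness. The basic unbiased object is the collision U-statistic: for a fixed $m$, draw $m$ samples, let $C$ be the number of colliding pairs, and output $X_m = C/\binom{m}{2}$. Then $\E[X_m] = \|\mu\|_2^2$. The standard second-moment computation gives
\[ \frac{\Var[X_m]}{\|\mu\|_2^4} = \Theta\!\left(\frac{1}{m^2\|\mu\|_2^2} + \frac{t_\mu}{m}\right). \]
So for a constant $c$, the relative variance is at most $c\eps^2$ once $m \ge m^* := C_0\bigl(1/(\eps\|\mu\|_2) + t_\mu/\eps^2\bigr)$, and then Chebyshev yields the $(1\pm\eps)$ guarantee with probability $2/3$.

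The algorithm has two phases. In \emph{Phase 1} (a selection phase on its own samples) it produces a random size $M$ such that, with probability at least $0.9$, $M \ge m^*/2$ (up to constants), and such that $\E[M]$ and the expected Phase 1 cost are $O(m^*)$. In \emph{Phase 2} it draws $M$ fresh samples and outputs $X_M$. Since $M$ is independent of the fresh samples, $\E[X_M \mid M] = \|\mu\|_2^2$, so the output is exactly unbiased. The output is accurate with probability at least $0.9 - 0.1 \ge 2/3$ after tuning constants. The expected total cost is $O(\E[M] + \text{Phase 1 cost}) = O(m^*)$.

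For Phase 1 I would use doubling, $m = 2^j$. At level $j$, draw $k_j$ independent batches of size $m$ and compute $X^{(1)},\ldots,X^{(k_j)}$. The level passes if all pairwise ratios $X^{(a)}/X^{(b)}$ lie in $1\pm\eps/2$, or more robustly if the middle quantiles do. On passing, output $M = 4m$, a safety factor that also makes the Phase 2 estimator at least as concentrated as the tested one. The analysis has two sides.

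\emph{Termination at the right scale.} At level $m = 2^i m^*$ the relative variance is at most $c\eps^2 2^{-i}$. By Chebyshev, each batch is off with probability $O(2^{-i})$, so the level fails with probability $O(k_j 2^{-i})$, which decays geometrically in $i$ when $k_j = O(1)$. The expected cost beyond $m^*$ is therefore $\sum_i 2^i m^* k\, 2^{-\Omega(i)} = O(m^*)$. Levels below $m^*$ cost $O(m^*)$ in total by the geometric sum.

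\emph{No premature stopping.} This is the step I expect to be the main obstacle. A level with $m \ll m^*$ could still pass if $X_m$ is concentrated around a wrong value or happens to agree by chance. For example, $X_m = 0$ for every batch when $m \ll 1/\|\mu\|_2$, so I would also require $X^{(a)} > 0$, in fact at least $\Omega(1/\eps^2)$ collisions in total. The more delicate case is heavy-tailed behaviour coming from $t_\mu$: most batches miss the rare heavy elements and agree with each other on an underestimate. To handle this, I would first try to show a lower bound on the spread of $X_m$ when $m < m^*/C$. Concretely, I would aim to show that either $\Pr[|X_m/\|\mu\|_2^2 - 1| > \eps] \ge$ a constant, or the collision count is too small. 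The intended route is to decompose $\mu$ into light and heavy parts, with the split threshold tied to $m$, and use a Paley--Zygmund/fourth-moment argument on the heavy-part contribution. With such an anti-concentration bound, $k_j = \Theta(\log j)$ batches make the premature-pass probability at level $j$ at most $1/(100 j^2)$, which sums to at most $0.02$. The extra $\log j$ factor in cost appears only in the levels below the true $m^*$. I would absorb it either by noting $\log\log$ overhead is dominated after rescaling $\eps$ by a constant, or, if that fails, by starting the doubling from a preliminary constant-factor estimate of $1/\|\mu\|_2$, which caps the number of relevant levels.
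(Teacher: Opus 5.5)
\textbf{Verdict: the proposal has a genuine gap.} Your overall architecture matches the paper's: choose a sample size on independent samples, then output the collision U-statistic on fresh samples. The paper does exactly this, computing an advice $s$ and then calling its base estimator. Unbiasedness, Phase~2 accuracy and the ``termination at the right scale'' argument are fine. The problem is that the entire difficulty sits in the one step you leave open, no premature stopping, and the lemma you aim for would not close it even if proved.

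\textbf{The target lemma is the wrong kind of statement.} An agreement test between i.i.d.\ batches is fooled exactly when $X_m$ puts most of its mass in a single window $x(1\pm\eps)$, whether or not $x$ is close to $\|\mu\|_2^2$. The bound $\Pr[\abs{X_m/\|\mu\|_2^2-1}>\eps]\ge c$ is fully compatible with all batches agreeing on a wrong value; that is precisely your heavy-tail scenario.

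What you actually need is a concentration-function bound: $\sup_x \Pr[X_m\in x(1\pm\eps)]\le 1-c$, uniformly over $\mu$, for all $m$ with $1/(\eps\|\mu\|_2)\lesssim m\lesssim t_\mu/(C\eps^2)$. Neither a variance lower bound ($t_\mu/m\ge C\eps^2$, since variance can be carried by rare events) nor Paley--Zygmund on a heavy part gives this. Paley--Zygmund only says a nonnegative variable is not negligible with some probability; it does not say two independent copies differ by a factor $1\pm\eps$.

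A proof would have to show that, at such $m$, the excess variance is carried by elements with $m\mu(i)\gtrsim 1$, whose counts fluctuate in an anti-concentrated, Gaussian-like way. It would also have to be done in the fixed-$m$ multinomial model. Poissonizing replaces $t_\mu$ by $t_\mu+1$ in the variance, so it loses exactly the near-uniform regime $t_\mu\ll 1$, where $m^*$ can be far below $1/\eps^2$. None of this is supplied, and it is the whole content of the theorem.

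\textbf{The cost accounting also does not give $O(m^*)$.} With $k_j=\Theta(\log j)$ batches per level, the work below $m^*$ is $\Theta(m^*\log J)$, where $J$ is the number of doublings below $m^*$. Even if you start at $\Theta(1/\|\mu\|_2)$, $J$ can be $\Theta(\log(1/\eps))$, because $m^*\|\mu\|_2$ ranges up to $\Theta(1/\eps^2)$. For example, the uniform distribution gives $m^*\|\mu\|_2=\Theta(1/\eps)$. So you pay an extra $\log\log(1/\eps)$ factor, and rescaling $\eps$ by a constant cannot absorb a growing factor. Removing it would need a still stronger lemma: with $O(1)$ batches, the premature-pass probability must decay geometrically in $\log_2(m^*/m)$.

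\textbf{How the paper avoids this.} It never needs two-sided (anti-concentration) control. It directly constructs a one-sided over-estimate $s\ge t_\mu$ with $\E[s]=O(t_\mu+\eps/\|\mu\|_2)$, using a case analysis on $\|\mu\|_2$:
\begin{itemize}
\item small $\|\mu\|_2$: a magnitude estimate of $\|\mu\|_3^3$;
\item medium $\|\mu\|_2$: direct estimation of $\|\mu\|_3^3/\|\mu\|_2^4-1$;
\item large $\|\mu\|_2$: a learned good partition. The inequality $t_\mu\ge\sup_{\alpha>0}\alpha^2\Pr[\mu(i)\notin(1\pm\alpha)\|\mu\|_2^2]$ caps the light part's mass at $9t_\mu$, and the remaining friendly part is handled by learning.
\end{itemize}
Each step needs only upper-tail (Chebyshev/Chernoff) bounds, which is why the paper's route goes through where your stability test currently does not.
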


To complete our result, we show that the second-moment reference complexity $1/\eps \|\mu\|_2 + t_\mu / \eps^2$ is indeed a lower-bound for every $\mu$.

\begin{theorem}[note={Short form of Lemma \reflemma{lbnd-eps-mu2}}] \label{th:lbnd-eps-mu2}
    For every explicitly-given discrete distribution $\mu$, there exists a lower bound of $\Omega(1 / \eps \|\mu\|_2)$ for distinguishing between $\mu$ and distributions $\nu$ for which $\|\nu\|_2^2 \notin (1 \pm O(\eps))\|\mu\|_2^2$.
\end{theorem}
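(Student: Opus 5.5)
We will prove the bound by constructing, for the given $\mu$, a single alternative $\nu$ whose collision norm differs from that of $\mu$ by a factor of about $1+\Theta(\eps)$, and which cannot be told apart from $\mu$ with $o(1/\eps\|\mu\|_2)$ samples. Since the statement allows an arbitrary discrete domain, we may place part of $\nu$'s mass on fresh elements outside $\supp(\mu)$ (or, symmetrically, shift mass onto elements of $\mu$). The intuition is that with $m$ samples the algorithm sees about $m^2\|\mu\|_2^2$ collisions. When $m \ll 1/\eps\|\mu\|_2$ this is $\ll 1/\eps^2$, and an $\eps$-relative change in the collision rate is then invisible. The change should therefore be realized so that it affects only collision statistics and not the single-sample marginals.

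\textbf{Construction.} Take $\nu = (1-\eps)\mu + \eps\,\mu'$, where $\mu'$ is a copy of $\mu$ relabeled onto a disjoint fresh domain. Because the algorithm only sees the fingerprint, the labels are irrelevant. Then
\[
\|\nu\|_2^2 = \bigl((1-\eps)^2+\eps^2\bigr)\|\mu\|_2^2 = (1-2\eps+2\eps^2)\|\mu\|_2^2,
\]
which lies outside $(1\pm \eps)\|\mu\|_2^2$ for small $\eps$; the constant can be adjusted by replacing $\eps$ with $c\eps$. A single draw from $\nu$ is equivalent to drawing $i\sim\mu$ and then, independently with probability $\eps$, relabeling it into the copy. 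Two draws of the same $\mu$-element therefore collide in $\nu$ only if they land on the same side.

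\textbf{Distinguishability bound.} We couple $m$ samples from $\mu$ with $m$ samples from $\nu$ using the same underlying $\mu$-draws plus independent side coins. The two fingerprints coincide whenever every group of repeated $\mu$-elements lands entirely on one side. If there are no repeats at all, the two fingerprints are identical. In general, a repeated group of size $j$ fails to land on one side with probability at most $(j-1)\cdot 2\eps$ (union over consecutive pairs). Summing over groups, the probability that the coupling fails is at most $2\eps$ times the number of pairwise collisions. The expected number of collisions is at most $\binom m2\|\mu\|_2^2$, so
\[
\dtv\bigl(\text{fingerprint}_\mu,\text{fingerprint}_\nu\bigr) \le \eps m^2\|\mu\|_2^2 .
\]

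\textbf{The main obstacle.} This bound gives only $m=\Omega(1/\sqrt{\eps}\,\|\mu\|_2)$, which is weaker than required. To reach $1/\eps\|\mu\|_2$, the total variation argument must be replaced by a second-moment (chi-square or Hellinger) argument, with the bound squared to $(\eps m\|\mu\|_2)^2$. One route is a symmetric mixture: let $\nu_\pm$ shift weight $\pm\eps$ in collision structure and average over the sign, so that first-order terms cancel. A simpler route is Poissonization. The fingerprint depends on independent Poisson counts $N_i\sim\Poi(m\mu(i))$, and under $\nu$ each count is split by thinning into $\Poi(m(1-\eps)\mu(i))$ and $\Poi(m\eps\mu(i))$. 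One then compares the laws of the multiset of counts.

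The cleanest plan is to take $\nu$ with $\nu(i)=\mu(i)(1+\eps\sigma_i)$, with random signs $\sigma_i$ chosen to preserve the total mass approximately. This gives $\|\nu\|_2^2\approx\|\mu\|_2^2(1+\eps^2\cdot\text{stuff})$, which does not work either, since the change is only second order. The alternative is to compute the Hellinger distance of the Poissonized fingerprint between $\mu$ and the scaled mixture above, per element. For element $i$, the Hellinger distance between $\Poi(\lambda)$ and its $(1-\eps,\eps)$ split, viewed through the fingerprint, is $O(\eps\lambda^2)$ only at first order, and vanishes at $\lambda\to 0$ with order $\lambda^2$. I expect the true per-element squared Hellinger distance to be $O(\eps^2\lambda_i^2)$, because the difference in the distribution of the count multiset is a signed perturbation of size $\eps\lambda_i^2$ whose first-order contribution cancels in the squared Hellinger distance. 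Summing gives
\[
\sum_i \eps^2 m^2\mu(i)^2 = (\eps m\|\mu\|_2)^2 ,
\]
and hence the required $\Omega(1/\eps\|\mu\|_2)$ bound.

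Verifying this per-element $\eps^2\lambda^2$ Hellinger estimate is the step I expect to be hardest. It requires care because a split element can produce two singletons where $\mu$ would have produced one doubleton, so the comparison must be made through the fingerprint rather than through element counts directly.
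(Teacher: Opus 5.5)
\textbf{The gap is in the choice of $\nu$, not just in the unverified Hellinger estimate.} The alternative $\nu=(1-\eps)\mu+\eps\mu'$ cannot give $\Omega(1/\eps\|\mu\|_2)$ at all.

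\emph{Labeled distinguishers.} Since $\mu$ is explicitly given, a distinguisher can check whether any sample falls outside $\supp(\mu)$. Under $\nu$ this happens with probability $\eps$ per draw, so $O(1/\eps)$ samples suffice. That is $o(1/\eps\|\mu\|_2)$ whenever $\|\mu\|_2=o(1)$.

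The same failure shows up in the labeled Poissonized model, which is the only model where a per-element bound tensorizes. The counts of $i$ and of its copy are $(\Poi(\lambda_i),0)$ under $\mu$ and $(\Poi((1-\eps)\lambda_i),\Poi(\eps\lambda_i))$ under $\nu$. Their Bhattacharyya coefficient is $e^{-\lambda_i(1-\sqrt{1-\eps})}$. So the per-element squared Hellinger distance is about $\eps\lambda_i/2$, which is first order in $\eps$. The product distance therefore becomes constant once $m=\Theta(1/\eps)$.

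\emph{Fingerprint-only distinguishers.} Restricting to these does not rescue the plan. The fingerprint is not a product over elements, so there is nothing to sum. Moreover, your conjectured $O(\eps^2\lambda_i^2)$ bound is false. Take $\mu$ uniform on $K$ elements with $K\log K\le 1/\eps$. Under $\mu$ at most $K$ distinct elements ever appear. Under $\nu$, $O(1/\eps)$ samples reveal all $K$ originals and at least one copy with probability at least $2/3$. Yet $1/\eps\|\mu\|_2=\sqrt K/\eps$, and your predicted total $\sum_i\eps^2\lambda_i^2=\eps^2m^2/K$ is only $O(1/K)$ at $m=\Theta(1/\eps)$. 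New singletons are masked only when $\mu$ puts substantial mass on elements with $\lambda_i\ll 1$.

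\textbf{The missing idea.} Keep the support, and use random-sign perturbations of relative amplitude $\Theta(\sqrt\eps)$ rather than $\eps$. You discarded the random-sign construction precisely because amplitude $\eps$ moves $\|\nu\|_2^2$ only by $O(\eps^2)$.

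The paper pairs sorted elements with $\mu(2j)\le\mu(2j-1)\le\sqrt2\,\mu(2j)$. It sets $\nu(2j-1)=\mu(2j-1)-s_j\hat\eps\mu(2j)$ and $\nu(2j)=\mu(2j)+s_j\hat\eps\mu(2j)$ with $\hat\eps=c\sqrt\eps$, so mass is conserved exactly. The deterministic second-order term $2\hat\eps^2\sum_j\mu(2j)^2$ then shifts the collision norm by $\Theta(\eps)\|\mu\|_2^2$. Meanwhile, in the sign-mixture the first-order terms cancel. Via the Poisson MGF, the per-pair $\chi^2$ is $\cosh(\delta^2(1/\lambda_1+1/\lambda_2))-1=O(\hat\eps^4q^2\mu(2j)^2)$, which sums to $O(\eps^2q^2\|\mu\|_2^2)$.

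What remains is bookkeeping:
\begin{itemize}
\item The random first-order term $2\hat\eps\sum_j s_j\mu(2j)(\mu(2j)-\mu(2j-1))$ could cancel the shift. The paper handles this by choosing the constant $c$ from a small fixed set, so that it cannot cancel for all choices simultaneously; one $c$ then gives a $\Theta(\eps)$ relative change with probability at least $3/4$.
\item Pairs with ratio above $\sqrt2$ are discarded at negligible cost, unless some $\mu(a)\ge\|\mu\|_2/8$.
\item In that exceptional case, moving $10\eps\|\mu\|_2$ mass from $a$ to a fresh element gives per-sample total-variation distance $O(\eps\|\mu\|_2)$ and a $\Theta(\eps)$ relative change in $\|\mu\|_2^2$.
\end{itemize}
That last case is also the only legitimate use of fresh elements here: their total mass must be $O(\eps\|\mu\|_2)$, not $\eps$.
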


\begin{theorem}[note={Short form of Lemma \reflemma{lbnd-t-eps2}}] \label{th:lbnd-t-eps2}
    For every explicitly-given discrete distribution $\mu$, there exists a lower bound of $\Omega(t_\mu / \eps^2)$ for distinguishing between $\mu$ and distributions $\nu$ for which $\|\nu\|_2^2 \notin (1 \pm O(\eps))\|\mu\|_2^2$, where $t_\mu = \|\mu\|_3^3 / \|\mu\|_2^4 - 1$.
\end{theorem}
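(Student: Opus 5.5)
We prove Theorem \ref{th:lbnd-t-eps2} with a two-point (Le Cam) argument. We construct a perturbation $\nu$ of $\mu$ whose collision norm differs from $\|\mu\|_2^2$ by a multiplicative factor of about $1+C\eps$, and whose statistical distance from $\mu$ over $m$ samples is small unless $m = \Omega(t_\mu/\eps^2)$. Because a sequence of $m$ samples is a product distribution, it suffices to bound a per-sample divergence. We use the chi-square (or KL) divergence, since it tensorizes:
\[
1+\chi^2(\nu^{\otimes m}\|\mu^{\otimes m}) = \bigl(1+\chi^2(\nu\|\mu)\bigr)^m .
\]
If $\chi^2(\nu\|\mu) \le c\eps^2/t_\mu$, then $m = o(t_\mu/\eps^2)$ samples give total variation distance below $1/3$, so no algorithm can distinguish the two. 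This applies to adaptive stopping rules as well, after truncating at a fixed horizon.

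\textbf{Construction.} We tilt $\mu$ along its own probabilities. Let
\[
\nu(i) = \mu(i)\bigl(1+\delta\,(\mu(i)-\|\mu\|_2^2)/\|\mu\|_2^2\bigr),
\]
which sums to $1$ because $\sum_i \mu(i)\bigl(\mu(i)-\|\mu\|_2^2\bigr)=0$. Writing $S_k=\|\mu\|_k^k$, we compute:
\begin{itemize}
    \item $\chi^2(\nu\|\mu) = \delta^2 \sum_i \mu(i)\bigl(\mu(i)-S_2\bigr)^2/S_2^2 = \delta^2 (S_3-S_2^2)/S_2^2 = \delta^2\,t_\mu S_2^2/S_2^2$. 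Hence $\chi^2(\nu\|\mu) = \delta^2 t_\mu$.
    \item $\|\nu\|_2^2 - \|\mu\|_2^2 = 2\delta\sum_i \mu(i)^2\bigl(\mu(i)-S_2\bigr)/S_2 + O(\delta^2\cdot\ldots) = 2\delta\,(S_3-S_2^2)/S_2 + \text{(second order)}$. Relative to $S_2$, this is $2\delta\, t_\mu S_2$. Here $t_\mu S_2 = (S_3-S_2^2)/S_2^2\cdot S_2$, which unfortunately is not $1$.
\end{itemize}
So we normalize differently. Set $\delta$ so that the first-order relative change $2\delta (S_3-S_2^2)/S_2^2 = 2\delta t_\mu$ equals $3\eps$, i.e.\ $\delta = \Theta(\eps/t_\mu)$. (Rechecking with the tilt $(\mu(i)-S_2)/S_2$, the change in $\sum\nu^2$ is $2\delta(S_3-S_2^2)/S_2$, and dividing by $S_2$ gives $2\delta t_\mu$.) Then
\[
\chi^2 = \delta^2 (S_3-S_2^2)/S_2^2 = \delta^2 t_\mu = \Theta(\eps^2/t_\mu),
\]
which yields the bound $m=\Omega(t_\mu/\eps^2)$.

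\textbf{Main obstacle.} The difficulty is that $\nu$ must be a genuine distribution, i.e.\ nonnegative. This requires $\delta\,\bigl|\mu(i)-S_2\bigr|/S_2 \le 1$, and the second-order term $\delta^2\sum_i \mu(i)^2(\mu(i)-S_2)^2/S_2^2$ must not cancel the first-order shift. When $t_\mu \gtrsim \eps$, we have $\delta = O(1)$ on the relevant scale. To get nonnegativity, I would first try replacing the linear tilt with the truncated tilt $\min\{(\mu(i)-S_2)/S_2,\,K\}$ and recentering, or split into cases:
\begin{itemize}
    \item Where $\delta\mu(i)/S_2$ is large, the heavy elements alone carry the variance, and moving mass among them suffices.
    \item Below-average elements have $\mu(i)-S_2 \ge -S_2$, so they are never driven negative when $\delta\le1$.
\end{itemize}
The second-order term is nonnegative, so it only increases $\|\nu\|_2^2$ and helps; choosing the sign $\delta>0$ makes the norm grow by at least $3\eps S_2$. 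When $t_\mu < \eps$, the claimed bound $t_\mu/\eps^2 < 1/\eps$ is dominated by Theorem \ref{th:lbnd-eps-mu2}, provided $1/(\eps\|\mu\|_2)\ge 1/\eps$, which holds. That case is therefore trivial, and we may assume $t_\mu \ge \eps$ throughout.
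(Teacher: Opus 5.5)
Your proposal is correct, and its core is the paper's construction. Your $\nu$ with $\delta=3\eps/(2t_\mu)$ is exactly the paper's $\nu_{+1,a}$ with $a=3/2$, and your per-sample $\chi^2(\nu\|\mu)=\delta^2 t_\mu$ is the paper's identity $\E_{i\sim\mu}[(\eps_a(i))^2]=a^2\eps^2/t_\mu$.

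The difference is in how the hard instance is assembled, and your version is simpler. The paper uses a uniform mixture of $\nu_{\pm1,3}$ and $\nu_{\pm1,8}$. Including the downward tilts ($s=-1$) causes three complications:
\begin{itemize}
\item For $s=-1$ the first-order and second-order changes to $\|\nu\|_2^2$ can cancel, which forces a second value of $a$ and the three-out-of-four argument.
\item Nonnegativity of $\nu_{-1,a}$ needs the stronger hypothesis $t_\mu\ge a\eps/\|\mu\|_2$.
\item Consequently the fallback to Theorem~\ref{th:lbnd-eps-mu2} is invoked whenever $t_\mu<8\eps/\|\mu\|_2$.
\end{itemize}
The symmetric mixture turns the $\chi^2$ bound into $\cosh(a^2\eps^2q/t_\mu)-1$, which is quadratic rather than linear in $q\eps^2/t_\mu$. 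This changes only constants, since the threshold is $q=\Theta(t_\mu/\eps^2)$ either way.

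Your single upward tilt avoids all of this. The second-order term is nonnegative, so $\|\nu\|_2^2\ge(1+3\eps)\|\mu\|_2^2$ holds outright. Nonnegativity needs only $\delta\le 1$, because only below-average elements decrease, and each decreases by a factor of at least $1-\delta$. The fallback then only has to cover $t_\mu=O(\eps)$, which follows from $\|\mu\|_2\le 1$.

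When you write this up, make three corrections:
\begin{itemize}
\item Drop the truncated-tilt and heavy-element case analysis in your ``main obstacle'' paragraph. Your second bullet alone settles nonnegativity.
\item Drop the false start about $t_\mu S_2$ not equaling $1$. The relative first-order change is simply $2\delta t_\mu$.
\item Fix one constant. The condition $\delta\le 1$ requires $t_\mu\ge 3\eps/2$, not $t_\mu\ge\eps$. Split the cases at $3\eps/2$ or $2\eps$ instead. The fallback still covers the remaining range, since there $t_\mu/\eps^2<2/\eps\le 2/(\eps\|\mu\|_2)$.
\end{itemize}
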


Combined, Theorems \ref{th:lbnd-eps-mu2} and \ref{th:lbnd-t-eps2} state that there is an instance-specific $\Omega(1 / \eps \|\mu\|_2 + t_\mu / \eps^2)$ lower-bound for estimating $\|\mu\|_2^2$ within $(1 \pm \eps)$-multiplicative error by sampling a given discrete distribution $\mu$ over an unknown domain.

\section{Preliminaries}
In this section we provide terms and notations that we use across the paper. A few standard terms and notations are fully defined in Appendix \ref{apx:std} to strict the formalism, and only mentioned in this section.

A \emph{discrete distribution} over a domain $\Omega$ is a function $\mu : \Omega \to [0,1]$ for which $\sum_{i\in\Omega} \mu(i) = 1$.

For every $k \ne 0$, the \emph{$L_k$-norm} of a distribution $\mu$ is defined as one of the following entities: $\|\mu\|_k^k = \sum_{i \in \Omega} (\mu(i))^k$ and $\|\mu\|_k = \left(\sum_{i \in \Omega} (\mu(i))^k\right)^{1/k}$. When relevant, we explicitly mention the entity we refer to.

\paragraph{Common distributions}
We refer to two common distributions: the binomial distribution $\Bin(n,p)$, defined over $\{0,\ldots,n\}$ as $\Pr[i] = \binom{n}{i} p^i (1-p)^{n-i}$, and Poisson distribution $\Poi(\lambda)$, defined over $\mathbb N$ (including zero) as $\Pr[i] = \frac{\lambda^i}{i!} e^{-\lambda}$.

\paragraph{Divergence measures} For the lower-bound analysis, we refer to a few divergence measures.
\begin{itemize}
    \item Total-variation distance (a metric): $\dtv(\mu, \tau) = \frac{1}{2}\sum_{i \in \Omega_\mu \cup \Omega_\tau} \abs{\mu(i) - \tau(i)}$.
    \item Kullback-Leibler divergence: $\DKL{\mu}{\tau} = \E_\mu[\log (\mu(i)/\tau(i))] = \sum_{i \in \Omega_\mu} \mu(i) \log (\mu(i)/\tau(i))$.
    \item $\chi^2$-divergence: $\chi^2(\mu,\tau) = \E_\mu[(\tau(i)/\mu(i) - 1)^2] = \sum_{i \in \Omega_\mu} \frac{(\tau(i)-\mu(i))^2}{\mu(i)}$.
\end{itemize}

\paragraph{Large-deviation bounds} Across the paper we use the well-known deviation bounds of Markov, Chebyshev and Chernoff, all stated explicitly in Appendix \ref{apx:std}.

\paragraph{Estimation behaviors} When estimating a value $\mathit{ans}$ using a random variable $X$, we consider the following behaviors (which the estimator may or may not have):
\begin{itemize}
    \item Non-negativeness: $X \ge 0$ with probability $1$.
    \item Correctness of estimation: with high probability, such as $\Omega(1)$ or $1-\eta$ (for an explicit error parameter $\eta$), the random estimation $X$ is considered accurate. For example, $\abs{X - \mathit{ans}} \le \eps$ or $\abs{X/\mathit{ans} - 1} \le \eps$ (for an explicit accuracy parameter $\eps$).
    \item Unbiasedness: the expected estimation $\E[X]$ equals to the estimated value $\mathit{ans}$.
    \item Moment preserving: for $r \in \mathbb R$, an estimator is said to \emph{preserve the $r$th moment} if $\E[X^r] = O(\mathit{ans}^r)$. An estimator is said to \emph{strongly preserve the $r$th moment} if for every $\ell \ge 0$ there exists $C_\ell > 0$ for which $\E[X^r (\log X^r)^\ell] = O(C_\ell \cdot \mathit{ans}^r \cdot (\log \mathit{ans}^r)^\ell)$.
\end{itemize}
Observe that, by Jensen's inequality, if a non-negative estimator preserves the $r$th moment, then it strongly preserves every moment $r'$ which is strictly between $0$ and $r$ (that is, $0 < r' < r$ or $r < r' < 0$, depending on the sign of $r$).

\section{Non-technical overview}

This paper is written in a top-down form. After the preliminaries, this non-technical overview and the technical overview, each section provides full details about its main statement (either an algorithmic module or a lower bound), as well as a full pseudo-code and verifiable proofs. Statements labeled as ``technical lemma'' are proved in the last subsection of the section in which they appear.

\paragraph{Upper bound}
The second-moment analysis results in an algorithm that takes an advice $s$ and uses $O(1/\eps\|\mu\|_2 + s/\eps^2)$ samples to estimate $\|\mu\|_2^2$, such that if $s \ge t_\mu = \|\mu\|_3^3/\|\mu\|_2^4 - 1$, then the estimation is in the range $(1 \pm \eps)\|\mu\|_2^2$ with high probability.

For having an instance-optimal sample complexity without any prior knowledge about $t_\mu$, we have to algorithmically obtain a random variable $s$, representing an advice, such that $s \ge t_\mu$ with high probability (for correctness) and $\E[s] = O(t_\mu + \eps / \|\mu\|_2)$ (for complexity). We use different strategies for estimating the third moment of $\mu$ depending on a rough estimation of the second moment.

If $\|\mu\|_2 = O(\eps)$, then it suffices to: (1) estimate $\|\mu\|_2$ within a $(1 \pm O(1))$-multiplicative factor, which is not a bottleneck, and (2) estimate $\|\mu\|_3^3$ within an $(1 \pm O(1))$-multiplicative factor or determine that $\|\mu\|_3^3 = O(\eps \|\mu\|_2^3)$. In the first case, we can obtain a fixed-factor estimation of $t_\mu+1 = \|\mu\|_3^3/\|\mu\|_2^4$, but since $\|\mu\|_2 = O(\eps)$, this is actually a fixed-factor estimation of $t_\mu + O(\eps/\|\mu\|_2)$. In the second case, we can determine that $t_\mu = O(\eps/\|\mu\|_2)$ and avoid a more accurate (and expensive) estimation.

If $\|\mu\|_2 = \Omega(\eps^3 \cdot \poly(\log (1/\eps)))$ and also $O(\eps^{3/5} / \poly(\log (1/\eps)))$, then we can estimate $t_\mu$ directly by its definition, by estimating $\|\mu\|_2^2$ and $\|\mu\|_3^3$ within a $(1 \pm O(\eps/\|\mu\|_2))$-multiplicative factor. The cost of this estimation is $O(1/\eps \|\mu\|_2)$ for this range of ``medium'' $\|\mu\|_2$s. For higher $\|\mu\|_2$ this resolution is too accurate (and expensive), and for lower $\|\mu\|_2$ we cannot take advantage of the $\Omega(1)$-accuracy.

Estimating larger $\|\mu\|_2$s is more intricate. In particular, it involves reduction to distributions defined over an explicit finite domain. If $\|\mu\|_2 = \Omega(\eps \cdot \poly(\log (1/\eps)))$, then we can algorithmically find a partition of the domain to a finite set $A$ for which $\mu_A$ is ``friendly'' (a term that we define more precisely below) and a set $B$ whose total mass is $O(t_\mu)$, such that $\abs{t_\mu - t_{\mu_A}} = O(\mu(B))$.

For distributions over a finite domain of $N$ elements, we can estimate $N\|\mu\|_2^2 - 1$ within a $(1 \pm O(1))$-multiplicative factor by iteratively looking for a resolution $\hat\eps$ for which $(1 - \hat\eps)\|\mu\|_2^2 > (1 +\hat\eps)/N$. If the sequence of resolutions starts with $\Omega(1)$ and decreases exponentially, then $\hat\eps = \Theta(N\|\mu\|_2^2 - 1)$ for the earliest $\hat\eps$ for which $(1 - \hat\eps)\|\mu\|_2^2 > (1 +\hat\eps)/N$. Note that this logic fails if $\|\mu\|_2^2 = (1 + \Omega(1))/N$, but this case is easy to detect and to resolve, since it implies that $N\|\mu\|_2^2 - 1 = \Theta(N\|\mu\|_2^2)$.

We consider a distribution over a finite domain of $N$ elements as \emph{friendly} if every element has probability at least $\Omega(1/N)$. If $\mu$ is friendly, then we can estimate $t_\mu$ by learning the mass of every element within a $(1 \pm O(1))$-multiplicative factor, at the cost of $O(N \log N)$ samples, and then embed them in an explicit formula that involves the (estimated) individual masses of $\mu$ and the (estimated) value of $N\|\mu\|_2^2 - 1$.

\paragraph{Lower bound}
While tight instance-specific identity testing bounds are already provided by \cite{VV13}, the collision norm of their hard-to-distinguish input distributions is not necessarily outside the range $(1 \pm \Omega(\eps))\|\mu\|_2^2$ with sufficiently high probability. Therefore, we use an ad-hoc construction based on their ideas.

For the $\Omega(1/\eps\|\mu\|_2)$ part of the lower bound, we recall a common construction for showing the $\Omega(\sqrt{N}/\eps^2)$ lower bound for uniformity testing over an explicitly given domain (for example, \cite{paninski2008coincidence}) and generalize it to non-uniform distributions as a per-instance $\Omega(1/\eps^2\|\mu\|_2)$ lower bound. We observe that the collision norm of the constructed distribution is usually in the range $(1 \pm \Theta(\eps^2))\|\mu\|_2^2$, and therefore, we use $\eps' \approx \sqrt{\eps}$ to obtain an $\Omega(1/\eps\|\mu\|_2)$-sample lower-bound for distinguishing $\mu$ from distributions whose collision norm is $(1 \pm \Theta(\eps))\|\mu\|_2^2$. Conceptually, for every element $i$, we construct a distribution with $\nu(i) \approx \mu(i) \pm \sqrt{\eps} \mu(i)$.

For the $\Omega(t_\mu / \eps^2)$ part of the lower bound, we use another construction in which the bias in the probability mass of $i$, instead of being linear in $\sqrt{\eps}$ and $\mu(i)$, is linear in $\eps$, in $1/t_\mu$ and in the deviation $\mu(i) / \|\mu\|_2^2 - 1$.

\section{Technical overview}
In this section we add more details to the non-technical overview. Note that some details only appear in the technical part.

\subsection{Elementary tools}

\paragraph{Amplification} The success probability of our procedures is a parameter $\eta$. For some procedures, the base success probability is $2/3$ and they use a generic amplification patch (which appears in the pseudocode as a preamble declaration rather than an actual code) to match the $1-\eta$ bound. For $\eta < 1/3$, amplification of the success probability from $1-1/3$ to $1-\eta$ is done by taking the median of $O(\log(1/\eta))$ independent estimations.

\begin{lemma}{amplify-1/3-to-eta}
    If $0 < \eta < 1$ and $q \ge 18 \ln \eta^{-1}$ is an integer, then $\Pr[\Bin(q,2/3) \le q/2] \le \eta$. In other words, if a sample is ``good'' with probability at least $2/3$, then the median of $q$ independent samples is ``good'' with probability at least $1 - \eta$.
\end{lemma}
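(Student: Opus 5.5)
We bound the lower tail of $X \sim \Bin(q,2/3)$ with a Chernoff--Hoeffding bound (as stated in Appendix~\ref{apx:std}), and then translate the bound into the statement about the median.

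For the tail, write $X=\sum_{j=1}^q X_j$ with independent indicators $X_j$ of success probability $2/3$, so $\E[X]=2q/3$. The event $X \le q/2$ means $X$ falls below its mean by $q/6$. Hoeffding's inequality for $[0,1]$-valued summands gives
\[
\Pr[X \le q/2] \le \exp\left(-2q(1/6)^2\right) = \exp(-q/18).
\]
Since $q \ge 18\ln\eta^{-1}$, the right-hand side is at most $\eta$.

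Alternatively, the multiplicative Chernoff bound $\Pr[X \le (1-\delta)\E X] \le \exp(-\delta^2 \E X/2)$ with $\delta = 1/4$ gives $\exp(-q/48)$. That is too weak for the constant $18$, so we would use the additive Hoeffding form. This choice of inequality is the only delicate point: the constant $18 = 1/(2\cdot(1/6)^2)$ matches Hoeffding exactly.

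For the median interpretation, note that the ``good'' samples are those lying in a target interval, such as $(1\pm\eps)\mathit{ans}$. If strictly more than $q/2$ of the $q$ samples lie in an interval, then the median lies in it too. Indeed, if the median were, say, above the interval, then at least half of the samples would be at least the median, hence outside the interval, which is a contradiction.

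The number of good samples stochastically dominates $\Bin(q,2/3)$, since each sample is independently good with probability at least $2/3$, and the event $X \le q/2$ is monotone decreasing. Hence the probability that at most $q/2$ samples are good is at most $\Pr[\Bin(q,2/3)\le q/2] \le \eta$, and the median is good with probability at least $1-\eta$.
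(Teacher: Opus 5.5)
Your proof is correct and is the same as the paper's: both apply the additive Chernoff--Hoeffding bound with deviation $q/6$ to get $e^{-q/18}\le\eta$. Your justification of the median interpretation (a strict majority of samples in an interval forces the median into it, and stochastic domination handles success probabilities at least $2/3$) is also correct, and it fills in a step the paper leaves implicit.
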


\begin{lemma}{median-expected-value}
    Let $X$ be a non-negative random variable, and let $Y$ be the median of $k \ge 1$ independent variables distributing the same as $X$. If $k$ is even, then we use the ``lower median'' (the value of rank $\floor{k/2}$, where the minimum has rank $1$). In this setting, $\E[Y] \le C \cdot \E[X]$, for some constant $C$ independent of $X$ and $k$.
\end{lemma}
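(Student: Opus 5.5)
We bound $\E[Y]$ through the tail-integral formula $\E[Y] = \int_0^\infty \Pr[Y > y]\,dy$. Throughout we write $\mu_X = \E[X]$, and we may assume $0 < \mu_X < \infty$; if $\mu_X = 0$ then $X = 0$ almost surely, so $Y = 0$ as well. Let $X_1,\ldots,X_k$ denote the independent copies. The lower median has rank $r = \floor{k/2}$ when $k \ge 2$, and we take $r = 1$ when $k = 1$. The event $Y > y$ requires at least $k - r + 1 \ge \ceil{k/2}$ of the copies to exceed $y$.

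For $k = 1$ we simply have $Y = X$, so the claim holds with $C = 1$. For $k \ge 2$ set $p(y) = \Pr[X > y]$. By Markov's inequality $p(y) \le \mu_X / y$. A union bound over the subsets of size $m = \ceil{k/2}$ gives
\[
\Pr[Y > y] \le \binom{k}{m} p(y)^{m} \le 2^k p(y)^{m}.
\]
We now split the integral at $y_0 = 8\mu_X$.
\begin{itemize}
    \item On $[0, y_0]$ we use the trivial bound $\Pr[Y > y] \le 1$, which contributes at most $8\mu_X$.
    \item On $[y_0, \infty)$ we use $p(y) \le \mu_X / y \le 1/8$ and $m \ge k/2$, so that
    \[
    2^k p(y)^m \le 2^k p(y)\, p(y)^{m-1} \le 2^k \, \frac{\mu_X}{y} \, 8^{-(m-1)}.
    \]
\end{itemize}

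The main obstacle is that the factor $\mu_X / y$ on the second range is not integrable, and we need a bound that is uniform in $k$. To handle this, we keep two factors of Markov rather than one. Concretely, for $k \ge 4$ we have $m \ge 2$, and we use $p(y)^m \le (\mu_X/y)^2 \, 8^{-(m-2)}$. Then
\[
\int_{y_0}^\infty 2^k \Big(\frac{\mu_X}{y}\Big)^2 8^{-(m-2)}\,dy = 2^k\, 8^{-(m-2)} \, \frac{\mu_X}{8}.
\]
Since $m \ge k/2$, we have $8^{m} \ge 8^{k/2} = 2^{3k/2}$, so the right-hand side is at most $64 \cdot 2^{-k/2}\,\mu_X / 8 = O(\mu_X)$, uniformly in $k$.

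The remaining cases are $k = 2$ and $k = 3$. In both, the lower median has rank $1$, so $Y$ is the minimum of the copies and $Y \le X_1$, which gives $\E[Y] \le \mu_X$.

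Combining the two ranges, $\E[Y] \le 8\mu_X + O(\mu_X)$ for every $k$, with an absolute constant independent of $X$ and $k$, which proves the claim.
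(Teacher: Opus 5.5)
Your proof is correct, but it takes a different route from the paper's.

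\textbf{A small slip.} The lower-median convention in the statement applies only to even $k$. For odd $k$ the median has rank $(k+1)/2$, so for $k=3$ the variable $Y$ is the middle value, not the minimum. This does no damage. Your union bound only uses that $Y>y$ forces at least $\lceil k/2\rceil$ copies to exceed $y$, and this also holds for rank $(k+1)/2$. Since $m=\lceil 3/2\rceil=2$, your main computation already covers $k=3$, giving $\E[Y]\le(8+8\cdot 2^{-3/2})\E[X]$. Alternatively, $Y\le X_1+X_2+X_3$ settles $k=3$ directly. Either way you get an absolute constant $C\le 11$.

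\textbf{The paper's route.} It handles $k\le 9$ by bounding the median by the sum of the copies. For $k\ge 10$ it shows that $p\mapsto\Pr[\Bin(k,p)\ge\lfloor k/2\rfloor]$ has derivative bounded by an absolute constant $C_1$ on $[0,1/3]$, since the derivative is at most $k^2e^{-\Omega(k)}$ there. By Markov, every support point $a\ge 3\E[X]$ has $\Pr[X\ge a]\le 1/3$. The Lipschitz bound then gives $\Pr[Y=a]\le C_1\Pr[X=a]$ at each such point, and summing $a\Pr[Y=a]$ over the support yields $\E[Y]\le(3+C_1)\E[X]$.

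\textbf{Comparison.} The paper compares point masses locally. You instead integrate the tail and use a second application of Markov to beat the $\binom{k}{m}\le 2^k$ union-bound factor with $8^{-m}$ while keeping the tail integrable.
\begin{itemize}
\item Your version is fully elementary and gives an explicit constant.
\item It applies verbatim to non-discrete $X$. The paper writes $\E[Y]$ as a sum over the support and states the derivative bound only asymptotically.
\item The paper's version yields more than a bound on the mean: above $3\E[X]$, the law of $Y$ is dominated pointwise by $C_1$ times the law of $X$.
\end{itemize}
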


\paragraph{Other tools}
In addition to the amplification patch, we use a few standard tools:
\begin{itemize}
    \item An unbiased additive-error estimation of an indicator (Lemma \reflemma{estimate-indicator-additive}, Page \pagenumlemma{estimate-indicator-additive}).
    \item An effective rejection-sampling (Lemma \reflemma{rejection-sampling-concentration}, Page \pagenumlemma{rejection-sampling-concentration}).
    \item High-moment bounds through exponential tail (Lemma \reflemma{exponential-tail}, Page \pagenumlemma{exponential-tail}).
\end{itemize}

\subsection{Reference $\|\mu\|_2^2$-estimators}
We provide a few sub-optimal $\|\mu\|_2^2$-estimators. Each of these estimators has different guarantees about its behavior, even though all of them guarantee $(1 \pm \eps)$-factor accuracy with probability at least $1-\eta$ for a given parameter $0 < \eta \le 1/3$. Here we only overview the estimator behaviors by referring to their correctness lemmas. Section \ref{sec:reference-L2-estimators} provides more details.

In the following table, the ``L/ALG'' column describes the lemma number and the algorithm number.
\begin{table}[H]
    \centering
    \begin{tabular}{l|l|l|l|l}
         \textbf{Estimator} & \textbf{Complexity} & \textbf{Preserved moments} & \textbf{L/ALG} & \textbf{Comments} \\ \hline
         \proc{estimate-L2-BC} & $O(\log(1/\eta) / \eps^2 \|\mu\|_2)$ & $-\infty < r \le 0$ & L\reflemma{estimate-L2-BC}/A\refalg{estimate-L2-BC} & \\
         \proc{estimate-L2-base} & $O(1 / \eta \eps^2 \|\mu\|_2)$ & unbiased ($0\le r\le 1$) & L\reflemma{estimate-L2-base}/A\refalg{estimate-L2-base} & No advice \\
         \proc{estimate-L2-base} & $O\left(\frac{1}{\sqrt{\eta} \eps^2 \|\mu\|_2} + \frac{s}{\eta \eps^2}\right)$ & unbiased ($0 \le r\le 1$) & L\reflemma{estimate-L2-base}/A\refalg{estimate-L2-base} & Advice $s \ge t_\mu$ \\
         \proc{estimate-L2-moments} & $O(\log (1/\eta) / \eps^2 \|\mu\|_2)$ & $-\infty < r \le 1$ & L\reflemma{estimate-L2-moments}/A\refalg{estimate-L2-moments} &
    \end{tabular}
\end{table}

\subsection{$\|\mu\|_3^3$-estimators}
Some of our subroutines estimate $\|\mu\|_3^3$ as a part of their algorithmic logic. We provide two sub-optimal $\|\mu\|_3^3$-estimators, each one has different guarantees about its behavior. Here we only overview the estimator behaviors by referring to their correctness lemmas. Section \ref{sec:L3-estimators} provides more details.

% In the following table, the ``L/ALG'' column describes the lemma number and the algorithm number.
\begin{table}[H]
    \centering
    \begin{tabular}{l|l|l|l|l}
         \textbf{Estimator} & \textbf{Parameter} & \textbf{Error} & \textbf{Complexity} & \textbf{L/ALG} \\ \hline
         \proc{estimate-L3} & $\eps$ & $\pm \eps \|\mu\|_3^3$ & $O(1 / \eta \eps^2 \|\mu\|_2)$ & L\reflemma{estimate-L3}/A\refalg{estimate-L3} \\
         \proc{estimate-L3-magnitude} & $a$ & $\pm \max\{a^3, \|\mu\|_3^3/1000\}$ & $O(1 / \eta a)$ & L\reflemma{estimate-L3-magnitude}/A\refalg{estimate-L3-magnitude}
    \end{tabular}
\end{table}

\subsection{Top-level algorithm}
\label{sec:ext-abs:subsec:top-level-algorithm}
\defproc{estimate-L2-top-level}{Estimate-$L_2$-Top-Level}

Our core task for estimating $\|\mu\|_2^2$ is finding an advice $s \ge t_\mu$ where $t_\mu = \|\mu\|_3^3/\|\mu\|_2^4 - 1$, and then use the $L_2$ base-estimator. We analyze three (overlapping) cases: $\|\mu\|_2 = O(\eps)$ (``small''), $\|\mu\|_2 = O(\eps^{3/5}/\poly(\log(1/\eps))), \Omega(\eps \cdot \poly(\log(1/\eps)))$ (``medium'') and $\|\mu\|_2 = O(\eps \cdot \poly(\log(1/\eps)))$ (``large''). We overview each case in its own subsection.

Since the expected complexity of the procedure handling each case is suboptimal in other cases (unless the input belongs to an overlapping part), we have to figure out the relevant case with a very high probability. Distinguishing between the first case (small $\|\mu\|_2$) and the other cases requires $1-O(\eps^2\|\mu\|_2^2)$ success probability. Distinguishing between the medium and the large cases only requires $1-O(\eps^2)$.

Unfortunately, to test $\|\mu\|_2$ for being $O(\eps)$ versus $\Omega(\eps)$ with an $O(\eps^2 \|\mu\|_2^2)$-error using our reference estimators, we must draw either $O(\log (1/\eps\|\mu\|_2) / \|\mu\|_2)$ or $O(\log (1/\eps\|\mu\|_2) / \eps)$ samples, which are incompatible with the bi-criteria $O(1/\eps \|\mu\|_2)$ bound. Instead, we first use an $(1-O(\eps^2))$-error estimation to find a ``probable case'', and if we take the branch of a medium- or large-$\|\mu\|_2$ case, then we make another test to verify this choice with success probability $1-O(\|\mu\|_2^2)$.

\defproc{test-L2-magnitude}{Test-Small-$\|\mu\|_2$}
The ``second-thoughts'' test estimates $(1 \pm O(1))\|\mu\|_2^2$ with a constant success probability, and if the result estimation is smaller than $((3/2)\eps)^2$, then it uses additional $O(\log (1/\|\mu\|_2) / \eps)$ samples to make sure, with probability $1-O(\|\mu\|_2^2)$, that $\|\mu\|_2$ is indeed small. \algforprocshort{test-L2-magnitude}

\begin{lemma}{test-L2-magnitude}
    Procedure \proc*{test-L2-magnitude} correctly distinguishes between the case where $\|\mu\|_2 \le \eps$ (accept with probability $1 - \min\{O(\|\mu\|_2^2),\eta\}$) and the case where $\|\mu\|_2 \ge 2\eps$ (reject with probability $1-\eta$) at the expected cost of $O(\log (1/\eta) / \eps \|\mu\|_2)$ samples.
\end{lemma}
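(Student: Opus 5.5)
The plan is to analyze the two stages of \proc*{test-L2-magnitude} separately: a cheap rough-estimation stage, and a verification stage that runs only when the rough estimate looks small. The complexity claim is then the sum of the stage costs, with the expected cost of stage 2 weighted by the probability of entering it.

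\textbf{Stage 1.} Call \proc{estimate-L2-moments} (Lemma \reflemma{estimate-L2-moments}) with accuracy $1/4$ and error $\eta$. Its expected cost is $O(\log(1/\eta)/\|\mu\|_2) \le O(\log(1/\eta)/\eps\|\mu\|_2)$ since $\eps \le 1$. Let $\hat x$ denote the output. If $\hat x \ge ((3/2)\eps)^2$, reject; otherwise proceed to stage 2.

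On a correct run, $\hat x \in (1\pm 1/4)\|\mu\|_2^2$. Hence if $\|\mu\|_2 \ge 2\eps$ we get $\hat x \ge 3\eps^2 > (9/4)\eps^2$ and reject. If $\|\mu\|_2\le\eps$ we get $\hat x \le (5/4)\eps^2$ and continue. Stage 1 alone therefore already errs with probability at most $\eta$ in both cases.

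\textbf{Stage 2.} This stage only has to drive the false-reject probability in the small case down to $O(\|\mu\|_2^2)$.

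1. Run a worst-case collision estimator in the style of \proc{estimate-L2-BC}, tuned to scale $\eps$. Concretely, draw a fixed $m = \Theta(\sqrt{q}/\eps)$ samples with $q = \Theta(\log(1/\hat x))$, i.e.\ about $\Theta(\log(1/\eps)/\eps)$.
2. Accept iff the number of collisions is below a threshold $\tau$ with $\binom m2\eps^2 < \tau < \binom m2 (2\eps)^2$, for example $\tau=\binom m2 (3/2)^2\eps^2$.
3. Equivalently, run \proc{estimate-L2-BC} with constant accuracy and failure probability $\eta' = \min\{\eta, \hat x\}$; its cost is $O(\log(1/\eta')/\eps)$ when its parameter is set at scale $\eps$.

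\textbf{Correctness in the small case.} Accept unless both stages fail. Stage 1 fails with probability at most $\eta$. Stage 2 is a test between ``below $\eps$'' and ``above $(3/2)\eps$''. Using Chernoff/Poissonization of collision counts, I expect its error to be $\exp(-\Omega(q))$ at this scale, even when $\|\mu\|_2 \ll \eps$, because fewer collisions only helps acceptance. Choosing $q$ so that this is at most $\min\{\eta,\hat x\}$ gives $O(\|\mu\|_2^2)$ whenever $\hat x = O(\|\mu\|_2^2)$.

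The only subtlety is that $\hat x$ may underestimate $\|\mu\|_2^2$. I would handle this with the negative-moment preservation of \proc{estimate-L2-moments}: an $r<0$ moment bound gives $\Pr[\hat x < \|\mu\|_2^2/K] \le O(K^{r})$. Taking $r=-1$, the total probability of a bad underestimate plus the test's error stays $O(\|\mu\|_2^2)$. I would argue this by conditioning on the value of $\hat x$ and summing dyadically.

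\textbf{Large case.} When $\|\mu\|_2 \ge 2\eps$, the requirement is only error $\le\eta$. Stage 1 already rejects with probability $1-\eta$. Stage 2 can only add false accepts, so I would make it also reject with probability $1-\eta$ by including $\eta$ in $\eta'$.

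\textbf{Complexity.} Stage 2 costs $O(\log(1/\min\{\eta,\hat x\})/\eps)$, but it is entered only when $\hat x < (9/4)\eps^2$. When $\|\mu\|_2 \le 2\eps$, the quantity $1/\eps$ is $O(1/\eps\|\mu\|_2)$ since $\|\mu\|_2 = O(1)$. The extra $\log(1/\hat x)$ factor is controlled in expectation through $\E[\log(1/\hat x)] = O(\log(1/\|\mu\|_2))$, which follows from strong preservation of negative moments. We also need $\log(1/\|\mu\|_2)/\eps = O(\log(1/\eta)/\eps\|\mu\|_2)$, which holds since $\log(1/x) \le 1/x$.

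When $\|\mu\|_2 > 2\eps$, stage 2 is entered only with probability $\Pr[\hat x < (9/4)\eps^2]$. By the negative-moment tail this is $O(\eps^2/\|\mu\|_2^2)^{c}$ for any fixed $c$. This makes its expected contribution $O(\log(1/\eps)\,\eps^{2c-1}/\|\mu\|_2^{2c})$, which is $O(1/\eps\|\mu\|_2)$ for $c=1$ up to the log; using $c=2$ absorbs the log.

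The main obstacle is this last bookkeeping. The randomness of $\hat x$ simultaneously decides whether stage 2 runs and sets its length, so both the error and the cost must be controlled through tail bounds on $\hat x$ from below. These are exactly what the moment-preservation guarantees of \proc{estimate-L2-moments} should provide.
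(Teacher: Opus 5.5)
Your argument has a genuine gap in the control flow. You make a stage-1 reject final (``if $\hat x \ge ((3/2)\eps)^2$, reject'') and attach the verification to the branch where the rough estimate looks \emph{small}. When $\|\mu\|_2 \le \eps$, every run in which stage 1 overestimates is then an unrecoverable false reject. You bound its probability only by $\eta$, and Markov on $\E[\hat x]$ gives only $O(\|\mu\|_2^2/\eps^2)$. For $\|\mu\|_2$ near $\eps$ with $\eta$ fixed, nothing makes this $O(\eps^2)$, which is what the lemma requires (and what Lemma~\reflemma{estimate-L2-top-level--first-case-misclassification} uses). Stage 2 never runs on those executions, so it cannot lower this probability. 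Your sentence ``accept unless both stages fail'' describes the opposite procedure, and that opposite procedure is the paper's: if $\sqrt{\ell_1} \le (3/2)\eps$ it accepts at once, and only otherwise runs the verification, rejecting only if the verification also says ``large''. With this flow:
\begin{itemize}
\item the small-case false-reject probability is at most $\E[\frac{\eta}{2}\min\{1,\ell_2\}] \le \frac{\eta}{2}\min\{1,\E[\ell_2]\} \le \min\{\eta, O(\|\mu\|_2^2)\}$;
\item the large case errs with probability at most $\eta/2+\eta/2$;
\item the cost needs no weighting by the probability of entering the verification, since it is $O(\log(1/\eta\|\mu\|_2)/\eps) = O(\log(1/\eta)/\eps\|\mu\|_2)$ in expectation over $\ell_2$ whenever it runs.
\end{itemize}

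Two further steps would fail even after the swap. First, you use the moments in the wrong direction. The error bound is threatened by an \emph{over}estimate of the quantity that sets the number of rounds, because that means too few rounds. This is handled by the first moment, through $\E[\min\{\eta,\ell_2\}] \le \min\{\eta,\E[\ell_2]\}$. An underestimate only adds rounds, and the negative moments are needed only for the cost, via $\E[\log(1/\ell_2)] \le \log \E[1/\ell_2] = O(\log(1/\|\mu\|_2))$. Second, a single collision count on $m=\Theta(\sqrt q/\eps)$ samples does not err with probability $e^{-\Omega(q)}$. Collision counts are not sums of independent indicators, and one element of mass $\eps/2$ (with the rest spread thinly) already gives error $e^{-\Theta(\sqrt q)}$. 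Nor can you run the BC-estimator ``at scale $\eps$'': one run costs $\Theta(1/\|\mu\|_2)$, which may be far above $1/\eps$. The paper gets both properties from $R=\ceil{18\ln(2/\min\{1,\ell_2\}\eta)}$ independent constant-error rounds with a majority vote (Lemma~\reflemma{amplify-1/3-to-eta}). Each round is truncated after $12Q=O(1/\eps)$ samples, and a truncated round counts as ``small''. This is harmless when $\|\mu\|_2\le\eps$, and by Markov it costs only $1/12$ per round when $\|\mu\|_2\ge 2\eps$.
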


\algforprocshort{estimate-L2-top-level}

\begin{lemma}{estimate-L2-top-level}
    Procedure \proc*{estimate-L2-top-level} returns a number in the range $(1 \pm \eps)\|\mu\|_2^2$ with probability at least $1-\eta$. Moreover, the expected output is $\|\mu\|_2^2$ and the expected complexity is $O\left(\frac{1}{\eta} \left(\frac{1}{\eps \|\mu\|_2} + \frac{t_\mu}{\eps^2}\right)\right)$.
\end{lemma}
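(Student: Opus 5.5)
The plan is to view \proc*{estimate-L2-top-level} as a two-phase procedure. The first phase computes an advice value $s$ and uses no samples that are reused later. The second phase calls \proc*{estimate-L2-base} (Lemma \reflemma{estimate-L2-base}) with advice $s$, accuracy $\eps$ and error parameter $\eta/2$, using fresh samples. Unbiasedness then comes essentially for free. Since \proc*{estimate-L2-base} is unbiased for every fixed advice, conditioning on $s$ gives $\E[\text{output} \mid s] = \|\mu\|_2^2$, and taking expectation over $s$ gives $\E[\text{output}] = \|\mu\|_2^2$. This holds no matter how $s$ was produced, as long as the base estimator draws its samples independently of the samples used to produce $s$. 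Correctness is a union bound: with probability at least $1-\eta/2$ the advice satisfies $s \ge t_\mu$, and given that, the base estimator is $(1\pm\eps)$-accurate with probability $1-\eta/2$.

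The real content is the first phase. I need an advice generator with two properties: $\Pr[s < t_\mu] \le \eta/2$, and
\[
\E[s] = O(t_\mu + \eps/\|\mu\|_2),
\]
with the generator itself costing $O\big(\frac{1}{\eta}(\frac{1}{\eps\|\mu\|_2} + \frac{t_\mu}{\eps^2})\big)$ samples in expectation. Given these, Lemma \reflemma{estimate-L2-base} bounds the expected cost of phase two, conditioned on $s$, by $O\big(\frac{1}{\sqrt{\eta}\eps^2\|\mu\|_2} + \frac{s}{\eta\eps^2}\big)$. Taking expectation over $s$ turns the second term into $O\big(\frac{1}{\eta}(\frac{t_\mu}{\eps^2} + \frac{1}{\eps\|\mu\|_2})\big)$. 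The first term is too large, however. I therefore plan to replace it by passing $\max\{s, \eps/\|\mu\|_2\}$ as the advice, or an estimated version of that floor. I expect the base estimator's $1/(\eps^2\|\mu\|_2)$ term to be absorbed into $s/\eps^2$ once $s \ge \eps/\|\mu\|_2$, and I will check this against the form of Lemma \reflemma{estimate-L2-base}.

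To build $s$, I follow the case split of the overview.
\begin{itemize}
\item First I run a rough $(1\pm 1/2)$ estimate of $\|\mu\|_2^2$ with error $O(\eps^2)$, using \proc*{estimate-L2-moments}. I use its preserved negative moments to bound expected costs of the form $1/\|\hat\mu\|_2$.
\item If this suggests a non-small norm, I confirm it with \proc*{test-L2-magnitude}, whose error $O(\|\mu\|_2^2)$ (Lemma \reflemma{test-L2-magnitude}) is what makes a wrong branch affordable. A misclassification costs at most $O(1/\eps^2\|\mu\|_2)$ or $O(1/\|\mu\|_2)$ extra samples. Weighting by the failure probabilities $O(\eps^2\|\mu\|_2^2)$ and $O(\eps^2)$ respectively brings this below $O(1/\eps\|\mu\|_2)$.
\item In the small case, I use \proc*{estimate-L3-magnitude} with $a \approx (\eps\|\hat\mu\|_2^3)^{1/3}$. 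This either gives $\|\mu\|_3^3$ up to a constant factor, so that $s \approx \hat{\|\mu\|_3^3}/\hat{\|\mu\|_2^4}$, or certifies $t_\mu = O(\eps/\|\mu\|_2)$. Its cost is $O(1/a) = O(1/\eps\|\mu\|_2)$ in this regime.
\item In the medium case, I estimate $\|\mu\|_2^2$ and $\|\mu\|_3^3$ to relative accuracy $O(\eps/\|\mu\|_2)$ via \proc*{estimate-L2-base} and \proc*{estimate-L3}, and set $s$ to the plug-in value plus a slack of $O(\eps/\|\mu\|_2)$.
\item In the large case, I use the partition into a friendly part $A$ and a light part $B$, learn the masses on $A$, estimate $N\|\mu_A\|_2^2-1$ by the decreasing-resolution search, and add $O(\hat\mu(B))$ as slack.
\end{itemize}

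The main obstacle is the expectation bound $\E[s] = O(t_\mu + \eps/\|\mu\|_2)$, and especially its interaction with the rare failure events. A plug-in ratio such as $\hat{\|\mu\|_3^3}/\hat{\|\mu\|_2^4}$ can blow up when the denominator is underestimated, so I will rely on the preserved negative moments of \proc*{estimate-L2-moments}, with $r=-4$ available. Hölder's inequality then bounds $\E[\hat A\,\hat B^{-4/3}]$-type quantities by $O(\|\mu\|_3^3/\|\mu\|_2^4)$ when the two estimates are independent. Because the case-selection branch is random, I will also need a total-expectation argument over the branches. It must show that the contribution to $\E[s]$ from each wrong branch, weighted by its probability, stays within $O(t_\mu + \eps/\|\mu\|_2)$. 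I would attempt this first for the large case, where the partition step is the least controlled.
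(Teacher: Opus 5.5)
Your top-level skeleton matches the paper's proof. It draws fresh samples for the final call to \proc*{estimate-L2-base}, gets unbiasedness by conditioning on $s$, gets correctness from a union bound, and weights wrong branches by $O(\eps^2\|\mu\|_2^2)$ or $O(\eps^2)$. The gap is that you never establish the advice properties, and the tool you name for the key one, $\E[s]=O(t_\mu+\eps/\|\mu\|_2)$, is too weak. Independence plus the negative moments of \proc*{estimate-L2-moments} bound a plug-in ratio only by $O(\|\mu\|_3^3/\|\mu\|_2^4)=O(t_\mu+1)$. The $+1$ is fatal outside the small regime. For $\mu$ uniform with $\|\mu\|_2\gg\eps$ we have $t_\mu=0$, and $\E[s]=\Theta(1)$ makes the final call cost $\Theta(1/\eta\eps^2)$, a factor $\|\mu\|_2/\eps$ above the target. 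What is missing is to make the event on which the plug-in fails rare, of probability $\min\{\eta,\delta\}$ with $\delta\approx\eps/\|\mu\|_2$ (or $\min\{\eta,\eps\}$), and to control the output on that event so it contributes only $O(\delta(t_\mu+1))$. A H\"older bound applied to the failure event itself, with second moments and failure probability $O(\delta^2)$, would also do this. The paper instead caps the output by an \emph{independent} overestimate whose first moment is $O(t_\mu+1)$:
\begin{itemize}
\item $\min\{2(Y_1+\delta),Y_2\}$ in \proc*{estimate-t-directly}. The factor $2$ matters: an additive slack $O(\delta)$ alone does not cover the relative error $\delta(t_\mu+1)$ when $t_\mu$ is large.
\item the cap $\sqrt N$ in \proc*{estimate-t-friendly};
\item the $\min\{\eta/10,\eps\}$ error parameters in \proc*{find-advice-large-mu2}.
\end{itemize}
For this lemma you need not rebuild any of it. Lemmas \reflemma{find-advice-small-mu2}, \reflemma{find-advice-medium-mu2} and \reflemma{find-advice-large-mu2} already give $s\ge t_\mu$ with probability $1-\eta/4$ and $\E[s]=O(t_\mu+\eps/\|\mu\|_2)$. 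The small branch carries an extra $+1$, which is $O(\eps/\|\mu\|_2)$ when $\|\mu\|_2\le 8\eps$ and is weighted by $\eps^2$ otherwise. The paper's proof cites these together with Lemmas \reflemma{estimate-L2-top-level--first-case-misclassification}, \reflemma{estimate-L2-top-level--medium-branch-contribution} and \reflemma{estimate-L2-top-level--small-branch-contribution}.

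The term $1/\sqrt{\eta}\eps^2\|\mu\|_2$ you worry about comes from the overview table. Lemma \reflemma{estimate-L2-base} itself gives $O(1/\sqrt{\eta}\eps\|\mu\|_2+s/\eta\eps^2)$ when an advice is given, which is already within budget, so no floor on $s$ is needed. Your proposed floor would not have helped anyway. That term comes from the advice-independent entry $10^3/\sqrt{\eta}\eps\sqrt{\ell}$ of the maximum defining $m$, which reflects the $\binom{m}{2}\|\mu\|_2^2$ part of the variance. Enlarging the advice can only increase $m$. Moreover, $s\ge\eps/\|\mu\|_2$ gives $s/\eps^2\ge 1/\eps\|\mu\|_2$, which is smaller than $1/\eps^2\|\mu\|_2$, so it would not absorb that term.
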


\subsection{Finding an advice when $\|\mu\|_2$ is small}
\label{sec:ext-abs:subsec:finding-advice-small-mu2}
\defproc{find-advice-small-mu2}{Find-Advice-Small-$\|\mu\|_2$}

We estimate $\|\mu\|_2^2$ within a fixed factor and then use a procedure \proc{estimate-L3-magnitude} to estimate $\|\mu\|_3^3$ within a $\max\{\|\mu\|_3^3/1000,\eps\|\mu\|_2^3\}$ additive error. If $\|\mu\|_3^3$ is small, then the $\eps\|\mu\|_2^3$-part dominates the estimation error and we obtain that $t_\mu = O(\eps/\|\mu\|_2)$. If $\|\mu\|_3^3$ is large, then the multiplicative error is dominant and we can use $t_\mu + 1 = \Theta(\|\mu\|_3^3 / \|\mu\|_2^4)$. In the case where $\|\mu\|_2 = O(\eps)$, we obtain that $t_\mu = \Theta(t_\mu + 1)$. \algforprocshort{find-advice-small-mu2}

\begin{lemma}{find-advice-small-mu2}
    Let $X$ be the random output of $\proc*{find-advice-small-mu2}(\eta; \mu, \eps)$.
    \begin{itemize}
        \item With probability at least $1-\eta$, $X \ge t_\mu$.
        \item $\E[X] = O(t_\mu + \eps/\|\mu\|_2 + 1)$.
        \item The sample complexity is $O(1/\eta \eps^{1/3} \|\mu\|_2)$.
    \end{itemize}
\end{lemma}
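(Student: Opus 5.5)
The procedure has two stages. First I obtain a rough estimate $\hat q$ of $\|\mu\|_2^2$, valid within a factor $(1\pm 1/2)$. Then I feed a magnitude parameter $a$ into \proc{estimate-L3-magnitude} and build the advice from the two estimates. For the first stage I use \proc{estimate-L2-moments} with constant $\eps$ and error $\eta/2$. It costs $O(\log(1/\eta)/\|\mu\|_2)$ samples, which is within budget. It also preserves all moments $r\le 0$, which will control expectations of quantities like $1/\hat q^{c}$. For the second stage I set $a=\eps^{1/3}\hat q^{1/2}$, so that $a^3=\eps\hat q^{3/2}\approx\eps\|\mu\|_2^3$, and call \proc{estimate-L3-magnitude} with error $\eta/2$. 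Let $\hat c$ be its output. The advice is
\[
X=\max\Bigl\{\frac{2(\hat c+a^3)}{\hat q^{2}}\cdot C_0,\ \frac{C_1\eps}{\hat q^{1/2}}\Bigr\},
\]
where $C_0,C_1$ are constants. The form of $X$ may be adjusted so that it dominates $\|\mu\|_3^3/\|\mu\|_2^4$ on the good event.

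\textbf{Correctness.} On the event that both calls succeed (probability $\ge 1-\eta$ by a union bound), we have $|\hat c-\|\mu\|_3^3|\le\max\{a^3,\|\mu\|_3^3/1000\}$. Hence $\hat c+a^3\ge (999/1000)\|\mu\|_3^3$. With $\hat q\le (3/2)\|\mu\|_2^2$, the first term is at least $\|\mu\|_3^3/\|\mu\|_2^4\ge t_\mu$ once $C_0$ is a suitable constant. So $X\ge t_\mu$.

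\textbf{Complexity.} The second call costs $O(1/\eta a)=O(1/\eta\eps^{1/3}\hat q^{1/2})$. Taking expectation over $\hat q$, I need $\E[\hat q^{-1/2}]=O(1/\|\mu\|_2)$. This follows from the preserved negative moment $r=-1/2$ of \proc{estimate-L2-moments}. Adding the first stage's cost gives $O(1/\eta\eps^{1/3}\|\mu\|_2)$, provided the $\log(1/\eta)$ term is dominated by $1/\eta$, which holds.

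\textbf{Expectation bound.} This is the main obstacle, because the expectation must hold unconditionally, including on failure events. I split $X$ into three pieces. The piece $C_1\eps/\hat q^{1/2}$ has expectation $O(\eps/\|\mu\|_2)$ by the negative-moment property. The piece $a^3/\hat q^2=\eps\hat q^{-1/2}$ is handled the same way. The remaining piece is $\hat c/\hat q^2$. Here I condition on $\hat q$, using that the two stages use independent samples. I then need $\E[\hat c\mid a]=O(\|\mu\|_3^3+a^3)$. I expect this to follow from the behavior of \proc{estimate-L3-magnitude} (non-negativity and a first-moment bound, as for the other estimators in the table); if only a high-probability guarantee is available, I would cap $\hat c$ or rerun it in an unbiased variant. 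Granting this bound, $\E[\hat c/\hat q^2]=O(\|\mu\|_3^3\,\E[\hat q^{-2}]+\eps\,\E[\hat q^{-1/2}])$. The term $\|\mu\|_3^3\,\E[\hat q^{-2}]$ is $O(\|\mu\|_3^3/\|\mu\|_2^4)=O(t_\mu+1)$, using the $r=-2$ moment. Summing the three pieces gives $\E[X]=O(t_\mu+\eps/\|\mu\|_2+1)$. The additive $+1$ in the lemma's statement reflects exactly this $t_\mu+1$ contribution.
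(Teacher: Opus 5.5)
Your proposal is correct and follows essentially the same route as the paper. It makes the same two calls: a constant-accuracy \proc{estimate-L2-moments} followed by \proc{estimate-L3-magnitude} with $a^3 \propto \eps\hat q^{3/2}$. It uses the same good-event bound $\hat c + a^3 \ge (999/1000)\|\mu\|_3^3$, and it relies on the preserved negative moments of $\hat q$ for both the cost and the expectation.

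The differences are minor. Your extra $\max$ with $C_1\eps/\hat q^{1/2}$ is redundant, since $a^3/\hat q^2 = \eps\hat q^{-1/2}$ already appears in the first term. Your hedge about $\E[\hat c \mid a]$ is also unnecessary: $T_m/\binom{m}{3}$ is exactly unbiased for every fixed $m$, so your conditioning on $\hat q$ goes through directly. That conditioning is a cleaner justification than the paper's factorization $\E[X] = O((\E[\ell_3]+\E[a])\E[1/\ell_2^2])$.
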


\subsection{Finding an advice when $\|\mu\|_2$ is medium}
\label{sec:ext-abs:subsec:finding-advice-medium-mu2}
\defproc{find-advice-medium-mu2}{Find-Advice-Medium-$\|\mu\|_2$}

Intuitively, we would wish to estimate $t_\mu$ directly by its definition, $\|\mu\|_3^3 / \|\mu\|_2^4 - 1$, with additive error $\pm O(\eps/\|\mu\|_2)$. \algforprocshort{estimate-t-directly}

Estimating $t$ directly by its definition requires, at least, an $(1 \pm O(\eps/\|\mu\|_2))$-multiplicative estimation of $\|\mu\|_3^3$. There are two ranges for which this estimation is too expensive:
\begin{itemize}
    \item If $\|\mu\|_2 = O(\eps)$, then our analysis of the $\|\mu\|_3$-estimator cannot take advantage of the $\Omega(1)$-accuracy. However, this exceeds the $O(1/\eps \|\mu\|_2)$ bound only for $\|\mu\|_2 = O(\eps^3)$.
    \item If $\|\mu\|_2 = \Omega(\eps^{3/5})$, then $O(\eps/\|\mu\|_2)$ is too accurate for a budget of $O(1/\eps\|\mu\|_2)$ samples.
\end{itemize}
The above ranges define the medium-$\|\mu\|_2$ case as the range of $\|\mu\|_2$s for which direct estimation of $t$ is effective. The extra polylogarithmic factors come from the need of amplification. \algforprocshort{find-advice-medium-mu2}

\begin{lemma}{find-advice-medium-mu2}
    Let $X$ be the output of $\proc*{find-advice-medium-mu2}(\eta; \mu,\eps)$. For every $0 < \eta \le 1/3$, a discrete distribution $\mu$ and $0 < \eps \le 1$,
    \begin{itemize}
        \item With probability at least $1 - \eta$, $X \ge t_\mu$.
        \item The expected output is $O(t_\mu + \eps/\|\mu\|_2)$.
        \item The expected sample complexity is $O(\log \frac{1}{\eta \eps} \cdot (\|\mu\|_2^{2/3} / \eps^2 + 1/ \|\mu\|_2^{4/3}))$.
    \end{itemize}
\end{lemma}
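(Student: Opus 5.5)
The plan is to analyze the procedure as a direct ``estimate $t_\mu$ by its definition'' scheme. First I would obtain a rough estimate $\hat\sigma$ of $\|\mu\|_2^2$. I would use \proc{estimate-L2-moments} with constant $\eps$ and error $\eta/3$, which costs $O(\log(1/\eta)/\|\mu\|_2)$ samples. This gives $\hat\sigma \in (1\pm 1/2)\|\mu\|_2^2$ with probability $1-\eta/3$. Preservation of all negative moments gives $\E[\hat\sigma^{-r}] = O(\|\mu\|_2^{-2r})$, and I will need this when later sample sizes are set as functions of $\hat\sigma$. With this in hand, I set the target additive resolution $\delta = \eps/\sqrt{\hat\sigma}\approx \eps/\|\mu\|_2$ for $t_\mu$. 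I then request relative accuracy $\delta_2 = c\,\delta$ for $\|\mu\|_2^2$, and relative accuracy $\delta_3 = c\,\delta\cdot \hat\sigma^2/\hat\tau$ for $\|\mu\|_3^3$, where $\hat\tau$ is itself a rough estimate. The aim is an additive error of $c\delta(1+\|\mu\|_3^3/\|\mu\|_2^4)$ on the ratio.

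To make $\delta_3$ meaningful I would first get a rough magnitude for $\|\mu\|_3^3$. I would do this with \proc{estimate-L3-magnitude}, taking $a$ such that $a^3 \approx \hat\sigma^{2}$; this is the natural threshold, since $\|\mu\|_3^3\ge\|\mu\|_2^4$ always holds. That call gives $\hat\tau$ with $\hat\tau = \Theta(\|\mu\|_3^3)$ up to the floor $a^3$, at cost $O(1/\eta a) = O(1/\eta\|\mu\|_2^{4/3})$; this is exactly the $1/\|\mu\|_2^{4/3}$ term in the statement. With $\hat\tau$ known, I run \proc{estimate-L3} with accuracy $\delta_3$ and \proc{estimate-L2-base} (or the moments version) with accuracy $\delta_2$. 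Each is amplified by medians to error $\eta/3$, which costs the $\log(1/\eta\eps)$ factor. I then output
\[
X = \frac{\tilde\tau(1+\delta_3)}{\tilde\sigma^2(1-\delta_2)^2} - 1 + C\delta .
\]
On the good event, $X\ge t_\mu$, and $X \le t_\mu + O\big(\delta(t_\mu+1)\big)+O(\delta)$. This is $O(t_\mu+\eps/\|\mu\|_2)$, because $\delta\le 1$ on the medium range, or else we can cap $\delta$.

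For the expected output outside the good event, I would bound $\E[\tilde\tau/\tilde\sigma^2]$. I would use independence of the stages, the unbiasedness or first-moment preservation of the $L_3$ estimator, and negative-moment preservation (in the $-2$ moment) of the $L_2$ estimator. Lemma \reflemma{median-expected-value} transfers these moment bounds through the median amplification. Together these give $\E[X]=O(t_\mu+1+\eps/\|\mu\|_2)$. The stray ``$+1$'' has to be absorbed, and here I would use that $\|\mu\|_2 \le \eps^{3/5}$ up to polylogs, so that $1 \le \eps/\|\mu\|_2$ whenever $\|\mu\|_2\le\eps$. More carefully, I expect the error term to scale as $\delta(t_\mu+1)$, and $\delta\cdot 1 = \eps/\|\mu\|_2$.

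The main obstacle is the complexity bound, namely showing that the \proc{estimate-L3} call costs $O(\|\mu\|_2^{2/3}/\eps^2)$ in expectation, with the random $\hat\sigma,\hat\tau$ inside. Its cost is about $1/(\delta_3^2\|\mu\|_2)\approx \hat\tau^2/(\eps^2\|\mu\|_2^{7})$, which blows up when $\|\mu\|_3^3\gg\|\mu\|_2^4$. I would handle this by exploiting the floor: I only ask for accuracy $c\delta\,\hat\sigma^2/\max\{\hat\tau,a^3\}$, and I replace the Lemma \reflemma{estimate-L3} bound by a variance analysis of the cubic-collision count, where the variance is governed by $\|\mu\|_3^3$ rather than by $\|\mu\|_2$. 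If that direct route fails, I would fall back on capping: if $\hat\tau/\hat\sigma^2$ exceeds $1/\delta$, I output $X=C\hat\tau/\hat\sigma^2$ directly. In that regime $t_\mu=\Theta(\|\mu\|_3^3/\|\mu\|_2^4)$, so a constant-factor estimate suffices, and the accuracy needed is only $\delta_3\approx\delta^2$-like. Balancing these regimes, I would then verify that the cost is $O(\|\mu\|_2^{2/3}/\eps^2)$, again taking expectations over $\hat\sigma$ via its negative moments.
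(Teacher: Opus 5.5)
Your proposal has genuine gaps in both the expected-output and the complexity parts, and they share a root. You aim for additive error $O(\delta)$ on $t_\mu$. The lemma, however, tolerates a constant multiplicative slack on $t_\mu$ plus an additive $O(\delta)$, where $\delta=\min\{1,\eps/\sqrt{\ell_2}\}$.

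\textbf{Expected output.} Unconditional constant-factor moment bounds on $\tilde\tau/\tilde\sigma^2$ only yield $\E[X]=O(t_\mu+1+\eps/\|\mu\|_2)$, and your absorption of the $+1$ fails. The lemma is claimed for every $\mu$. In the very regime this procedure serves, $\|\mu\|_2\ge\eps$, so $\eps/\|\mu\|_2\le 1$, while $t_\mu$ can be essentially $0$ (a near-uniform $\mu$). The inequality $1\le\eps/\|\mu\|_2$ is a small-$\|\mu\|_2$ fact. Your remark that the error should scale like $\delta(t_\mu+1)$ names the right target, but reaching it needs two ingredients your plan lacks.
\begin{enumerate}
\item The failure probability must be $O(\delta)$, not $\eta/3$. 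The paper amplifies every sub-estimate to error $\min\{\eta,\delta\}/4$, and this is the real source of the $\log(1/\eta\eps)$ factor.
\item The output on the failure event must be controlled separately. Since $X$ is large exactly when the estimates overshoot, $\E[X\cdot\mathbf{1}_{\text{bad}}]$ cannot be bounded by $\Pr[\text{bad}]$ times an unconditional moment bound.
\end{enumerate}
The paper returns $\min\{2(Y_1+\delta),Y_2\}$, where $Y_1=\max\{0,\ell_{33}/\ell_{22}^2-1\}$ and $Y_2$ is an \emph{independent} constant-accuracy over-estimate of $\|\mu\|_3^3/\|\mu\|_2^4$. On the good event $X\le 4t_\mu+4\delta$. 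By independence, the bad event contributes at most $\Pr[\text{bad}]\cdot\E[Y_2]\le\delta\cdot O(t_\mu+1)$. This is where the first moment of the $L_3$ estimator and the $(-2)$nd moment of the $L_2$ estimator are used. Averaging over $\delta$ with the negative moments of $\ell_2$ then gives $O(t_\mu+\eps/\|\mu\|_2)$.

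\textbf{Complexity.} The complexity bound, which you leave open, is an obstacle of your own making. Shrinking the $L_3$ accuracy to $c\delta\hat\sigma^2/\hat\tau$ buys nothing. Your own $\delta_2=c\delta$ already contributes an error of about $2c\delta(t_\mu+1)$ to the ratio, and such an error is affordable anyway. With relative accuracies $\delta/40$ on $\|\mu\|_2^2$ and $\delta/30$ on $\|\mu\|_3^3$, Lemma \reflemma{estimate-t-directly--combined-estimation} gives $\ell_{33}/\ell_{22}^2-1\in[t_\mu/2-\delta,\,2t_\mu+\delta]$. Hence $2(Y_1+\delta)\ge t_\mu$, and no magnitude estimate $\hat\tau$ is needed.

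Your fallback does not repair your scheme. Write $R=\|\mu\|_3^3/\|\mu\|_2^4$. By Lemma \reflemma{estimate-L3}, your \proc{estimate-L3} call costs roughly a factor $R^2$ over the target. Capping only when $R>1/\delta$ still allows $R$ up to $1/\delta$, i.e.\ a cost of about $\|\mu\|_2^{8/3}/\eps^4$. The cap would have to sit at a constant, which collapses back to accuracy $\Theta(\delta)$.

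With that choice, the dominant cost is $O(\log(1/\eta\delta)/\delta^2\|\mu\|_2^{4/3})$. Bounding it via $\E[1/\delta^2]\le 1+\E[\ell_2]/\eps^2$ uses the \emph{first} moment of $\ell_2$, not the negative moments you invoke.

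\textbf{Different procedure.} You are designing a new procedure, with an \proc{estimate-L3-magnitude} pre-step, rather than analyzing \proc{find-advice-medium-mu2}. That procedure merely sets $\delta$ from one \proc{estimate-L2-moments} call and returns the output of \proc{estimate-t-directly} on $(\eta;\mu,\delta)$. The lemma is a statement about that procedure.
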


\subsection{Distributions over a finite domain}
In this subsection and in the next one we focus on distributions $\mu$ over an explicitly given finite domain of $N$ elements. Without loss of generality, we assume that $\Omega = \{1,\ldots,N\}$. For every element $i \in \Omega$, we define $\delta_i = N\mu(i) - 1$, so that $\mu(i) = (1 + \delta_i) / N$.

\paragraph{Algebraic behavior of $t$}
In the finite-domain setting, we use an algebraic approach.

\begin{lemma}{mu22-explicit-by-deltas}
    $\|\mu\|_2^2 = \frac{1}{N}\left(1 + \frac{1}{N}\sum_{i=1}^N \delta_i^2\right)$.
\end{lemma}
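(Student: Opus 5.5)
The identity follows by direct expansion. Substitute $\mu(i) = (1+\delta_i)/N$ into $\|\mu\|_2^2 = \sum_{i=1}^N \mu(i)^2$ and expand the square:
\[
\|\mu\|_2^2 = \frac{1}{N^2}\sum_{i=1}^N \left(1 + 2\delta_i + \delta_i^2\right) = \frac{1}{N^2}\left(N + 2\sum_{i=1}^N \delta_i + \sum_{i=1}^N \delta_i^2\right).
\]

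The only needed fact is that the deviations sum to zero. Because $\mu$ is a distribution over $\Omega=\{1,\ldots,N\}$, we have $\sum_{i=1}^N \mu(i) = 1$. Hence
\[
\sum_{i=1}^N \delta_i = N\sum_{i=1}^N \mu(i) - N = 0.
\]

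Plugging this into the expansion removes the linear term and leaves
\[
\|\mu\|_2^2 = \frac{1}{N^2}\left(N + \sum_{i=1}^N \delta_i^2\right).
\]
Factoring out $1/N$ gives $\frac{1}{N}\left(1 + \frac{1}{N}\sum_{i=1}^N \delta_i^2\right)$, which is the claimed formula.

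There is no real obstacle here. The one point to state explicitly is the cancellation of the linear term, since it depends on $\mu$ being normalized over exactly the $N$ elements of the explicit domain.
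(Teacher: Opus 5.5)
Your proof is correct and is the same argument the paper uses. You expand $\sum_{i=1}^N (1+\delta_i)^2/N^2$ and drop the linear term using $\sum_{i=1}^N \delta_i = 0$, which follows from $\mu$ being normalized over the $N$ elements; you just spell out that last step in more detail than the paper does.
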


\begin{lemma}{t-explicit-by-deltas}
    $t_\mu = \frac{\|\mu\|_3^3}{\|\mu\|_2^4} - 1 = \frac{\frac{1}{N}\left(\sum_{i=1}^N \delta_i^2 + \sum_{i=1}^N \delta_i^3 - \frac{1}{N}\left(\sum_{i=1}^N \delta_i^2\right)^2\right)}{\left(1 + \frac{1}{N}\sum_{i=1}^N \delta_i^2\right)^2}$.
\end{lemma}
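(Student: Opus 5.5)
The identity is purely algebraic, so the plan is a direct computation. Substitute $\mu(i) = (1+\delta_i)/N$ into $\|\mu\|_3^3$ and into the square of $\|\mu\|_2^2$, the latter taken from Lemma \reflemma{mu22-explicit-by-deltas}. Then simplify $\|\mu\|_3^3/\|\mu\|_2^4 - 1$ into a single fraction with the common denominator $\left(1 + \frac{1}{N}\sum_i \delta_i^2\right)^2$. The only structural fact needed is that $\sum_{i=1}^N \delta_i = N\sum_i \mu(i) - N = 0$, which kills every linear term.

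The first step is to expand the cube:
\[ \|\mu\|_3^3 = \frac{1}{N^3}\sum_{i=1}^N (1+\delta_i)^3 = \frac{1}{N^3}\left(N + 3\sum_i \delta_i + 3\sum_i\delta_i^2 + \sum_i\delta_i^3\right) = \frac{1}{N^2}\left(1 + \frac{3}{N}\sum_i\delta_i^2 + \frac{1}{N}\sum_i\delta_i^3\right). \]
Write $S_2 = \sum_i\delta_i^2$ and $S_3 = \sum_i \delta_i^3$. Lemma \reflemma{mu22-explicit-by-deltas} gives $\|\mu\|_2^4 = \frac{1}{N^2}\left(1 + S_2/N\right)^2$. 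The factors $1/N^2$ cancel in the ratio, leaving
\[ t_\mu = \frac{1 + 3S_2/N + S_3/N - \left(1 + S_2/N\right)^2}{\left(1+S_2/N\right)^2}. \]

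The last step is to expand the numerator: $(1+S_2/N)^2 = 1 + 2S_2/N + S_2^2/N^2$. Subtracting it leaves $S_2/N + S_3/N - S_2^2/N^2$, which equals $\frac{1}{N}\left(S_2 + S_3 - \frac{1}{N}S_2^2\right)$. This is exactly the claimed numerator.

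I do not expect any real obstacle. The one point needing care is to invoke $\sum_i\delta_i = 0$ explicitly in the cube expansion, and to keep the powers of $N$ consistent when normalizing both moments to a common $1/N^2$ prefactor.
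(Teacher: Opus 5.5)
Your proposal is correct and takes essentially the same approach as the paper. The paper's proof is just the arithmetic combining Lemma \reflemma{mu22-explicit-by-deltas} with Lemma \reflemma{mu33-explicit-by-deltas}, and you simply re-derive the latter inline from $\sum_i \delta_i = 0$.
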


\paragraph{Estimating the sum of squares}
\defproc{estimate-sum-squares}{Estimate-Sum-Squares}

We would wish to estimate each $\delta_i$ separately and then aggregate the results to obtain an estimation for $\sum_{i=1}^N \delta_i^2$. However, the sum of squares can be too small to effectively estimate by learning individual elements. Instead, we use Lemma \reflemma{mu22-explicit-by-deltas}, which states that $N \|\mu\|_2^2 - 1 = \frac{1}{N}\sum_{i=1}^N \delta_i^2$.

Our sum-of-squares algorithm uses an initial estimation of $\|\mu\|_2^2$ and refines it until it can distinguish between $1/N$ and $\|\mu\|_2^2$. If $\eps'$ is the largest accuracy magnitude for which $(1+\eps')/N \not\approx (1-\eps')\|\mu\|_2^2$, then $\eps' = \Theta(N\|\mu\|_2^2 - 1) = \Theta(\frac{1}{N}\sum_{i=1}^N \delta_i^2)$.

%More precisely, we use a decreasing sequence of $\eps$s, with the last element bounded by the given $\eps$. The algorithm initially estimates $p_0 \in (1 \pm 1/12)\|\mu\|_2^2$ using the base estimator and then repeatedly refines its estimation until being able to separate between $\|\mu\|_2^2$ from $1/N$ (or reaching a $(1 \pm O(\eps))$-factor, followed by a report that $t$ is too small). 
Our core idea is tracking an additional decreasing sequence $t_i$ of upper bounds for $t_\mu$, using the knowledge about $\|\mu\|_2^2$ obtained in past iterations.
%
%For $0 \le i \le k = \ceil{\log_2 \eps^{-1}}$, we define $\eps_i = 2^{-i}$. Additionally, for every $1 \le i \le k$, we define $t_i = \eps_{i-1} \sqrt{N}$. 
\algforprocshort{estimate-sum-squares}

\paragraph{Estimating the sum of cubes}
Estimating the sum of cubes is not natural, since it can be negative. Even if it is positive, it can be asymptotically smaller than the sum of squares, which is intricate enough to require an iterative estimation logic. Instead, we estimate only the sum of large cubes, $\sum_{i : \delta_i \ge 1} \delta_i^3$, with an additive error that may depend on the sum of squares. To do that, we learn the input distribution $\mu$ using $\tilde{O}(N)$ samples to estimate each $\delta_i$ individually and return the sum of the sufficiently-large cubes. \algforprocshort{estimate-sum-cubes}

\subsection{Friendly distributions}

We define a class of \emph{friendly} distributions.

\begin{definition}[Friendly distribution]{friendly-distribution}
    A discrete distribution $\mu$ is \emph{friendly} if:
    \begin{itemize}
        \item It is defined over a finite domain $\Omega$.
        \item All elements in $\mu$ have probability at least $7/(13\abs{\Omega})$.
    \end{itemize}
\end{definition}

Note that instead of $7/13$ we could use every constant strictly greater than $1/2$. The lack of rare elements implies a lower bound for $t_\mu$ based on the sum of squares, the sum of large cubes and the magnitude of $\|\mu\|_2^2$.

\begin{lemma}{t-lbnd-by-sum-squares-sum-cubes}
    Let $\mu$ be a friendly distribution over $\Omega = \{1,\ldots,N\}$. For every $i \in \Omega$, let $\delta_i = N\mu(i) - 1$. In this setting, $t_\mu \ge \frac{1}{90(N\|\mu\|_2^2)^2} \cdot \frac{1}{N}\left(\sum_{i=1}^N \delta_i^2 + \sum_{i : \delta_i \ge 1} \delta_i^3\right)$.
\end{lemma}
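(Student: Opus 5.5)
The plan is to start from the exact formula of Lemma \reflemma{t-explicit-by-deltas} and use Lemma \reflemma{mu22-explicit-by-deltas}. Together they show that the denominator $\bigl(1+\frac1N\sum_i\delta_i^2\bigr)^2$ is exactly $(N\|\mu\|_2^2)^2$. So it suffices to prove the purely algebraic inequality
\[
S_2+S_3-\frac{S_2^2}{N} \;\ge\; \frac{1}{90}\Bigl(S_2+\sum_{i:\delta_i\ge1}\delta_i^3\Bigr),
\]
where $S_2=\sum_i\delta_i^2$ and $S_3=\sum_i\delta_i^3$. Throughout we use $\sum_i\delta_i=0$, which holds because $\mu$ is a distribution. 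Friendliness gives $\delta_i\ge 7/13-1=-6/13$, so $1+\delta_i\ge 7/13$ for every $i$.

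The key step is a weighted-variance identity for the left-hand side, which I denote by $\Phi$. Put $m=S_2/N$. Then
\[
\Phi=\sum_i(1+\delta_i)(\delta_i-m)^2 .
\]
To check this, expand: the right-hand side equals $\sum_i(1+\delta_i)\delta_i^2-2m\sum_i(1+\delta_i)\delta_i+m^2\sum_i(1+\delta_i)$. This is $S_2+S_3-2mS_2+m^2N=S_2+S_3-S_2^2/N$. Conceptually, the identity expresses that $\|\mu\|_3^3-\|\mu\|_2^4$ is the variance of $\mu(i)$ under $i\sim\mu$.

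Since every weight satisfies $1+\delta_i\ge 7/13$, we get the first bound:
\[
\Phi\ge\frac{7}{13}\sum_i(\delta_i-m)^2=\frac{7}{13}\Bigl(S_2-2m\sum_i\delta_i+Nm^2\Bigr)=\frac{7}{13}\Bigl(S_2+\frac{S_2^2}{N}\Bigr).
\]
This handles the $S_2$ term, and it also gives the by-product $S_2^2/N\le\frac{13}{7}\Phi$.

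For the large-cubes term, split the indices with $\delta_i\ge1$ into two groups.
\begin{itemize}
\item If $\delta_i\ge 2m$, then $\delta_i-m\ge\delta_i/2$ and $1+\delta_i\ge\delta_i$. So this single term of $\Phi$ is at least $\delta_i^3/4$, and all summands of $\Phi$ are nonnegative. Summing over this group bounds its contribution to $\sum_{\delta_i\ge1}\delta_i^3$ by $4\Phi$.
\item If $1\le\delta_i<2m$, then $\delta_i^3<2m\delta_i^2$. Summing over this group gives at most $2mS_2=2S_2^2/N\le\frac{26}{7}\Phi$.
\end{itemize}
Adding everything,
\[
S_2+\sum_{\delta_i\ge1}\delta_i^3\le\Bigl(\frac{13}{7}+4+\frac{26}{7}\Bigr)\Phi<10\,\Phi,
\]
which is comfortably better than the required constant $90$.

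The only real obstacle is the negative term $-S_2^2/N$. A naive termwise argument using $\delta_i^2(1+\delta_i)\ge0$ cannot absorb it. The variance rewriting with centre $m$ is what resolves it: the centring produces $+S_2^2/N$ instead of $-S_2^2/N$. The same rewriting also supplies the bound on $S_2^2/N$ needed for the small-$\delta_i$ cubes. If the identity were unavailable, I would fall back on Cauchy--Schwarz to control $S_2^2/N$ by $\sum_i(1+\delta_i)\delta_i^2$, but the identity makes this unnecessary.
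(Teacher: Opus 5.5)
Your proof is correct, and it takes a genuinely different route from the paper.

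\textbf{The paper's route.} It never rewrites $\Phi=S_2+S_3-S_2^2/N$. Instead it proves two separate technical lemmas, $\Phi\ge\frac16 S_2$ and $\Phi\ge\frac1{45}\sum_{\delta_i\ge1}\delta_i^3$, and averages them to get the constant $90$. Both are proved by case analysis on the raw sums.
\begin{itemize}
\item The first lemma splits on $S_2\le N/3$ versus $S_2\ge N/3$. The small case uses $\delta_i^3\ge-\frac12\delta_i^2$. The large case uses $\delta_i^3\ge\abs{\delta_i}^3-\frac14$ together with Cauchy--Schwarz, $S_2^2/N\le\frac1N\sum_i\abs{\delta_i}\cdot\sum_i\abs{\delta_i}^3\le\sum_i\abs{\delta_i}^3$. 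The last step needs $\sum_i\abs{\delta_i}\le N$, which comes from $\delta_i\ge-\frac12$.
\item The second lemma splits on $S_2/N\le 2$ versus $S_2/N\ge 2$. The first case has a subcase that falls back on the first lemma. The second case uses Cauchy--Schwarz again, now with $\frac1N\sum_{\delta_i<0}\abs{\delta_i}\le\frac6{13}$; this is exactly where the constant $7/13$ is tuned.
\end{itemize}

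\textbf{Your route.} Your identity $\Phi=\sum_i(1+\delta_i)(\delta_i-m)^2$ is $N^3\Var_{i\sim\mu}[\mu(i)]$, the same fact that drives Lemma \reflemma{t-is-chebyshev}. It removes the negative term $-S_2^2/N$ at once and makes every summand nonnegative. After that, a single split of the large $\delta_i$ at $2m$ is enough.

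\textbf{What this buys.}
\begin{itemize}
\item A shorter proof with no case analysis.
\item A much better constant: about $10$, or $54/7$ if you use $S_2+2S_2^2/N\le 2(S_2+S_2^2/N)$.
\item The two component bounds as by-products: $S_2\le\frac{13}{7}\Phi$ and $\sum_{\delta_i\ge1}\delta_i^3\le\frac{54}{7}\Phi$.
\item A weaker hypothesis. For this lemma, friendliness can be relaxed to $\mu(i)\ge c/N$ for any constant $c>0$, and your constant becomes $3/c+4$.
\end{itemize}
By contrast, the paper's steps as written rely on $\delta_i\ge-\frac12$, and its cube bound relies on the specific value $6/13$. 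This is consistent with the paper's remark that $7/13$ may be replaced by any constant above $\frac12$. Your argument shows that, for this lemma, the threshold $\frac12$ is an artifact of the proof and not of the statement.

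Some positive $c$ is still needed. A two-point domain with masses $(1,0)$ has $t_\mu=0$ but $S_2=2$.
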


\defproc{estimate-t-friendly}{Estimate-$t$-Friendly}

Lemma \reflemma{t-lbnd-by-sum-squares-sum-cubes} provides the mechanism for lower-bounding $t_\mu$ by obtaining lower bounds for $\sum_{i=1}^N \delta_i^2$ and $\sum_{i : \delta_i \ge 1} \delta_i^3$ and an upper bound for $\|\mu\|_2^2$. \algforprocshort{estimate-t-friendly}

\begin{lemma}{estimate-t-friendly}
    Let $X$ be the output of $\proc*{estimate-t-friendly}(\eta; \mu, \eps)$.
    \begin{itemize}
        \item If $\mu$ is friendly, then $X \ge t_\mu$ with probability at least $1-\eta$.
        \item If $\mu$ is friendly, then $\E[X] = O(t_\mu + \eps)$. Otherwise, $\E[X] = O(\sqrt{N})$.
        \item The sample complexity is $O\left(\log \frac{1}{\eta} \cdot \left(\frac{\sqrt{N}}{\eps} + \frac{1}{\eps\|\mu\|_2}\right) + N \log \frac{N}{\eta\eps}\right)$.
    \end{itemize}
\end{lemma}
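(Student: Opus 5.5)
We need to prove Lemma \reflemma{estimate-t-friendly}. Its three items concern correctness, expected output, and sample complexity of \proc{estimate-t-friendly}. I would have the procedure compute three quantities and plug them into the lower bound of Lemma \reflemma{t-lbnd-by-sum-squares-sum-cubes}, inverted so that it becomes an upper bound on $t_\mu$ in terms of estimable quantities:
\begin{itemize}
    \item an estimate $S_2$ of $\frac{1}{N}\sum_i \delta_i^2 = N\|\mu\|_2^2-1$, obtained via \proc{estimate-sum-squares};
    \item an estimate $S_3$ of $\frac{1}{N}\sum_{i:\delta_i\ge 1}\delta_i^3$, obtained via \proc{estimate-sum-cubes};
    \item a fixed-factor estimate $P$ of $\|\mu\|_2^2$, obtained via \proc{estimate-L2-moments}.
\end{itemize}

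The starting point is the exact identity of Lemma \reflemma{t-explicit-by-deltas}. Using $1+\frac1N\sum\delta_i^2\ge 1$ and dropping the negative $-\frac1N(\sum\delta_i^2)^2$ term, it gives
\[
t_\mu \le \frac{1}{N}\Bigl(\sum_i\delta_i^2+\sum_i\delta_i^3\Bigr).
\]
The contribution of $\delta_i<1$ to the cubic sum is at most $\sum\delta_i^2$, since $\delta_i^3\le\delta_i^2$ there. Hence
\[
t_\mu\le \frac1N\Bigl(2\sum_i\delta_i^2+\sum_{i:\delta_i\ge1}\delta_i^3\Bigr).
\]

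\textbf{Correctness.} The output is $X = C(S_2+S_3)$ for a constant $C\ge 2$. It suffices that, with probability $1-\eta$ (split as $\eta/3$ per subroutine via the amplification of Lemma \reflemma{amplify-1/3-to-eta}), the estimates satisfy
\[
S_2 \ge \tfrac12 \cdot \tfrac1N\sum\delta_i^2 \quad\text{(or } S_2\ge\eps \text{ when the sum of squares is below }\eps\text{)},
\qquad
S_3 \ge \tfrac1N\sum_{\delta_i\ge1}\delta_i^3 - O(S_2).
\]
Friendliness is not needed for this direction; it is only used through the lower bound.

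\textbf{Expected output.} Here I would use Lemma \reflemma{t-lbnd-by-sum-squares-sum-cubes} for friendly $\mu$:
\[
\tfrac1N\Bigl(\sum\delta_i^2+\sum_{\delta_i\ge1}\delta_i^3\Bigr) \le 90(N\|\mu\|_2^2)^2\, t_\mu .
\]
By friendliness, every $\delta_i\ge -6/13$, but $N\|\mu\|_2^2$ can still be large. So I would have the procedure output $X = C(S_2+S_3)/(NP)^2$ instead. This is valid because $t_\mu$ is also at most the bound above divided by $(1+\frac1N\sum\delta_i^2)^2=(N\|\mu\|_2^2)^2$; I would verify that the dropped negative term keeps this true up to a constant. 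Then
\[
\E[X]=O(t_\mu+\eps),
\]
provided the estimators are unbiased or moment-preserving up to constants, $\E[S_2]=O(\frac1N\sum\delta_i^2+\eps)$, and $1/P^2$ has bounded expectation. The last point uses the negative-moment preservation of \proc{estimate-L2-moments}. Independence of the three estimators lets the expectation factor as $\E[S_2+S_3]\cdot\E[1/(NP)^2]$.

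For non-friendly $\mu$, the procedure should cap its output at $O(\sqrt N)$. This matches the $t_i=\eps_{i-1}\sqrt N$ upper-bound sequence tracked in \proc{estimate-sum-squares}, which gives $\E[X]=O(\sqrt N)$ unconditionally.

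\textbf{Sample complexity.} The costs add up as follows.
\begin{itemize}
    \item \proc{estimate-sum-squares} costs $O(\log\frac1\eta(\sqrt N/\eps+1/\eps\|\mu\|_2))$. Its iterations halve $\hat\eps$ down to $\eps$, and distinguishing $(1\pm\hat\eps)/N$ costs $O(1/\hat\eps^2\|\mu\|_2)$. The $t_i$ tracking stops it once $\hat\eps\approx N\|\mu\|_2^2-1$, and a geometric sum gives the bound.
    \item Learning all masses to within a constant factor costs $O(N\log\frac{N}{\eta\eps})$ by a Chernoff bound with a union bound over the $N$ elements. The $\eps$ inside the logarithm comes from the required failure probability.
    \item \proc{estimate-L2-moments} with constant accuracy costs $O(\log(1/\eta)/\|\mu\|_2)$, which is dominated.
\end{itemize}

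\textbf{Main obstacle.} The hardest part is the sum-of-cubes step. The learned $\hat\delta_i$ have constant relative error only in $1+\delta_i$, so for elements with $\delta_i$ near $1$ the error in $\delta_i^3$ is comparable to $\delta_i^2$. The additive error of $S_3$ must be absorbed into $O(\frac1N\sum\delta_i^2)$, and this is where the threshold ``sufficiently large cubes'' must be tuned. I would use the threshold $\hat\delta_i\ge 2$ together with the observation that for $\delta_i\ge1$ one has $(1+\delta_i)^3=\Theta(\delta_i^3)$.
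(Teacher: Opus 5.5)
Your proposal is correct and is essentially the paper's proof. The paper's procedure outputs $\min\{360(a+b)/(N\ell_2)^2,\sqrt N\}$, built from \proc{estimate-sum-squares}, from \proc{estimate-sum-cubes} (run with failure probability $\min\{\eta/3,\eps\}$, which is where the $\eps$ in the logarithm and the additive $O(\eps)$ in $\E[b]$ come from), and from a fixed-factor estimate $\ell_2$ that preserves negative moments; it obtains the expectation bound from independence and Lemma~\reflemma{t-lbnd-by-sum-squares-sum-cubes}, exactly as you describe.

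For correctness, your direct upper bound $t_\mu \le \frac{1}{N}\bigl(2\sum_i\delta_i^2+\sum_{i:\delta_i\ge1}\delta_i^3\bigr)/(N\|\mu\|_2^2)^2$ is the right justification; at that step the paper instead cites the lower-bound lemma. Two additions are needed: state that the cap is harmless because $t_\mu < 1/\|\mu\|_2 \le \sqrt N$, and note that your ``main obstacle'' is already resolved by Lemma~\reflemma{estimate-sum-cubes}, so the threshold does not need re-tuning.
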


\subsection{Finding an advice when $\|\mu\|_2$ is large}
\label{sec:ext-abs:subsec:finding-advice-large-mu2}

We follow a different approach for estimating $t$ when a direct estimation is too expensive. We observe that we can lower-bound $t$ by the mass of testable sets, as stated in the following key lemma.

\begin{lemma}{t-is-chebyshev}
    For every discrete distribution $\mu$, $t_\mu = \|\mu\|_3^3/\|\mu\|_2^4 - 1 \ge \sup_{\alpha > 0} \alpha^2 \Pr\left[\mu(i) \notin (1 \pm \alpha) \|\mu\|_2^2\right]$.
\end{lemma}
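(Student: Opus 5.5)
This is Chebyshev's inequality applied to the right random variable. Draw $i \sim \mu$ and set $Y = \mu(i)$. As noted in the introduction, $\E[Y] = \sum_i \mu(i)^2 = \|\mu\|_2^2$. We also have $\E[Y^2] = \sum_i \mu(i)^3 = \|\mu\|_3^3$, so $\Var[Y] = \|\mu\|_3^3 - \|\mu\|_2^4$. Dividing by $\E[Y]^2 = \|\mu\|_2^4$ gives
\[
t_\mu = \frac{\Var[Y]}{\E[Y]^2}.
\]
The probability in the statement is taken over $i \sim \mu$, so it equals $\Pr[\abs{Y - \E[Y]} > \alpha \E[Y]]$. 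If $(1\pm\alpha)$ is read as a closed interval, exclusion means a strict inequality. If it is read as an open interval, the event is $\abs{Y-\E[Y]} \ge \alpha\E[Y]$. Chebyshev's inequality covers the non-strict version, and hence both readings.

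For each fixed $\alpha > 0$, Chebyshev's inequality (Appendix \ref{apx:std}) gives
\[
\Pr\left[\abs{Y - \E[Y]} \ge \alpha \E[Y]\right] \le \frac{\Var[Y]}{\alpha^2 \E[Y]^2} = \frac{t_\mu}{\alpha^2}.
\]
Multiplying by $\alpha^2$ bounds each term by $t_\mu$, and taking the supremum over $\alpha > 0$ proves the claim.

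The argument needs only finite variance. Since $\mu(i) \le 1$, we have $\|\mu\|_3^3 \le \|\mu\|_2^2 \le 1$, so every moment involved is finite even when the domain is countably infinite. Also $\|\mu\|_2 > 0$ because $\mu$ is a distribution, so the division by $\E[Y]^2$ is legitimate.

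The main point is recognizing $t_\mu$ as the Chebyshev ratio of $\mu(i)$ under $i \sim \mu$, which the paper already observes in the introduction. After that, the only subtlety is checking that Chebyshev's inequality holds for either reading of the boundary of the interval $(1\pm\alpha)\|\mu\|_2^2$.
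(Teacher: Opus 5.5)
Your proposal is correct and follows the paper's proof: set $X = \mu(i)$ for $i \sim \mu$, so $\E[X] = \|\mu\|_2^2$ and $\Var[X] = \|\mu\|_3^3 - \|\mu\|_2^4$, then apply Chebyshev's inequality for each $\alpha > 0$ and take the supremum. Your remarks on the boundary reading and finiteness of the moments are sound additions the paper leaves implicit, and you correctly use the $a^2$ denominator, whereas the appendix's statement of Chebyshev's inequality has a typo.
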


\subsubsection{Good partitions}
Let $A \cup B$ be a partition of the domain $\Omega$, obtained by an explicit construction of $A$ and an implicit definition of $B$ as its complement. Such a partition is considered \emph{good} if:
\begin{itemize}
    \item $A \subseteq \{ i : \mu(i) > \frac{11}{20} \|\mu\|_2^2 \}$.
    \item $B \subseteq \{ i : \mu(i) < \frac{2}{3} \|\mu\|_2^2 \}$.
\end{itemize}
That is, $A$ only contains large-mass elements and $B$ only contains a small-mass elements (note the overlap). Good partitions provide a few useful behaviors, as stated in the following lemmas. % Note that we could use every two constants in $(\frac{1}{2}, 1)$ leaving some overlap (to allow probabilistic testing of $\mu(i)$).

\begin{lemma}{good-partition--mu-B-small}
    Let $\mu$ be a discrete distribution and let $A \cup B$ be a good partition. In this setting, $\mu(B) \le 9t_\mu$.
\end{lemma}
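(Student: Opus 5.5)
The plan is to apply Lemma \reflemma{t-is-chebyshev} with one fixed value of $\alpha$ chosen to match the threshold defining $B$. The second condition of a good partition says every $i \in B$ satisfies $\mu(i) < \frac{2}{3}\|\mu\|_2^2$. Since $\frac{2}{3} = 1 - \frac{1}{3}$, every element of $B$ lies outside the interval $(1 \pm \frac{1}{3})\|\mu\|_2^2$. Concretely, $\mu(i) < (1-\frac13)\|\mu\|_2^2$, so the inclusion $B \subseteq \{ i : \mu(i) \notin (1 \pm \frac{1}{3})\|\mu\|_2^2 \}$ holds.

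Next I interpret the probability in Lemma \reflemma{t-is-chebyshev}. There the element $i$ is drawn according to $\mu$, so $\Pr[\mu(i) \notin (1\pm\alpha)\|\mu\|_2^2]$ is the $\mu$-mass of the set of such elements. This is consistent with the overview, where $t_\mu$ is the Chebyshev ratio $\Var/\E^2$ of $\mu(i)$ for $i \sim \mu$. By monotonicity of measure, the inclusion above gives $\mu(B) \le \Pr_{i\sim\mu}[\mu(i) \notin (1 \pm \frac{1}{3})\|\mu\|_2^2]$.

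Taking $\alpha = 1/3$ in the supremum of Lemma \reflemma{t-is-chebyshev} gives $t_\mu \ge \frac{1}{9}\Pr_{i\sim\mu}[\mu(i) \notin (1 \pm \frac13)\|\mu\|_2^2] \ge \frac{1}{9}\mu(B)$. Rearranging yields $\mu(B) \le 9t_\mu$. The first condition, the one on $A$, is not used.

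There is no real obstacle here. The only point to check is that the boundary convention of the interval does not matter. It does not, because the inequality defining $B$ is strict, so $\mu(i) < \frac23\|\mu\|_2^2$ puts $\mu(i)$ outside both the open and the closed interval $(1\pm\frac13)\|\mu\|_2^2$. As a sanity check, Lemma \reflemma{t-is-chebyshev} itself is just Chebyshev's inequality for $\mu(i)$ with $i\sim\mu$, using mean $\|\mu\|_2^2$ and variance $\|\mu\|_3^3-\|\mu\|_2^4$.
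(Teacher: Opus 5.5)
Your proposal is correct and is essentially the paper's own proof: it also applies Lemma \reflemma{t-is-chebyshev} with $\alpha = 1/3$. The key step in both is that $B \subseteq \{ i : \mu(i) < \frac{2}{3}\|\mu\|_2^2 \}$ lies outside $(1 \pm \frac{1}{3})\|\mu\|_2^2$, which gives $t_\mu \ge \frac{1}{9}\mu(B)$.
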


\begin{lemma}{good-partition--A-is-friendly}
    Let $\mu$ be a discrete distribution and let $A \cup B$ be a good partition. If $t_\mu \le 1/900$, then the conditional distribution $\mu_A$ is friendly.
\end{lemma}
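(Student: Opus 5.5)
We need to show: if $A\cup B$ is a good partition and $t_\mu\le 1/900$, then every $i\in A$ has $\mu_A(i)=\mu(i)/\mu(A)\ge 7/(13|A|)$. (Finiteness of $A$ follows because every $i\in A$ has $\mu(i)>\frac{11}{20}\|\mu\|_2^2$, so $|A|<\frac{20}{11\|\mu\|_2^2}$.) The friendliness condition is equivalent to
\[
\mu(i)\,|A|\ \ge\ \tfrac{7}{13}\,\mu(A)\qquad\text{for every } i\in A.
\]
The plan is to lower-bound the left-hand side using the goodness of the partition, and to upper-bound the typical mass of elements of $A$ (hence $\mu(A)/|A|$) using Lemma \reflemma{t-is-chebyshev}.

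\textbf{Lower bound.} For every $i\in A$ we have $\mu(i)>\frac{11}{20}\|\mu\|_2^2$, so $\mu(i)|A|>\frac{11}{20}\|\mu\|_2^2|A|$. It therefore suffices to show $\mu(A)\le \frac{13}{7}\cdot\frac{11}{20}\|\mu\|_2^2|A|=\frac{143}{140}\|\mu\|_2^2|A|$. In words, the average mass over $A$ must be at most about $(1+1/50)\|\mu\|_2^2$.

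\textbf{Upper bound on the average.} Split $A$ into $A_{\le}=\{i\in A:\mu(i)\le(1+\alpha)\|\mu\|_2^2\}$ and $A_{>}$, its complement in $A$, with $\alpha$ a small constant (say $\alpha=1/100$). Then
\[
\mu(A)\ \le\ (1+\alpha)\|\mu\|_2^2\,|A_{\le}|+\mu(A_{>}).
\]
Lemma \reflemma{t-is-chebyshev} gives $\mu(A_>)\le \Pr[\mu(i)\notin(1\pm\alpha)\|\mu\|_2^2]\le t_\mu/\alpha^2$. This is a mass bound, but we need to compare it with $\|\mu\|_2^2|A|$. Two facts give this. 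First, $|A|\ge \mu(A)/\max_i\mu(i)$, and $\max_i\mu(i)\le\|\mu\|_2\le 1$. Second, and better, $|A|\,\|\mu\|_2^2$ is at least a constant: by Lemma \reflemma{good-partition--mu-B-small}, $\mu(A)\ge 1-9t_\mu\ge 0.99$. Then
\[
\mu(A)\le \sum_{i\in A}\mu(i)\ \text{ and }\ \|\mu\|_2^2\ \ge\ \sum_{i\in A}\mu(i)^2\ \ge\ \frac{\mu(A)^2}{|A|},
\]
so $|A|\|\mu\|_2^2\ge \mu(A)^2\ge 0.98$. Combining, $\mu(A)\le (1+\alpha)\|\mu\|_2^2|A| + t_\mu/\alpha^2$. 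With $t_\mu\le 1/900$, the additive term relative to $\|\mu\|_2^2|A|\ge 0.98$ is at most about $t_\mu/(0.98\alpha^2)$.

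\textbf{Main obstacle: choosing $\alpha$.} The naive Chebyshev tail $t_\mu/\alpha^2$ is not small enough: with $\alpha=1/100$ it is $10^4/900$. I would first try to optimize $\alpha$, balancing $\alpha$ against $t_\mu/\alpha^2$; this gives $\alpha\approx t_\mu^{1/3}\approx 0.1$ and a total excess of about $0.15$. That exceeds the allowed $143/140-1\approx 0.021$, so this direct route fails.

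The fallback is a sharper estimate. Write $\mu(i)=\|\mu\|_2^2(1+d_i)$. Since $t_\mu=\E_{i\sim\mu}[d_i^2]$, the excess of the average mass over $A$ is controlled by Cauchy--Schwarz instead of Markov-type tails:
\[
\sum_{i\in A}\mu(i)-\|\mu\|_2^2|A|=\|\mu\|_2^2\sum_{i\in A}d_i\le \|\mu\|_2^2\sqrt{|A|\sum_{i\in A}d_i^2}.
\]
Since $\mu(i)\ge\frac{11}{20}\|\mu\|_2^2$ on $A$, we have $\sum_{A}d_i^2\le\frac{20}{11\|\mu\|_2^2}\,t_\mu$. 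The relative excess is then at most
\[
\sqrt{\frac{20\,t_\mu}{11\,|A|\,\|\mu\|_2^2}}\ \le\ \sqrt{\frac{20}{11\cdot 900\cdot 0.98}}\ \approx\ 0.045.
\]
This is still somewhat above $0.021$, so the constants must be tightened, for example by using the exact value of $|A|\|\mu\|_2^2$ or a weighted Cauchy--Schwarz that compares $\mu(A)/|A|$ to $\|\mu\|_2^2$ more tightly. Fixing these constants so the bound lands at $7/13$ is where I expect the real work to be.
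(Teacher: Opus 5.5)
\paragraph{Verdict: incomplete; the last step is missing.}
Your reduction is correct: it suffices to show $\mu(A)\le\frac{143}{140}\|\mu\|_2^2|A|$. But the proposal never establishes this inequality. You try to bound the excess of the average mass over $A$ in two ways, and you show yourself that both fall short of the allowed $\approx 0.021$:
\begin{itemize}
\item the Chebyshev tail from Lemma \reflemma{t-is-chebyshev} gives an excess of about $0.15$;
\item Cauchy--Schwarz on the deviations $d_i$ gives about $0.045$.
\end{itemize}
You then leave the constants unresolved, so the key inequality is not proved. The ``main obstacle'' is also misdiagnosed, because no concentration argument is needed. You derived $|A|\|\mu\|_2^2\ge\mu(A)^2$ but used it only as a normalizer ($\ge 0.98$) for additive error terms. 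That discards the fact that it compares $\|\mu\|_2^2|A|$ directly with $\mu(A)$.

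\paragraph{The missing step.}
Since $\sum_{i\in A}\mu(i)^2\ge\mu(A)^2/|A|$, the average mass over $A$ satisfies $\mu(A)/|A|\le\|\mu\|_2^2/\mu(A)$. Lemma \reflemma{good-partition--mu-B-small} gives $\mu(A)\ge 1-9t_\mu\ge 99/100$. Hence
\[
\frac{\mu(A)}{\|\mu\|_2^2|A|}\le\frac{1}{\mu(A)}\le\frac{100}{99}<\frac{143}{140}.
\]
Equivalently, for every $i\in A$,
\[
\mu(i)|A|>\tfrac{11}{20}\|\mu\|_2^2|A|\ge\tfrac{11}{20}\mu(A)^2\ge\tfrac{11}{20}\cdot\tfrac{99}{100}\,\mu(A)>\tfrac{7}{13}\mu(A).
\]
This is essentially the paper's proof, which chains
\[
\mu_A(i)\ge\mu(i)>\tfrac{11}{20}\|\mu\|_2^2\ge\tfrac{11}{20}\mu(A)^2\|\mu_A\|_2^2\ge\tfrac{11}{20}\cdot\tfrac{99^2}{100^2}\cdot\tfrac{1}{|A|}\ge\tfrac{7}{13|A|}.
\]
The paper's displayed factor $(1-\mu(A))^2$ should read $\mu(A)^2$. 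The hypothesis $t_\mu\le 1/900$ enters only through $\mu(B)\le 1/100$, not through any variance bound on the masses inside $A$.
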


\begin{lemma}{good-partition--tmu-by-tmuA-muB}
    Let $\mu$ be a discrete distribution and let $A \cup B$ be a good partition. If $t_\mu \le 1/90$, then $t_{\mu_{\neg B}} \in t_\mu \pm 5\mu(B)$.
\end{lemma}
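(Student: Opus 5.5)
The plan is a direct algebraic comparison. I will write $\|\mu_A\|_3^3$ and $\|\mu_A\|_2^4$ in terms of the restricted sums over $A$. Then I will bound how much removing $B$ can perturb each of them, using only the upper bound $\mu(i) < \frac{2}{3}\|\mu\|_2^2$ on $B$ and the smallness of $\mu(B)$.

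\textbf{Setup.} Let $b=\mu(B)$, $S_2=\|\mu\|_2^2$ and $S_3=\|\mu\|_3^3$, so that $t_\mu+1=S_3/S_2^2$. Since $\mu_{\neg B}=\mu_A$ has $\mu_A(i)=\mu(i)/(1-b)$, we get
\[
t_{\mu_{\neg B}}+1=\frac{\sum_{i\in A}\mu(i)^3/(1-b)^3}{\bigl(\sum_{i\in A}\mu(i)^2\bigr)^2/(1-b)^4}=(1-b)\,\frac{S_3-\gamma}{(S_2-\beta S_2)^2},
\]
where $\beta S_2=\sum_{i\in B}\mu(i)^2$ and $\gamma=\sum_{i\in B}\mu(i)^3$.

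\textbf{Bounds on the $B$-contributions.} By goodness of the partition, every $i\in B$ has $\mu(i)<\frac23 S_2$. Summing over $B$ gives $0\le\beta\le\frac23 b$ and $0\le \gamma\le \frac49 S_2^2\, b$. Writing $\gamma'=\gamma/S_2^2\in[0,\frac49 b]$, this yields
\[
t_{\mu_{\neg B}}+1=\frac{(1-b)\bigl((t_\mu+1)-\gamma'\bigr)}{(1-\beta)^2}.
\]
Lemma \reflemma{good-partition--mu-B-small} together with $t_\mu\le 1/90$ gives $b\le 9t_\mu\le 1/10$, so all the perturbation factors are close to $1$.

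\textbf{Lower bound.} Drop the factor $(1-\beta)^{-2}\ge 1$; this is legitimate because the numerator is positive. Then
\[
t_{\mu_{\neg B}}+1\ge (1-b)(t_\mu+1)-\gamma'\ge t_\mu+1-b\bigl(1+\tfrac1{90}\bigr)-\tfrac49 b,
\]
so $t_{\mu_{\neg B}}\ge t_\mu-2b$.

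\textbf{Upper bound.} Drop $\gamma'\ge0$ and use $\beta\le\frac23 b$, which gives
\[
t_{\mu_{\neg B}}+1\le \frac{1-b}{(1-\tfrac23 b)^2}(t_\mu+1).
\]
For $b\le 1/10$ the prefactor is $1+\frac{b}{3}+O(b^2)\le 1+cb$ for a small explicit constant $c$; a crude bound such as $c\le 1$ suffices. Hence $t_{\mu_{\neg B}}\le t_\mu+cb(1+\tfrac1{90})\le t_\mu+5b$.

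The main obstacle is purely bookkeeping. The upper bound requires verifying an explicit numeric inequality $(1-b)/(1-\frac23b)^2\le 1+cb$ on $b\in[0,1/10]$, which follows by convexity or by checking the endpoints. Along the way the constants must stay within the stated $5$; they do so with room to spare.
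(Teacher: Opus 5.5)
Your proof is correct, and it takes a genuinely different route from the paper.

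\textbf{The paper's route.} The paper enumerates $B=\{b_1,b_2,\ldots\}$ and removes the elements one at a time. At each step it applies the single-element perturbation Lemma \reflemma{technical:t-is-linear-by-erasing-small-element}: if $\tau(i)<\|\tau\|_2^2$ and $t_\tau\le 1$, then deleting $i$ moves $\|\tau\|_3^3/\|\tau\|_2^4$ by at most $4\tau(i)$. This lemma is applied to the intermediate distributions $\tau_{i-1}=\mu_{\neg\{b_1,\ldots,b_{i-1}\}}$. Consequently the paper must re-establish the hypotheses at every step, using $\tau_{i-1}(b_i)\le\frac{10}{9}\mu(b_i)$, $\|\mu\|_2^2\le\frac{10}{9}\|\tau_{i-1}\|_2^2$, and an inductive bound $t_{\tau_{i-1}}<1$. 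It then telescopes the errors and passes to a limit when $B$ is infinite. The per-step constant $4$, inflated by these $\frac{10}{9}$ factors, is where the $5$ comes from.

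\textbf{Your route.} You write $t_{\mu_A}+1=(1-b)\bigl((t_\mu+1)-\gamma'\bigr)/(1-\beta)^2$ in closed form. You then bound $\beta$ and $\gamma'$ using only the pointwise threshold on $B$ relative to the original $\|\mu\|_2^2$.
\begin{itemize}
\item Infinite $B$ needs no special treatment, since the sums converge absolutely.
\item No hypothesis maintenance along the way is needed.
\item You get sharper constants: $t_\mu-\frac{131}{90}b\le t_{\mu_A}\le t_\mu+\frac{b}{3}(1+t_\mu)$.
\end{itemize}
What the paper's route buys is a reusable one-element statement that needs only $\mu(i)<\|\mu\|_2^2$. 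That generality is not used for this lemma, so your argument is the simpler one.

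\textbf{A small point on the numeric inequality.} The function $f(b)=(1-b)/(1-\frac23 b)^2$ is concave on $[0,\frac32)$; indeed $f''(b)=-\frac89 b\,(1-\frac23 b)^{-4}$. So ``checking the endpoints'' alone would not justify bounding it above by a line. Either of the following is a clean justification:
\begin{itemize}
\item the tangent-line bound $f(b)\le f(0)+f'(0)\,b=1+\frac b3$;
\item the direct identity $(1+b)(1-\frac23 b)^2-(1-b)=b\bigl(\frac23-\frac89 b+\frac49 b^2\bigr)$, where the quadratic factor has negative discriminant and is therefore positive.
\end{itemize}
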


(Note that the constants $9$, $1/900$, $1/90$ and $5$ are chosen to fit the constant-factor choices in the definitions of friendly distributions and good partitions).

\subsubsection{The reduction}
\defproc{find-advice-large-mu2}{Find-Advice-Large-$\|\mu\|_2$}

We first test whether or not $t_\mu = \Omega(1)$. If we find that $t_\mu \ge 1/900$, then we estimate it directly and use the result. The rest of the algorithm assumes that $t_\mu \le 1/900$.

We learn the input distribution $\mu$ using $\tilde{O}(1 / \|\mu\|_2^2)$ samples to construct a set $A$ such that $A \cup (\Omega\setminus A)$ is a good partition with high probability. The algorithm can access $B = \Omega\setminus A$ only through the belonging oracle ($i \in^? B$), which is implemented as the negation of belonging to $A$ ($i \in B \leftrightarrow i \notin A$).

The rest of the algorithm is straightforward: we use the additive indicator estimation to estimate $\mu(B)$ within a $\pm \eps$ additive error, and then estimate $t_{\mu_A}$ using \proc*{estimate-t-friendly}. Since $\mu(A) \ge 99/100$, we can use rejection sampling to draw each $\mu_A$-sample at the expected cost of $O(1)$ $\mu$-samples. \algforprocshort{find-advice-large-mu2}

\begin{lemma}{find-advice-large-mu2}
    Let $X$ be the output of $\proc*{find-advice-large-mu2}(\eta; \mu,\eps)$. For every $0 < \eta \le 1/3$, a discrete distribution $\mu$ and $0 < \eps \le 1$,
    \begin{itemize}
        \item With probability at least $1 - \eta$, $X \ge t_\mu$.
        \item The expected output is $O(t_\mu + \eps/\|\mu\|_2)$.
        \item The expected sample complexity is $O\left(\frac{\log (1/\eta)}{\eps} + \frac{t \log (1/\eta)}{\eps^2} + \frac{\log (1 / \eta \eps \|\mu\|_2)}{\|\mu\|_2^2}\right)$.
    \end{itemize}
\end{lemma}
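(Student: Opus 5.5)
We analyze \proc*{find-advice-large-mu2} step by step, following the structure described above. Correctness and expectation are handled separately on two branches: the preliminary test for $t_\mu \ge 1/900$, and the reduction branch (good partition, estimate $\mu(B)$, estimate $t_{\mu_A}$).

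\textbf{Preliminary branch.} We first estimate $\|\mu\|_2^2$ within a $(1 \pm 1/12)$ factor using \proc{estimate-L2-moments}; this costs $O(\log(1/\eta)/\|\mu\|_2)$ samples, which is within budget. Next, we test whether $t_\mu$ is at least a small constant. The plan is to use Lemma \reflemma{t-is-chebyshev} with a constant $\alpha$: we estimate $\Pr[\mu(i) \notin (1\pm\alpha)\|\mu\|_2^2]$ via \proc{estimate-indicator-additive}, testing each sampled $i$ by estimating $\mu(i)$ from $\tilde O(1/\|\mu\|_2^2)$ samples. Alternatively, we can estimate $\|\mu\|_3^3$ to constant accuracy with \proc{estimate-L3}, at cost $O(1/\eta\|\mu\|_2)$. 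If $t_\mu \ge 1/900$ is detected, we return a constant-factor estimate of $t_\mu+1$ (scaled by a constant). This is at least $t_\mu$, and its expectation is $O(t_\mu+1) = O(t_\mu)$. For small $t_\mu$, the test is wrong only with probability $\eta$ and wrong outputs are $O(1)$, so they contribute $O(\eta)$, which is absorbed since we may take $\eta \lesssim \eps$ in this contribution, or else this is bounded by $O(\eps/\|\mu\|_2)$ using $\|\mu\|_2 \le 1$. I would tune the procedure so that wrong-branch outputs have bounded expectation.

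\textbf{Reduction branch ($t_\mu \le 1/900$).} We learn $\mu$ from $m = O(\log(1/\eta\eps\|\mu\|_2)/\|\mu\|_2^2)$ samples and define $A$ as the set of elements whose empirical frequency exceeds roughly $0.6\hat p$, where $\hat p \approx \|\mu\|_2^2$. By Chernoff bounds applied to each element whose mass is near $\|\mu\|_2^2$ (there are at most $O(1/\|\mu\|_2^2)$ such elements, so a union bound costs only the logarithmic factor), the partition $A \cup B$ is good with probability $1-\eta/3$. Elements of mass below $\frac{11}{20}\|\mu\|_2^2$ must also be excluded from $A$. For those, we bound the total mass of ``falsely heavy'' elements by a tail argument, or we bound $|A| \le O(1/\|\mu\|_2^2)$ directly. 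I expect this step, controlling the many tiny elements, to be the main technical obstacle. I would handle it by noting that the threshold count is $\Theta(\log(\cdot))$, so a tiny element reaches it only with probability polynomially small in its expected count; summing over elements weighted by mass gives negligible contribution.

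Once the partition is good, Lemma \reflemma{good-partition--A-is-friendly} shows that $\mu_A$ is friendly over $N = |A| = \Theta(1/\|\mu\|_2^2)$ elements. Since $\mu(B) \le 9t_\mu \le 1/100$ by Lemma \reflemma{good-partition--mu-B-small}, rejection sampling (Lemma \reflemma{rejection-sampling-concentration}) yields $\mu_A$-samples at $O(1)$ overhead. We then output
\[
X = X_A + 5(\hat b + \eps),
\]
where $X_A$ is the output of \proc*{estimate-t-friendly} with accuracy $\eps$ on $\mu_A$, and $\hat b$ is an unbiased estimate of $\mu(B)$ within $\pm\eps$. By Lemma \reflemma{good-partition--tmu-by-tmuA-muB}, $t_\mu \le t_{\mu_A} + 5\mu(B) \le X$ on the good event. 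In expectation, $\E[X] = O(t_{\mu_A} + \eps) + O(\mu(B) + \eps) = O(t_\mu + \eps)$, and $\eps \le \eps/\|\mu\|_2$. On the bad event (probability $\eta$), Lemma \reflemma{estimate-t-friendly} gives $\E[X] = O(\sqrt N) = O(1/\|\mu\|_2)$. Choosing the partition failure probability as $\eta\eps$ (which is why $\eps$ appears inside the log) makes this contribution $O(\eps/\|\mu\|_2)$.

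\textbf{Complexity.} The indicator estimation costs $O(\log(1/\eta)/\eps^2 \cdot \mu(B))$-type variance bounds. Since $\mu(B) = O(t_\mu)$, we use the Bernoulli variance to get $O(\log(1/\eta)(1/\eps + t_\mu/\eps^2))$. \proc*{estimate-t-friendly} costs $O(\log(1/\eta)(\sqrt N/\eps + 1/\eps\|\mu\|_2) + N\log(N/\eta\eps))$. With $N = O(1/\|\mu\|_2^2)$, this is $O(\log(1/\eta)/\eps\|\mu\|_2 + \log(1/\eta\eps\|\mu\|_2)/\|\mu\|_2^2)$. The first term is $O(\log(1/\eta)/\eps \cdot 1/\|\mu\|_2)$, which must be reconciled with the stated $\log(1/\eta)/\eps$ term. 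In the large regime, $1/\|\mu\|_2 = O(1/\eps\,\poly\log)$ and such terms are dominated by $1/\|\mu\|_2^2$ times a log, so they are absorbed into the last term. Summing the branch costs gives the claimed bound.
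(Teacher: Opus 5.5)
\textbf{Complexity bookkeeping.} Your accounting of \textsf{Estimate-$t$-Friendly} correctly produces a term $\log(1/\eta)/(\eps\|\mu\|_2)$, coming from $\sqrt{N}/\eps$ with $N=\Theta(1/\|\mu\|_2^2)$. The step where you absorb it is backwards. The inequality $\frac{1}{\eps\|\mu\|_2}\lesssim\frac{\log(\cdot)}{\|\mu\|_2^2}$ is equivalent to $\|\mu\|_2\lesssim\eps\log(\cdot)$, which is the opposite of the large-$\|\mu\|_2$ regime. Also, the lemma is claimed for every $\mu$, so appealing to a regime is not allowed anyway.

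The term is genuinely incurred. Take $\mu$ uniform on $n\approx 1/\eps$ elements, so $t_\mu=0$ and $\|\mu\|_2=\sqrt{\eps}$. The procedure reaches \textsf{Estimate-$t$-Friendly} with $N=n$, and \textsf{Estimate-Sum-Squares} runs all $\lceil\log_2(1/\eps)\rceil$ iterations. Its last iteration alone costs $\Omega(\sqrt{n}/\eps)=\Omega(\eps^{-3/2})$, which is not $O(\log(1/\eta\eps)/\eps)$.

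The paper's own proof ends with the same $\log(1/\eta)/(\eps\|\mu\|_2)$ term from \textsf{Estimate-$t$-Friendly} and never reconciles it with the displayed bound. What is actually provable is $O\bigl(\log\frac{1}{\eta}\cdot\bigl(\frac{1}{\eps\|\mu\|_2}+\frac{t_\mu}{\eps^2}\bigr)+\frac{\log(1/\eta\eps\|\mu\|_2)}{\|\mu\|_2^2}\bigr)$. This is all the top-level analysis needs, since its budget already contains $1/(\eps\|\mu\|_2)$. You should carry the term rather than argue it away.

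\textbf{The $t_\mu\ge 1/900$ test.} You leave this branch unspecified, and both mechanisms you float fail. Lemma \reflemma{t-is-chebyshev} only \emph{lower}-bounds $t_\mu$, so a small estimated tail mass never certifies $t_\mu\le 1/900$. That certification is exactly what the reduction branch needs: friendliness of $\mu_A$ and $t_{\mu_A}\in t_\mu\pm 5\mu(B)$ both assume it. The second mechanism lacks precision. On the relevant inputs $t_\mu+1\in[1,1+1/900]$, while a $(1\pm 1/12)$-estimate of $\|\mu\|_2^2$ can move $\|\mu\|_3^3/\|\mu\|_2^4$ by $15$--$20\%$, far more than the threshold. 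The same holds for any accuracy that is not a constant well below $10^{-3}$.

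The paper's algorithm instead runs \textsf{Estimate-$t$-Directly} with $\delta=10^{-4}$, i.e.\ relative accuracies $\delta/40$ and $\delta/30$ for the two norms, which are still constants. It uses error probability $\min\{\eta/10,\eps\}$.
\begin{itemize}
\item Correctness uses only the one-sided guarantee $t_1\ge t_\mu$. If $t_\mu\ge 1/900$ the direct branch is taken w.h.p., and that branch returns a value $\ge t_\mu$ w.h.p.\ whenever it is taken.
\item For the expected output, the upper guarantee $t_1\le 15t_\mu+3\delta$ matters. When $t_\mu\lesssim 5\cdot 10^{-5}$, it shows the direct branch (whose output has expectation $O(t_\mu+1)$) is entered with probability at most $\eps$. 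For larger $t_\mu$, an $O(1)$ output is already $O(t_\mu)$.
\end{itemize}
Your remark that an $O(\eta)$ contribution is $O(\eps/\|\mu\|_2)$ ``using $\|\mu\|_2\le 1$'' is false, since $\eta$ may be a constant while $\eps/\|\mu\|_2\to 0$. The $\min\{\eta,\eps\}$ error parameter is what makes this work.

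\textbf{Two smaller points.} First, your per-element argument for keeping light elements out of $A$ needs a two-tier split. Elements just below $\frac{11}{20}\|\mu\|_2^2$ are not ``polynomially unlikely'' to cross the threshold; they need Chernoff plus a union bound over the $O(1/\|\mu\|_2^2)$ of them. The paper avoids this by greedily packing $\{i:\mu(i)<\frac{11}{20}\|\mu\|_2^2\}$ into at most $5/\|\mu\|_2^2$ groups of mass below $\frac{11}{20}\|\mu\|_2^2$. Each group's count dominates its members' counts, so one Chernoff bound per group suffices.

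Second, on the bad-partition event $\mu(A)$ can be tiny or zero. The rejection sampler therefore needs the explicit crash cap, returning $0$ and counted as an $\eta/10$-probability failure, for the expected cost to be bounded at all.
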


\subsection{An $\Omega(1 / \eps \|\mu\|_2)$ lower-bound}

At first, we show a lower bound for extremely skewed distributions.

\begin{lemma}{eps2-lower-bound-extreme-mu}
    Let $\mu$ be a discrete distribution over $\Omega$, and assume that there exists an element $i \in \Omega$ for which $\mu(i) \ge \frac{1}{8}\|\mu\|_2$. If $\eps \le 1/500$, then there exists a lower bound of $\Omega(1/\eps\|\mu\|_2)$ samples to distinguish between $\mu$ and distributions $\nu$ for which $\|\nu\|_2^2 \le (1 - (9/4)\eps)\|\mu\|_2$ (and in particular, $(1 + \eps)\|\nu\|_2^2 < (1 - \eps)\|\mu\|_2^2$).
\end{lemma}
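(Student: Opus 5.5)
The plan is to move a small amount of mass off the heavy element onto brand-new, extremely light elements, and then bound distinguishability with a coupling argument. I read the target as $\|\nu\|_2^2 \le (1-(9/4)\eps)\|\mu\|_2^2$; the parenthetical in the statement only makes sense with the square. Let $i$ be the heavy element and write $p=\mu(i)\ge \frac18\|\mu\|_2$. Fix
\[
\delta = \frac{3\eps\|\mu\|_2^2}{p}, \qquad K \text{ a large integer}.
\]
Define $\nu$ from $\mu$ by setting $\nu(i)=p-\delta$ and spreading the removed mass $\delta$ uniformly over $K$ fresh elements outside $\Omega$, each of mass $\delta/K$. All other masses are unchanged.

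\textbf{Step 1: $\nu$ is a distribution.} We need $\delta\le p$, that is, $p^2\ge 3\eps\|\mu\|_2^2$. This holds because $p^2\ge \|\mu\|_2^2/64$ and $\eps\le 1/500$. More precisely, $\delta/p = 3\eps\|\mu\|_2^2/p^2 \le 192\eps \le 0.4$.

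\textbf{Step 2: the collision norm drops enough.} Directly,
\[
\|\nu\|_2^2 = \|\mu\|_2^2 - 2p\delta + \delta^2 + \frac{\delta^2}{K}.
\]
Using $\delta\le 0.4p$ we get $2p\delta-\delta^2 \ge 1.6\,p\delta = 4.8\,\eps\|\mu\|_2^2$. Taking $K$ large enough that $\delta^2/K \le \eps\|\mu\|_2^2$ yields $\|\nu\|_2^2 \le (1-3.8\eps)\|\mu\|_2^2 \le (1-(9/4)\eps)\|\mu\|_2^2$. From this, $(1+\eps)\|\nu\|_2^2<(1-\eps)\|\mu\|_2^2$ follows by elementary arithmetic. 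The constant in $\delta$ has some slack, so one could also pick $\delta$ slightly smaller.

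\textbf{Step 3: indistinguishability by coupling.} Draw a sample $x\sim\mu$. If $x=i$, then with probability $\delta/p$ replace it by a uniformly random fresh element. The result is distributed exactly as $\nu$, and it differs from $x$ with probability exactly $\delta$. Coupling $m$ samples independently this way gives
\[
\dtv(\mu^{\otimes m},\nu^{\otimes m}) \le m\delta.
\]
The same bound holds for adaptive stopping rules truncated at $m$, since the two sample sequences coincide unless a replacement occurs. Hence any algorithm that distinguishes $\mu$ from $\nu$ with probability $2/3$ needs $m\ge 1/(3\delta)$ samples. Finally,
\[
\frac{1}{3\delta} = \frac{p}{9\eps\|\mu\|_2^2} \ge \frac{1}{72\,\eps\|\mu\|_2},
\]
which gives the claimed $\Omega(1/\eps\|\mu\|_2)$ bound.

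\textbf{Main obstacle.} The argument is mostly elementary; the only delicate point is the constant bookkeeping. We need $\delta\le p$ and the $\delta^2$ correction to stay small relative to $2p\delta$, and this is exactly where the hypotheses $p\ge\|\mu\|_2/8$ and $\eps\le1/500$ are used. The fresh-element trick makes the new mass contribute negligibly to the collision norm while keeping the coupling trivial. If an expected-sample-complexity version of the bound is required, I would apply Markov's inequality to truncate the algorithm at a constant multiple of its expected complexity and then apply the bound above.
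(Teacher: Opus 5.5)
Your proof is correct and follows essentially the same route as the paper. Both move $\Theta(\eps\|\mu\|_2)$ mass off the heavy element onto new elements, so that a per-sample total-variation distance $\delta$ forces $\Omega(1/\eps\|\mu\|_2)$ samples while the collision norm drops by $\Theta(\eps\|\mu\|_2^2)$.

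The paper uses $\delta = 10\eps\|\mu\|_2$ with a single new element $\bot$, and calls the distance bound trivial. Spreading the mass over $K$ fresh elements, as you do, is cleaner. The paper's display writes $\eps^2\|\mu\|_2^2$ where $100\eps^2\|\mu\|_2^2$ belongs, and once corrected its construction only gives $(1-2.1\eps)\|\mu\|_2^2$ at $\eps = 1/500$ (still enough for the ``in particular'' clause).
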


Second, we show a lower bound for distributions with a specific structure. We assume that we can partition $\mu$'s elements into pairs such that the first element is not smaller than the second, but also not more than double. This allows us to move mass between the elements in each pair ($\mu(x_1) \pm \delta$, $\mu(x_2) \mp \delta$) independently, while keeping $\delta$ large enough (with respect to the bigger element of each pair) to have an effect on the collision norm. More specifically, for every such a pair, we use $\delta = \Theta(\sqrt{\eps} \mu(x_2))$.

\begin{lemma}{base-deviation-construction-for-eps-mu2}
    Let $\mu$ be a distribution over $\Omega = \{1,2,\ldots,\cdots\}$, and assume that for every $j \ge 1$, $\mu(2j) \le \mu(2j-1) \le \sqrt{2} \mu(2j)$. In this setting, for every $\eps \le 1/8000$, there exists a distribution $\mathcal D$ of distributions over $\Omega$ for which:
    \begin{itemize}
        \item When drawing $\nu$ from $\mathcal D$, with probability at least $3/4$, $\|\nu\|_2^2 \notin (1 \pm (5/2)\eps)\|\mu\|_2^2$.
        \item Any algorithm that distinguishes between $\mu$ and an input distribution $\nu$ drawn from $\mathcal D$ with total-variation distance greater than $1/12$ must draw $\Omega(1/\eps \|\mu\|_2)$ samples.
    \end{itemize}
\end{lemma}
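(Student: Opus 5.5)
We will use a Paninski-style random-perturbation construction. Choose $\delta_j = c\sqrt{\eps}\,\mu(2j)$ for a small absolute constant $c$. The distribution $\mathcal D$ draws independent uniform signs $\sigma_j \in \{\pm 1\}$ and sets
\[
\nu(2j-1) = \mu(2j-1) + \sigma_j \delta_j, \qquad \nu(2j) = \mu(2j) - \sigma_j \delta_j .
\]
Each $\nu$ is a valid distribution: the pair masses are preserved and $\delta_j \le \mu(2j)$ keeps the entries nonnegative. The proof has two parts, the norm statement and the indistinguishability statement.

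\textbf{Collision norm.} Expanding gives
\[
\|\nu\|_2^2 = \|\mu\|_2^2 + 2\sum_j \sigma_j \delta_j\bigl(\mu(2j-1)-\mu(2j)\bigr) + 2\sum_j \delta_j^2 .
\]
The last term is deterministic. Since $\mu(2j) \ge \mu(2j-1)/\sqrt 2$, we have $\mu(2j)^2 \ge \tfrac13(\mu(2j-1)^2+\mu(2j)^2)$. Hence $2\sum_j\delta_j^2 \ge \tfrac{2c^2}{3}\,\eps\|\mu\|_2^2$, and also $2\sum_j\delta_j^2 \le 2c^2\eps\|\mu\|_2^2$.

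This gives a shift of order $c^2\eps\|\mu\|_2^2$, which must exceed $(5/2)\eps\|\mu\|_2^2$. So I would take $\delta_j$ of order $\sqrt{\eps}\mu(2j)$ with $c^2\ge 4$, i.e.\ $c=2$. With $\eps\le 1/8000$ we still have $\delta_j\le\mu(2j)$. The deterministic shift is then at least $(8/3)\eps\|\mu\|_2^2 > (5/2)\eps\|\mu\|_2^2$.

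The random linear term only helps if it is small or positive, but its sign is arbitrary. To handle it, I would bound its variance by $4\sum_j\delta_j^2(\mu(2j-1)-\mu(2j))^2$. Using $\mu(2j-1)-\mu(2j)\le(\sqrt2-1)\mu(2j)$, the variance is at most $O(\eps)\sum_j\mu(2j)^4$, which is at most $O(\eps)\max_i\mu(i)^2\,\|\mu\|_2^2$.

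This is not automatically $\ll(\eps\|\mu\|_2^2)^2$. This is the main obstacle, and there are two ways to handle it.
\begin{itemize}
\item Choose $c$ larger, so the deterministic shift dominates on one side.
\item Argue by symmetry: with probability at least $3/4$ either the shift is positive, or $\|\nu\|_2^2$ falls below $(1-(5/2)\eps)\|\mu\|_2^2$.
\end{itemize}
I would try the first. If needed, I would also invoke Lemma \reflemma{eps2-lower-bound-extreme-mu} to note that the heavy-element case (some $\mu(i)\ge\|\mu\|_2/8$) is handled separately. So I would assume $\max\mu(i)\le\|\mu\|_2/8$, which still does not make the variance negligible. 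In that case I would instead apply Chebyshev only after tightening the pair constant $\sqrt2$, so that the linear coefficient $(\sqrt2-1)$ makes the variance at most a small constant times the squared shift.

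\textbf{Indistinguishability.} I would use Poissonization: with $\Poi(m)$ samples, the counts of different pairs are independent. I would bound $\chi^2$ between the mixture over $\mathcal D$ and $\mu^{\otimes}$ via the standard Ingster computation. For each pair this gives a factor $\E_{\sigma,\sigma'}\exp\bigl(m\,\sigma\sigma'\delta_j^2(1/\mu(2j-1)+1/\mu(2j))\bigr)$, which is at most $\exp\bigl(O(m^2\delta_j^4/\mu(2j)^2)\bigr)$ when $m\delta_j^2/\mu(2j)\le 1$ (cosh bound). Multiplying over pairs,
\[
\chi^2 \le \exp\Bigl(O\bigl(m^2\eps^2\textstyle\sum_j\mu(2j)^2\bigr)\Bigr)-1 = \exp\bigl(O(m^2\eps^2\|\mu\|_2^2)\bigr)-1 .
\]
This is small unless $m=\Omega(1/\eps\|\mu\|_2)$. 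The side condition $m\delta_j^2/\mu(2j)=O(m\eps\mu(2j))\le O(\eps\,\mu(2j)/\|\mu\|_2)\le 1$ holds in that range. Small $\chi^2$ bounds total variation by $1/12$. De-Poissonization, standard via concentration of $\Poi(m)$, then transfers the bound to fixed sample sizes.
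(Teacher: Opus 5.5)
Your indistinguishability half is sound and is essentially the paper's argument. You Poissonize, obtain the per-pair factor $\cosh\bigl(m\delta_j^2(1/\mu(2j-1)+1/\mu(2j))\bigr)$, and multiply over the independent pairs. The side condition $m\delta_j^2/\mu(2j)\le 1$ is not even needed, since $\cosh x\le e^{x^2/2}$ for all real $x$.

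\textbf{The gap is in the first bullet.} You correctly identify that the random term $L=2\sum_j\sigma_j\delta_j(\mu(2j-1)-\mu(2j))$ can swamp the deterministic shift $S=2\sum_j\delta_j^2$, but none of your remedies closes this.
\begin{itemize}
\item \emph{Enlarging $c$.} This improves $\mathrm{std}(L)/S$ only like $1/c$. Take $\mu$ close to uniform on $N\ll 1/\eps$ elements with within-pair ratio $\sqrt2$: the ratio is of order $1/(c\sqrt{\eps N})$, so you would need $c\gtrsim 1/\sqrt{\eps N}$. Your own $\chi^2$ exponent $O(c^4\eps^2m^2\|\mu\|_2^2)$ then only gives a lower bound of order $N^{3/2}\ll \sqrt N/\eps\approx 1/(\eps\|\mu\|_2)$. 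For $N=O(1)$ it even forces $\delta_j=\Theta(\mu(2j))$.
\item \emph{Tightening the constant $\sqrt2$.} This changes the lemma's hypothesis. Appealing to the heavy-element lemma is likewise unavailable, since the hypothesis does not exclude heavy elements.
\item \emph{Symmetry of $L$ at a single scale.} This bounds the failure probability only by $1/2$, not $1/4$.
\end{itemize}
In fact no absolute constant $c$ works. Take one pair with ratio exactly $\sqrt2$ and all other pairs with equal masses. Then $L=\pm a$, and by tuning the masses you can make $a=S$, so $\|\nu\|_2^2=\|\mu\|_2^2$ with probability $1/2$.

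\textbf{The missing idea} is to choose the scale depending on $\mu$, exploiting that $S$ grows like $c^2$ while $L$ grows like $c$; no concentration is needed at all. With $\delta_j=c\sqrt\eps\,\mu(2j)$, set $X=\frac{1}{\sqrt\eps\|\mu\|_2^2}\sum_j\sigma_j\mu(2j)(\mu(2j-1)-\mu(2j))$ and $\beta=\sum_j\mu(2j)^2/\|\mu\|_2^2\in[1/3,1/2]$. Then
\[
\|\nu\|_2^2-\|\mu\|_2^2=2\eps\|\mu\|_2^2\,\bigl(cX+c^2\beta\bigr).
\]
Failure at scale $c$ means $X\in\bigl[-c\beta-\tfrac{5}{4c},\,-c\beta+\tfrac{5}{4c}\bigr]$.
\begin{itemize}
\item For $c\in\{2,6,8,10\}$ these four windows are pairwise disjoint, whatever the law of $X$. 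So some $c$ in this set fails with probability at most $1/4$. This is the role of Lemma \reflemma{disjoint-lambdas-for-eps-mu2-construction} in the paper.
\item Because $X$ is symmetric, $c\in\{2,6\}$ already suffices: each window and its mirror image have equal probability, and the four resulting intervals are disjoint.
\end{itemize}
The selected $c$ depends only on $\mu$ and $\eps$, so it defines a legitimate $\mathcal D$. Since $c\le 10$ and $\eps\le 1/8000$, nonnegativity and your $\chi^2$ bound go through with only a constant-factor loss.
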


For general-form $\mu$ distributions, we observe that unless $\mu$ is extremely skewed (there exists an element with mass at least $\frac{1}{8}\|\mu\|_2$), we can transform it into the ``pairwise'' shape by erasing a few elements, whose total mass is bounded by $3/4$ and whose contribution to the collision norm is at most $\frac{1}{9}\|\mu\|_2^2$.

\begin{lemma}{lbnd-eps-mu2}
    Any algorithm whose input is a discrete distribution $\mu$ and $\eps > 0$ which outputs a number in the range $(1 \pm \eps)\|\mu\|_2^2$ with probability at least $2/3$ must draw $\Omega(1/\eps \|\mu\|_2)$ samples.
\end{lemma}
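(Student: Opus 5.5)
We will prove the lemma by reducing an arbitrary $\mu$ to one of the two cases already handled: Lemma \reflemma{eps2-lower-bound-extreme-mu} for extremely skewed distributions, and Lemma \reflemma{base-deviation-construction-for-eps-mu2} for distributions with the pairwise structure. Throughout we may assume $\eps$ is below a small absolute constant, say $\eps \le 1/8000$. For larger $\eps$ the claimed bound is $\Omega(1/\|\mu\|_2)$. This is trivially at least $\Omega(1)$, and in fact it also follows by monotonicity: an algorithm that is accurate to $(1 \pm \eps)$ is also accurate to $(1 \pm \eps_0)$ for every $\eps_0 \ge \eps$, so we may apply the small-$\eps$ argument with the constant $\eps_0 = 1/8000$.

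\textbf{Case 1: $\mu$ is extremely skewed.} Suppose some element has $\mu(i) \ge \frac{1}{8}\|\mu\|_2$. Then Lemma \reflemma{eps2-lower-bound-extreme-mu} applies directly. An estimator that is accurate with probability $2/3$ distinguishes $\mu$ from every $\nu$ with $(1+\eps)\|\nu\|_2^2 < (1-\eps)\|\mu\|_2^2$. The reason is that the output intervals $(1\pm\eps)\|\mu\|_2^2$ and $(1\pm\eps)\|\nu\|_2^2$ are disjoint, so thresholding the output yields a distinguisher with advantage at least $1/3$. This gives the $\Omega(1/\eps\|\mu\|_2)$ bound.

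\textbf{Case 2: every element has $\mu(i) < \frac{1}{8}\|\mu\|_2$.} We sort the elements in non-increasing order of mass and extract a sub-collection $S$ of disjoint pairs $(x_1,x_2)$ with $\mu(x_2) \le \mu(x_1) \le \sqrt2\,\mu(x_2)$, such that the discarded set $R = \Omega \setminus S$ satisfies $\sum_{i\in R}\mu(i)^2 \le \frac19\|\mu\|_2^2$.

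The pairing is done by bucketing. Group the elements into buckets $[\,2^{-(k+1)/2}, 2^{-k/2})$, so that any two elements in the same bucket are within a factor $\sqrt2$ of each other. Pair elements inside each bucket, leaving at most one unpaired element per bucket.

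We then bound what the unpaired elements contribute. The leftover elements from different buckets have geometrically decreasing masses, so their squared contribution is a geometric series dominated by its largest term. That largest term is at most $\max_i \mu(i)^2 < \frac{1}{64}\|\mu\|_2^2$. Summing with ratio $1/2$ gives at most $\frac{2}{64}\|\mu\|_2^2 < \frac19\|\mu\|_2^2$. This is exactly why the no-heavy-element hypothesis is needed.

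\textbf{Building the hard instance.} Let $\mu'$ be $\mu$ restricted to $S$ and renormalized by $Z = \mu(S)$. We apply the construction of Lemma \reflemma{base-deviation-construction-for-eps-mu2} to $\mu'$, using a parameter $\eps' = c\eps$ for a suitable constant $c$. Then we lift each resulting $\nu'$ back by setting
\[
\nu = Z\nu' \text{ on } S, \qquad \nu = \mu \text{ on } R.
\]
This lifted distribution has two useful properties.

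First, $\|\nu\|_2^2 - \|\mu\|_2^2 = Z^2\left(\|\nu'\|_2^2 - \|\mu'\|_2^2\right)$. Since $Z^2\|\mu'\|_2^2 = \sum_{S}\mu(i)^2 \ge \frac89\|\mu\|_2^2$, the relative deviation $\frac52\eps'$ on $\mu'$ becomes a relative deviation of at least $\frac{8}{9}\cdot\frac52\eps'$ on $\mu$. Choosing $c$ appropriately therefore puts $\|\nu\|_2^2$ outside $(1\pm 2\eps)\|\mu\|_2^2$ with probability at least $3/4$, which is enough to force any accurate estimator to separate $\mu$ from $\nu$.

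Second, samples from $\mu$ versus $\nu$ can be simulated from samples of $\mu'$ versus $\nu'$ at no extra cost. To draw one sample, flip a $Z$-biased coin: with probability $1-Z$ draw from $\mu_R$, which is known and identical in both cases, and with probability $Z$ take a sample of the unknown distribution. So any $m$-sample distinguisher for $(\mu,\nu)$ yields a distinguisher for $(\mu',\nu')$ using, in expectation, $Zm \le m$ samples of the unknown distribution. By Markov's inequality we can truncate the number of samples at a constant multiple of its expectation, losing only a small constant in the advantage. This keeps the advantage above $1/12$. The lemma then gives $m = \Omega(1/\eps'\|\mu'\|_2)$, and since $\|\mu'\|_2 = \Theta(\|\mu\|_2)/Z$ with $Z \le 1$, this yields $m = \Omega(1/\eps\|\mu\|_2)$.

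\textbf{Main obstacle.} The step most likely to need care is the pairing and discarding argument: certifying simultaneously that the leftovers have small collision contribution and that $\|\mu'\|_2 = O(\|\mu\|_2/Z)$, in the right direction for the bound. If $Z$ turns out to be small (up to $3/4$ of the mass may be discarded, as noted before the statement), we need $\|\mu'\|_2 \le \|\mu\|_2/Z$ to go the right way. It does, because
\[
\|\mu'\|_2 = \frac{\sqrt{\sum_S \mu(i)^2}}{Z} \le \frac{\|\mu\|_2}{Z},
\]
and the factor $Z$ from the simulation cost cancels it, giving $m \ge \Omega(1/(\eps'Z\|\mu'\|_2)) \ge \Omega(1/\eps\|\mu\|_2)$.
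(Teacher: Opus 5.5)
Your argument is essentially the paper's proof. The heavy-element case goes to Lemma \reflemma{eps2-lower-bound-extreme-mu}; otherwise you pair elements within ratio $\sqrt{2}$, discard leftovers whose collision mass is a geometric series dominated by $\max_i \mu(i)^2 < \frac{1}{64}\|\mu\|_2^2$, apply Lemma \reflemma{base-deviation-construction-for-eps-mu2} to the conditional distribution on the paired part, and lift by mixing with the untouched remainder. There are two small differences. You pair within $\sqrt{2}$-buckets instead of pairing consecutive elements in sorted order. And your $Z$-biased-coin simulation lets $Z$ cancel against $\|\mu'\|_2 \le \|\mu\|_2/Z$, so you never need the paper's bound $\mu(A) \ge 1/4$.

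One aside runs the wrong way. For $\eps > 1/8000$, an $\eps$-accurate algorithm is not $(1/8000)$-accurate, so monotonicity cannot reduce large $\eps$ to small $\eps$; for $\eps \ge 1$ the statement is even false, since outputting $0$ is always valid. The lemma should therefore be read, as the paper tacitly does, for $\eps$ below an absolute constant.
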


\subsection{An $\Omega(t_\mu / \eps^2)$ lower-bound}
Given a discrete distribution $\mu$ over a domain $\Omega$, a parameter $\eps > 0$, a sign $s \in \{+1, -1\}$ and a coefficient $a$, we define the following distribution over $\Omega$:
\[ \nu_{s,a}(i) = \mu(i) \left(1 + \frac{a \eps}{t_\mu} \left(\frac{\mu(i)}{\|\mu\|_2^2} - 1\right)\right) \]

This construction is valid if $t_\mu \ge a\eps / \|\mu\|_2$, but if $t_\mu$ is smaller, then we can use the already-known lower bound $\Omega(1 / \eps \|\mu\|_2)$, which is not worse than $\Omega(t_\mu / a \eps^2)$.

We show two features of the construction:

\begin{lemma}{three-of-four-nu-have-far-mu22}
    Let $\eps > 0$, and assume that $t_\mu \ge 8\eps/\|\mu\|_2$. At least three (out of four) distributions $\nu_{s,a}$ for $s \in \{+1, -1\}$ and $a \in \{3,8\}$ have $\abs{\|\nu\|_2^2 - \|\mu\|_2^2} \ge 3\eps\|\mu\|_2^2$.
\end{lemma}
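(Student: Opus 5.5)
The plan is to compute $\|\nu_{s,a}\|_2^2$ in closed form and then do a short case analysis over $a\in\{3,8\}$. The two $s=+1$ distributions will always be far, and at most one of the two $s=-1$ distributions can be close.

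First I would set up notation and check validity. Write $S_2=\|\mu\|_2^2$, $S_3=\|\mu\|_3^3$, $S_4=\|\mu\|_4^4$, $d_i=\mu(i)/S_2-1$ and $c=sa\eps/t_\mu$, so that $\nu_{s,a}(i)=\mu(i)(1+cd_i)$.

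\begin{itemize}
\item \emph{Normalization.} $\sum_i\nu_{s,a}(i)=1+c\left(\sum_i\mu(i)^2/S_2-1\right)=1$.
\item \emph{Non-negativity.} We need $1+cd_i\ge 0$ for every $i$.
\begin{itemize}
\item For $c>0$, use $d_i\ge -1$; this requires $a\eps\le t_\mu$.
\item For $c<0$, use $d_i\le \mu(i)/S_2\le \|\mu\|_2/S_2=1/\|\mu\|_2$; this requires $a\eps/\|\mu\|_2\le t_\mu$.
\end{itemize}
\item Both requirements follow from the hypothesis $t_\mu\ge 8\eps/\|\mu\|_2$, since $a\le 8$ and $\|\mu\|_2\le 1$.
\end{itemize}

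Next, expand the collision norm:
\[ \|\nu_{s,a}\|_2^2=\sum_i\mu(i)^2(1+cd_i)^2 = S_2+2c\sum_i\mu(i)^2d_i+c^2\sum_i\mu(i)^2d_i^2 . \]
The two sums evaluate as follows.
\begin{itemize}
\item $\sum_i\mu(i)^2d_i=S_3/S_2-S_2=S_2t_\mu$.
\item $\sum_i\mu(i)^2d_i^2=S_2q$, where $q=S_4/S_2^3-2S_3/S_2^2+1$. Here $q\ge 0$ because it is a sum of squares divided by $S_2$.
\end{itemize}
Substituting $c$ gives the exact identity
\[ \|\nu_{s,a}\|_2^2-\|\mu\|_2^2 = S_2\left(2sa\eps + a^2\eps^2 q/t_\mu^2\right) = S_2\,a\eps\,(2s+ax),\qquad x=\eps q/t_\mu^2\ge 0 . \]

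Finally, the case analysis.
\begin{itemize}
\item For $s=+1$, the deviation is at least $2a\eps S_2\ge 6\eps S_2$ for both $a=3$ and $a=8$.
\item For $s=-1$, the deviation is $a\eps S_2\abs{ax-2}$. Suppose the $a=3$ distribution fails, i.e.\ $3\abs{3x-2}<3$. Then $x\in(1/3,1)$, so $8x-2\in(2/3,6)$. Hence the $a=8$ deviation is at least $8\cdot\frac{2}{3}\eps S_2>3\eps S_2$.
\end{itemize}
Thus at most one of the four distributions can be close, which proves the claim.

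The only step needing care is the non-negativity check for the $s=-1$ case, which is exactly where the assumption $t_\mu\ge 8\eps/\|\mu\|_2$ is used. The remaining steps are direct algebra.
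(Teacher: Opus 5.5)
Your proof is correct and is essentially the paper's argument. Your $x=\eps q/t_\mu^2$ is exactly the paper's $K$, so both proofs reduce to $\|\nu_{s,a}\|_2^2=\|\mu\|_2^2\bigl(1+\eps(2sa+a^2K)\bigr)$, and your contrapositive for $s=-1$ matches the paper's split on $K\le 1/3$ (use $a=3$) versus $K\ge 1/3$ (use $a=8$), while the non-negativity check you fold in appears in the paper as a separate lemma.
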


\begin{lemma}{lbnd-t-over-eps2-hardness-pair}
    Let $\eps > 0$ and $a > 0$, and assume that $t_\mu \ge a \eps/\|\mu\|_2$. For an integer $1 \le q \le \frac{t_\mu}{500 \eps^2}$, $\dtv\left(\frac{1}{2}\nu_{+1,a}^q + \frac{1}{2}\nu_{-1,a}^q, \mu^q\right) \le \frac{1}{12}$. % In other words, distinguishing between $\mu$ and an input distribution uniformly drawn from $\{\nu_{+1,a}, \nu_{-1,a}\}$ requires $\Omega(t_\mu/\eps^2)$ samples.
\end{lemma}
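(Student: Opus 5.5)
We bound the total-variation distance through the $\chi^2$-divergence of the product mixture from $\mu^q$, i.e.\ the Ingster-style second-moment method. Write $\nu_{s,a}(i)=\mu(i)(1+s\,\beta\, d_i)$ with $\beta = a\eps/t_\mu$ and $d_i = \mu(i)/\|\mu\|_2^2-1$. (The sign $s$ must multiply the perturbation, since the displayed definition omits it.)

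First I check that $\nu_{s,a}$ is a valid distribution.
\begin{itemize}
\item Normalization: $\sum_i \mu(i)d_i = \|\mu\|_2^2/\|\mu\|_2^2 - 1 = 0$.
\item Nonnegativity: we need $\beta\, d_i \le 1$ for every $i$. Since $\mu(i)\le\|\mu\|_2$, we have $|d_i|\le \max\{1,1/\|\mu\|_2\}$, so $\beta|d_i|\le a\eps/(t_\mu\|\mu\|_2)\le 1$. This is exactly where the hypothesis $t_\mu\ge a\eps/\|\mu\|_2$ is used. The case $\|\mu\|_2$ close to $1$ is handled by the bound $|d_i|\le 1$ together with $\beta\le1$, which follows because $\|\mu\|_2\le 1$.
\end{itemize}

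The key quantity is the per-sample variance $\sum_i \mu(i)d_i^2 = \|\mu\|_3^3/\|\mu\|_2^4 - 1 = t_\mu$.

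Next, for the mixture $P=\frac12\nu_{+}^q+\frac12\nu_-^q$, expand
\[ 1+\chi^2(P,\mu^q)=\E_{s,s'}\left[\left(\sum_i \frac{\nu_s(i)\nu_{s'}(i)}{\mu(i)}\right)^q\right]. \]
Each inner sum equals $\sum_i \mu(i)(1+s\beta d_i)(1+s'\beta d_i)$. The linear terms vanish, so it equals $1+ss'\beta^2 t_\mu$. Hence
\[ \chi^2 = \tfrac12\left(1+\beta^2t_\mu\right)^q+\tfrac12\left(1-\beta^2t_\mu\right)^q-1 \le \cosh(q\beta^2 t_\mu)-1, \]
using $(1\pm x)^q\le e^{\pm qx}$. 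Now $q\beta^2t_\mu = q a^2\eps^2/t_\mu \le a^2/500$. With $a\le 8$ this is at most $0.128$, so $\chi^2\le \cosh(0.128)-1\approx 0.0082$.

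Finally, apply $\dtv\le\frac12\sqrt{\chi^2}$, which gives $\dtv\le 0.046<1/12$. If the lemma is meant for arbitrary $a>0$, the constant must absorb $a$. I would note that the intended range is $a\in\{3,8\}$, as in Lemma \reflemma{three-of-four-nu-have-far-mu22}, or, if one prefers, interpret the bound as $q\le t_\mu/(500a^2\eps^2)$.

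The main obstacle is this constant bookkeeping in $a$, together with the nonnegativity check when $d_i$ is large. The $\chi^2$ computation itself is clean precisely because the cross term collapses to $t_\mu$.
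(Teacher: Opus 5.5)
Your proposal is correct and is essentially the paper's proof: the same $\chi^2$ second-moment computation in which each per-sample factor collapses to $1 \pm a^2\eps^2/t_\mu$, giving $1+\chi^2 \le \cosh(q a^2\eps^2/t_\mu) \le \cosh(8^2/500)$. The paper likewise silently assumes $a \le 8$ at this step, so your caveat about $a$ is apt. The only difference is the final conversion to total variation: you use the Cauchy--Schwarz bound $\dtv \le \frac{1}{2}\sqrt{\chi^2}$, while the paper uses Pinsker's inequality together with the KL--$\chi^2$ bound, and both give a value below $1/12$.
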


We combine these results to show that:
\begin{lemma}{lbnd-t-eps2}
    Let $\eps > 0$ and $\mu$ be a discrete distribution over $\Omega$. Distinguishing between $\mu$ and an input distribution $\nu$ for which $\abs{\|\nu\|_2^2 - \|\mu\|_2^2} \ge 3\eps\|\mu\|_2^2$ requires $q = \Omega(t_\mu / \eps^2)$ samples.
\end{lemma}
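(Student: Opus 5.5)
We split into two cases according to how $t_\mu$ compares with $\eps/\|\mu\|_2$; the threshold $8\eps/\|\mu\|_2$ is chosen because it is the hypothesis of Lemma \reflemma{three-of-four-nu-have-far-mu22}.

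\emph{Case $t_\mu < 8\eps/\|\mu\|_2$.} Here $t_\mu/\eps^2 < 8/(\eps\|\mu\|_2)$. Lemma \reflemma{lbnd-eps-mu2}, applied with accuracy $3\eps$, already gives an $\Omega(1/\eps\|\mu\|_2)$ lower bound for distinguishing $\mu$ from distributions whose collision norm is $\Omega(\eps)$-far from $\|\mu\|_2^2$. Adjusting the constant in $\eps$ if needed, this is $\Omega(t_\mu/\eps^2)$. The only thing to check is that the hard family in that lemma has $\abs{\|\nu\|_2^2-\|\mu\|_2^2}\ge 3\eps\|\mu\|_2^2$ after rescaling $\eps$ by a constant, which is immediate from Lemmas \reflemma{eps2-lower-bound-extreme-mu} and \reflemma{base-deviation-construction-for-eps-mu2}.

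\emph{Case $t_\mu \ge 8\eps/\|\mu\|_2$.} For both $a\in\{3,8\}$ we have $t_\mu\ge a\eps/\|\mu\|_2$, so all four $\nu_{s,a}$ are valid distributions. By Lemma \reflemma{three-of-four-nu-have-far-mu22}, at least three of the four satisfy $\abs{\|\nu\|_2^2-\|\mu\|_2^2}\ge 3\eps\|\mu\|_2^2$. By pigeonhole, some $a^*\in\{3,8\}$ has both $\nu_{+1,a^*}$ and $\nu_{-1,a^*}$ far.

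Suppose an algorithm distinguishes $\mu$ from every far $\nu$ with success probability $2/3$ using $q$ samples, and suppose $q\le t_\mu/(500\eps^2)$ (when $t_\mu/(500\eps^2)<1$ the bound $\Omega(t_\mu/\eps^2)$ is trivial). Feed the algorithm the mixture input: pick $s$ uniformly and draw $q$ samples from $\nu_{s,a^*}$. On this input it must output ``far'' with probability at least $2/3$, since it does so for each of the two components. On $\mu^q$ it outputs ``far'' with probability at most $1/3$. So the acceptance probabilities of the same test differ by at least $1/3$. Lemma \reflemma{lbnd-t-over-eps2-hardness-pair} bounds the total variation distance between $\tfrac12\nu_{+1,a^*}^q+\tfrac12\nu_{-1,a^*}^q$ and $\mu^q$ by $1/12<1/3$. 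This is a contradiction, so $q>t_\mu/(500\eps^2)$.

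Adaptive or sequential sample counts need no extra care. With a fixed budget $q$, the transcript is a function of the $q$ samples, and the data-processing inequality applies. For expected sample complexity we would first truncate via Markov's inequality at a constant multiple of the expectation, losing only a constant in the success probability.

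Combining the cases gives $q=\Omega(t_\mu/\eps^2)$. The main step that needs care is the first case: confirming that the earlier $\Omega(1/\eps\|\mu\|_2)$ bound is stated for the separation ``$\ge 3\eps\|\mu\|_2^2$'' up to constant rescaling, and that the constant $8$ matches. The pigeonhole argument and the mixture argument are routine.
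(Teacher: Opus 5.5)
Your proof is correct and follows the paper's. You use the same split at $t_\mu = 8\eps/\|\mu\|_2$, the same appeal to Lemma \reflemma{lbnd-eps-mu2} in the first case, and Lemmas \reflemma{three-of-four-nu-have-far-mu22} and \reflemma{lbnd-t-over-eps2-hardness-pair} in the second; the only difference is in how you combine the four distributions. You pigeonhole to a single $a^*$ for which both $\nu_{+1,a^*}$ and $\nu_{-1,a^*}$ are far, which gives an acceptance gap of $1/3$ against total variation $1/12$, whereas the paper mixes all four $\nu_{s,a}$ uniformly, splits the total variation by the triangle inequality, and must account more loosely for the possibly non-far fourth distribution.
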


\section{Elementary tools for the upper bound}
\label{sec:ubnd-elementary-tools}

\paragraph{Amplification}

\repprovelemma{amplify-1/3-to-eta}
\begin{proof}
    By Chernoff's bound, $\Pr[\Bin(q,2/3) \le q/2] \le e^{-2(2/3-1/2)^2 q} = e^{-q/18} \le \eta$.
\end{proof}

\paragraph{Expected value of an amplified estimator}

\repprovelemma{median-expected-value}
\begin{proof}
    For $k \le 9$, it is easy to see that $\E[Y] \le 9\E[X]$, since the median of $k$ non-negative variables cannot be greater than their sum. We proceed with $k \ge 10$.

    Let $k \ge 10$ and $r = \floor{k/2}$, so that $r-1 \ge (2/5)(k-1)$.
    \begin{eqnarray*}
        \frac{\partial}{\partial p} \Pr\left[\Bin(k,p) \ge r\right]
        &=& \frac{\partial}{\partial p} \sum_{i=r}^k \binom{k}{r} p^i (1-p)^{k-i} \\
        &=& \sum_{i=r}^{k-1} \binom{k}{i} \left(i p^{i-1} (1-p)^{k-i} - (k-i) p^i (1-p)^{k-i-1}\right) + \binom{k}{k} \cdot k p^{k-1} \\
        &\le& \sum_{i=r}^{k} \binom{k}{i} i p^{i-1} (1-p)^{k-i} \\
        &\le& k^2 \sum_{j=r-1}^{k-1} \binom{k-1}{j} p^j (1-p)^{(k-1)-j}
        = k^2 \Pr\left[\Bin(k-1,p) \ge r-1\right]
    \end{eqnarray*}
    For $k \ge 10$, $r-1 \ge (2/5)(k-1)$, and therefore, $\frac{\partial}{\partial p} \Pr\left[\Bin(k,p) \ge r\right] \le k^2 e^{-\Omega(k)}$ for $0 \le p \le 1/3$. Hence, there exists a constant $C_1$ for which $\frac{\partial}{\partial p} \Pr\left[\Bin(k,p) \ge \floor{k/2}\right] \le C_1$ for every $k \ge 10$ and $0 \le p \le 1/3$. By the standard theorem, for every $k \ge 10$ and $0 \le p_1 \le p_2 \le 1/3$, $\Pr\left[\Bin(k,p_2) \le \floor{k/2}\right] - \Pr\left[\Bin(k,p_1) \le \floor{k/2}\right] \le C_1 (p_2 - p_1)$.
    
    Let $H = \{ a \in \supp(Y) : a \ge 3\E[X] \} = \{ a \in \supp(X) : a \ge 3\E[X] \}$. We use Markov's inequality to obtain $\Pr\left[X \ge a\right] \le 1/3$ for every $a \in H$. Therefore, for every $k \ge 10$:
    \begin{eqnarray*}
        \E[Y]
        &=& \sum_{a \in \supp(Y)} a \cdot \Pr\left[Y = a\right] \\
        &=& \sum_{a \in \supp(Y)} a \cdot \left(\Pr\left[Y \ge a\right] - \Pr\left[Y > a\right]\right) \\
        &\le& 3\E[X] + \sum_H a \cdot \left(\Pr\left[Y \ge a\right] - \Pr\left[Y > a\right]\right)
    \end{eqnarray*}

    We use the definition of $Y$ as the median of $k$ rounds to obtain:
    \begin{eqnarray*}
        \E[X] &\le&
        3\E[X] + \sum_H a \cdot \left(\Pr\left[\Bin\left(k, \Pr\left[X \ge a\right]\right) \ge \floor{k/2}\right] - \Pr\left[\Bin\left(k, \Pr\left[X > a\right]\right) \ge \floor{k/2}\right]\right) \\
        &\le& 3\E[X] + \sum_H a \cdot C_1 \left(\Pr\left[X \ge a\right] - \Pr\left[X > a\right]\right) \\
        &\le& 3\E[X] + C_1\E[X]
    \end{eqnarray*}

    To complete the proof, we choose $C = \max\{9, 3+C_1\}$.
\end{proof}

\paragraph{Additive estimation of an indicator}
\defproc{estimate-indicator-additive}{Estimate-Indicator-Additive}

For a black-box $\mathcal A$ sampling an indicator whose expected value is $p$, we estimate $\hat{p} = \Theta(p)$ using $O(\log \frac{1}{\eta} / \eps)$ samples, which suffices to determine its magnitude, and then use $O(\log (1/\eta) \cdot \hat{p} / \eps^2)$ samples to estimate $p$ within $\pm \eps$-error. \algforprocshort{estimate-indicator-additive}

\begin{proc-algo}{estimate-indicator-additive}{\eta; \mathcal A, \eps}
    \alginput{$\mathcal A$ is a black-box for sampling an indicator whose expected value is $p$}
    \algoutput{$X \in p \pm \eps$ with probability $\ge 1-\eta$}
    \algunbiasedness{$\E[X] = p$}
    \algcomplexity{$O(\log \eta^{-1} \cdot (\log \eps^{-1} / \eps + p/\eps^2))$}
    \begin{code}
        \algitem Let $M_1 \gets \ceil{12 \cdot \ln (10/\eta) / \eps}$.
        \algitem Call $\mathcal A$ for $M_1$ times.
        \algitem Let $S_1$ be the number of successful calls.
        \algitem Let $M_2 \gets \ceil{6 \ln (10/\eta) (S_1/M_1 + \eps) / \eps^2}$.
        \algitem Call $\mathcal A$ for $M_2$ times.
        \algitem Let $S_2$ be the number of successful calls.
        \algitem Return $S_2 / M_2$.
    \end{code}
\end{proc-algo}

\begin{lemma}{estimate-indicator-additive}
    Let $\mathcal A$ be a black-box sampling oracle for an indicator with expected value $p$. Procedure $\proc*{estimate-indicator-additive}(\eta; \mathcal A, \eps)$ is an unbiased estimator for $p$ whose additive error is at most $\eps$ with probability $1-\eta$. Moreover, its oracle-call complexity is $O(\log \eta^{-1} \cdot (1 / \eps + p/\eps^2))$.
\end{lemma}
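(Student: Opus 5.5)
The proof has three parts: unbiasedness, accuracy, and complexity. Each follows from how the two phases of the procedure are structured.

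\textbf{Unbiasedness.} Condition on the first phase, meaning on $S_1$, which fixes $M_2$. The second phase then makes $M_2$ fresh calls to $\mathcal A$, independent of $S_1$, so $\E[S_2 \mid S_1] = M_2 p$ and $\E[S_2/M_2 \mid S_1] = p$. Taking the outer expectation gives $\E[X] = p$. This is the reason the output uses only the second batch of calls.

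\textbf{Complexity.} We have $M_1 = O(\log(1/\eta)/\eps)$ deterministically. For the second batch, $M_2 \le 1 + 6\ln(10/\eta)(S_1/M_1 + \eps)/\eps^2$. Since $\E[S_1/M_1] = p$, linearity gives $\E[M_2] = O(\log(1/\eta)\cdot(p/\eps^2 + 1/\eps))$. The total is the sum of the two bounds. (The procedure's header claims a weaker $\log \eps^{-1}/\eps$ term; the lemma's bound is the sharper one.)

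\textbf{Accuracy.} I split the failure probability into two events, each bounded by $\eta/2$ or less.

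First, the initial estimate must not badly underestimate $p$. Let $\hat p = S_1/M_1$. The target is $\hat p + \eps \ge p/2$, or an equivalent statement guaranteeing $M_2 \ge c\ln(10/\eta)\max\{p,\eps\}/\eps^2$ for a suitable constant $c$.
\begin{itemize}
\item If $p \le 2\eps$, this holds trivially, because $\hat p + \eps \ge \eps \ge p/2$.
\item If $p > 2\eps$, use the multiplicative Chernoff lower tail: $\Pr[S_1 < pM_1/2] \le e^{-pM_1/8}$. Since $pM_1 \ge 24\ln(10/\eta)$, this is at most $(\eta/10)^3 \le \eta/10$.
\end{itemize}

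Second, given such an $M_2$, the final average must be within $\pm\eps$ of $p$. Let $\mu_2 = pM_2$ and use the Chernoff bound with $\delta = \eps/p$. Bound the variance proxy by $\max\{p,\eps\}$ to cover both regimes $\delta \le 1$ and $\delta > 1$ uniformly: $\Pr[|S_2 - pM_2| \ge \eps M_2] \le 2\exp(-\eps^2 M_2/(3\max\{p,\eps\}))$. In both regimes $\hat p + \eps \ge \max\{p,\eps\}/2$, so $M_2 \ge 3\ln(10/\eta)\max\{p,\eps\}/\eps^2$. The failure probability is then at most $2\cdot(\eta/10) \le \eta/5$.

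A union bound over the two events gives total failure probability below $\eta$.

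\textbf{Main obstacle.} The delicate step is matching the constants $12$ and $6$ against the Chernoff exponents in the $\delta > 1$ regime. If $p < \eps$, the standard form $e^{-\delta^2\mu/(2+\delta)}$ is needed. With $\delta\mu = \eps M_2$ and $\delta \ge 1$, this gives exponent $\ge \eps M_2/3 \ge 2\ln(10/\eta)$, since $M_2 \ge 6\ln(10/\eta)/\eps$. I would check that regime explicitly rather than relying on a single inequality for both.
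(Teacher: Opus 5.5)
Your proposal is correct and follows essentially the same route as the paper: unbiasedness by conditioning on the first batch, complexity by linearity of $\E[M_2]$, and accuracy via a Chernoff lower-tail bound on $S_1$ that forces $M_2$ to be large, then a Chernoff bound on the second batch and a union bound. The only difference is cosmetic. You split the first phase at $p = 2\eps$ and treat the second phase uniformly with the variance proxy $\max\{p,\eps\}$. The paper instead splits at $p = \eps$ and handles $p < \eps$ separately, using only the upper tail and the unconditional bound $M_2 \ge 6\ln(10/\eta)/\eps$; your constants are, if anything, cleaner than the paper's.
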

\begin{proof}
    For complexity, observe that:
    \begin{eqnarray*}
        M_1 + \E[M_2]
        &=& O(\log \eta^{-1} / \eps) + O\left((\E[S_1/M_1] + \eps) \cdot \frac{\log \eta^{-1}} {\eps^2}\right) \\
        &=& O\left(\log \eta^{-1} / \eps\right) + O\left(p \cdot \frac{\log \eta^{-1}} {\eps^2}\right)
        = O\left(\log \eta^{-1} \left(\frac{1}{\eps} + \frac{p}{\eps^2}\right)\right)
    \end{eqnarray*}

    Clearly, for every condition on $M_2=m$, $\E[S_2/M_2 | M_2 = m] = p$, and therefore, $\E[S_2/M_2] = p$.
    
    If $p \ge \eps$, then by Chernoff's bound,
    \begin{eqnarray*}
        \Pr\left[S_1/M_1 < p/2 \right]
        &=& \Pr\left[S_1 < \E[S_1]/2 \right] \\
        &\le& e^{-\frac{1}{12} \E[S_1]}
        = e^{-\frac{1}{12} p M_1}
        \le e^{-\frac{1}{12} p (12 \cdot \ln (10/\eta) / \eps)}
        \le e^{-\ln (10/\eta)}
        = \frac{1}{10}\eta
    \end{eqnarray*}

    And hence,
    \begin{eqnarray*}
        \Pr\left[S_2/M_2 \notin p \pm \eps \right]
        &\le& \frac{1}{10}\eta\eps + \Pr\left[\Bin(M_2,p) \notin (p \pm \eps)M_2 \cond S_1/M_1 \ge \frac{1}{2}p \right] \\
        &\le& \frac{1}{10}\eta\eps + 2e^{-\frac{1}{3} \cdot (\eps/p)^2 \cdot (6 \cdot \ln (10/\eta) \cdot (p/2) / \eps)} \\
        &\le& \frac{1}{10}\eta\eps + 2e^{-\frac{1}{3} \cdot (\eps/p)^2 \cdot (3 \ln (10/\eta) \cdot p / \eps^2)} \\
        &=& \frac{1}{10}\eta\eps + 2e^{-\ln (10/\eta) / p}
        \le \frac{1}{10}\eta\eps + 2e^{-\ln (10/\eta)}
        \le \frac{1}{10}\eta\eps + \frac{1}{5}\eta
        \le \frac{3}{10}\eta
    \end{eqnarray*}
    Combined, the probability to return an output outside the range $p \pm \eps$ is bounded by $\frac{3}{10}\eta < \eta$.

    For $p < \eps$ we focus on additive error. We only have to consider the overestimation case, since $p-\eps < 0$.
    \begin{eqnarray*}
        \Pr\left[S_2/M_2 \ne p \pm \eps \right]
        &=& \Pr\left[S_2/M_2 > p + \eps \right] \\
        &=& \Pr\left[\Bin(M_2,p) > (p + \eps) M_2\right]
        = \Pr\left[\Bin(M_2,p) > \left(1 + \frac{\eps}{p}\right)\E[\Bin(M_2,p)]\right]
    \end{eqnarray*}

    Recall that $M_2 \ge 6 \ln \frac{10}{\eta} \eps / \eps^2 = 6\ln \frac{10}{\eta} / \eps$ regardless of $S_1$. Since $\eps/p \ge 1$, by Chernoff's bound:
    \[  \Pr\left[S_2/M_2 \ne p \pm \eps \right]
        \le e^{-\frac{1}{3}(\eps/p) \cdot p M_2}
        = e^{-\frac{1}{3}\eps M_2}
        \le e^{-\frac{1}{3}\eps \cdot (6 \ln (10/\eta) / \eps)}
        \le \frac{1}{10}\eta \]
\end{proof}

\paragraph{Rejection sampling}
In some procedures we construct a subset $A \subseteq \Omega$ and run a subroutine on the conditional distribution $\mu_A$. Since our algorithm can only sample $\mu$, it has to simulate an exact sampling of $\mu_A$ by rejection sampling, which is repeatedly sampling $\mu$ until obtaining an element belonging to $A$. If $A$ has sufficiently large mass, then the rejection sampling can be done effectively.

\begin{lemma}{rejection-sampling-concentration}
    Let $0 < \eta \le 1/3$ be an explicit parameter. Let $A$ be a subset accessible through the ``$i \in^? A$'' oracle. Assume that we wish to draw an unknown number of independent samples from the conditional distribution $\mu_A$ by rejection sampling. Consider the following logic: for the $i$th requested sample from $\mu_A$, we draw samples from $\mu$ until obtaining an $A$-element or until the total number of drawn $\mu$-samples exceeds $4(i + \ceil{12 \log (1/\eta)})$, in which case the rejection-sampler crashes. In this setting, if $\mu(A) > 1/2$, then with probability at least $1-\eta$, the rejection sampling does not crash even for infinitely many sample requests.
\end{lemma}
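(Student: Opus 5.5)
The plan is to couple the whole rejection-sampling process to a single infinite sequence of i.i.d.\ $\mu$-samples $x_1, x_2, \ldots$, each an $A$-element independently with probability $p = \mu(A) > 1/2$. The $i$th request for a $\mu_A$-sample is served by scanning forward from the current position until the next $A$-element. Thus, after the $i$th request is fully served, the total number of $\mu$-samples drawn equals $T_i$, the position of the $i$th success in this Bernoulli($p$) sequence. The sampler crashes at request $i$ exactly when $T_i > 4(i + c)$ with $c = \ceil{12 \log (1/\eta)}$. Any shortfall within request $i$ means the $i$th success has not yet occurred by step $4(i+c)$, and earlier requests have smaller budgets, so it is enough to bound the event $T_i > 4(i+c)$ for the first failing $i$. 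In summary, the no-crash event contains the event that for every $i \ge 1$, at least $i$ successes occur among the first $L_i = 4(i+c)$ draws.

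Next I will bound each event separately. The event $T_i > L_i$ equals $\Pr[\Bin(L_i, p) < i]$. Since $p > 1/2$, the mean satisfies $pL_i > 2(i+c)$, and $i < \frac{1}{2} \cdot 2(i+c)$. A multiplicative Chernoff bound with deviation $1/2$ (the lower-tail bound $e^{-\delta^2 \E/2}$ from the appendix) then gives $\Pr[\Bin(L_i,p) < i] \le e^{-(1/8)\cdot 2(i+c)} = e^{-(i+c)/4}$. Monotonicity in $p$ lets me take $p = 1/2$ as the worst case.

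Then I will take a union bound over all $i \ge 1$. This gives $\sum_{i\ge1} e^{-(i+c)/4} = e^{-c/4}\cdot \frac{e^{-1/4}}{1-e^{-1/4}} \le 4 e^{-c/4}$. With $c \ge 12 \log(1/\eta)$, the factor $e^{-c/4} \le \eta^{3}$ (for natural log; if $\log$ means base $2$ it is even better). So the total is at most $4\eta^3 \le \eta$ for $\eta \le 1/3$, since $4\eta^2 \le 4/9 < 1$.

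The main obstacle is the first step: arguing carefully that the crash event reduces exactly to the events $T_i > 4(i+c)$. The budget is cumulative across requests, so the point is that the stopping rule depends only on the shared sample stream. Once that coupling is stated, the Chernoff constants have ample slack. I would check the exact Chernoff form in the appendix to make sure the constant $12$ suffices; if the appendix bound is weaker (e.g.\ $e^{-\delta^2\E/3}$), the exponent becomes $(i+c)/6$ and $e^{-c/6} \le \eta^2$, with sum factor about $6.5$. Then $6.5\eta^2 \le \eta$ needs $\eta \le 0.15$, so for $\eta$ near $1/3$ I would instead use the sharper Chernoff--Hoeffding form $e^{-2(pL_i - i)^2/L_i}$, whose exponent is $\ge (i+2c)^2/(2(i+c)) \ge (i+c)/2$.
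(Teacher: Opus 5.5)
Your proposal is correct and follows essentially the same route as the paper: bound each crash event by the lower tail $\Pr[\Bin(4(i+c),1/2)<i]$ via a Chernoff bound, then take a union bound over all $i \ge 1$. There are two minor differences. First, you use one multiplicative estimate $e^{-(i+c)/4}$ for all $i$ and sum it as a geometric series to get $4\eta^3$, whereas the paper splits into $i \ge 12\ln(1/\eta)$ (bound $e^{-i/2}$) and $i \le 12\ln(1/\eta)$ (bound $\eta^4$). Second, you spell out the single-stream coupling that the paper leaves implicit.
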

\begin{proof}
    Assume that $\mu(A) > 1/2$. For an individual $i \ge 1$, the sampler crashes at the $i$th sample with probability:
    \[\Pr\left[\Bin(4(i + \ceil{12 \ln (1/\eta)}), 1/2) < i\right]\]

    For $i \ge 12\ln (1/\eta)$:
    \begin{eqnarray*}
        \Pr\left[\Bin(4(i + \ceil{12\ln (1/\eta)}), 1/2) < i\right]
        &\le& \Pr\left[\Bin(4i, 1/2) < i\right] \\
        &\le& \Pr\left[\Bin(4i, 1/2) < 2i - i\right]
        \le e^{-2(i)^2/(4i)}
        \le e^{-i/2}
    \end{eqnarray*}
    
    For $i \le 12\ln (1/\eta)$:
    \begin{eqnarray*}
        \Pr\left[\Bin(4(i + \ceil{12\ln (1/\eta)}), 1/2) < i\right]
        &\le& \Pr\left[\Bin(\ceil{48 \ln (1/\eta)}, 1/2) < i\right] \\
        &\le& \Pr\left[\Bin(\ceil{48 \ln (1/\eta)}, 1/2) < \frac{1}{4} \cdot \ceil{48 \ln(1/\eta)} \right] \\
        &\le& e^{-\frac{1}{12} \cdot 48\ln(1/\eta)} \\
        &=&  e^{-4\ln(1/\eta)}
    \end{eqnarray*}

    By the union bound, the probability to crash is bounded by:
    \begin{eqnarray*}
        \sum_{i=1}^{\floor{12\ln(1/\eta)}} e^{-4\ln(1/\eta)} + \sum_{i=\floor{12\ln(1/\eta)}+1}^\infty e^{-i/2}
        &\le& 12\ln(1/\eta) \cdot \eta^4 + e^{-6\ln(1/\eta)} \sum_{i=0}^\infty e^{-i/2} \\
        &\le& 12\ln(1/\eta) \cdot \eta^4 + \eta^6 \cdot 3 \\
        &\le& (12 \eta^3 \ln (1/\eta) + 3\eta^5)\eta
        \le \eta
    \end{eqnarray*}
    The last transition is correct for every $0 < \eta \le 1/3$.
\end{proof}

\paragraph{Exponential tail}

\begin{lemma}{exponential-tail}
    There exist a non-decreasing monotone function $f : \mathbb N \times [0,1) \to \mathbb R^+$ with the following property: let $X$ be any non-negative random variable for which there exists some $a$ such that for every integer $\lambda \ge 1$, $\Pr[X \ge \lambda a] \le (\Pr[X \ge a])^\lambda$. If $\Pr[X \ge a] < 1$, then for every integer $r \ge 1$, $\E[X^r] \le a^r \cdot f(r,\Pr\left[X \ge a\right])$. In other words, if $r$ and $\Pr\left[X \ge a\right]$ are considered as constants, then $\E[X^r] = O(a^r)$.
\end{lemma}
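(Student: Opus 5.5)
We use the layer-cake formula $\E[X^r] = \int_0^\infty r x^{r-1}\Pr[X \ge x]\,dx$ and bound the tail geometrically. Set $p = \Pr[X \ge a] < 1$. We split the integral into the intervals $[0,a)$ and $[\lambda a, (\lambda+1)a)$ for integers $\lambda \ge 1$.

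On $[0,a)$ we bound the probability by $1$. This contributes at most $\int_0^a r x^{r-1}\,dx = a^r$.

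On $[\lambda a,(\lambda+1)a)$, monotonicity of the tail gives $\Pr[X \ge x] \le \Pr[X \ge \lambda a] \le p^\lambda$, using the hypothesis with integer $\lambda$. The integral of $r x^{r-1}$ over this interval is $a^r\left((\lambda+1)^r - \lambda^r\right) \le a^r r(\lambda+1)^{r-1}$. Summing over $\lambda \ge 1$ gives
\[ \E[X^r] \le a^r\left(1 + \sum_{\lambda=1}^\infty r(\lambda+1)^{r-1} p^\lambda\right). \]
Define $f(r,p)$ to be the expression in parentheses, and set $f(r,p) = 0$ or any value when $p$ is outside the domain; the domain $[0,1)$ already excludes $p=1$.

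It remains to check that $f$ has the required properties.
\begin{itemize}
\item \emph{Finiteness.} For fixed $r$ and $p<1$ the series converges, since polynomial growth times a geometric decay is summable (ratio test). Hence $f(r,p) < \infty$ and takes values in $\mathbb R^+$.
\item \emph{Monotonicity in $p$.} Each term is non-decreasing in $p \ge 0$.
\item \emph{Monotonicity in $r$.} Each term $r(\lambda+1)^{r-1}$ is non-decreasing in $r$ for $\lambda \ge 1$.
\end{itemize}
So $f$ is non-decreasing in both coordinates. If one prefers a closed form, one can bound $\sum_\lambda (\lambda+1)^{r-1}p^\lambda \le \frac{1}{p}\sum_{n\ge 1} n^{r-1}p^n$, a polylogarithm-type quantity $\mathrm{Li}_{1-r}(p)/p$, but keeping the series as the definition is cleaner.

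The main subtlety is that the hypothesis only constrains integer multiples $\lambda a$. This is why the tail must be discretized into blocks, with monotonicity of $x \mapsto \Pr[X \ge x]$ used inside each block rather than a continuous exponential bound. Beyond that, the only care needed is in making $f$ depend solely on $(r, p)$ and not on $a$ or the law of $X$. This holds because, after factoring out $a^r$, everything is expressed in units of $a$.
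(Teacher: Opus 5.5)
Your proof is correct and is essentially the paper's argument: both cut $[0,\infty)$ into blocks $[\lambda a,(\lambda+1)a)$, use the hypothesis only at the integer multiples $\lambda a$, and bound $\E[X^r]/a^r$ by a convergent polynomial-times-geometric series in $p=\Pr[X\ge a]$. The only difference is presentation. You integrate the tail $\int_0^\infty r x^{r-1}\Pr[X\ge x]\,dx$ and get terms $r(\lambda+1)^{r-1}p^\lambda$, whereas the paper bounds $X^r$ by $((\lambda+1)a)^r$ on each event $\{\lambda a\le X<(\lambda+1)a\}$ and that event's probability by $p^\lambda$, which gives $f(r,p)=1+\sum_{\lambda\ge 1}(\lambda+1)^r p^\lambda$.
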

\begin{proof}
    Let $f(r,p) = 1 + \sum_{i=1}^\infty (i+1)^r p^i$. For every non-negative random variable $X$ with an appropriate $a$,
    \begin{eqnarray*}
        \E[X^r]
        &\le& a^r + \sum_{i=1}^\infty ((i+1) \cdot a)^r \Pr\left[i \cdot a \le X < (i+1) a \right] \\
        &\le& a^r + a^r \sum_{i=1}^\infty (i+1)^r \cdot (\Pr\left[X \ge a\right])^i \\
        &=& a^r \cdot \left(1 + \sum_{i=1}^\infty (i+1)^r \cdot (\Pr\left[X \ge a\right])^i\right)
        = a^r \cdot f(r, \Pr\left[X \ge a\right])
    \end{eqnarray*}
\end{proof}

\section{Reference $\|\mu\|_2^2$-estimators}
\label{sec:reference-L2-estimators}

In this section we provide a few $L_2$-estimators. Some of the estimators are a rephrasing of existing ones for the sake of self-completeness of this paper.

\paragraph{The unbiased $L_2$-estimator}
Assume that we draw $m$ independent samples from a given distribution $\mu$, and let $S_m$ be the number of collisions. More explicitly, $S_m$ is the number of choices of $1 \le i < j \le m$ for which the $i$th sample equals to the $j$th sample. The expected value of $S_m$ is exactly $\|\mu\|_2^2 \cdot \binom{m}{2}$, and its variance is bounded by $\binom{m}{2} \|\mu\|_2^2 + m^3 \left(\|\mu\|_3^3 - \|\mu\|_2^4\right)$. Note that this variance has a matching lower bound (up to constant factors). The unbiased estimator cannot be used directly, since we must have some knowledge about $\mu$ to choose an appropriate number of samples.

\begin{lemma}[Based on \cite{BC17}]{base-algorithm-variance}
    Let $\mu$ be a discrete distribution over a (possibly infinite) domain. Assume that we draw $m$ samples, and let $S_m$ be the number of collisions. In this setting, $\E[S_m] = \binom{m}{2} \|\mu\|_2^2$ and $\Var[S_m] \le \binom{m}{2} \|\mu\|_2^2 + m^3 \left(\|\mu\|_3^3 - \|\mu\|_2^4\right)$.
\end{lemma}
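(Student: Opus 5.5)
We write $S_m = \sum_{1 \le a < b \le m} X_{ab}$, where $X_{ab} = \mathbbm{1}[\text{sample } a = \text{sample } b]$, and expand both moments over pairs. The expectation is immediate: each $X_{ab}$ is an indicator with $\E[X_{ab}] = \sum_i \mu(i)^2 = \|\mu\|_2^2$, so by linearity $\E[S_m] = \binom{m}{2}\|\mu\|_2^2$.

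For the variance we write $\Var[S_m] = \sum_{\{a,b\},\{c,d\}} \mathrm{Cov}(X_{ab}, X_{cd})$ over ordered pairs of pairs, and split the terms according to $\abs{\{a,b\}\cap\{c,d\}} \in \{0,1,2\}$.
\begin{itemize}
    \item If the two pairs are disjoint, $X_{ab}$ and $X_{cd}$ depend on disjoint sets of independent samples, so the covariance vanishes.
    \item If the pairs are identical, the contribution is $\Var[X_{ab}] \le \E[X_{ab}^2] = \|\mu\|_2^2$. There are $\binom{m}{2}$ such terms, which gives the first summand $\binom{m}{2}\|\mu\|_2^2$.
    \item If the pairs share exactly one index, say $\{a,b\}$ and $\{a,c\}$ with $a,b,c$ distinct, then $\E[X_{ab}X_{ac}] = \Pr[\text{samples } a,b,c \text{ all equal}] = \sum_i \mu(i)^3 = \|\mu\|_3^3$. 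Hence the covariance is exactly $\|\mu\|_3^3 - \|\mu\|_2^4$, which is nonnegative by Cauchy--Schwarz: $\left(\sum_i \mu(i)^2\right)^2 \le \sum_i \mu(i)\cdot\sum_i \mu(i)^3$.
\end{itemize}

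It remains to count the ordered pairs of pairs sharing exactly one index. We choose the shared index and two distinct other indices, and then account for ordering. This gives $m(m-1)(m-2)$ ordered pairs, that is, $3\binom{m}{3}\cdot 2 = m(m-1)(m-2) \le m^3$. Since each covariance term is nonnegative, we may bound this count by $m^3$ and get $m^3(\|\mu\|_3^3 - \|\mu\|_2^4)$. Summing the three cases yields
\[
\Var[S_m] \le \binom{m}{2}\|\mu\|_2^2 + m^3\left(\|\mu\|_3^3 - \|\mu\|_2^4\right).
\]
A slightly sharper first summand, $\binom{m}{2}(\|\mu\|_2^2 - \|\mu\|_2^4)$, is also available, but the stated bound suffices.

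Over a countably infinite domain, all the sums are absolutely convergent because $\mu(i)\le 1$, so no issue arises there. The only points needing care are the combinatorial count in the one-shared-index case and the sign argument, and both are short.
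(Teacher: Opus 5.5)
Your proposal is correct and is essentially the paper's argument: both expand over pairs of collision indicators classified by how many indices they share, use the same count $m(m-1)(m-2)$ for the one-shared-index case, and arrive at the same exact variance $\binom{m}{2}(\|\mu\|_2^2-\|\mu\|_2^4)+m(m-1)(m-2)(\|\mu\|_3^3-\|\mu\|_2^4)$. The only difference is bookkeeping: you sum covariances, so the disjoint pairs drop out immediately, whereas the paper computes $\E[S_m^2]$ and subtracts $\binom{m}{2}^2\|\mu\|_2^4$. Your explicit Cauchy--Schwarz remark also justifies the final relaxation to $m^3$, a step the paper's proof uses without stating it.
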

\begin{proof}
    For every $1 \le i < j \le m$, let $X_{ij}$ be the indicator for a collision between the $i$th sample and the $j$th sample.
    
    \begin{eqnarray*}
        E\left[X_{ij}\right] &=& \sum_{x\in \Omega} (\mu(x))^2 = \|\mu\|_2^2 \\
        \E\left[X_{ij}^2\right] &=& \sum_{x\in \Omega} (\mu(x))^2 = \|\mu\|_2^2 \\
        \E\left[X_{ij} X_{ij'}\right] &=& \sum_{x\in \Omega} (\mu(x))^3 = \|\mu\|_3^3
    \end{eqnarray*}

    By linearity of expectation, $\E[S_m] = \sum_{1 \le i < j \le m} \E[X_{ij}] = \binom{m}{2} \|\mu\|_2^2$.
    
    For the variance,
    \begin{eqnarray*}
        \E\left[S_m^2\right]
        &=& \sum_{1 \le i < j \le m} \E\left[X_{ij}^2\right] + 2 \sum_i \sum_{\substack{1 \le j < j' \le m \\ j,j' \ne i}} \E\left[X_{ij} X_{ij'}\right] + \sum_{1 \le i < j \le m} \sum_{\substack{1 \le i' < j' \le m \\ i',j' \ne i,j}} \E\left[X_{ij}\right]\E\left[X_{i'j'}\right] \\
        &=& \binom{m}{2} \cdot \|\mu\|_2^2 + 2 m \binom{m - 1}{2} \|\mu\|_3^3 + \binom{m}{2} \binom{m-2}{2} \|\mu\|_2^4 \\
        &=& \binom{m}{2} \cdot \|\mu\|_2^2 + 2 (m-2) \binom{m}{2} \|\mu\|_3^3 + \binom{m}{2} \binom{m-2}{2} \|\mu\|_2^4 \\
        &=& \binom{m}{2} \left(\|\mu\|_2^2 + 2 (m-2) \|\mu\|_3^3 + 2 \binom{m-2}{2} \|\mu\|_2^4\right) \\
        &=& \binom{m}{2} \left(\|\mu\|_2^2 + 2 (m-2) \|\mu\|_3^3 + \left(\binom{m}{2} - (2m-3)\right) \|\mu\|_2^4\right)
    \end{eqnarray*}

    We use $\Var[S_m] = \E[S_m^2] - (\E[S_m])^2$:
    \begin{eqnarray*}
        \Var\left[S_m\right]
        &=& \binom{m}{2}\left(\|\mu\|_2^2 + 2 (m-2) \|\mu\|_3^3 + \left(\binom{m}{2} - (2m-3)\right) \|\mu\|_2^4\right) - \binom{m}{2}^2\|\mu\|_2^4 \\
        &=& \binom{m}{2}\left(\|\mu\|_2^2 + 2 (m-2) \|\mu\|_3^3 - (2m-3) \|\mu\|_2^4\right) \\
        &=& \binom{m}{2}\left(\|\mu\|_2^2 + 2 (m-2) (\|\mu\|_3^3 - \|\mu\|_2^4) - \|\mu\|_2^4\right) \\
        &\le& \binom{m}{2} \|\mu\|_2^2 + m^3 (\|\mu\|_3^3 - \|\mu\|_2^4)
    \end{eqnarray*}
\end{proof}

\paragraph{The BC-estimator}
\defproc{estimate-L2-BC}{Estimate-$L_2$-BC}

The BC-estimator is the same as the $O(1/\eps^2 \|\mu\|_2)$-algorithm presented in \cite{BC17} (up to constant factors). We keep drawing samples until reaching a pre-defined number $k = O(1/\eps^4)$ of collisions, and if we obtained the $k$th collision by the $M$th sample, then we use $k/\binom{M}{2}$ as the result. \algforprocshort{estimate-L2-BC}

\begin{proc-algo}{estimate-L2-BC}{\eta; \mu}
    \algimplicitamp
    \algoutput{$X \in (1 \pm \eps)\|\mu\|_2^2$ with probability $\ge 1 - \eta$}
    \algmoments{$\E[1/X^r] = O(1/\|\mu\|_2^{2r})$ for every $r \ge 0$}
    \algcomplexity{$O(\log \eta^{-1}/\|\mu\|_2)$}
    \begin{code}
        \algitem Initialize $M \gets 0$.
        \algitem Initialize $H \gets 0$.
        \algitem Let $k \gets \ceil{10^6 / \eps^4}$.
        \begin{While}{$H < k$}
            \algitem Set $M \gets M+1$.
            \algitem Draw $X_M \sim \mu$.
            \begin{For}{$i$ from $1$ to $M-1$}
                \begin{If}{$X_i = X_M$}
                    \algitem Set $H \gets H + 1$.
                \end{If}
            \end{For}
        \end{While}
        \algitem Return $k / \binom{M}{2}$.
    \end{code}
\end{proc-algo}

\begin{lemma}[Technical lemma, Based on \cite{BC17}]{technical:estimate-L2-BC-bound-pr-mlow}
    Let $0 < \eps \le 1/2$, $k \ge 100$ and $m = \min \{ m \in \mathbb N : (1+\eps)\binom{m}{2} \|\mu\|_2^2 \ge k \}$. In this setting, $\Pr[S_{m-1} \ge k] \le \frac{6}{\eps^2}\left(\frac{1}{k} + \frac{\|\mu\|_3^3 - \|\mu\|_2^4}{\sqrt{k} \|\mu\|_2^3}\right)$.
\end{lemma}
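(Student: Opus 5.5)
We prove this by Chebyshev's inequality applied to $S_{m-1}$, using Lemma \reflemma{base-algorithm-variance}. By minimality of $m$, we have $(1+\eps)\binom{m-1}{2}\|\mu\|_2^2 < k$, so $\E[S_{m-1}] = \binom{m-1}{2}\|\mu\|_2^2 < k/(1+\eps)$. Hence the event $S_{m-1} \ge k$ forces a deviation of at least
\[ k - \E[S_{m-1}] > k\left(1 - \frac{1}{1+\eps}\right) = \frac{\eps k}{1+\eps} \ge \frac{2}{3}\eps k, \]
where the last step uses $\eps \le 1/2$. Chebyshev then gives $\Pr[S_{m-1} \ge k] \le \frac{9}{4}\cdot\frac{\Var[S_{m-1}]}{\eps^2 k^2}$.

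Next we bound the variance. Lemma \reflemma{base-algorithm-variance} gives
\[ \Var[S_{m-1}] \le \binom{m-1}{2}\|\mu\|_2^2 + (m-1)^3(\|\mu\|_3^3-\|\mu\|_2^4). \]
The first term is below $k$, and it contributes at most $\frac{9}{4\eps^2 k} \le \frac{6}{\eps^2 k}$. For the second term we express $m-1$ through $k$. Since $\binom{m-1}{2} < k/\|\mu\|_2^2$, we get $(m-2)^2/2 < k/\|\mu\|_2^2$, so $m-2 < \sqrt{2k}/\|\mu\|_2$. Therefore
\[ m-1 < \frac{\sqrt{2k}}{\|\mu\|_2} + 1. \]

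The step needing care is the additive $+1$, which could spoil the cube if $\sqrt{k}/\|\mu\|_2$ were small. Here we use $\|\mu\|_2 \le 1$ and $k \ge 100$, which give $\sqrt{2k}/\|\mu\|_2 \ge \sqrt{200}$. So $m-1 \le (1+1/\sqrt{200})\sqrt{2k}/\|\mu\|_2$ and
\[ (m-1)^3 \le c\,\frac{k^{3/2}}{\|\mu\|_2^3}, \qquad c = (1+1/\sqrt{200})^3\, 2^{3/2} \approx 3.46. \]
(If $m-1 \le 1$, then $S_{m-1}=0<k$ and the claim is trivial.) Plugging this in, the second term contributes at most
\[ \frac{9}{4}\cdot c \cdot \frac{\|\mu\|_3^3-\|\mu\|_2^4}{\eps^2\sqrt{k}\,\|\mu\|_2^3} \approx 7.8\,\frac{\|\mu\|_3^3-\|\mu\|_2^4}{\eps^2\sqrt{k}\,\|\mu\|_2^3}, \]
which is slightly above the target constant $6$.

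To reach the stated constant $6$, we tighten two places. In the denominator we use the exact deviation $\eps k/(1+\eps)$, so the Chebyshev factor is $(1+\eps)^2 \le 9/4$ rather than the rounded $9/4$. In the numerator we use the exact variance expression $2(m-3)\binom{m-1}{2}(\|\mu\|_3^3-\|\mu\|_2^4)$ rather than $(m-1)^3$, together with $\binom{m-1}{2} < k/((1+\eps)\|\mu\|_2^2)$. This gives
\[ 2(m-3)\binom{m-1}{2} \le 2\cdot\frac{\sqrt{2k}}{\|\mu\|_2}\cdot\frac{k}{(1+\eps)\|\mu\|_2^2} = \frac{2\sqrt{2}\,k^{3/2}}{(1+\eps)\|\mu\|_2^3}. \]
Multiplying by $(1+\eps)^2/(\eps^2k^2)$ yields a coefficient of $2\sqrt{2}(1+\eps) \le 3\sqrt{2} < 6$. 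This bookkeeping is the only delicate part; the rest is the routine Chebyshev argument above.
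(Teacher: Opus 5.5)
Your proof is correct and is essentially the paper's argument: Chebyshev on $S_{m-1}$ with the variance bound of Lemma \reflemma{base-algorithm-variance}, using minimality of $m$ to get $(1+\eps)\binom{m-1}{2}\|\mu\|_2^2<k$. The only difference is bookkeeping. The paper measures the deviation as $\eps\binom{m-1}{2}\|\mu\|_2^2$ and so needs the lower bound $m\ge\sqrt{k}/\|\mu\|_2\ge 10$. You measure it against $\eps k/(1+\eps)$, so you instead need the upper bound $m-2<\sqrt{2k}/\|\mu\|_2$, together with the exact variance identity derived inside that lemma's proof.
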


\begin{lemma}[Technical lemma, Based on \cite{BC17}]{technical:estimate-L2-BC-bound-pr-mhigh}
    Let $0 < \eps \le 1/2$, $k \ge 8/\eps^2$ and $m = \max \{ m \in \mathbb N : (1-\eps)\binom{m}{2} \|\mu\|_2^2 \le k \}$. In this setting, $\Pr[S_m < k] \le \frac{64}{\eps^2}\left(\frac{1}{k} + \frac{\|\mu\|_3^3 - \|\mu\|_2^4}{\sqrt{k} \|\mu\|_2^3}\right)$.
\end{lemma}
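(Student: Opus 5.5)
The plan is to use Chebyshev's inequality on $S_m$ at the specific sample count $m$ from the statement. By the definition of $m$ as the maximal integer with $(1-\eps)\binom{m}{2}\|\mu\|_2^2 \le k$, the next integer satisfies $(1-\eps)\binom{m+1}{2}\|\mu\|_2^2 > k$. Hence $\binom{m}{2}\|\mu\|_2^2 \ge (k - m(1-\eps)\|\mu\|_2^2)/(1-\eps)$. Since $m\|\mu\|_2^2$ is of order $\sqrt{k}\,\|\mu\|_2$, which is at most about $\sqrt{k}$, and $k \ge 8/\eps^2$ gives $\sqrt{k} \le \eps k/\sqrt 8$, we get $\E[S_m] = \binom{m}{2}\|\mu\|_2^2 \ge (1+c\eps)k$ for a constant $c$ (say $c \ge 1/2$). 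I would verify this carefully, using $\|\mu\|_2 \le 1$ and the rough size $m \approx \sqrt{2k/((1-\eps)\|\mu\|_2^2)}$. The event $S_m < k$ then forces $S_m < \E[S_m] - (c\eps/(1+c\eps))\E[S_m]$, a deviation of at least $\eps\E[S_m]/4$.

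Chebyshev together with Lemma \reflemma{base-algorithm-variance} gives
\[\Pr[S_m<k] \le \frac{16}{\eps^2}\cdot\frac{\binom{m}{2}\|\mu\|_2^2 + m^3(\|\mu\|_3^3-\|\mu\|_2^4)}{\binom{m}{2}^2\|\mu\|_2^4}.\]
I would bound the two terms separately. The first is $1/\E[S_m] \le 1/k$, since $\E[S_m]\ge k$.

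For the second term I would use $\binom{m}{2} \ge m^2/4$ (valid for $m\ge 2$, which holds since $k$ is large) and $\binom{m}{2}\|\mu\|_2^2 \ge k$ to get
\[\frac{m^3}{\binom{m}{2}^2\|\mu\|_2^4} \le \frac{16}{m\|\mu\|_2^4}.\]
Then I would lower-bound $m\|\mu\|_2 \ge \sqrt{k}$, using $m^2\|\mu\|_2^2/2 \ge \binom{m}{2}\|\mu\|_2^2 \ge k$, which gives $m \ge \sqrt{2k}/\|\mu\|_2$. This yields a term of the form $O(1)\cdot(\|\mu\|_3^3-\|\mu\|_2^4)/(\sqrt{k}\,\|\mu\|_2^3)$. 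Multiplying by $16/\eps^2$, the total is at most $\frac{64}{\eps^2}\bigl(\frac1k + \frac{\|\mu\|_3^3-\|\mu\|_2^4}{\sqrt k\|\mu\|_2^3}\bigr)$ once the constants are tracked; the exact constant comes from balancing the deviation constant against $\binom{m}{2} \ge m^2/4$.

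The main obstacle is the first step: showing that the integrality gap in the definition of $m$ costs at most a constant fraction of $\eps k$. This is where the hypothesis $k \ge 8/\eps^2$ is used. The difference $\binom{m+1}{2}-\binom{m}{2} = m$ must satisfy $m\|\mu\|_2^2 \lesssim \sqrt{2k}\,\|\mu\|_2 \le \sqrt{2k} \le \eps k/2$. Everything else is routine Chebyshev bookkeeping mirroring Lemma \reflemma{technical:estimate-L2-BC-bound-pr-mlow}.
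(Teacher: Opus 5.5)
Your strategy is the paper's in substance: Chebyshev on $S_m$ with Lemma~\reflemma{base-algorithm-variance}, where $k\ge 8/\eps^2$ makes the integrality gap between $\binom{m}{2}$ and $\binom{m+1}{2}$ lower order. The paper treats that gap multiplicatively. It writes $k<(1-\eps)\frac{m+1}{m-1}\binom{m}{2}\|\mu\|_2^2$, so that $\E[S_m]-k>\frac{\eps(m+1)-2}{m-1}\E[S_m]\ge\frac{\eps}{2}\E[S_m]$ once $m\ge\sqrt{2k}-1\ge 4/\eps-1$. Your additive version is the same estimate, since $m\|\mu\|_2^2/\E[S_m]=2/(m-1)$. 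It does go through: done carefully, it gives $\E[S_m]>(1+\eps/2)k$.

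The real problem is the constant. The statement fixes it at $64$, and Lemma~\reflemma{estimate-L2-BC} relies on that value to get its $1/15$ bound. Your listed ingredients are a deviation of $\eps\E[S_m]/4$ (a factor $16/\eps^2$), $\binom{m}{2}\ge m^2/4$ (so $m^3/\binom{m}{2}^2\le 16/m$), and $m\|\mu\|_2\ge\sqrt{2k}$. With these, the variance term comes out to $\frac{256}{\sqrt2\,\eps^2}\approx\frac{181}{\eps^2}$ times $\frac{\|\mu\|_3^3-\|\mu\|_2^4}{\sqrt k\,\|\mu\|_2^3}$. No ``balancing'' of these choices reaches $64$. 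Even the sharp relative deviation $\frac{2\eps}{5}$ implied by $\E[S_m]>(1+\eps/2)k$ still gives $\frac{25}{4}\cdot\frac{16}{\sqrt2}\approx 70.7$ if you keep $\binom{m}{2}\ge m^2/4$.

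The fix is cheap but must be made explicitly; there are two options.
\begin{itemize}
\item Get the relative deviation $\frac{\eps}{2}$ as the paper does. Then $m^3/\binom{m}{2}^2\le 16/m$ and $m\ge\sqrt k/\|\mu\|_2$ give exactly $4\cdot 16=64$.
\item Keep your deviation and use the exact identity $m^3/\binom{m}{2}^2=\frac{4m}{(m-1)^2}\le\frac{4}{m}\cdot\frac{64}{49}$. This is valid because $m\ge\sqrt{2k}\ge 8$, and it yields $\frac{16\cdot 4\cdot 64}{49\sqrt2}\approx 59<64$.
\end{itemize}
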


\begin{lemma}{estimate-L2-BC}
    Let $X$ be the output of $\proc*{estimate-L2-BC}(\eta; \mu, \eps)$. For every input distribution $\mu$ and $0 < \eps \le 1/2$:
    \begin{itemize}
        \item $\Pr[X \in (1 \pm \eps) \|\mu\|_2^2] \ge 2/3$.
        \item All moments $-\infty < r \le 0$ are preserved.
        \item The expected complexity is $O(\log (1/\eta) / \eps^2 \|\mu\|_2)$.
    \end{itemize}
\end{lemma}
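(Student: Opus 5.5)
We prove the three claims for the base (unamplified) BC-estimator, with $k = \ceil{10^6/\eps^4}$, and then transfer them to the median wrapper implied by the \emph{Amplification} contract. Let $M$ be the (random) index of the sample at which the $k$th collision appears, so $X = k/\binom{M}{2}$ and $\{M \le m\} = \{S_m \ge k\}$. Let $m_{\mathrm{lo}}$ be the smallest $m$ with $(1+\eps)\binom{m}{2}\|\mu\|_2^2 \ge k$ and $m_{\mathrm{hi}}$ the largest $m$ with $(1-\eps)\binom{m}{2}\|\mu\|_2^2 \le k$. If $m_{\mathrm{lo}} \le M \le m_{\mathrm{hi}}$, then $X \in [(1-\eps)\|\mu\|_2^2, (1+\eps)\|\mu\|_2^2]$. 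Hence the failure probability is at most $\Pr[S_{m_{\mathrm{lo}}-1} \ge k] + \Pr[S_{m_{\mathrm{hi}}} < k]$, which Lemmas \reflemma{technical:estimate-L2-BC-bound-pr-mlow} and \reflemma{technical:estimate-L2-BC-bound-pr-mhigh} bound by $\frac{70}{\eps^2}\bigl(\frac1k + \frac{\|\mu\|_3^3-\|\mu\|_2^4}{\sqrt k\|\mu\|_2^3}\bigr)$. Since $\|\mu\|_3^3 \le \|\mu\|_2^3$ (because $\|\mu\|_3 \le \|\mu\|_2$), the bracket is at most $1/k + 1/\sqrt k \le 2\eps^2/10^3$. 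This gives failure probability at most $0.14 < 1/3$, which proves the first item.

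For the negative moments we need $\E[X^{-r}] = \E[(\binom{M}{2}/k)^r] = O(\|\mu\|_2^{-2r})$, i.e., a tail bound on $M$ beyond $m_{\mathrm{hi}}$. We apply Lemma \reflemma{exponential-tail} to $M$ with threshold $a = 2m_{\mathrm{hi}}$. Split the sample sequence into $\lambda$ consecutive blocks of length $a$. If $M > \lambda a$, then every block contains fewer than $k$ internal collisions. The blocks are independent and each has this property with probability $\Pr[S_a < k] \le \Pr[S_{m_{\mathrm{hi}}} < k] =: p < 1/3$. Hence $\Pr[M \ge \lambda a + 1] \le p^\lambda$. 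Adjusting $a$ by one to absorb the off-by-one gives the hypothesis of Lemma \reflemma{exponential-tail}, and so $\E[M^{2r}] = O_r(m_{\mathrm{hi}}^{2r})$ for every integer $r$; non-integer $r$ then follow by Jensen. Since $\binom{m_{\mathrm{hi}}}{2} \le k/((1-\eps)\|\mu\|_2^2)$, we get $\E[(\binom{M}{2}/k)^r] = O(\|\mu\|_2^{-2r})$. Taking $r=1/2$ (or $r=1$) also gives $\E[M] = O(m_{\mathrm{hi}}) = O(\sqrt{k}/\|\mu\|_2) = O(1/\eps^2\|\mu\|_2)$, which is the base complexity. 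One edge case needs care: when $\|\mu\|_2$ is so large that $m_{\mathrm{hi}}$ is tiny, $\Pr[S_{m_{\mathrm{hi}}} < k]$ must still be below $1$. This holds because the technical lemma's bound stays below $1/3$ uniformly.

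Finally, the amplified procedure returns the median of $q = O(\log \eta^{-1})$ independent copies. Lemma \reflemma{amplify-1/3-to-eta} then gives correctness with probability $1-\eta$, and the expected complexity is $q \cdot O(1/\eps^2\|\mu\|_2)$. For moments, $X^{-r}$ is a monotone function of $X$, so the median of the $X$'s raised to the power $-r$ is the appropriate-rank order statistic of the $X^{-r}$'s, namely the upper median rather than the lower one. We will verify that the proof of Lemma \reflemma{median-expected-value} works equally for either median rank, since it only uses that the rank is about $k/2$. This yields $\E[Y^{-r}] \le C\,\E[X^{-r}] = O(\|\mu\|_2^{-2r})$.

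The main obstacle is the block-independence tail argument for $M$: collisions across block boundaries must be discarded, which is only favorable to us, and the hypothesis of Lemma \reflemma{exponential-tail} must be matched precisely.
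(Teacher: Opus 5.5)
Your proposal is correct and follows essentially the same route as the paper. Correctness comes from Lemmas \reflemma{technical:estimate-L2-BC-bound-pr-mlow} and \reflemma{technical:estimate-L2-BC-bound-pr-mhigh} together with $\|\mu\|_3^3-\|\mu\|_2^4\le\|\mu\|_2^3$; a block-independence geometric tail fed into Lemma \reflemma{exponential-tail} handles the negative moments and $\E[M]$; and Lemmas \reflemma{amplify-1/3-to-eta} and \reflemma{median-expected-value} handle the median wrapper. Your write-up is somewhat more careful than the paper's: you apply the tail lemma to $M$ rather than loosely to $1/X^r$, you keep the $\sqrt{k}=O(1/\eps^2)$ factor in $\E[M]$, and you note that $Y^{-r}$ is the mirrored-rank order statistic, which the proof of Lemma \reflemma{median-expected-value} indeed tolerates.
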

\begin{proof}
    The following analysis holds for $\eta=1/3$. For $0 < \eta < 1/3$, see Lemma \reflemma{amplify-1/3-to-eta} (amplification) and Lemma \reflemma{median-expected-value} (applied to the variable $X' = X^r$ for every moment $r \le 0$).
    
    Recall that $k = \ceil{10^6/\eps^4} \ge \max\{100, 8/\eps^2\}$, and let $m_\mathrm{low} = \min\{ m : (1 + \eps)\binom{m}{2}\|\mu\|_2^2 \ge k \}$ and $m_\mathrm{high} = \max\{ m : (1 - \eps)\binom{m}{2}\|\mu\|_2^2 \le k \}$.

    By Lemma \reflemma{technical:estimate-L2-BC-bound-pr-mlow},
    \[  \Pr\left[M < m_\mathrm{low}\right]
        \le \frac{6}{\eps^2}\left(\frac{1}{10^6 / \eps^4} + \frac{1}{10^3 / \eps^2}\right)
        \le \frac{1}{15}
    \]
    
    By Lemma \reflemma{technical:estimate-L2-BC-bound-pr-mhigh},
    \[  \Pr\left[M > m_\mathrm{high}\right]
        \le \frac{64}{\eps^2}\left(\frac{1}{10^6 / \eps^4} + \frac{1}{10^3 / \eps^2}\right)
        \le \frac{1}{15}
    \]

    Combined, with probability at least $1-2/15 \ge 1-1/5$, $m_\mathrm{low} \le M \le m_\mathrm{high}$, and therefore, $1/\binom{M}{2} \in (1 \pm \eps) \|\mu\|_2^2$.

    For any integer $\lambda \ge 1$, if we draw $m' = \lambda m_\mathrm{high}$ samples, then $\Pr[S_{m'} < k] \le (\Pr[S_{m_\mathrm{high}} < k])^\lambda$. In other words, $\Pr[k/X \ge \lambda m_\mathrm{high}] \le (\Pr[k/X \ge m_\mathrm{high}])^\lambda$. For every $r > 0$, by applying Lemma \reflemma{exponential-tail} to the variable $1/X^r$ we obtain: 
    \[  \E[k^r/X^r]
        = O\left(\binom{m_\mathrm{high}}{2}^r\right) = O(k^r/\|\mu\|_2^{2r}) \]
    Therefore, $\E[1/X^r] = O(1/(\|\mu\|_2^2)^r)$.

    For the expected complexity: $\E[M] = \E[\sqrt{1/X}] \le \sqrt{\E[1/X]} = O(1/\|\mu|\|_2)$.
\end{proof}

\paragraph{The $L_2$ base-estimator}
\defproc{estimate-L2-base}{Estimate-$L_2$-Base}

The BC-estimator has useful negative moments, but it is not unbiased. The base estimator is an unbiased potentially-optimal estimator. By ``\emph{potentially}-optimal'' we mean that the base estimator can use an external advice.

First, we use the BC-estimator to obtain a fixed-factor estimation of $\|\mu\|_2^2$, at the cost of $O(1/\|\mu\|_2)$ samples. Then, we bound the variance of an unbiased estimator for $\|\mu\|_2^2$ that uses $m$ samples (Lemma \reflemma{base-algorithm-variance}). We can do it either by referring to a given advice $s$ or by using the norm inequality $\|\mu\|_3^3 - \|\mu\|_2^4 \le \|\mu\|_2^3$. We use this bound to choose an appropriate number of samples to draw for an unbiased estimation. \algforprocshort{estimate-L2-base}

\begin{proc-algo}{estimate-L2-base}{\eta; \mu, \eps, s}
    \alginput{$s \ge \|\mu\|_3^3 / \|\mu\|_2^4 - 1$ or $s = \bot$}
    \algoutput{$X \in (1 \pm \eps)\|\mu\|_2^2$ with probability $\ge 1 - \eta$}
    \algunbiasedness{$\E[X] = \|\mu\|_2^2$}
    \algcomplexity{$O(1/\eta \eps^2 \|\mu\|_2)$ if $s = \bot$}
    \algcomplexity{$O(1/\sqrt{\eta} \eps \|\mu\|_2 + s/\eta \eps^2)$ if $r \ne \bot$}
    \begin{code}
        \begin{If}{$\eps > 1/10$}
            \algitem Set $\eps \gets 1/10$.
        \end{If}
        \algitem Let $\ell \gets \proc{estimate-L2-BC}(\eta/6; \mu)$.
        \begin{If}{$s = \bot$}
            \algitem Let $s' \gets \sqrt{2 / \ell}$.
        \end{If}
        \begin{Else}
            \algitem Let $s' \gets s$.
        \end{Else}
        \algitem Let $m \gets \ceil{\frac{1}{\sqrt{\eta}} \max\{10^3 / \sqrt{\eta} \eps \sqrt{\ell}, 10^{6} s' / \eta \eps^2\}}$.
        \algitem Draw $m$ independent samples from $\mu$.
        \algitem Let $S_m$ be the number of collisions within these samples.
        \algitem Return $S_m / \binom{m}{2}$.
    \end{code}
\end{proc-algo}

\begin{lemma}[Technical lemma]{technical:estimate-L2-base-chebyshev}
    Let $0 < \eps \le 1/10$ and $\mu$ be a discrete distribution. For $m \ge 2$, recall that $S_m$ is the number of collisions within $m$ independent samples drawn from $\mu$. In this setting, $\Pr[S_m / \binom{m}{2} \notin (1 \pm \eps)\|\mu\|_2^2] \le \frac{5}{\eps^2 m^2 \|\mu\|_2^2} + \frac{20t_\mu}{\eps^2 m}$.
\end{lemma}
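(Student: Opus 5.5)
The plan is a direct application of Chebyshev's inequality to $S_m$, using Lemma \reflemma{base-algorithm-variance} for both the mean and the variance, and then rewriting the two variance terms in terms of $m$, $\|\mu\|_2$ and $t_\mu$.

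First, write $X = S_m/\binom{m}{2}$. By Lemma \reflemma{base-algorithm-variance}, $\E[X] = \|\mu\|_2^2$. The variance bound from the same lemma, divided by $\binom{m}{2}^2$, gives
\[ \Var[X] \le \frac{\|\mu\|_2^2}{\binom{m}{2}} + \frac{m^3(\|\mu\|_3^3 - \|\mu\|_2^4)}{\binom{m}{2}^2}. \]
Chebyshev's inequality then yields
\[ \Pr\left[X \notin (1 \pm \eps)\|\mu\|_2^2\right] \le \frac{\Var[X]}{\eps^2 \|\mu\|_2^4} \le \frac{1}{\eps^2 \binom{m}{2} \|\mu\|_2^2} + \frac{m^3 t_\mu}{\eps^2 \binom{m}{2}^2}. \]
In the second term I use the identity $\|\mu\|_3^3 - \|\mu\|_2^4 = t_\mu \|\mu\|_2^4$, which is just the definition of $t_\mu$.

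Next, for $m \ge 2$ I use $\binom{m}{2} = m(m-1)/2 \ge m^2/4$, since $m-1 \ge m/2$. The first term is then at most $4/(\eps^2 m^2\|\mu\|_2^2)$, which is within the claimed $5/(\eps^2 m^2 \|\mu\|_2^2)$. The second term is at most $m^3 t_\mu \cdot 16/(\eps^2 m^4) = 16 t_\mu/(\eps^2 m)$, which is within the claimed $20 t_\mu/(\eps^2 m)$. Summing the two gives the lemma.

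There is no real obstacle in this argument. The only care needed is to check that the crude bound $\binom{m}{2} \ge m^2/4$ holds down to $m=2$, where it is tight, so that the constants $5$ and $20$ are met. Since $t_\mu \ge 0$ by Cauchy--Schwarz, both terms are nonnegative, and no sign issue arises when bounding them separately. The hypothesis $\eps \le 1/10$ is not needed for this computation; it matters only where the lemma is applied.
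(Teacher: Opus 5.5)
Your proposal is correct and matches the paper's proof: Chebyshev's inequality with the mean and variance bound from the collision-variance lemma, followed by $\binom{m}{2} \ge m^2/4$ for $m \ge 2$. Your explicit constants $4$ and $16$ sit within the stated $5$ and $20$. You are also right that the hypothesis $\eps \le 1/10$ plays no role in this computation.
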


\begin{lemma}{estimate-L2-base}
    Let $X$ be the output of \proc*{estimate-L2-base}. For every input distribution $\mu$ and a parameter $0 < \eps < 1/2$ (which is allowed to be greater than $1$):
    \begin{itemize}
        \item If $s = \bot$ or $s \ge \|\mu\|_3^3 / \|\mu\|_2^4 - 1$, then with probability at least $1-\eta$, $X \in (1 \pm \eps) \|\mu\|_2^2$.
        \item $\E[X] = \|\mu\|_2^2$.
        \item The expected sample complexity is $O(1/\eta \eps^2 \|\mu\|_2)$ if $s = \bot$ and $O(1 / \sqrt{\eta} \eps \|\mu\|_2 + s/ \eta \eps^2)$ if $s \ne \bot$.
    \end{itemize}
\end{lemma}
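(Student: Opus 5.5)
We prove the three items of Lemma \reflemma{estimate-L2-base} separately. Throughout, condition on the value $\ell$ returned by \proc{estimate-L2-BC}; given $\ell$ (and $s$), the number of samples $m$ is a deterministic quantity, and the second phase draws fresh samples independent of $\ell$.

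\textbf{Unbiasedness.} Since $m\ge 2$ always, Lemma \reflemma{base-algorithm-variance} gives $\E[S_m/\binom{m}{2} \mid m] = \|\mu\|_2^2$ for every possible value of $m$. Averaging over the distribution of $m$ yields $\E[X] = \|\mu\|_2^2$. This part needs nothing else.

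\textbf{Correctness.} By Lemma \reflemma{estimate-L2-BC} with parameter $\eta/6$, with probability at least $1-\eta/6$ we have $\ell \in (1\pm 1/2)\|\mu\|_2^2$, so $1/\sqrt{\ell} \ge \sqrt{2/3}/\|\mu\|_2$. We condition on this event and apply the Chebyshev bound of Lemma \reflemma{technical:estimate-L2-base-chebyshev}; the clamp $\eps\le 1/10$ makes it applicable. The choice of $m$ gives two bounds. First, $m \ge 10^3/(\eta\eps\sqrt{\ell})$, so $5/(\eps^2m^2\|\mu\|_2^2) \le O(\eta^2\cdot 10^{-6}) \le \eta/3$. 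Second, consider the $t_\mu$ term:
\begin{itemize}
\item If $s\ne\bot$, then $s'=s\ge t_\mu$ and $m\ge 10^6 s/(\eta^{3/2}\eps^2)$, so $20t_\mu/(\eps^2 m)\le \eta/3$.
\item If $s=\bot$, we use $t_\mu+1 = \|\mu\|_3^3/\|\mu\|_2^4 \le 1/\|\mu\|_2$, which holds because $\|\mu\|_3^3\le \|\mu\|_2^3$. This gives $t_\mu \le 1/\|\mu\|_2 \le \sqrt{2/\ell}=s'$ on the good event, and the same bound follows.
\end{itemize}
Summing the failure probabilities gives at most $\eta/6+\eta/3+\eta/3<\eta$.

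\textbf{Complexity.} The BC call costs $O(\log(1/\eta)/\|\mu\|_2)$. Here $\eps$ is the constant $1/2$ used inside it, which we take as the default, since the BC step only needs a fixed-factor estimate. For the second phase,
\[ \E[m] = O\Bigl(1 + \tfrac{1}{\eta\eps}\E[\ell^{-1/2}] + \tfrac{\E[s']}{\eta^{3/2}\eps^2}\Bigr). \]
The key input is the preserved negative moments of Lemma \reflemma{estimate-L2-BC}: $\E[\ell^{-1/2}] \le \sqrt{\E[1/\ell]} = O(1/\|\mu\|_2)$, which also controls $\E[s']$ when $s=\bot$. For $s=\bot$ this yields $O(1/(\eta^{3/2}\eps^2\|\mu\|_2))$. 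The stated $O(1/(\eta\eps^2\|\mu\|_2))$ requires matching the $\eta$-powers to the exact constants in the formula for $m$. For $s\neq\bot$ we get $O(1/(\eta\eps\|\mu\|_2)+s/(\eta^{3/2}\eps^2))$.

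The main obstacle is reconciling these $\eta$-exponents with the claimed bounds. The pseudocode's $m$ carries an extra factor $1/\sqrt{\eta}$ relative to the lemma statement, so the bookkeeping must be redone carefully. If needed, the sample count can be reparametrized so that each term uses only the power of $\eta$ its Chebyshev term requires: $1/\sqrt\eta$ for the $m^{-2}$ term and $1/\eta$ for the $m^{-1}$ term. This recovers $O(1/(\sqrt\eta\,\eps\|\mu\|_2)+s/(\eta\eps^2))$ without affecting unbiasedness or correctness. In all cases the $\eps>1/10$ clamp changes costs by only a constant factor.
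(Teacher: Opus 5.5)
Your proof is correct and is essentially the paper's argument. You condition on $\ell$ (hence on $m$) to get unbiasedness for every fixed $m$, and you apply the Chebyshev technical lemma with $s'\ge t_\mu$, using $t_\mu\le 1/\|\mu\|_2\le\sqrt{2/\ell}$ on the good event when $s=\bot$. You then bound $\E[m]$ through Jensen's inequality and the preserved negative moments of the BC-estimator.

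The extra $1/\sqrt{\eta}$ factor you flag is a genuine inconsistency in the pseudocode. The paper's own proof resolves it exactly as you propose: its correctness and complexity computations silently use $m\approx\max\{10^3/(\sqrt{\eta}\,\eps\sqrt{\ell}),\,10^6 s'/(\eta\eps^2)\}$, i.e., your reparametrized sample count without the outer $1/\sqrt{\eta}$.
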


\begin{proof}
    By Lemma \reflemma{estimate-L2-BC} (plus the explicit error parameter), with probability at least $1-\eta/6$, $\ell \in (1 \pm 1/2)\|\mu\|_2^2$. Additionally, $\E[1 / \ell] = O(\|\mu\|_2^2)$.

    Let $t_\mu = \frac{\|\mu\|_3^3 - \|\mu\|_2^4}{\|\mu\|_2^4}$. If the input advice $s$ is empty ($\bot$ symbol), then we use:
    \[  s'
        = \sqrt{2 / \ell}
        \ge \sqrt{2 / ((3/2)\|\mu\|_2^2)}
        \ge \frac{1}{\|\mu\|_2}
        \ge \frac{\|\mu\|_3^3 - \|\mu\|_2^4}{\|\mu\|_2^4}
        = t_\mu
    \]
    Additionally, due to Jensen's inequality, $\E[s'] = \E[\sqrt{2/\ell}] \le \sqrt{\E[2/\ell]} = O(1 / \|\mu\|_2)$.
    
    If the input advice $s$ is not empty, then we are allowed to assume that it is not smaller than $t_\mu$.

    Since $\sqrt{\ell} \le \sqrt{(3/2)\|\mu\|^2_2} \le (5/4)\|\mu\|_2$, we can use Lemma \reflemma{technical:estimate-L2-base-chebyshev} to obtain:
    \begin{eqnarray*}
        \Pr\left[X \notin (1 \pm \eps) \|\mu\|_2^2 \cond m \right]
        &\le& \frac{5}{\eps^2 m^2 \|\mu\|_2^2} + \frac{20t_\mu}{\eps^2 m} \\
        &\le& \frac{5}{\eps^2 (10^3 / \sqrt{\eta} \eps \sqrt{\ell})^2 \|\mu\|_2^2} + \frac{20t_\mu}{\eps^2 (10^6 s' / \eta \eps^2)} \\
        &\le& \frac{5}{\eps^2 (6.4 \cdot 10^5 / \eta \eps^2 \|\mu\|_2^2) \|\mu\|_2^2} + \frac{20\eta t_\mu}{10^6 t_\mu}
        \left(\frac{1}{10^5} + \frac{1}{5 \cdot 10^4}\right)\eta
        < \frac{1}{6}\eta
    \end{eqnarray*}

    That is, with probability at least $1-\eta/6-\eta/6 \ge 1-\eta$, $\ell$ is in the correct range and $X \in (1 \pm \eps)\|\mu\|_2^2$. Observe that $\E\left[S_m/\binom{m}{2}\right] = \|\mu\|_2^2$, since for every choice of $m$, $\E[S_m] = \|\mu\|_2^2 \binom{m}{2}$.
    
    For complexity: if $s \ne \bot$, then the expected complexity is $O(\log \eta^{-1}/\|\mu\|_2)$ (from \proc{estimate-L2-BC}, Lemma \reflemma{estimate-L2-BC}), plus:
    \begin{eqnarray*}
        \E[m]
        = O\left(\frac{\E[1/\sqrt{\ell}]}{\sqrt{\eta}\eps} + \frac{s}{\eta \eps^2}\right)
        = O\left(\frac{\sqrt{\E[1/\ell]}}{\sqrt{\eta} \eps} + \frac{s}{\eta \eps^2}\right)
        = O\left(\frac{1}{\sqrt{\eta} \eps\|\mu\|_2} + \frac{s}{\eta \eps^2}\right)
    \end{eqnarray*}

    If $s=\bot$, then the expected complexity is $O(\log \eta^{-1}/\|\mu\|_2)$ plus:
    \begin{eqnarray*}
        \E[m]
        = O\left(\frac{\E[1/\sqrt{\ell}]}{\sqrt{\eta} \eps} + \frac{\E[s']}{\eta \eps^2}\right)
        = O\left(\frac{\sqrt{\E[1/\ell]}}{\sqrt{\eta} \eps} + \frac{1/\|\mu\|_2}{\eta \eps^2}\right)
        = O\left(\frac{1}{\eta \eps^2 \|\mu\|_2}\right)
    \end{eqnarray*}
\end{proof}

\paragraph{The moment-preserving generic estimator}
\defproc{estimate-L2-moments}{Estimate-$L_2$-moments}

The BC-estimator preserves the negative moments, whereas the base estimator is unbiased (and in particular, preserving the first moment). For some complex algorithms, we have to generate a single random variable that preserves both negative moments and the first moment.

The moment-preserving estimation algorithm repeatedly runs two estimation procedures independently until the obtained estimations are multiplicatively close. Although our implementation is specific to $L_2$-estimation, it can trivially be generalized for every pair of algorithms that preserve different moments. \algforprocshort{estimate-L2-moments}

\begin{proc-algo}{estimate-L2-moments}{\eta; \mu, \eps}
    \algimplicitamp
    \algoutput{$X \in (1 \pm \eps) \|\mu\|_2^2$ with probability $\ge 1-\eta$}
    \algmoments{$\E[X^r] = O(\|\mu\|_2^{2r})$ for $0 \le r \le 1$}
    \algcomplexity{$O(\log(1/\eta)/\eps\|\mu\|_2)$}
    \begin{code}
        \begin{If}{$\eps > 1/5$}
            \algitem Set $\eps \gets 1/5$.
        \end{If}
        \begin{While}{True}
            \algitem Let $X^+ \gets \proc{estimate-L2-base}(1/6; \mu, \eps, \bot)$.
            \algitem Let $X^- \gets \proc{estimate-L2-BC}(1/6; \mu, \eps)$.
            \begin{If}{$X^-/2 \le X^+ \le 2X^-$}
                \algitem Return $X^+$.
            \end{If}
        \end{While}
    \end{code}
\end{proc-algo}

\begin{lemma}{estimate-L2-moments}
    Let $X$ be the output of $\proc*{estimate-L2-moments}(\eta; \mu, \eps)$. For every input distribution $\mu$, a parameter $0 < \eps \le 1/5$ and an error parameter $0 < \eta \le 1/3$:
    \begin{itemize}
        \item With probability at least $1 - \eta$, $X \in (1 \pm \eps)\|\mu\|_2^2$.
        \item All moments $-\infty < r \le 1$ are preserved.
        \item The sample complexity is $O(\log(1/\eta) / \eps^2 \|\mu\|_2)$.
    \end{itemize}
\end{lemma}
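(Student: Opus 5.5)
The procedure is a loop of independent rounds; in each round we run the two subroutines and accept if $X^-/2 \le X^+ \le 2X^-$. Let $X^+$ and $X^-$ denote a single round's values and let $G$ be the acceptance event. Then $X$ is distributed as $X^+$ conditioned on $G$. The plan is to lower-bound $p=\Pr[G]$ by a constant and then transfer each required property from the unconditioned variables to the conditioned one, losing a factor of at most $1/p$. Throughout we take $\eta=1/3$ as the base case and rely on the implicit amplification (Lemmas \reflemma{amplify-1/3-to-eta} and \reflemma{median-expected-value}, the latter applied to $X^r$ for each $r$) to reach general $\eta$.

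\emph{Step 1: acceptance probability.} Since $\eps\le 1/5$, Lemma \reflemma{estimate-L2-base} with $\eta=1/6$ and $s=\bot$ gives $X^+\in(1\pm\eps)\|\mu\|_2^2$ with probability at least $5/6$. Lemma \reflemma{estimate-L2-BC}, amplified to error $1/6$, gives the same for $X^-$. When both hold, the ratio $X^+/X^-$ lies in $[(1-\eps)/(1+\eps),(1+\eps)/(1-\eps)]\subseteq[2/3,3/2]$, so the round accepts. By a union bound, $p\ge 2/3$.

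\emph{Step 2: correctness.} We have $\Pr[X\notin(1\pm\eps)\|\mu\|_2^2]=\Pr[X^+\ \text{bad}\wedge G]/p\le (1/6)/(2/3)=1/4<1/3$. This is enough at base $\eta=1/3$, and amplification handles smaller $\eta$.

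\emph{Step 3: moments.} For $0\le r\le1$, Jensen's inequality and unbiasedness give $\E[(X^+)^r]\le\|\mu\|_2^{2r}$, hence $\E[X^r]\le \E[(X^+)^r]/p=O(\|\mu\|_2^{2r})$. For $r<0$ this route fails, because $X^+$ itself may have no controlled negative moments (it can be $0$). This is the main obstacle, and it is exactly where the acceptance condition does the work: on $G$ we have $X^+\ge X^-/2$, so $(X^+)^{r}\le 2^{|r|}(X^-)^{r}$. Therefore $\E[X^r]\le 2^{|r|}\E[(X^-)^r]/p=O(\|\mu\|_2^{2r})$ by Lemma \reflemma{estimate-L2-BC}, which uses the independence of the two runs only through $p$.

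\emph{Step 4: complexity.} The number of rounds is geometric with success probability at least $2/3$, so its expectation is at most $3/2$. Each round costs $O(1/\eps^2\|\mu\|_2)$ in expectation, by Lemma \reflemma{estimate-L2-base} with $s=\bot$ and Lemma \reflemma{estimate-L2-BC}. Rounds are independent of the stopping decisions of earlier rounds, so Wald's identity gives a total of $O(1/\eps^2\|\mu\|_2)$. After amplification this becomes $O(\log(1/\eta)/\eps^2\|\mu\|_2)$.

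The only delicate points are the negative-moment transfer in Step 3 and the application of the median amplification to $X^r$ for $r<0$. Lemma \reflemma{median-expected-value} applies to the nonnegative variable $X^r$, and the median of the $X_j^r$ equals the $r$-th power of a median of the $X_j$ up to the choice of lower or upper median, which only costs a constant.
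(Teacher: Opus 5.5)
Your proposal is correct and follows the paper's proof essentially step for step. Both arguments use the same union-bound lower bound of $2/3$ on the acceptance probability (via $\eps \le 1/5$), the same conditioning argument losing a factor $1/\Pr[G]$ for moments $0 \le r \le 1$, the same use of $X^+ \ge X^-/2$ on acceptance to transfer negative moments from the BC-estimator, the same $O(1)$ expected number of rounds, and amplification through Lemmas \reflemma{amplify-1/3-to-eta} and \reflemma{median-expected-value}; your extra details (Jensen for fractional $r$, Wald's identity, the lower-versus-upper median remark for $r<0$) only make explicit what the paper leaves implicit.
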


\begin{proof}
    The following analysis holds for $\eta=1/3$. For $0 < \eta < 1/3$, see Lemma \reflemma{amplify-1/3-to-eta} (amplification) and Lemma \reflemma{median-expected-value} (applied to the variable $X' = X^r$ for every moment $r \le 1$).

    The complexity of a single round is $O(1/\eps^2 \|\mu\|_2)$ (Lemma \reflemma{estimate-L2-base}, Lemma \reflemma{estimate-L2-BC}).

    The probability to return a number in the range $(1 \pm \eps)\|\mu\|_2^2$ in the first round is at least:
    \[\Pr\left[X^- \in (1 \pm \eps)\|\mu\|_2^2 \wedge X^+ \in (1 \pm \eps) \|\mu\|_2^2 \wedge X^-/2 \le X^+ \le 2 X^- \right]\]

    Since $\eps \le 1/5$, the first two conditions imply the third one:
    \[ [\cdots] = \Pr\left[X^- \in (1 \pm \eps)\|\mu\|_2^2 \wedge X^+ \in (1 \pm \eps) \|\mu\|_2^2 \right] \ge 1-2/6 = 2/3 \]

    This is also a lower bound for the probability to terminate in a single round, and therefore, the expected number of rounds is $O(1)$.
    
    For positive moments, since both $X^+$ and $X^-$ are non-negative with probability $1$,
    \begin{eqnarray*}
        \E[X^r]
        &=& \E\left[(X^+)^r \cond X^-/2 \le X^+ \le 2X^-\right] \\
        &\le& \frac{\E\left[(X^+)^r\right]}{\Pr\left[X^-/2 \le X^+ \le 2X^-\right]} \\
        &\le& \frac{1}{1-\eta}\E\left[(X^+)^r\right]
        \le 2\E\left[(X^+)^r\right]
    \end{eqnarray*}
    In particular, for $r=1$, $\E[X] \le 2\E[X^+] = O(\|\mu\|_2^2)$ (Lemma \reflemma{estimate-L2-base}).

    For negative moments, since both $X^+$ and $X^-$ are non-negative with probability $1$,
    \begin{eqnarray*}
        \E[1/X^r]
        &=& \E\left[1/(X^+)^r \cond X^-/2 \le X^+ \le 2X^-\right] \\
        &\le& \E\left[(2/X^-)^r \cond X^-/2 \le X^+ \le 2X^-\right] \\
        &\le& 2^r \frac{\E\left[1/(X^-)^r\right]}{\Pr\left[X^-/2 \le X^+ \le 2X^-\right]} \\
        &\le& \frac{2^r}{1-\eta}\E\left[1/(X^-)^r\right]
        \le 2^{r+1} \E\left[1/(X^-)^r\right]
    \end{eqnarray*}
    That is, for every hard-coded $r > 0$, $\E[1/X^r] = O((1/\|\mu\|_2)^{2r})$ (Lemma \reflemma{estimate-L2-BC}).
\end{proof}

\subsection{Deferred proofs of technical lemmas}
The next section begins at Page \pageref{sec:L3-estimators}.

\repprovelemma{technical:estimate-L2-BC-bound-pr-mlow}
\begin{proof}
    Observe that for $0 < \eps \le 1$, $m \ge \sqrt{k} / \|\mu\|_2 \ge 10$. By Chebyshev's inequality,
    \begin{eqnarray*}
        \Pr\left[S_{m-1} \ge k\right]
        &\le& \Pr\left[S_{m_-1} - \E[S_{m-1}] \ge k - \binom{m-1}{2}\|\mu\|_2^2\right] \\
        &\le& \Pr\left[\abs{S_{m-1} - \E[S_{m-1}]} \ge \eps \binom{m-1}{2}\|\mu\|_2^2\right] \\
        &\le& \frac{\Var\left[S_{m-1}\right]}{\eps^2 \binom{m-1}{2}^2 \|\mu\|_2^4} \\
        \text{[Lemma \reflemma{base-algorithm-variance}]} &\le& \frac{\binom{m-1}{2} \|\mu\|_2^2 + (m-1)^3 (\|\mu\|_3^3 - \|\mu\|_2^4)}{\eps^2 \cdot \left(\binom{m-1}{2}\|\mu\|_2^2\right)^2}
    \end{eqnarray*}
    
    Since $m \ge 10$:
    \begin{eqnarray*}    
        \Pr\left[S_{m-1} \ge k\right] &\le& \frac{3}{\eps^2}\left(\frac{1}{\binom{m}{2}\|\mu\|_2^2} + \frac{(m-1)\|\mu\|_2^2 \cdot (\|\mu\|_3^3 - \|\mu\|_2^4)}{\binom{m}{2}\|\mu\|_2^2 \cdot \|\mu\|_2^4} \right) \\
        &\le& \frac{3}{\eps^2}\left(\frac{1+\eps}{k} + \frac{2}{m} \cdot \frac{\binom{m}{2} \|\mu\|_2^2 \cdot (\|\mu\|_3^3 - \|\mu\|_2^4)}{\binom{m}{2}\|\mu\|_2^2 \cdot \|\mu\|_2^4} \right) \\
        &=& \frac{3}{\eps^2}\left(\frac{1+\eps}{k} + \frac{2}{m} \cdot \frac{\|\mu\|_3^3 - \|\mu\|_2^4}{\|\mu\|_2^4} \right)
    \end{eqnarray*}

    Since $m \ge \sqrt{k}/\|\mu\|_2$ as well,
    \[  \Pr\left[S_{m-1} \ge k\right]
        \le \frac{3}{\eps^2}\left(\frac{1+\eps}{k} + \frac{2\|\mu\|_2}{\sqrt{k}} \cdot \frac{\|\mu\|_3^3 - \|\mu\|_2^4}{\|\mu\|_2^4} \right)
        \le \frac{6}{\eps^2}\left(\frac{1}{k} + \frac{\|\mu\|_3^3 - \|\mu\|_2^4}{\sqrt{k} \|\mu\|_2^3}\right) \]
\end{proof}

\repprovelemma{technical:estimate-L2-BC-bound-pr-mhigh}
\begin{proof}
    Observe that for $0 < \eps \le 1/2$, $m \ge \sqrt{k} / \|\mu\|_2$.

    Since $\binom{m+1}{2} > k/\|\mu\|_2^2$, we obtain that $m \ge \max\{\sqrt{2k} - 1,k/\|\mu\|_2\} \ge \max\{4/\eps - 1, ,k/\|\mu\|_2\}$, and therefore, $\frac{m-1}{\eps(m+1) - 2} \le \frac{2}{\eps}$.
    
    By Chebyshev's inequality,
    \begin{eqnarray*}
        \Pr\left[S_m < k\right]
        &=& \Pr\left[\E[S_m] - S_m > \E[S_m] - k\right] \\
        &\le& \Pr\left[\E[S_m] - S_m > \left(1 - (1 - \eps) \frac{m+1}{m-1}\right)\binom{m}{2} \|\mu\|_2^2 \right] \\
        &=& \Pr\left[\E[S_m] - S_m > \frac{\eps(m+1) - 2}{m - 1}\binom{m}{2} \|\mu\|_2^2 \right] \\
        &\le& \frac{(m-1)^2}{(\eps(m+1)-2)^2} \cdot \frac{\Var[S_m]}{\binom{m}{2}^2 \|\mu\|_2^4}
    \end{eqnarray*}

    By Lemma \reflemma{base-algorithm-variance},
    \begin{eqnarray*}
        \Pr\left[S_m < k\right] &\le& \frac{4}{\eps^2} \cdot \frac{\binom{m}{2} \|\mu\|_2^2 + m^3 (\|\mu\|_3^3 - \|\mu\|_2^4)}{\binom{m}{2}^2 \|\mu\|_2^4} \\
        &\le& \frac{4}{\eps^2} \cdot \left(\frac{m+1}{m-1} \cdot \frac{1}{\binom{m+1}{2}\|\mu\|_2^2} + \frac{m^3 (\|\mu\|_3^3 - \|\mu\|_2^4)}{\binom{m}{2}^2 \|\mu\|_2^4}\right) \\
        \text{[Since $m \ge 2$]} &\le& \frac{4}{\eps^2} \cdot \left(3 \cdot \frac{1-\eps}{k} + \frac{16}{m} \frac{\|\mu\|_3^3 - \|\mu\|_2^4}{\|\mu\|_2^4}\right)
    \end{eqnarray*}

    Since $m \ge \sqrt{k}/\|\mu\|_2$ as well,
    \[  \Pr\left[S_m < k\right]
        \le \frac{4}{\eps^2} \cdot \left(3 \cdot \frac{1-\eps}{k} + \frac{16}{\sqrt{k}} \frac{\|\mu\|_3^3 - \|\mu\|_2^4}{\|\mu\|_2^3}\right)
        \le \frac{64}{\eps^2} \cdot \left(\frac{1}{k} + \frac{\|\mu\|_3^3 - \|\mu\|_2^4}{\sqrt{k} \|\mu\|_2^3}\right) \]
\end{proof}

\repprovelemma{technical:estimate-L2-base-chebyshev}
\begin{proof}
    Recall that $\E[S_m] = \binom{m}{2} \|\mu\|_2^2$
    By Chebyshev's inequality,
    \begin{eqnarray*}
        \Pr\left[\frac{S}{\binom{m}{2}} \notin (1 \pm \eps) \|\mu\|_2^2 \right]
        &=& \E\left[S_m \notin (1 \pm \eps) \E[S_m] \right] \\
        &\le& \frac{\Var[S_m]}{\eps^2 (\E[S_m])^2} \\
        \text{[Lemma \reflemma{base-algorithm-variance}]} &\le& \frac{\binom{m}{2} \|\mu\|_2^2 + m^3 (\|\mu\|_3^3 - \|\mu\|_2^4)}{\eps^2 \binom{m}{2}^2 \|\mu\|_2^4} \\
        &\le& \frac{1}{\eps^2 \binom{m}{2} \|\mu\|_2^2} + \frac{m^3 t_\mu}{\eps^2 \binom{m}{2}^2} \\
        \text{[Since $m \ge 2$]} &\le& \frac{5}{\eps^2 m^2 \|\mu\|_2^2} + \frac{20t_\mu}{\eps^2 m}
    \end{eqnarray*}
\end{proof}

\section{$\|\mu\|_3^3$-estimators}
\label{sec:L3-estimators}

In this section we provide a few $L_3$-estimators.

\paragraph{The unbiased $L_3$-estimator}
Some of our procedures require estimation of $\|\mu\|_3^3$. The core of our $L_3$-estimators is the unbiased $L_3$-estimator, in which we draw $m$ samples and let $T_m$ be the number of three-way collisions.

\begin{lemma}{L3-algorithm-variance}
    Let $\mu$ be a distribution over a (possibly infinite) domain. Assume that we draw $m$ samples, and let $T_m$ be the number of three-way collisions. In this setting, $\E[T_m] = \binom{m}{3} \|\mu\|_3^3$ and $\Var[T_m] \le \max\{ m^3 \|\mu\|_3^3, m^5 \|\mu\|_3^5\}$.
\end{lemma}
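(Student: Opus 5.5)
The plan is a direct second-moment computation over triples of indices, in the same spirit as Lemma \reflemma{base-algorithm-variance}. For every $3$-subset $I=\{i<j<k\}\subseteq\{1,\ldots,m\}$, I will let $Y_I$ be the indicator that the $i$th, $j$th and $k$th samples all coincide, so that $T_m=\sum_I Y_I$. Since $\E[Y_I]=\sum_{x\in\Omega}\mu(x)^3=\|\mu\|_3^3$, linearity of expectation gives $\E[T_m]=\binom{m}{3}\|\mu\|_3^3$ immediately.

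For the variance I will write $\Var[T_m]=\sum_{I,J}\mathrm{Cov}(Y_I,Y_J)$ and classify ordered pairs $(I,J)$ by $\abs{I\cap J}$.
\begin{itemize}
\item If $I\cap J=\emptyset$, then $Y_I$ and $Y_J$ depend on disjoint samples, so the covariance vanishes.
\item In every other case I will simply bound $\mathrm{Cov}(Y_I,Y_J)\le\E[Y_IY_J]$. The product $Y_IY_J$ is the indicator that all samples indexed by $I\cup J$ coincide, hence $\E[Y_IY_J]=\|\mu\|_{r}^{r}$ with $r=\abs{I\cup J}\in\{3,4,5\}$.
\end{itemize}
The numbers of ordered pairs are $\binom{m}{3}$ for overlap $3$, $\binom{m}{3}\cdot 3(m-3)$ for overlap $2$, and $\binom{m}{3}\cdot 3\binom{m-3}{2}$ for overlap $1$. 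These are at most $m^3/6$, $m^4/2$ and $m^5/4$ respectively. Therefore
\[ \Var[T_m]\le \tfrac16 m^3\|\mu\|_3^3+\tfrac12 m^4\|\mu\|_4^4+\tfrac14 m^5\|\mu\|_5^5 . \]

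The last step converts the higher norms into $\|\mu\|_3$. I will use monotonicity of $\ell_p$-norms, $\|\mu\|_p\le\|\mu\|_3$ for $p\ge 3$, which gives $\|\mu\|_4^4\le\|\mu\|_3^4$ and $\|\mu\|_5^5\le\|\mu\|_3^5$. Writing $x=m^3\|\mu\|_3^3$ and $y=m^5\|\mu\|_3^5$, the middle term satisfies $m^4\|\mu\|_3^4=\sqrt{xy}\le(x+y)/2$ by AM--GM. Substituting, the bound becomes
\[ \Var[T_m]\le \left(\tfrac16+\tfrac14\right)x+\left(\tfrac14+\tfrac14\right)y\le \tfrac{11}{12}\max\{x,y\}, \]
which is exactly the claimed bound with no extra constant.

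The main obstacle is this constant-free form. The pair counts need to be tight enough, and the cross term $m^4\|\mu\|_4^4$ must be split between the two extremes via AM--GM, so the coefficients have to be tracked carefully. If a count turns out slightly larger than estimated (for example for small $m$), I would instead use the exact falling-factorial expressions; note that for $m<3$ the variance is $0$ and the claim is trivial.
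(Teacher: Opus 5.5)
Your proposal is correct and follows essentially the paper's route: expand $\E[T_m^2]$ over pairs of triples grouped by overlap size, drop the disjoint-pair term, apply $\|\mu\|_5\le\|\mu\|_4\le\|\mu\|_3$, and compare to the max. The only difference in technique is that you use AM--GM on the cross term, where the paper factors out $\tfrac16 m^3\|\mu\|_3^3(1+m\|\mu\|_3+m^2\|\mu\|_3^2)$.

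Your pair counts, which equal $12\binom{m}{4}$ and $30\binom{m}{5}$, are in fact the correct ones. The paper writes $4\binom{m}{4}$ and $10\binom{m}{5}$, undercounting by a factor of $3$, so its intermediate variance inequality is not valid as written. The stated bound still holds, exactly as your estimate $\tfrac{5}{12}x+\tfrac12 y\le\tfrac{11}{12}\max\{x,y\}$ shows.
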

\begin{proof}
    For every $1 \le i < j < k \le m$, let $X_{ijk}$ be the indicator for a collision between the $i$th sample, the $j$th sample and the $k$th sample.

    Clearly, $\E\left[X_{ijk}\right] = \sum_i (\mu(i))^3 = \|\mu\|_3^3$. Also, for $1 \le i' < j' < k' \le m$ for which $\abs{\{i,j,k,i',j',k'\}} \le 5$, $\E[X_{ijk} X_{i'j'k'}] = \sum_i (\mu(i))^{\{i,j,k,i',j',k'\}} = \|\mu\|_{\abs{\{i,j,k,i',j',k'\}}}^{\abs{\{i,j,k,i',j',k'\}}}$.
    
    By linearity of expectation, $\E\left[T_m\right] = \binom{m}{3}\E\left[X_{1,2,3}\right] = \binom{m}{3}\|m\|_3^3$.
    
    For the variance,
    \begin{eqnarray*}
        \E\left[T_m^2\right]
        &=& \binom{m}{3}\E\left[X_{1,2,3}^2\right] + \binom{m}{4} \cdot \binom{4}{3}\E\left[X_{1,2,3} X_{1,2,4}\right] + \binom{m}{5} \cdot \binom{5}{3}\E\left[X_{1,2,3} X_{1,4,5}\right] \|m\|_4^4 + \cdots\\&& \binom{m}{6} \cdot \binom{m}{3} \E\left[X_{1,2,3} X_{4,5,6}\right] \\
        &=& \binom{m}{3}\|\mu\|_3^3 + 4 \binom{m}{4} \|\mu\|_4^4 + 10 \binom{m}{5} \|\mu\|_5^5 + 20 \binom{m}{6} \|\mu\|_3^6
    \end{eqnarray*}

    Hence,
    \begin{eqnarray*}
        \Var\left[T_m\right]
        &=& \E\left[T_m^2\right] - \left(\E\left[T_m\right]\right)^2 \\
        &=& \binom{m}{3}\|\mu\|_3^3 + 4 \binom{m}{4} \|\mu\|_4^4 + 10 \binom{m}{5} \|\mu\|_5^5 + \cdots\\&& 20 \binom{m}{6} \|\mu\|_3^6 - \binom{m}{3}^2 \|\mu\|_3^6 \\
        &\le& \binom{m}{3}\|\mu\|_3^3 + 4 \binom{m}{4} \|\mu\|_4^4 + 10 \binom{m}{5} \|\mu\|_5^5 \\
        \text{[Since $\|\mu\|_5 \le \|\mu\|_4 \le \|\mu\|_3$]} &\le& \frac{1}{6} m^3 \|\mu\|_3^3 (1 + m \|\mu\|_3 + m^2 \|\mu\|_3^2) \\
        &\le& \max\{m^3 \|\mu\|_3^3, m^5 \|\mu\|_3^5\}
    \end{eqnarray*}
\end{proof}

\paragraph{The $L_3$-estimator}
\defproc{estimate-L3}{Estimate-$L_3$}

As in the $L_2$-estimator, we obtain a fixed-factor estimation of $\|\mu\|_2^2$ and use it to bound both $\|\mu\|_3^3$ and the variance of the number of three-way collisions (as a function of $\|\mu\|_3^3$) within $m$ independent samples drawn from $\mu$. \algforprocshort{estimate-L3}

\begin{proc-algo}[hbtp]{estimate-L3}{\eta; \mu, \eps}
    \algoutput{$(1 \pm \eps)\|\mu\|_3^3$ with probability $\ge 1-\eta$}
    \algunbiasedness{$\E[X] = \|\mu\|_3^3$}
    \algcomplexity{$O(1/\eta \eps^2 \|\mu\|_2^{4/3})$}
    \begin{code}
        \begin{If}{$\eps > 1/10$}
            \algitem Set $\eps \gets 1/10$.
        \end{If}
        \algitem Let $\ell \gets \proc{estimate-L2-BC}(\eta/6; \mu, 1/2)$.
        \algitem Let $m \gets \ceil{10^{12}/ \eta \eps^2 \ell^{2/3}}$.
        \algitem Draw $m$ independent samples from $\mu$.
        \algitem Let $T_m$ be the number of three-way collisions within these samples.
        \algitem Return $T_m / \binom{m}{3}$.
    \end{code}
\end{proc-algo}

\begin{lemma}[Technical lemma]{technical:estimate-L3-chebyshev}
    Let $0 < \eps \le 1/10$ and $\mu$ be a discrete distribution. For $m \ge \max\{10, 1/\|\mu\|_3\}$, recall that $T_m$ is the number of three-way collisions within $m$ independent samples drawn from $\mu$. In this setting, $\Pr[T_m / \binom{m}{3} \notin (1 \pm \eps)\|\mu\|_3^3] \le \frac{100}{\eps^2 m \|\mu\|_3}$.
\end{lemma}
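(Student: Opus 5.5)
The plan is to apply Chebyshev's inequality to $T_m$, using Lemma \reflemma{L3-algorithm-variance}, exactly as in the proof of Lemma \reflemma{technical:estimate-L2-base-chebyshev}. We have $\E[T_m] = \binom{m}{3}\|\mu\|_3^3$. Hence
\[ \Pr\left[T_m/\tbinom{m}{3} \notin (1\pm\eps)\|\mu\|_3^3\right] \le \frac{\Var[T_m]}{\eps^2 \binom{m}{3}^2 \|\mu\|_3^6}. \]

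The key simplification comes from the hypothesis $m \ge 1/\|\mu\|_3$, i.e.\ $m\|\mu\|_3 \ge 1$. Under it, the maximum in Lemma \reflemma{L3-algorithm-variance} is attained by the second term, so $\Var[T_m] \le m^5\|\mu\|_3^5$. If one prefers to avoid the unnecessarily lossy final step of that lemma, one can return to its intermediate bound $\frac16 m^3\|\mu\|_3^3(1+m\|\mu\|_3+m^2\|\mu\|_3^2)$. This is at most $\frac12 m^5\|\mu\|_3^5$ when $m\|\mu\|_3\ge1$.

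Next we lower-bound the binomial coefficient. Since $m \ge 10$,
\[ \binom{m}{3} = \frac{m(m-1)(m-2)}{6} \ge \frac{m^3 (9/10)(8/10)}{6} = \frac{3m^3}{25}, \]
and therefore $\binom{m}{3}^2 \ge m^6/70$.

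Plugging both bounds into Chebyshev's inequality gives a bound of at most
\[ \frac{70\, m^5\|\mu\|_3^5}{\eps^2 m^6 \|\mu\|_3^6} = \frac{70}{\eps^2 m \|\mu\|_3} \le \frac{100}{\eps^2 m\|\mu\|_3}, \]
which is the claim.

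There is no real obstacle here. The only point requiring care is verifying that the $m^5$ term dominates the variance (this is where $m\ge 1/\|\mu\|_3$ is used) and keeping the constants within $100$. The hypothesis $\eps\le 1/10$ is not needed for this argument, beyond making the statement non-trivial.
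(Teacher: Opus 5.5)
Your proof is correct and is essentially the paper's argument: Chebyshev's inequality with Lemma~\reflemma{L3-algorithm-variance}, $\binom{m}{3}^2=\Omega(m^6)$ for $m\ge 10$, and $m\|\mu\|_3\ge 1$ to make the $m^5\|\mu\|_3^5$ term dominate. The paper uses this last fact after dividing rather than before, which makes no difference.

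Drop the optional aside, though. The intermediate bound $\frac16 m^3\|\mu\|_3^3(1+m\|\mu\|_3+m^2\|\mu\|_3^2)$ in that lemma's proof miscounts the ordered pairs of overlapping triples: there are $12\binom{m}{4}$ and $30\binom{m}{5}$ of them, not $4\binom{m}{4}$ and $10\binom{m}{5}$. The true variance can approach $\frac{11}{12}m^5\|\mu\|_3^5$ at $m\|\mu\|_3=1$ (take one atom of mass $p\to 0$, spread the rest thinly, and set $m=1/p$), so your claimed $\frac12 m^5\|\mu\|_3^5$ is false. Only the lemma's final bound $\max\{m^3\|\mu\|_3^3, m^5\|\mu\|_3^5\}$ survives, and that is all your main line uses.
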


\begin{lemma}{estimate-L3}
    Let $X$ be the output of \proc*{estimate-L3}. For every input distribution $\mu$ and a parameter $0 < \eps < 1$:
    \begin{itemize}
        \item With probability at least $1-\eta$, $X \in (1 \pm \eps) \|\mu\|_3^3$.
        \item $\E[X] = \|\mu\|_3^3$.
        \item The expected sample complexity is $O(1 /\eta \eps^2 \|\mu\|_2^{4/3})$.
    \end{itemize}
\end{lemma}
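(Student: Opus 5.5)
The proof mirrors that of Lemma \reflemma{estimate-L2-base}, with the three-way collision count $T_m$ in place of $S_m$ and Lemma \reflemma{technical:estimate-L3-chebyshev} in place of Lemma \reflemma{technical:estimate-L2-base-chebyshev}. I would take the three items in turn.

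\textbf{Correctness.} By Lemma \reflemma{estimate-L2-BC} (amplified to error $\eta/6$ via Lemma \reflemma{amplify-1/3-to-eta}), with probability at least $1-\eta/6$ we have $\ell \in (1 \pm 1/2)\|\mu\|_2^2$, so $\ell \le (3/2)\|\mu\|_2^2$. Condition on this event.

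Next I need a lower bound on $m$ in terms of $\|\mu\|_3$. The norm relation $\|\mu\|_3^3 \ge \|\mu\|_2^4$ (Cauchy--Schwarz: $(\sum \mu(i)^2)^2 \le \sum \mu(i)\sum\mu(i)^3$) gives $\|\mu\|_3 \ge \|\mu\|_2^{4/3}$. Hence
\[ m \ge \frac{10^{12}}{\eta \eps^2 \ell^{2/3}} \ge \frac{10^{12}}{2\eta\eps^2 \|\mu\|_2^{4/3}} \ge \frac{10^{11}}{\eta\eps^2\|\mu\|_3}. \]
This bound has two consequences:
\begin{itemize}
    \item Since $\eps \le 1/10$ and $\|\mu\|_3 \le 1$, it gives $m \ge \max\{10, 1/\|\mu\|_3\}$, so the hypothesis of Lemma \reflemma{technical:estimate-L3-chebyshev} holds.
    \item The failure probability from that lemma is at most $100/(\eps^2 m\|\mu\|_3) \le 10^{-9}\eta < \eta/6$.
\end{itemize}
A union bound then gives total failure probability at most $\eta/3 \le \eta$.

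\textbf{Unbiasedness.} By Lemma \reflemma{L3-algorithm-variance}, $\E[T_m/\binom{m}{3} \mid m] = \|\mu\|_3^3$ for every fixed $m \ge 3$, and $m \ge 3$ always holds. This is because the samples used for $T_m$ are fresh and independent of $\ell$, so the conditional expectation is exact. Averaging over $m$ gives $\E[X] = \|\mu\|_3^3$.

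\textbf{Complexity.} The call to \proc{estimate-L2-BC} with constant accuracy costs $O(\log(1/\eta)/\|\mu\|_2)$. The main term is
\[ \E[m] = O\!\left(\frac{\E[\ell^{-2/3}]}{\eta\eps^2}\right). \]
By the negative-moment preservation in Lemma \reflemma{estimate-L2-BC} with $r = 2/3$ (which survives amplification by Lemma \reflemma{median-expected-value} applied to $X' = \ell^{-2/3}$), $\E[\ell^{-2/3}] = O(\|\mu\|_2^{-4/3})$.

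Finally, $\log(1/\eta)/\|\mu\|_2 \le \log(1/\eta)/\|\mu\|_2^{4/3} = O(1/\eta\|\mu\|_2^{4/3})$, so the total is $O(1/\eta\eps^2\|\mu\|_2^{4/3})$.

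The main obstacle is making sure Lemma \reflemma{technical:estimate-L3-chebyshev} really applies, i.e.\ that $m \ge 1/\|\mu\|_3$. Everything hinges on the inequality $\|\mu\|_3 \ge \|\mu\|_2^{4/3}$ together with the fact that $\ell$ is at most a constant times $\|\mu\|_2^2$ on the good event.
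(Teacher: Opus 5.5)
Your proof is correct and follows the paper's argument. You condition on $\ell \le (3/2)\|\mu\|_2^2$ and use $\|\mu\|_3 \ge \|\mu\|_2^{4/3}$ to get $m \ge \max\{10, 1/\|\mu\|_3\}$, so that Lemma \reflemma{technical:estimate-L3-chebyshev} applies; you then condition on $m$ for unbiasedness and bound $\E[\ell^{-2/3}]$ for the complexity. The differences are cosmetic: you state the norm inequality in the correct direction (the paper's text has a typo there), you use the $r=2/3$ negative moment of the BC-estimator directly where the paper applies Jensen to $\E[1/\ell]$, and you explicitly account for the cost of the BC-estimator call, which the paper leaves implicit.
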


\begin{proof}
    By Cauchy-Schwartz inequality, $\|\mu\|_3^3 \ge \|\mu\|_2^4$, and therefore, $\|\mu\|_3 \le \|\mu\|_2^{4/3}$.
    
    By Lemma \reflemma{estimate-L2-BC}, with probability at least $1-\eta/6$, $\ell \in (1 \pm 1/2)\|\mu\|_2^2$. Additionally, $\E[1 / \ell] = O(\|\mu\|_2^2)$.

    If $\ell \in (1 \pm 1/2)\|\mu\|_2^2$, then $\ell^{2/3} \le 2 \|\mu\|_3$ and $m \ge \max\{10, 1/\|\mu\|_3\}$. Therefore, by Lemma \reflemma{technical:estimate-L3-chebyshev},
    \[  \Pr\left[X \notin (1 \pm \eps) \|\mu\|_3^3 \cond m \right]
        \le \frac{100}{\eps^2 m \|\mu\|_3}
        \le \frac{100}{\eps^2 (10^{12} / \eta \eps^2 \ell^{2/3}) \|\mu\|_3}
        \le \frac{\ell^{2/3} / \|\mu\|_3}{10^{10}}\eta
        < \frac{1}{6}\eta \]
    
    That is, with probability at least $1-\eta/6-\eta/6 \ge 1-\eta$, $X \in (1 \pm \eps)\|\mu\|_3^3$.
    
    For the exact expected value, we observe that:
    \[  \E[X]
        = \sum_{m=1}^\infty \E[X | m] \Pr[m]
        = \sum_{m=1}^\infty \frac{\E[T_m]}{\binom{m}{3}} \Pr[m]
        = \sum_{m=1}^\infty \|\mu\|_3^3 \Pr[m]
        = \|\mu\|_3^3 \]

    For sample complexity,
    \begin{eqnarray*}
        \E[m]
        = O(1/\eta \eps^2) \cdot \E[1/\ell^{2/3}]
        \le O(1/\eta \eps^2) \cdot (\E[1/\ell])^{2/3}
        = O(1/\eta \eps^2 \|\mu\|_2^{4/3})
    \end{eqnarray*}
\end{proof}

\paragraph{The amplified $L_3$-estimator}
\defproc{estimate-L3-amplified}{Estimate-$L_3$-Amplified}
The $L_3$-estimator is unbiased, but its cost is polynomial with the error probability $\eta$. We use Lemma \reflemma{amplify-1/3-to-eta} and Lemma \reflemma{median-expected-value} to obtain that:

\begin{observation} \label{obs:estimate-L3-amplified}
    We can implement procedure $\proc*{estimate-L3-amplified}(\eta; \mu, \eps)$ for estimating $(1 \pm \eps)\|\mu\|_3^3$ with probability at least $1-\eta$ while preserving the first moment ($\E[X] = O(\|\mu\|_3^3)$) at the cost of $O(\log (1/\eta) / \eps^2 \|\mu\|_2^{4/3})$.
\end{observation}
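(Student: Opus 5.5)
The plan is to implement \proc*{estimate-L3-amplified} as the median of independent runs of \proc*{estimate-L3}, each run with constant error parameter $1/3$. Concretely, I would set $q = \max\{1, \ceil{18 \ln \eta^{-1}}\}$, run $\proc*{estimate-L3}(1/3; \mu, \eps)$ independently $q$ times to get $X_1,\ldots,X_q$, and return their (lower) median $Y$. For $\eta \ge 1/3$, a single run already suffices, so the interesting case is $\eta < 1/3$.

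\textbf{Correctness.} Call a run ``good'' if $X_j \in (1 \pm \eps)\|\mu\|_3^3$. By Lemma \reflemma{estimate-L3}, each run is good with probability at least $2/3$, independently of the others. If strictly more than half of the runs are good, then the median lies in the interval $(1 \pm \eps)\|\mu\|_3^3$. This holds because the interval is convex: at most $q/2$ values fall below it and at most $q/2$ above it, so the rank-$\floor{q/2}$ element lies inside. By Lemma \reflemma{amplify-1/3-to-eta}, the number of good runs exceeds $q/2$ with probability at least $1-\eta$.

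\textbf{First moment.} Each $X_j$ is non-negative, being a normalized count of three-way collisions. Each $X_j$ also satisfies $\E[X_j] = \|\mu\|_3^3$, again by Lemma \reflemma{estimate-L3}. Lemma \reflemma{median-expected-value} then gives $\E[Y] \le C\,\E[X_1] = O(\|\mu\|_3^3)$, with $C$ independent of $q$. This is the only point where a constant-factor loss occurs, and it explains why the observation only claims preservation of the first moment rather than unbiasedness. One subtlety is the lower-median convention for even $q$; I would either match the convention of Lemma \reflemma{median-expected-value} exactly or take $q$ odd.

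\textbf{Complexity.} Each run costs, in expectation, $O(1/\eps^2 \|\mu\|_2^{4/3})$ samples, since $\eta = 1/3$ is a constant in Lemma \reflemma{estimate-L3}. By linearity of expectation, $q$ runs cost $O(\log(1/\eta)/\eps^2\|\mu\|_2^{4/3})$ in total.

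I expect no real obstacle here. The only care needed is in checking that the median-in-interval argument and the non-negativity hypothesis of Lemma \reflemma{median-expected-value} both hold; they do.
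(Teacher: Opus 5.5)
Your proposal is correct and follows the paper's own argument, which is a one-line justification: take the median of $O(\log(1/\eta))$ independent constant-error runs of \proc*{estimate-L3}, citing Lemma \reflemma{amplify-1/3-to-eta} for correctness and Lemma \reflemma{median-expected-value} for the first moment. One small slip: the order statistic you need has rank $\ceil{q/2}$, which equals rank $\floor{q/2}$ only for even $q$ (and matches the convention of Lemma \reflemma{median-expected-value}), since for odd $q=2m+1$ with $m+1$ good runs and $m$ runs below the interval, the rank-$m$ element lies outside $(1\pm\eps)\|\mu\|_3^3$.
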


\paragraph{The $L_3$ magnitude estimator}
\defproc{estimate-L3-magnitude}{Estimate-$L_3$-Magnitude}

To estimate the magnitude of $\|\mu\|_3^3$ according to a reference magnitude $a^3$, we choose the number of samples based on $a$. The expected number of three-way collisions in $m$ samples is $\binom{m}{3}\|\mu\|_3^3 \approx a^3 \|\mu\|_3^3$, and therefore, we use $m = \Theta(1/a)$ for having $\Theta(\|\mu\|_3^3 / a^3)$ such collisions in expectation. \algforprocshort{estimate-L3-magnitude}

\begin{proc-algo}{estimate-L3-magnitude}{\eta; \mu, a}
    \algoutput{$\|\mu\|_3^3 \pm \max\{\|\mu\|_3^3/2, a^3\}$, with probability $\ge 1-\eta$}
    \algunbiasedness{$\E[X] = \|\mu\|_3^3$}
    \algcomplexity{$O(1/\eta a)$}
    \begin{code}
        \algitem Let $m \gets \ceil{10^{12}/ \eta a}$.
        \algitem Draw $m$ independent samples from $\mu$.
        \algitem Let $T_m$ be the number of three-way collisions within these samples.
        \algitem Return $T_m / \binom{m}{3}$.
    \end{code}
\end{proc-algo}

\begin{lemma}{estimate-L3-magnitude}
    Let $X$ be the output of \proc*{estimate-L3-magnitude}. For every input distribution $\mu$ and a parameter $0 < a \le 1$:
    \begin{itemize}
        \item With probability at least $1-\eta$, $X \in \|\mu\|_3^3 \pm \max\{\|\mu\|_3^3/1000, a^3\}$.
        \item $\E[X] = \|\mu\|_3^3$.
        \item The expected sample complexity is $O(1 /\eta a)$.
    \end{itemize}
\end{lemma}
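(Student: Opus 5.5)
We prove the three items of Lemma \reflemma{estimate-L3-magnitude} in the order unbiasedness, complexity, correctness; only correctness needs any work. Unbiasedness is immediate: $m$ is deterministic, so Lemma \reflemma{L3-algorithm-variance} gives $\E[T_m] = \binom{m}{3}\|\mu\|_3^3$, hence $\E[X] = \|\mu\|_3^3$. The sample complexity is exactly $m = \ceil{10^{12}/\eta a} = O(1/\eta a)$.

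For correctness we apply Chebyshev's inequality with the deviation threshold $\Delta = \max\{\|\mu\|_3^3/1000, a^3\}\binom{m}{3}$. By Lemma \reflemma{L3-algorithm-variance},
\[\Pr\left[\abs{X - \|\mu\|_3^3} > \max\{\|\mu\|_3^3/1000, a^3\}\right] \le \frac{\max\{m^3\|\mu\|_3^3, m^5\|\mu\|_3^5\}}{\max\{\|\mu\|_3^6/10^6, a^6\}\binom{m}{3}^2}.\]
Since $m \ge 10^{12} \ge 10$, we have $\binom{m}{3} \ge m^3/7$, so the right-hand side is at most $49\max\{m^{-3}\|\mu\|_3^3, m^{-1}\|\mu\|_3^5\}/\max\{\|\mu\|_3^6/10^6, a^6\}$.

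We bound each term of the numerator's maximum against whichever term of the denominator's maximum is convenient. Write $y = \|\mu\|_3$ and note $1/m \le \eta a/10^{12}$.
\begin{itemize}
\item For the term $m^{-3}y^3$, split on the comparison of $y$ and $a$. If $y \le a$, divide by $a^6$ to get $m^{-3}y^3/a^6 \le m^{-3}a^{-3} \le (\eta/10^{12})^3$. If $y > a$, divide by $y^6/10^6$ to get $10^6 m^{-3} y^{-3} \le 10^6 (\eta a/10^{12})^3 a^{-3}$, which is again tiny. In both cases the bound is $O(\eta^3 10^{-30})$.
\item For the term $m^{-1}y^5$, divide by $y^6/10^6$ to get $10^6/(m y)$. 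This is small only when $y \gtrsim a$. In the complementary regime $y \le a$, divide instead by $a^6$ to get $y^5/(m a^6) \le 1/(ma) \le \eta/10^{12}$. If $y > a$, then $10^6/(my) \le 10^6/(ma) \le 10^{-6}\eta$.
\end{itemize}
Multiplying by the constant $49$, the failure probability is below $\eta$ whenever $\eta \le 1$, which proves correctness.

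The main obstacle is this last case analysis. The mixed term $m^5\|\mu\|_3^5$ must be controlled relative to $a^6$ when $\|\mu\|_3 \le a$, and relative to $\|\mu\|_3^6$ otherwise; choosing the right denominator in each regime is what makes the $1/(\eta a)$ sample size suffice. The assumption $a \le 1$ is used only to guarantee $m \ge 10$, so that $\binom{m}{3} \ge m^3/7$. The generous constant $10^{12}$ absorbs all the constant factors, including the $1/1000$ (rather than $1/2$) relative error.
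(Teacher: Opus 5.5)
Your proof is correct and is essentially the paper's argument: unbiasedness from the deterministic $m$, then Chebyshev with the variance bound of Lemma~\reflemma{L3-algorithm-variance}. The paper avoids your case split on $\|\mu\|_3$ versus $a$ by setting $\delta=\max\{\|\mu\|_3^3/1000,a^3\}$ and substituting $\|\mu\|_3^3\le 1000\delta$ and $m\ge 10^{12}/\eta\delta^{1/3}$ into the bound in one step. One small slip: $\binom{m}{3}\ge m^3/7$ does not follow from $m\ge 10$ (already $\binom{10}{3}=120<1000/7$), but it does follow from the $m\ge 10^{12}$ you actually have, so nothing breaks.
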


\begin{proof}
    The number of samples is explicitly $O(1/\eta a)$.
    
    For the exact expected value, we observe that $\E[X] = \frac{1}{\binom{m}{3}}\E[T_m] = \|\mu\|_3^3$.

    Let $\delta = \max\{\|\mu\|_3^3 / 1000, a^3\}$, so that $\|\mu\|_3^3 \le 1000\delta$ and $m \ge 10^{12}/\eta \delta^{1/3}$.

    By Lemma Chebyshev's inequality,
    \begin{eqnarray*}
        \Pr\left[X \notin \|\mu\|_3^3 \pm \delta \right]
        &=& \E\left[T_m \notin \E[T_m] \pm \binom{m}{3}\delta \right] \\
        &\le& \frac{\Var[T_m]}{\binom{m}{3}^2 \delta^2} \\
        \text{[Lemma \reflemma{L3-algorithm-variance}]} &\le& \frac{\max\{m^3 \|\mu\|_3^3, m^5 \|\mu\|_3^5\}}{\binom{m}{3}^2 \delta^2} \\
        &\le& \frac{\max\{m^3 \cdot 1000\delta, m^5 (1000\delta)^{5/3}\}}{\binom{m}{3}^2 \delta^2} \\
        &\le& 10^5 \max\left\{ \frac{m^3}{\binom{m}{3}^2 \delta}, \frac{1}{\binom{m}{3}^2 \delta^{1/3}} \right\} \\
        \text{[Since $m \ge 3$]} &\le& 10^8 \max\left\{\frac{1}{m^3 \delta}, \frac{1}{m \delta^{1/3}}\right\} \\
        &\le& 10^8 \max\left\{\frac{1}{(10^{12} / \eta \delta^{1/3})^3 \delta}, \frac{1}{(10^{12} / \eta \delta^{1/3}) \delta^{1/3}}\right\}
        \le \frac{10^8}{10^{12}}\eta
        \le \eta
    \end{eqnarray*}
\end{proof}

\subsection{Deferred proofs of technical lemmas}
The next section begins at Page \pageref{sec:top-level-algorithm}.

\repprovelemma{technical:estimate-L3-chebyshev}
\begin{proof}
    By Chebyshev's inequality,
    \begin{eqnarray*}
        \Pr\left[X \notin (1 \pm \eps) \|\mu\|_3^3 \right]
        &=& \E\left[T_m \notin (1 \pm \eps) \E[T_m] \right] \\
        &\le& \frac{\Var[T_m]}{\eps^2 (\E[T_m])^2} \\
        \text{[Lemma \reflemma{L3-algorithm-variance}]} &\le& \frac{\max\{m^3 \|\mu\|_3^3, m^5 \|\mu\|_3^5\}}{\eps^2 \binom{m}{3}^2 \|\mu\|_3^6} \\
        &\le& \frac{1}{\eps^2} \max\left\{\frac{m^3}{\binom{m}{3}^2\|\mu\|_3^3}, \frac{m^5}{\binom{m}{3}^2 \|\mu\|_3} \right\} \\
        \text{[Since $m \ge 10$]} &\le& \frac{100}{\eps^2} \max\left\{\frac{1}{m^3 \|\mu\|_3^3}, \frac{1}{m \|\mu\|_3}\right\} \\
        \text{[Since $m \ge 1/\|\mu\|_3$]} &\le& \frac{100}{\eps^2 m \|\mu\|_3}
    \end{eqnarray*}
\end{proof}

\section{Top-level algorithm}
\label{sec:top-level-algorithm}

In this section we provide the top-level logic of our estimation algorithm. The algorithm distinguishes between three (overlapping) cases:
\begin{itemize}
    \item $\|\mu\|_2 = O(\eps)$.
    \item $\|\mu\|_2 = O(\eps^{3/5} / \log \eps^{-1})$ and $\|\mu\|_2 = \Omega(\eps^3 \log \eps^{-1})$.
    \item $\|\mu\|_2 = \Omega(\eps \log \eps^{-1})$.
\end{itemize}

While we can distinguish between the cases with probability $1-O(\eta)$ at the cost of $O(\log (1/\eta) / \|\mu\|_2)$ samples (for example by Lemma \reflemma{estimate-L2-BC}), we have to make sure that the expected sample complexity is bounded by $O(1/\eps \|\mu\|_2 + t_\mu/\eps^2)$ as well. This requires that, in addition to the $O(\eta)$-error, we must have the following bounds for inputs not belonging to the overlapping ranges:
\begin{itemize}
    \item If $\|\mu\|_2$ belongs to the first case (``small''), then the probability to misclassify it must be $O(\eps^2 \|\mu\|_2^2)$.
    \item If $\|\mu\|_2$ belongs to a non-first case (``medium'' or ``large''), then the probability to misclassify it must be $O(\eps^2)$.
\end{itemize}
In a straightforward test, the first constraint would require $\Omega(\log (1/\eps\|\mu\|_2) / \max\{\eps,\|\mu\|_2\})$ samples, which is incompatible with the bi-criteria bound $O(1/\eps\|\mu\|_2)$. Therefore, we use the intersection of two independent tests: one $O(\eps^2)$-error classifier and one $O(\|\mu\|_2^2)$-verifier to make sure that the first-case is not misclassified.

\paragraph{The verifier}
In the verifier \proc{test-L2-magnitude}, we use the BC-estimator to have an initial estimation of the magnitude at the cost of $O(\log (1/\eta \eps) / \|\mu\|_2)$. If $\|\mu\|_2$ seems to be small then we can accept immediately, but if it seems to be large, then we verify it using additional $O(\log (1/\eta \|\mu\|_2) / \eps)$ samples. As a result, if $\|\mu\|_2$ is small then we accept with probability $1-\min\{\eta, O(\|\mu\|_2^2)\}$, and if $\|\mu\|_2$ is large, then we reject with probability at least $1-\eta$, preserving correctness with high probability. \algforprocshort{test-L2-magnitude}

\begin{proc-algo}{test-L2-magnitude}{\eta; \mu, \eps}
    \algoutput{\accept with probability $\ge 1-\min\{O(\|\mu\|_2^2), \eta\}$, if $\|\mu\|_2 \le \eps$}
    \algoutput{\reject with probability $\ge 1-\eta$, if $\|\mu\|_2 \ge 2\eps$}
    \algcomplexity{$O(\log (1/\eta) / \eps \|\mu\|_2)$}
    \begin{code}
        \algitem Let $Q$ be the expected sample complexity of $\proc{estimate-L2-BC}(1/4; \nu, 1/4)$, maximized over $\{ \nu : \|\nu\|_2 \ge 2\eps \}$.
        \begin{If}{$\eps > 1/4$}
            \algitem Set $\eps \gets 1/4$.
        \end{If}
        \algitem Let $\ell_1 \gets \proc{estimate-L2-BC}(\eta/2; \mu, 1/4)$.
        \begin{If}{$\sqrt{\ell_1} \le (3/2)\eps$}
            \algitem Return \accept.
        \end{If}
        \algitem Let $\ell_2 \gets \proc{estimate-L2-moments}(1/2; \mu, 1/4)$.
        \algitem Let $R \gets \ceil{18 \ln (2/\min\{1,\ell_2\} \eta)}$.
        \algitem Initialize $s \gets 0$.
        \begin{For}{$i$ from $1$ to $R$}
            \algitem Let $\ell'' \gets \proc{estimate-L2-BC}(1/4; \mu, 1/4)$, or $\ell'' \gets \bot$ after exceeding the bound of $12Q$ samples.
            \begin{If}{$\ell' \ne \bot$ and $\sqrt{\ell'} \ge (3/2)\eps$}
                \algitem Set $s \gets s + 1$.
            \end{If}
        \end{For}
        \begin{If}{$s \ge \floor{R/2}$}
            \algitem Return \reject.
        \end{If}
        \algitem Return \accept.
    \end{code}
\end{proc-algo}

\repprovelemma{test-L2-magnitude}
\begin{proof}
    We first analyze the verification phase, regardless of whether or not we actually take its branch.
    
    By Markov's inequality, the probability of an iteration to successfully run the BC-estimation is at least $1-1/12$. Combined with Lemma \reflemma{estimate-L2-BC}:
    \begin{itemize}
        \item If $\|\mu\|_2 \ge 2\eps$, then with probability at least $(3/4)-(1/12) = 2/3$, $\ell' \ge (3/4)\|\mu\|_2 \ge ((3/2)\eps)^2$.
        \item If $\|\mu\|_2 \le \eps$, then with probability at least $3/4 > 2/3$, $\ell' = \bot$ or $\ell' \le (5/4)\|\mu\|_2^2 \le ((3/2)\eps)^2$.
    \end{itemize}

    That is, the loop performs an $R$-round amplification of a $1/3$-error test. By Lemma \reflemma{amplify-1/3-to-eta}, the probability of the test to provide the wrong answer is bounded by $\eta \min\{1,\ell_2\} / 2$. Since $\ell_2$ is obtained from $\proc{estimate-L2-moments}$, the error probability of the inner test is bounded by
    \[  \frac{1}{2}\eta \E[\min\{1,\ell_2\}]
        \le \frac{1}{2}\eta \min\{\E[\ell_2],1\}
        \le \frac{1}{2}\eta, O(\|\mu\|_2^2) \]

    For the outer test: if $\|\mu\|_2 \le \eps$, then with some probability we accept, and otherwise we take the ``if-true'' branch and run the inner test, which accepts with probability at least $1 - \min\{O(\|\mu\|_2^2), \eta\}$.

    If $\|\mu\|_2 \ge 2\eps$, then with probability at least $1 - \eta/2$ we take the ``if-true'' branch and run the inner test, which rejects with probability at least $1-\eta/2$. By the union bound, the probability to reject is at least $1 - \eta$.

    For complexity: the estimation of $\ell_1$ costs $O(\log(1/\eta)/\|\mu\|_2)$ (Lemma \reflemma{estimate-L2-moments}).

    By Lemma \reflemma{estimate-L2-BC}, $Q = O(1/\eps)$. Therefore, the expected cost of the inner test, if executed, is
    \begin{eqnarray*}
        Q \cdot O(\E[\log (1/\eta \max\{1,\ell_2\})])
        &=& O(1/\eps) \cdot O(\log (\E[1/\max\{1,\ell_2\}]/\eta)]) \\
        &=& O(1/\eps) \cdot O(\log ((1 + \E[1/\ell_2])/\eta)]) \\
        &=& O(1/\eps) \cdot O(\log ((1 + 1/\|\mu\|_2^2)/\eta)]) \\
        &=& O(\log(1/\eta\|\mu\|_2)/\eps) \\
        &=& O(\log(1/\eta) / \eps \|\mu\|_2)
    \end{eqnarray*}

    Both the initial estimation and the inner test cost $O(\log(1/\eta)/\eps\|\mu\|_2)$.
\end{proof}

\paragraph{Back to the top-level logic}
The top-level algorithm chooses an advice $s \ge t_\mu$ and then calls the advised procedure \proc{estimate-L2-base}. To keep the expected sample complexity of the advice-finding part low, while also keeping the expected advice low, we must correctly classify a first-case input with probability $O(\eps^2 \|\mu\|_2^2)$ and correctly classify a second-case or a third-case input with probability $1 - O(\eps^2)$ (but never less than $1-\eta/4$, to keep the $(1-\eta)$-correctness).

First, we estimate $\|\mu\|_2$ with success probability $1-O(\eps^2)$, and use it to classify the case of the input. If we believe that we are not in the first case, then we call \proc{test-L2-magnitude} as a verifier, to reduce the probability to misclassify a first-case input to $O(\eps^2 \|\mu\|_2^2)$. To reduce the expected advice, if we believe that we are in a non-first case and the verification test fails, then the algorithm gives up and chooses a worthless advice $s=0$. \algforprocshort{estimate-L2-top-level}

\begin{proc-algo}{estimate-L2-top-level}{\eta; \mu, \eps}
    \algoutput{A number in the range $(1 \pm \eps)\|\mu\|_2^2$}
    \algunbiasedness{$\E[X] = \|\mu\|_2^2$}
    \algcomplexity{$O\left(\frac{1}{\sqrt{\eta} \eps \|\mu\|_2} + \frac{t}{\eta \eps^2}\right)$, where $t = (\|\mu\|_3^3 - \|\mu\|_2^4) / \|\mu\|_2^4$}
    \begin{code}
        \algitem Let $\ell \gets \proc{estimate-L2-BC}(\min\{\eps^2, \eta/4\}; \mu; 1/4)$.
        \begin{If}{$\sqrt{\ell} \le 4\eps$} 
            \algitem Set $s \gets \proc{find-advice-small-mu2}(\eta/4; \mu, \eps)$.
        \end{If}
        \begin{Else}
            \begin{If}{$\proc{test-L2-magnitude}(\eta/4; \mu, \eps)$ accepts}
                \algitem Set $s \gets 0$.
            \end{If}
            \begin{Else}
                \begin{If}{$\sqrt{\ell} \le 2\eps^{2/3}$}
                    \algitem Set $s \gets \proc{find-advice-medium-mu2}(\eta/4; \mu, \eps)$.
                \end{If}
                \begin{Else}
                    \algitem Set $s \gets \proc{find-advice-large-mu2}(\eta/4; \mu, \eps)$.
                \end{Else}
            \end{Else}
        \end{Else}
        \algitem Return $\proc{estimate-L2-base}(\eta/4; \mu, s)$.
    \end{code}
\end{proc-algo}

\begin{lemma}{estimate-L2-top-level--first-case-misclassification}
    Consider the run of $\proc{estimate-L2-top-level}(\eta; \mu,\eps)$ where $\|\mu\|_2 \le \eps$. The probability of the algorithm to both have $\sqrt{\ell} > 4\eps$ and fail the small-$\|\mu\|_2$ test is $O(\eps^2 \|\mu\|_2^2)$.
\end{lemma}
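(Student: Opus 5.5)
The plan is to use the independence of the two randomized components: the initial estimation $\ell$ and the verifier \proc{test-L2-magnitude}. The verifier is called with fresh samples, so the event "$\sqrt{\ell} > 4\eps$ and the verifier rejects" has probability equal to the product of the two individual probabilities. I will bound each factor separately: the first by $O(\eps^2)$ and the second by $O(\|\mu\|_2^2)$. The product is then $O(\eps^2\|\mu\|_2^2)$.

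For the first factor, $\ell$ is produced by $\proc{estimate-L2-BC}(\min\{\eps^2,\eta/4\};\mu;1/4)$. By Lemma \reflemma{estimate-L2-BC}, amplified via Lemma \reflemma{amplify-1/3-to-eta}, with probability at least $1-\min\{\eps^2,\eta/4\}\ge 1-\eps^2$ we have $\ell\in(1\pm 1/4)\|\mu\|_2^2$. When $\|\mu\|_2\le\eps$, this gives $\sqrt{\ell}\le\sqrt{5/4}\,\eps<4\eps$. Hence $\Pr[\sqrt{\ell}>4\eps]\le\eps^2$.

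For the second factor, "failing the small-$\|\mu\|_2$ test" means the verifier rejects, i.e., it does not accept even though $\|\mu\|_2\le\eps$. By Lemma \reflemma{test-L2-magnitude}, in this regime it accepts with probability at least $1-\min\{O(\|\mu\|_2^2),\eta/4\}$, so it rejects with probability $O(\|\mu\|_2^2)$. Multiplying the two bounds gives $O(\eps^2\|\mu\|_2^2)$.

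The main obstacle is the independence claim. I need the verifier's randomness, including its own internal call to \proc{estimate-L2-BC} and to \proc{estimate-L2-moments}, to be disjoint from the samples used for $\ell$. This holds because each procedure call draws fresh samples from $\mu$, so conditioning on $\ell$ does not change the verifier's rejection probability. I would state this explicitly, and note that the $O(\|\mu\|_2^2)$ guarantee of Lemma \reflemma{test-L2-magnitude} holds for every fixed execution context. If one distrusts exact independence, the fallback is to condition on $\ell$: the bound $\Pr[\text{reject}\mid\ell]=O(\|\mu\|_2^2)$ holds uniformly in $\ell$, and integrating over the event $\{\sqrt{\ell}>4\eps\}$ gives the same product bound.
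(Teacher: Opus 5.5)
Your proposal is correct and follows the paper's proof. You bound $\Pr[\sqrt{\ell}>4\eps]$ by the BC-estimator's error parameter $\min\{\eps^2,\eta/4\}\le\eps^2$, bound the verifier's rejection probability by $O(\|\mu\|_2^2)$ via Lemma~\reflemma{test-L2-magnitude}, and multiply using the fresh samples; your conditioning-on-$\ell$ argument is just a more explicit version of the paper's one-word appeal to independence.
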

\begin{proof}
    By Lemma \reflemma{estimate-L2-BC}, the probability that $\ell > 16\eps^2 \ge 16\|\mu\|_2^2$ is bounded by $\eps^2$. Independently, by Lemma \reflemma{test-L2-magnitude}, the probability to fail the small-$\|\mu\|_2$ test is $O(\|\mu\|_2^2)$. The probability to make both misclassifications is therefore bounded by $O(\eps^2 \|\mu\|_2^2)$.
\end{proof}

\begin{lemma}{estimate-L2-top-level--medium-branch-contribution}
    Consider the run of $\proc{estimate-L2-top-level}(\eta; \mu,\eps)$. The contribution of the medium-$\eps$ branch to the expected complexity is $O(\log(1/\eta) / \eps \|\mu\|_2)$.
\end{lemma}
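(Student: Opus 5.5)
We bound the contribution of the medium branch by (probability of taking the branch) times (conditional expected cost of \proc{find-advice-medium-mu2}). The cost of that procedure depends only on $\mu,\eps,\eta$ and not on $\ell$, because it draws fresh samples. So the contribution is at most $\Pr[\text{medium branch}]\cdot C(\mu)$. By Lemma \reflemma{find-advice-medium-mu2} with error parameter $\eta/4$,
\[ C(\mu) = O\left(\log\frac{1}{\eta\eps}\cdot\left(\frac{\|\mu\|_2^{2/3}}{\eps^2} + \frac{1}{\|\mu\|_2^{4/3}}\right)\right). \]
The medium branch is taken only if $\sqrt{\ell} > 4\eps$, $\sqrt{\ell} \le 2\eps^{2/3}$, and \proc{test-L2-magnitude} rejects. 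The plan is to split into ranges of $\|\mu\|_2$ and show that in each range either $C(\mu)$ is already $O(\log(1/\eta)/\eps\|\mu\|_2)$, or the probability of entering the branch is small enough to absorb the excess.

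\textbf{Middle range} $\eps^3 \lesssim \|\mu\|_2 \lesssim \eps^{3/5}$, up to polylogarithmic factors. Here we check directly that $C(\mu)=O(\log(1/\eta\eps)/\eps\|\mu\|_2)$:
\begin{itemize}
\item $\|\mu\|_2^{2/3}/\eps^2 \le 1/(\eps\|\mu\|_2)$ exactly when $\|\mu\|_2 \le \eps^{3/5}$;
\item $1/\|\mu\|_2^{4/3} \le 1/(\eps\|\mu\|_2)$ exactly when $\|\mu\|_2 \ge \eps^3$.
\end{itemize}
The $\log(1/\eps)$ factor is then absorbed by giving up a $\log$ in the range boundaries. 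This is where the $\poly\log$ slack in the case definitions comes from.

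\textbf{Large range} $\|\mu\|_2 \gtrsim \eps^{3/5}$. To enter the branch we need $\sqrt{\ell}\le 2\eps^{2/3}$. For $\|\mu\|_2 \ge 4\eps^{2/3}$ (say), this means $\ell$ underestimates $\|\mu\|_2^2$ by a constant factor. \proc{estimate-L2-BC} was run with error $\min\{\eps^2,\eta/4\}$, so this happens with probability at most $\eps^2$. The branch then contributes at most
\[ \eps^2 \cdot O\left(\log\tfrac{1}{\eta\eps}\cdot\tfrac{1}{\eps^2}\right) = O(\log(1/\eta\eps)), \]
which is $O(\log(1/\eta)/\eps\|\mu\|_2)$ since $\eps\|\mu\|_2 \le 1$ (again up to the log slack). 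The gap $\eps^{3/5}\lesssim\|\mu\|_2\lesssim\eps^{2/3}$ has to be handled by the middle-range computation. This is the step I expect to be delicate: the branch threshold $2\eps^{2/3}$ and the cost crossover $\eps^{3/5}$ do not match. I would first try to show that the middle-range bound extends up to $\|\mu\|_2 = O(\eps^{2/3})$ with the logarithmic loss. Failing that, I would use the $\eps^2$ misclassification probability above the threshold, and accept that the stated bound may hide polylogarithmic factors in $1/\eps$.

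\textbf{Small range} $\|\mu\|_2 \lesssim \eps^3$. Entering the branch needs both $\sqrt{\ell}>4\eps$ and a rejection by \proc{test-L2-magnitude}. These two events are independent, since they use fresh samples. By Lemma \reflemma{estimate-L2-top-level--first-case-misclassification}, the probability of both is $O(\eps^2\|\mu\|_2^2)$. Multiplying by the dominant cost term $\log(1/\eta\eps)/\|\mu\|_2^{4/3}$ gives
\[ O\left(\eps^2\|\mu\|_2^{2/3}\log\tfrac{1}{\eta\eps}\right) \le O\left(\log(1/\eta)/\eps\|\mu\|_2\right). \]
The other cost term, $\|\mu\|_2^{2/3}/\eps^2$, contributes even less. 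Summing the three ranges gives the claim.
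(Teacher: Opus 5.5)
Your framework matches the paper's. You bound the contribution by the probability of entering the branch times the expected cost of the medium-branch procedure, which, as you note, does not depend on how the branch was selected. You use the misclassification lemma for small $\|\mu\|_2$ and the $\eps^2$ failure probability of $\ell$ for large $\|\mu\|_2$.

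The step you single out as delicate, however, rests on a misordering of exponents. Since $\eps<1$ and $2/3>3/5$, we have $\eps^{2/3}<\eps^{3/5}$; in fact $4\eps^{2/3}\le\eps^{3/5}$ once $\eps\le 4^{-15}$. So the interval $\eps^{3/5}\lesssim\|\mu\|_2\lesssim\eps^{2/3}$ you worry about is empty: the branch threshold sits polynomially \emph{below} the cost crossover, not above it. As written, your proposal ends with a hedge (the bound ``may hide polylogarithmic factors''), so it does not establish the lemma as stated. The hedge is unnecessary, but it has to be removed by an argument.

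The clean fix, which is what the paper does, is to cut at the branch thresholds rather than at the crossover points $\eps^3$ and $\eps^{3/5}$.
\begin{itemize}
\item For $\|\mu\|_2\le\eps$, your small-range computation works verbatim, since the misclassification lemma holds on all of $\|\mu\|_2\le\eps$.
\item For $\|\mu\|_2\ge 4\eps^{2/3}$, use your large-range argument. No log slack is needed there, since $\log(1/\eta\eps)=O(\log(1/\eta)/\eps)$.
\item For $\eps\le\|\mu\|_2\le 4\eps^{2/3}$, bound the entry probability by $1$ and compute
\[ \eps\|\mu\|_2\left(\frac{\|\mu\|_2^{2/3}}{\eps^2}+\frac{1}{\|\mu\|_2^{4/3}}\right)=\frac{\|\mu\|_2^{5/3}}{\eps}+\frac{\eps}{\|\mu\|_2^{1/3}}\le 4^{5/3}\eps^{1/9}+\eps^{2/3}. \]
So the cost is $O(\log(1/\eta\eps)\,(\eps^{1/9}+\eps^{2/3})/\eps\|\mu\|_2)=O(\log(1/\eta)/\eps\|\mu\|_2)$, because $\eps^{1/9}\log(1/\eps)=O(1)$ and $\log(1/\eta)\ge\log 3$.
\end{itemize}
The $\log(1/\eps)$ factor is absorbed by polynomial slack, so no boundary needs to be shifted by polylogarithmic factors.

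Incidentally, your cutoff $4\eps^{2/3}$ is more careful than the paper's third case. The paper claims the $\eps^2$ bound already for $\|\mu\|_2\ge\eps^{2/3}$, where a correct $\ell$ can still satisfy $\sqrt{\ell}\le 2\eps^{2/3}$. The paper's middle case also writes $\|\mu\|_2^2/\eps^2$ in place of $\|\mu\|_2^{2/3}/\eps^2$. With the correct exponent the slack for that term is $\eps^{1/9}$ rather than $\eps$, which still suffices.
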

\begin{proof}
    By Lemma \reflemma{find-advice-medium-mu2}, the sample complexity of $\proc{find-advice-medium-mu2}$, if it is executed, is $O(\log (1/\eta \eps) \cdot (\|\mu\|_2^2 / \eps^2 + 1/\|\mu\|_2^{4/3}))$.

    Case I (bad). $\|\mu\|_2 \le \eps$. In this case, the probability to take the medium-$\eps$ branch is $O(\eps^2 \|\mu\|_2^2)$ (Lemma \reflemma{estimate-L2-top-level--first-case-misclassification}). Therefore, the contribution of the medium-$\eps$ branch in this case is bounded by:
    \[  \eps^2 \|\mu\|_2^2 \cdot O\left(\log \frac{1}{\eta\eps} \cdot \left(\frac{\|\mu\|_2^2}{\eps^2} + \frac{1}{\|\mu\|_2^{4/3}}\right)\right)
        = O\left(\log \frac{1}{\eta\eps} \cdot \left(\|\mu\|_2^4 + \|\mu\|_2^{2/3}\eps^2 \right)\right)
        = O(1) \]

    Case II (good). $\eps \le \|\mu\|_2 \le \eps^{2/3}$. In this case, the expected complexity of the medium-$\eps$ branch is:
        \begin{eqnarray*}
        O\left(\log \frac{1}{\eta\eps} \cdot \left(\frac{\|\mu\|_2^2}{\eps^2} + \frac{1}{\|\mu\|_2^{4/3}}\right)\right)
        &=& O\left(\log \frac{1}{\eta\eps} \cdot \frac{1}{\eps \|\mu\|_2} \cdot \left(\frac{\|\mu\|_2^3}{\eps} + \frac{\eps}{\|\mu\|_2^{1/3}}\right)\right) \\
        &=& O\left(\log \frac{1}{\eta\eps} \cdot \frac{1}{\eps \|\mu\|_2} \left(\frac{(\eps^{2/3})^3}{\eps} + \frac{\eps}{\eps^{1/3}}\right)\right) \\
        &=& O\left(\log \frac{1}{\eta\eps} \cdot \frac{1}{\eps \|\mu\|_2} \cdot \eps^{2/3} \right)
        = O\left(\frac{\log (1/\eta)}{\eps\|\mu\|_2}\right)
    \end{eqnarray*}

    Case III (bad). $\|\mu\|_2 \ge \eps^{2/3}$. In this case, the probability to enter the medium-$\eps$ branch (by $\eps < (1/4)\sqrt{\ell}$ and $\eps \ge 2\ell_2^{3/4}$) is bounded by $\eps^2$ (Lemma \reflemma{estimate-L2-BC}). Therefore, the contribution of the medium-$\eps$ branch in this case is bounded by:
    \begin{eqnarray*}
        \eps^2 \cdot O\left(\log \frac{1}{\eta \eps} \cdot \left(\frac{\|\mu\|_2^2}{\eps^2} + \frac{1}{\|\mu\|_2^{4/3}}\right)\right)
        &=& O\left(\log \frac{1}{\eta \eps} \cdot \left(\|\mu\|_2^2 + \frac{\eps^2}{\|\mu\|_2^{4/3}}\right)\right) \\
        &=& O\left(\log \frac{1}{\eta \eps} \cdot \left(1 + \frac{\eps^2}{(\eps^{2/3})^{4/3}}\right)\right) \\
        &=& O\left(\log \frac{1}{\eta \eps} \cdot (1 + \eps^{10/9})\right)
        = O\left(\log \frac{1}{\eta \eps}\right)
        = O\left(\frac{\log (1/\eta)}{\eps}\right)
    \end{eqnarray*}

    In all cases, the contribution to the expected sample complexity is bounded by $O(\log (1/\eta) / \eps \|\mu\|_2)$.
\end{proof}

\begin{lemma}{estimate-L2-top-level--small-branch-contribution}
    Consider the run of $\proc{estimate-L2-top-level}(\eta; \mu,\eps)$. The contribution of the small-$\eps$ branch to the expected complexity is $O(\log(1/\eta) \cdot (1/\eps \|\mu\|_2 + t/\eps^2))$.
\end{lemma}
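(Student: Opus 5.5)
We need to bound the expected sample cost of the small-branch, i.e. of the call $\proc{find-advice-small-mu2}(\eta/4;\mu,\eps)$, weighted by the probability that the top-level algorithm takes that branch (the event $\sqrt{\ell}\le 4\eps$). By Lemma \reflemma{find-advice-small-mu2}, when executed this call costs $O(1/\eta\eps^{1/3}\|\mu\|_2)$. Since $\eta$ is the procedure's parameter and the branch uses error $\eta/4$, I would first check whether the amplification wrapper lets us pay only $O(\log(1/\eta))$ instead of $1/\eta$. If not, I would accept the $1/\eta$-type factor and note that the top-level Lemma \reflemma{estimate-L2-top-level} already allows a $1/\eta$ factor. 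The plan is then a case split on $\|\mu\|_2$, in the same style as the proof of Lemma \reflemma{estimate-L2-top-level--medium-branch-contribution}.

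\textbf{Case I (good): $\|\mu\|_2\le 8\eps$.} Here the branch is expected, so we bound its cost unconditionally. Since $\eps\le 1$, we have $\eps^{1/3}\ge\eps$, hence $1/\eps^{1/3}\|\mu\|_2\le 1/\eps\|\mu\|_2$. The cost is therefore $O(1/\eps\|\mu\|_2)$ up to the $\eta$-dependent factor, which is within the claimed bound.

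\textbf{Case II (bad): $\|\mu\|_2> 8\eps$.} Taking the branch requires $\ell\le 16\eps^2<\|\mu\|_2^2/4$, i.e. $\ell\notin(1\pm 1/4)\|\mu\|_2^2$. By Lemma \reflemma{estimate-L2-BC}, invoked with error parameter $\min\{\eps^2,\eta/4\}$, this has probability at most $\eps^2$. The contribution is then $\eps^2\cdot O(1/\eta\eps^{1/3}\|\mu\|_2)=O(\eps^{5/3}/\eta\|\mu\|_2)$, which is $O(1/\eta\eps\|\mu\|_2)$ because $\eps\le1$. The $t/\eps^2$ term is not needed for the sampling cost of the advice-finding call itself. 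I would nevertheless state the bound with the $t$ term, so that the lemma can also absorb the downstream \proc{estimate-L2-base} cost with advice $s$, whose expectation is $O(t_\mu+\eps/\|\mu\|_2+1)$ by Lemma \reflemma{find-advice-small-mu2}. That downstream cost is $O(1/\eps\|\mu\|_2+\E[s]/\eps^2)$.

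The remaining concern is the additive $+1$ in $\E[s]$, which contributes $1/\eps^2$. In Case I, $\|\mu\|_2\le 8\eps$ gives $1/\eps^2\le 8/\eps\|\mu\|_2$. In Case II, this term is multiplied by the probability $\eps^2$ of entering the branch, giving $O(1)$.

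The main obstacle I expect is the mismatch in $\eta$-dependence: the lemma promises $\log(1/\eta)$, while Lemma \reflemma{find-advice-small-mu2} gives $1/\eta$. I would try first to replace the inner call by a median-amplified version, using Lemmas \reflemma{amplify-1/3-to-eta} and \reflemma{median-expected-value}, to convert $1/\eta$ into $\log(1/\eta)$. Lemma \reflemma{median-expected-value} keeps $\E[s]$ within a constant factor, so the expected-advice bound survives the amplification.
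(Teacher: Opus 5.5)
There is a genuine gap: you are bounding the wrong branch. In the paper's naming, ``small-$\eps$'' means $\eps$ is small \emph{relative to} $\|\mu\|_2$. It is therefore the large-$\|\mu\|_2$ branch: the call to \proc{find-advice-large-mu2} made when $\sqrt{\ell}>4\eps$, \proc{test-L2-magnitude} rejects, and $\sqrt{\ell}>2\eps^{2/3}$. The call to \proc{find-advice-small-mu2} is what the paper calls the ``large-$\eps$'' branch, and its cost is accounted for separately in the proof of Lemma \reflemma{estimate-L2-top-level}. The two mismatches you noticed give this away. The $\log(1/\eta)$ factor and the $t/\eps^2$ term are exactly the shape of the cost $O(\log(1/\eta)/\eps + t_\mu\log(1/\eta)/\eps^2 + \log(1/\eta\eps\|\mu\|_2)/\|\mu\|_2^2)$ from Lemma \reflemma{find-advice-large-mu2}. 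Your proposed repairs do not close the gap. Median-amplifying the inner call changes the algorithm. Folding in the downstream \proc{estimate-L2-base} cost changes the quantity being bounded, and that cost carries $1/\sqrt{\eta}$ and $1/\eta$ factors rather than $\log(1/\eta)$, so it would not fit the bound anyway.

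What the proof has to control is the $\log(1/\eta\eps\|\mu\|_2)/\|\mu\|_2^2$ learning term of \proc{find-advice-large-mu2} when the branch is entered by mistake. The paper splits on $\|\mu\|_2$ into three regimes.

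\emph{Case $\|\mu\|_2\le\eps$.} Entering requires both the BC classifier to err (probability at most $\eps^2$) and, independently, the verifier to err (probability $O(\|\mu\|_2^2)$). So by Lemma \reflemma{estimate-L2-top-level--first-case-misclassification} the branch is taken with probability $O(\eps^2\|\mu\|_2^2)$. The $\eps^2$ factor alone would not suffice here, because $\eps^2/\|\mu\|_2^2\gg 1/\eps\|\mu\|_2$ once $\|\mu\|_2\ll\eps^3$; this is why the verifier exists.

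\emph{Case $\eps\le\|\mu\|_2\le\eps^{2/3}$.} Entering requires $\ell>4\eps^{4/3}\ge 4\|\mu\|_2^2$, which is a failure of the BC estimate and has probability at most $\eps^2$. The bound $\eps^2/\|\mu\|_2^2\le\eps/\|\mu\|_2$ then tames the learning term.

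\emph{Case $\|\mu\|_2\ge\eps^{2/3}$.} This is the intended branch. Here $1/\|\mu\|_2^2\le 1/\eps^{2/3}\|\mu\|_2$ absorbs the learning term, including its logarithm, into $O(\log(1/\eta)/\eps\|\mu\|_2)$, and the $t\log(1/\eta)/\eps^2$ term is kept as is.

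Your template of bounding the good regime unconditionally and weighting the bad regimes by misclassification probabilities is the right one. It just has to be run on this procedure and its cost bound.
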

\begin{proof}
    By Lemma \reflemma{find-advice-large-mu2}, the sample complexity of \proc{find-advice-large-mu2}, if it is executed, is:
    \[O\left(\frac{\log (1/\eta)}{\eps} + \frac{t_\mu \log (1/\eta)}{\eps^2} + \frac{\log (1 / \eta \eps \|\mu\|_2)}{\|\mu\|_2^2}\right)\]
    Note that, since $t_\mu \le \|\mu\|_3^3/\|\mu\|_2^4 \le 1/\|\mu\|_2$, this can be relaxed (when appropriate) to:
    \[ O\left(\frac{\log (1/\eta)}{\eps^2 \|\mu\|_2} + \frac{\log (1/\eta \eps)}{\|\mu\|_2^2} \right) \]

    Case I (bad). $\|\mu\|_2 \le \eps$. In this case, the probability to take the small-$\eps$ branch is bounded by $O(\eps^2 \|\mu\|_2^2)$ (Lemma \reflemma{estimate-L2-top-level--first-case-misclassification}). Therefore, the contribution of the small-$\eps$ branch in this case is bounded by:
    \[  \eps^2 \|\mu\|_2^2 \cdot O\left(\frac{\log (1/\eta)}{\eps^2 \|\mu\|_2} + \frac{\log (1/\eta \eps)}{\|\mu\|_2^2} \right)
        = O\left(\|\mu\|_2 \log (1/\eta) + \eps^2 \log(1/\eta\eps\|\mu\|_2) \right)
        = O(\log (1/\eta)/\|\mu\|_2)
    \]

    Case II (bad). $\eps \le \|\mu\|_2 \le \eps^{2/3}$. In this case, the probability to enter the small-$\eps$ branch (by $\eps < (1/4)\sqrt{\ell}$ and $\eps < 2\ell_2^{3/4}$) is bounded by $\eps^2$ (Lemma \reflemma{estimate-L2-BC}). Therefore, the contribution of the small-$\eps$ branch in this case is bounded by:
    \begin{eqnarray*}
        \eps^2 \cdot O\left(\frac{\log (1/\eta)}{\eps^2 \|\mu\|_2} + \frac{\log (1/\eta \eps)}{\|\mu\|_2^2} \right)
        &=& O\left(\frac{\log(1/\eta)}{\|\mu\|_2} + \frac{\eps^2 \log (1/\eta\eps)}{\|\mu\|_2^2}\right) \\
        \text{[Since $\|\mu\|_2 \ge \eps$]} &=& O\left(\frac{\log(1/\eta)}{\|\mu\|_2} + \frac{\eps \log (1/\eta\eps)}{\|\mu\|_2}\right)
        = O\left(\frac{\log (1/\eta)}{\eps \|\mu\|_2}\right)
    \end{eqnarray*}

    Case III (good). $\|\mu\|_2 \ge \eps^{2/3}$. The expected complexity of the small-$\eps$ branch is:
    \begin{eqnarray*}
        O\left(\frac{\log (1/\eta)}{\eps} + \frac{t \log (1/\eta)}{\eps^2} + \frac{\log (1 / \eta \eps \|\mu\|_2)}{\|\mu\|_2^2}\right)
        &=& O\left(\frac{\log (1/\eta)}{\eps} + \frac{t \log (1/\eta)}{\eps^2} + \frac{\log (1 / \eta \eps)}{\eps^{2/3} \|\mu\|_2}\right) \\
        &=& O\left(\frac{\log (1/\eta)}{\eps} + \frac{t \log (1/\eta)}{\eps^2} + \frac{\log (1 / \eta)}{\eps \|\mu\|_2}\right) \\
        &=& O\left(\log \frac{1}{\eta} \cdot \left(\frac{1}{\eps \|\mu\|_2} + \frac{t}{\eps^2}\right)\right)
    \end{eqnarray*}
\end{proof}

\repprovelemma{estimate-L2-top-level}
\begin{proof}
    In the following table, we describe the relevant branches in each range of $\eps$, given that $\ell \in (1 \pm 1/4)\|\mu\|_2^2$ with probability at least $1 - \eta/4$. There is an additional column of what the large-$\eps$ test determines with probability at least $1-\eta/4$. Some of the ranges can be empty if $\|\mu\|_2 \approx 1$.
    \[\begin{array}{llllll}
        \text{Minimum $\|\mu\|_2$} & \text{Maximum $\|\mu\|_2$} & \text{Large-$\eps$} & \text{Medium-$\eps$} & \text{Small-$\eps$} & \text{\proc*{test-L2-magnitude}} \\
        0^+ & \eps & \text{V} & \text{} & \text{} & \accept \\
        \eps & 2\eps & \text{V} & \text{} & \text{} & \text{*} \\
        2\eps & 8\eps & \text{V} & \text{V} & \text{} & \reject \\
        8\eps & \eps^{2/3} & \text{} & \text{V} & \text{} & \reject \\
        \eps^{2/3} & 4\eps^{2/3} & \text{} & \text{V} & \text{V} & \reject \\
        4\eps^{2/3} & 1 & \text{} & \text{} & \text{V} & \reject
    \end{array}\]

    Therefore, for every magnitude of $\|\mu\|_2$, the algorithm enters the correct branch with probability at least $1 - \eta/4$, and executes it with probability at least $1 - \eta/4$. (Recall that the small-$\|\mu\|_2$ branch is always executed if taken, since \proc*{test-L2-magnitude} only applies to the medium-$\|\mu\|_2$ and the large-$\|\mu\|_2$ branches).

    Additionally, if the algorithm takes the correct branch and executes it, then there is a probability of $1-\eta/4$ to correctly obtain $s \ge t_\mu$ (Lemmas \reflemma{find-advice-small-mu2}, \reflemma{find-advice-medium-mu2}, \reflemma{find-advice-large-mu2}) and a probability of $1-\eta/4$ to obtain an output in the range $(1 \pm \eps)\|\mu\|_2^2$ (Lemma \reflemma{estimate-L2-base}). By the union bound, the probability to have a correct output is at least $1-4\eta/4 = 1-\eta$.

    Moreover, the expected output is $\|\mu\|_2^2$, regardless of the value of $s$ (Lemma \reflemma{estimate-L2-base}).

    For complexity:
    \begin{itemize}
        \item The estimation of $\ell$ costs $O(\log(1/\eta\eps)/\|\mu\|_2)$ samples (Lemma \reflemma{estimate-L2-BC}).
        \item The small-$\|\mu\|_2$ test, if executed, costs $O(\log(1/\eta) / \eps \|\mu\|_2)$ samples (Lemma \reflemma{test-L2-magnitude}).
        \item The large-$\eps$ branch, if executed, costs $O(\log(1/\eta) / \eps^{1/3} \|\mu\|_2)$ samples (Lemma \reflemma{find-advice-small-mu2}).
        \item The medium-$\|\mu\|_2$ branch contributes $O(\log(1/\eta) / \eps \|\mu\|_2)$ samples to the expected complexity (Lemma \reflemma{estimate-L2-top-level--medium-branch-contribution}).
        \item The large-$\|\mu\|_2$ branch contributes $O(\log(1/\eta) \cdot (1 / \eps \|\mu\|_2 + t/\eps^2))$ samples to the expected complexity (Lemma \reflemma{estimate-L2-top-level--small-branch-contribution}).
        \item Generating the output costs $O(1/\sqrt{\eta}\eps\|\mu\|_2 + \E[s]/\eta\eps^2)$.
    \end{itemize}
    That is, the expected sample complexity is $O\left(\frac{1}{\sqrt{\eta} \eps \|\mu\|_2} + \frac{\E[s]}{\eta \eps^2}\right)$. 

    For $\E[s]$:
    \begin{itemize}
        \item All branches but the small-$\|\mu\|_2$ branch provide $\E[s] = O(t_\mu + \eps/\|\mu\|_2)$ (Lemmas \reflemma{find-advice-medium-mu2}, \reflemma{find-advice-large-mu2}).
        \item The large-$\eps$ branch provide $\E[s] = O(t_\mu + \eps/\|\mu\|_2 + 1)$ if executed.
        \item If $\|\mu\|_2 \ge 8\eps$, then the probability to enter the small-$\|\mu\|_2$ branch is bounded by $\eps^2$, and therefore, its expected contribution to $\E[s]$ is reduced to $O(t_\mu + \eps/\|\mu\|_2 + \eps)$.
        \item If $\|\mu\|_2 \le 8\eps$, then the contribution of the small-$\|\mu\|_2$ branch, which is $O(t_\mu + \eps/\|\mu\|_2 + 1)$, can be relaxed to $O(t_\mu + \eps/\|\mu\|_2)$.
    \end{itemize}
    In all cases, the expected value of $s$ is bounded by $O(t_\mu + \eps/\|\mu\|_2)$. Therefore, the expected sample complexity of \proc*{estimate-L2-top-level} is $O\left(\frac{1}{\eta} \cdot \left(\frac{1}{\eps \|\mu\|_2} + \frac{t}{\eps^2}\right)\right)$.
\end{proof}

\section{Finding an advice when $\|\mu\|_2$ is small}
\label{sec:small-mu2}

If $\|\mu\|_2 = O(\eps)$, then $t_\mu = \Omega(1)$ or $t_\mu + 1 = O(\eps/\|\mu\|_2)$. Therefore, an estimation of $\|\mu\|_3^3 / \|\mu\|_2^4 = t_\mu + 1$ is sufficiently accurate as an estimation of $t_\mu$. If $\|\mu\|_2 = O(\eps)$, then we only obtain an estimation of $t_\mu + 1$, which can still be used as an advice since it is greater than $t_\mu$, but more expensive.

To estimate $t_\mu$, we first estimate $\ell_2 \approx \|\mu\|_2^2$ within a $(1 \pm O(1))$-factor and then estimate $\ell_3 \approx \|\mu\|_3^3$ within additive error bounded by $\max\{ \eps \ell_2^{3/2} , O(1) \cdot \|\mu\|_3^3\}$.

If the error is dominated by the first term $\eps \ell^{3/2} \approx \eps \|\mu\|_2^3$, then we obtain a test of $\|\mu\|_3^3 / \|\mu\|_2^4$ for being smaller than $O(\eps/\|\mu\|_2)$. Passing this test results in an output of $O(\eps/\|\mu\|_2 + 1)$.

If the error is dominated by the second term $O(1) \cdot \|\mu\|_3^3$, then we obtain a $(1+O(1))$-estimation of $\|\mu\|_3^3 / \|\mu\|_2^4 = (1 \pm O(1))t_\mu \pm O(1)$, which is sufficiently accurate in the small-$\|\mu\|_2$ case. \algforprocshort{find-advice-small-mu2}

\begin{proc-algo}{find-advice-small-mu2}{\eta; \mu, \eps}
    \algoutput{$t \le X \le 2t_\mu + 3\eps/\|\mu\|_2 + 2$ with probability $\ge 1-\eta$}
    \algmoments{$\E[X] = O(t + \eps/\|\mu\|_2 + 1)$}
    \algcomplexity{$O(1/\eta \eps^{1/3} \|\mu\|_2)$}
    \begin{code}
        \algitem Let $\ell_2 \gets \proc{estimate-L2-moments}(\eta/2; \mu, 1/1000)$.
        \algitem Let $a \gets \sqrt[3]{(9/2)\eps \ell_2^{3/2}}$.
        \algitem Let $\ell_3 \gets \proc{estimate-L3-magnitude}(\eta/2; \mu, a)$.
        \algitem Return $(1+1/200)(\ell_3 + a^3) / \ell_2^2$.
    \end{code}
\end{proc-algo}

\repprovelemma{find-advice-small-mu2}
\begin{proof}
    By Lemma \reflemma{estimate-L2-moments}: the expected cost of estimating $\ell_2$ is $O(\log (1/\eta)/ \|\mu\|_2)$, and $\E[\ell_2] = O(\|\mu\|_2^2)$ and $\E[1/\ell_2] = O(1/\|\mu\|_2^2)$. Also, with probability at least $1-\eta/2$, $\ell_2 \le (1 + 1/1000)\|\mu\|_2^2$, and in this case, $a^3 \ge (9/2) \eps \ell_2^{3/2} \ge 3\eps \|\mu\|_2^3$.

    The expected value of $a$ is:
    \begin{eqnarray*}
        \E[a]
        = (9/2)^{1/3} \eps^{1/3} \E[\sqrt{\ell_2}]
        \le (9/2)^{1/3} \eps^{1/3} \sqrt{\E[\ell_2]}
        = O(\eps^{1/3}\|\mu\|_2)
    \end{eqnarray*}
    
    The expected value of $1/a$ is:
    \begin{eqnarray*}
        \E[1/a]
        = (2/9)^{1/3} \E[\sqrt{1/\ell_2}] / \eps^{1/3}
        \le (2/9)^{1/3} \sqrt{\E[1/\ell_2]} / \eps^{1/3}
        = O(1/\eps^{1/3}\|\mu\|_2)
    \end{eqnarray*}

    The expected cost of estimating $\ell_3$ is $O(\E[1/a] / \eta) = O(1 / \eta \eps^{1/3} \|\mu\|_2)$, and $\E[\ell_3] = \|\mu\|_3^3$. Additionally, with probability at least $1-\eta/2$, $\ell_3 \in \|\mu\|_3^3 \pm \max\{ \|\mu\|_3^3/1000, a^3 \}$.
    
    And:
    \begin{eqnarray*}
        X
        &=& (1 + 1/200) \frac{(\ell_3 + a^3)}{\ell_2^2} \\
        &\ge& (1 + 1/200) \frac{(((1-1/1000)\|\mu\|_3^3 - a^3) + a^3}{(1+1/1000)^2 \|\mu\|_2^4}
        \ge \frac{(1+1/200)(1-1/1000)}{(1+1/1000)^2} \frac{\|\mu\|_3^3}{\|\mu\|_2^4}
        \ge \frac{\|\mu\|_3^3}{\|\mu\|_2^4} \ge t_\mu
    \end{eqnarray*}

    For the expected value,
    \[  \E[X]
        = O((\E[\ell_3] + \E[a]) \E[1/\ell_2^2])
        = O((\|\mu\|_3^3 + \eps\|\mu\|_2^3) / \|\mu\|_2^4)
        = O((t + 1) + \eps/\|\mu\|_2) \]
\end{proof}

\section{Finding an advice when $\|\mu\|_2$ is medium}
\label{sec:medium-mu2}

In this section we provide the logic of estimating $t_\mu$ directly by its definition, $t_\mu = \|\mu\|_3^3 / \|\mu\|_2^4 - 1$, and use this estimation to find an advice when $\|\mu\|_2$ is considered ``medium'' with respect to $\eps$.

\paragraph{Direct estimation of $t$}
\defproc{estimate-t-directly}{Estimate-$t$-Directly}

We estimate $t_\mu = \frac{\|\mu\|_3^3}{(\|\mu\|_2^2)^2} - 1$ directly by estimating $\|\mu\|_2^2$ and $\|\mu\|_3^3$. The result is between $\Omega(t_\mu - \delta)$ and $O(t_\mu + \delta)$ with high probability, but can be negative. Therefore, we use the maximum of the result value and $0$. Unfortunately, this can increase the expected value of the output by $\Omega(1)$. To reduce the additive penalty in the expected output to $O(\delta)$, we amplify the success probability to $O(\log \delta^{-1})$ and take the minimum with an alternative non-negative random variable whose expected value is $O(t_\mu + 1)$. \algforprocshort{estimate-t-directly}

\begin{proc-algo}{estimate-t-directly}{\eta; \mu, \delta}
    \alginput{$0 < \delta \le 1$}
    \algoutput{$X \ge t_\mu$ with probability $\ge 1-\eta$}
    \algmoments{$\E[X] = O(t_\mu + \delta)$}
    \algcomplexity{$O(\log (1/\eta \delta) / \delta^2 \|\mu\|_2^{4/3})$}
    \begin{code}
        \algitem Let $\eta' = \min\{\eta, \delta\}$.
        \algitem Let $\ell_{22} \gets \proc{estimate-L2-moments}(\eta'/4; \mu, \delta/40)$.
        \algitem Let $\ell_{33} \gets \proc{estimate-L3}(\eta'/4; \mu, \delta/30)$.
        \algitem Let $Y_1 \gets \max\{0, \ell_{33} / (\ell_{22})^2 - 1\}$.
        \algitem Let $\ell'_{22} \gets \proc{estimate-L2-moments}(\eta'/4; \mu, 1/2)$.
        \algitem Let $\ell'_{33} \gets \proc{estimate-L3-amplified}(\eta'/4; \mu, 1/2)$.
        \algitem Let $Y_2 \gets (2\ell'_{33}) / (\ell'_{22}/(3/2))^2$.
        \algitem Return $\min\{2(Y_1 + \delta), Y_2\}$.
    \end{code}
\end{proc-algo}

\begin{lemma}{estimate-t-directly--combined-estimation}
    For $0 < \delta \le 1$, let $\ell_{22} \in (1 \pm \delta/40)\|\mu\|_2^2$ and $\ell_{33} \in (1 \pm \delta/30)\|\mu\|_3^3$. In this setting, $\ell_{33} / \ell_{22}^2 - 1 \in [t_\mu/2 - \delta, 2t_\mu + \delta]$.
\end{lemma}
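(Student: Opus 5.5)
Write $R = \|\mu\|_3^3/\|\mu\|_2^4 = t_\mu + 1 \ge 1$; the inequality $R \ge 1$ follows from Cauchy--Schwarz, as used in Lemma \reflemma{estimate-L3}. The plan is to bound the ratio $\ell_{33}/\ell_{22}^2$ multiplicatively around $R$ and then convert the multiplicative error into the claimed affine bounds on $t_\mu$.

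\textbf{Step 1 (multiplicative sandwich).} From the hypotheses,
\[ \frac{1-\delta/30}{(1+\delta/40)^2} R \le \frac{\ell_{33}}{\ell_{22}^2} \le \frac{1+\delta/30}{(1-\delta/40)^2} R. \]
Using $\delta \le 1$, I will bound these factors by $1 - c_1\delta$ and $1 + c_2\delta$ with explicit small constants. A first-order expansion gives $c_1 \approx 1/30 + 2/40 \approx 0.084$ and $c_2 \approx 0.084$ plus a second-order correction. For example, $(1-\delta/40)^{-2} \le 1 + \delta/19$ when $\delta \le 1$, which makes $c_2 \le 1/8$ comfortably. It is enough to establish $c_1, c_2 \le 1/2$, though I will aim for cleaner constants.

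\textbf{Step 2 (upper bound).} Subtracting $1$,
\[ \ell_{33}/\ell_{22}^2 - 1 \le (1+c_2\delta)(t_\mu+1) - 1 = t_\mu + c_2\delta\, t_\mu + c_2\delta. \]
Since $\delta \le 1$ and $c_2 \le 1$, we have $c_2\delta\, t_\mu \le t_\mu$ and $c_2\delta \le \delta$. This gives $\ell_{33}/\ell_{22}^2 - 1 \le 2t_\mu + \delta$.

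\textbf{Step 3 (lower bound).} Similarly,
\[ \ell_{33}/\ell_{22}^2 - 1 \ge (1-c_1\delta)(t_\mu+1) - 1 = t_\mu - c_1\delta\, t_\mu - c_1\delta. \]
With $c_1\delta \le 1/2$, the term $t_\mu - c_1\delta\, t_\mu$ is at least $t_\mu/2$, and $c_1\delta \le \delta$. Hence $\ell_{33}/\ell_{22}^2 - 1 \ge t_\mu/2 - \delta$.

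The only real work is verifying the elementary constant inequalities in Step 1 over the whole range $0<\delta\le 1$. I would handle this by checking monotonicity of the factor $(1+\delta/30)/(1-\delta/40)^2$ and evaluating it at $\delta = 1$; the factor $(1-\delta/30)/(1+\delta/40)^2$ is treated the same way. Both factors are evidently within $[1-\delta/2,\, 1+\delta/2]$, so there is substantial slack in the constants.
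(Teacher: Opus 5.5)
Your proof is correct and follows the paper's argument: the paper also sandwiches $\ell_{33}/\ell_{22}^2$ multiplicatively around $R=t_\mu+1$, rewrites $(1\pm c\delta)R-1=(1\pm c\delta)t_\mu\pm c\delta$, and absorbs the constants, just with the looser choice $c=1/3$ on both sides. One small precision: monotonicity of the factor plus its value at $\delta=1$ gives only a constant bound, whereas you need one of the form $1+c_2\delta$ (convexity with value $1$ at $\delta=0$ suffices). Your explicit bound $(1-\delta/40)^{-2}\le 1+\delta/19$ already covers the upper factor, and Bernoulli's $(1+\delta/40)^{-2}\ge 1-\delta/20$ does the same for the lower one.
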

\begin{proof}
    Recall that $t_\mu = \|\mu\|_3^3 / \|\mu\|_2^4 - 1$.

    For the upper bound:
    \[  \frac{\ell_{33}}{\ell_{22}^2} - 1
        \le \frac{(1 + \delta/30)\|\mu\|_3^3}{(1 - \delta/40)^2 \|\mu\|_2^4} - 1
        \le \left(1 + \frac{1}{3}\delta\right) \frac{\|\mu\|_3^3}{\|\mu\|_2^4} - 1
        = \left(1 + \frac{1}{3}\delta\right)\left(\frac{\|\mu\|_3^3}{\|\mu\|_2^4} - 1\right) + \frac{1}{3}\delta
        \le 2t_\mu + \delta
    \]

    For the lower bound:
    \[  \frac{\ell_{33}}{\ell_{22}^2} - 1
        \ge \frac{(1 - \delta/30)\|\mu\|_3^3}{(1 + \delta/40)^2\|\mu\|_2^4} - 1
        \ge \left(1 - \frac{1}{3}\delta\right)\frac{\|\mu\|_3^3}{\|\mu\|_2^4} - 1
        = \left(1 - \frac{1}{3}\delta\right)\left(\frac{\|\mu\|_3^3}{\|\mu\|_2^4} - 1\right) - \frac{1}{3}\delta
        \ge \frac{1}{2} t_\mu - \delta \]
\end{proof}

\begin{lemma}{estimate-t-directly}
    Let $X$ be the output of $\proc{estimate-t-directly}(\eta; \mu, \delta)$. For every error parameter $\eta$, an input distribution $\mu$ and $ 0 < \delta \le 1$, considering $t_\mu = (\|\mu\|_3^3 - \|\mu\|_2^4) / \|\mu\|_2^4$:
    \begin{itemize}
        \item With probability at least $1 - \eta$, $t \le X \le 15t_\mu + 3\delta$.
        \item The expected output is $\E[X] = O(t + \delta)$.
        \item The expected sample complexity is $O(\log (1/\eta \delta) / \delta^2 \|\mu\|_2^{4/3})$.
    \end{itemize}
\end{lemma}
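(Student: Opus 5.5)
The proof follows the structure of \proc{estimate-t-directly}: first I show that $Y_1$ is accurate on a good event, then that $Y_2$ is a safe upper bound, and finally I control the expectation through the minimum. Let $\eta' = \min\{\eta,\delta\}$. Let $G$ be the event that all four subroutines succeed:
\begin{itemize}
\item $\ell_{22} \in (1\pm\delta/40)\|\mu\|_2^2$ and $\ell_{33}\in(1\pm\delta/30)\|\mu\|_3^3$;
\item $\ell'_{22}\in(1\pm 1/2)\|\mu\|_2^2$ and $\ell'_{33}\in(1\pm1/2)\|\mu\|_3^3$.
\end{itemize}
By Lemma \reflemma{estimate-L2-moments}, Lemma \reflemma{estimate-L3}, Observation \ref{obs:estimate-L3-amplified} and a union bound, $\Pr[\neg G]\le \eta' \le \min\{\eta,\delta\}$.

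\textbf{Correctness on $G$.} By Lemma \reflemma{estimate-t-directly--combined-estimation},
\[ \ell_{33}/\ell_{22}^2-1\in[t_\mu/2-\delta,\,2t_\mu+\delta]. \]
Hence $Y_1\ge t_\mu/2-\delta$, so $2(Y_1+\delta)\ge t_\mu$. Also $Y_1\le 2t_\mu+\delta$, which gives $2(Y_1+\delta)\le 4t_\mu+4\delta$. For $Y_2$: on $G$, $\ell'_{22}/(3/2)\le\|\mu\|_2^2$ and $2\ell'_{33}\ge\|\mu\|_3^3$, so
\[ Y_2\ge \|\mu\|_3^3/\|\mu\|_2^4=t_\mu+1\ge t_\mu. \]
Therefore $X=\min\{2(Y_1+\delta),Y_2\}\ge t_\mu$ and $X\le 4t_\mu+4\delta$. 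This matches the claimed form $15t_\mu+3\delta$ up to constants. If the exact constant $3\delta$ matters, I would re-derive the interval with the sharper factors $(1\pm\delta/3)$ from the proof of that lemma; I expect only constant bookkeeping here.

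\textbf{Expectation.} I split according to $G$:
\[ \E[X]\le \E[2(Y_1+\delta)\mathbb 1_G]+\E[Y_2\mathbb 1_{\neg G}]. \]
The first term is $O(t_\mu+\delta)$ by the bound above. The second term is the main obstacle, because $Y_2$ and $\mathbb 1_{\neg G}$ are correlated and $Y_2$ itself is not small. My plan is to write
\[ Y_2 = 4.5\,\ell'_{33}\cdot(1/\ell'_{22})^2, \]
where $\ell'_{33}$ and $\ell'_{22}$ are independent. By Observation \ref{obs:estimate-L3-amplified} and Lemma \reflemma{estimate-L2-moments}, $\E[\ell'_{33}]=O(\|\mu\|_3^3)$ and $\E[(\ell'_{22})^{-2}]=O(\|\mu\|_2^{-4})$, so $\E[Y_2]=O(t_\mu+1)$. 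To bring in the factor $\Pr[\neg G]\le\delta$:
\begin{itemize}
\item On the failure events of the first two estimators, which are independent of $Y_2$, the contribution is at most $\Pr\cdot\E[Y_2]\le(\eta'/2)\,O(t_\mu+1)=O(\delta(t_\mu+1))$.
\item On the failure events of $\ell'_{22}$ or $\ell'_{33}$ themselves, I would use H\"older's inequality, $\E[Y_2\mathbb 1_E]\le \E[Y_2^p]^{1/p}\Pr[E]^{1-1/p}$. The needed higher moments are available for $1/\ell'_{22}$ (all negative moments are preserved). For $\ell'_{33}$, if only its first moment is available, I would instead bound $\min\{\cdot\}$ by $2(Y_1+\delta)$ on those events, since $Y_1$ is independent of them. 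This requires $\E[Y_1]=O(t_\mu+1)$, which I would get from $\E[\ell_{33}]=\|\mu\|_3^3$ together with the preserved negative moments of $\ell_{22}$.
\end{itemize}
Either route gives the second term $O(\delta(t_\mu+1))=O(t_\mu+\delta)$, using $\delta\le1$.

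\textbf{Complexity.} I sum the four calls:
\begin{itemize}
\item the two $L_2$ calls cost $O(\log(1/\eta')/\delta\|\mu\|_2)$;
\item the two $L_3$ calls cost $O(\log(1/\eta')/\delta^2\|\mu\|_2^{4/3})$, since \proc{estimate-L3} would be amplified to cost $\log(1/\eta')$.
\end{itemize}
Since $\|\mu\|_2\le1$, the $L_3$ term dominates, and $\log(1/\eta')\le\log(1/\eta\delta)$.
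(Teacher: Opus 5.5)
Your proof is correct and has the same skeleton as the paper's: the combined-estimation lemma on the good event, independence of $Y_1$ and $Y_2$ for the expectation, and summing the four costs. However, your expectation argument takes an unnecessary detour.

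The correlation you call the main obstacle arises only because you put the success of $\ell'_{22},\ell'_{33}$ into $G$. The paper instead conditions solely on the event $\{Y_1\le 2t_\mu+\delta\}$. That event depends only on $\ell_{22},\ell_{33}$, and on it $X\le 2(Y_1+\delta)\le 4t_\mu+4\delta$ whatever $Y_2$ is. Off this event the paper bounds $X\le Y_2$ and uses independence. This gives $\E[X]\le 4t_\mu+4\delta+\Pr[Y_1>2t_\mu+\delta]\,\E[Y_2]$, with $\Pr[Y_1>2t_\mu+\delta]\le\eta'\le\delta$ and $\E[Y_2]=O(t_\mu+1)$, in one line.

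Your fallback is sound, just an extra step. It bounds $X$ by $2(Y_1+\delta)$ when a primed estimator fails, using $\E[Y_1]\le\E[\ell_{33}]\,\E[\ell_{22}^{-2}]=O(t_\mu+1)$. The H\"older alternative, however, would not work. It yields only a factor $\Pr[E]^{1-1/p}\approx\delta^{1-1/p}$ in place of $\delta$, and $(t_\mu+1)\delta^{1-1/p}$ is not $O(t_\mu+\delta)$ when $t_\mu$ is small. So ``either route gives $O(\delta(t_\mu+1))$'' overstates it.

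Two smaller remarks.

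\textbf{The upper bound.} The paper's proof explicitly establishes only $X\ge t_\mu$ and never spells out $X\le 15t_\mu+3\delta$. Your bound $4t_\mu+4\delta$ does not imply it when $t_\mu<\delta/11$. Your proposed repair does close it: from $\ell_{33}/\ell_{22}^2-1\le(1+\delta/3)t_\mu+\delta/3$ you get $X\le 2(Y_1+\delta)\le\frac{8}{3}(t_\mu+\delta)\le 15t_\mu+3\delta$ for $\delta\le 1$.

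\textbf{The cost of $\ell_{22}$.} The correctness lemma for the moment-preserving $L_2$ estimator gives a cost of $O(\log(1/\eta')/\delta^2\|\mu\|_2)$, not $O(\log(1/\eta')/\delta\|\mu\|_2)$. This is still dominated by the $L_3$ term, because $\|\mu\|_2^{4/3}\le\|\mu\|_2$.
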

\begin{proof}
    Recall that $t_\mu = \|\mu\|_3^3 / \|\mu\|_2^4 - 1$.

    For the upper bound:
    \[  \frac{\ell_{33}}{\ell_{22}^2} - 1
        \le \frac{(1 + \delta/30)\|\mu\|_3^3}{(1 - \delta/40)^2 \|\mu\|_2^4} - 1
        \le \left(1 + \frac{1}{3}\delta\right) \frac{\|\mu\|_3^3}{\|\mu\|_2^4} - 1
        = \left(1 + \frac{1}{3}\delta\right)\left(\frac{\|\mu\|_3^3}{\|\mu\|_2^4} - 1\right) + \frac{1}{3}\delta
        \le 2t_\mu + \delta
    \]

    For the lower bound:
    \[  \frac{\ell_{33}}{\ell_{22}^2} - 1
        \ge \frac{(1 - \delta/30)\|\mu\|_3^3}{(1 + \delta/40)^2\|\mu\|_2^4} - 1
        \ge \left(1 - \frac{1}{3}\delta\right)\frac{\|\mu\|_3^3}{\|\mu\|_2^4} - 1
        = \left(1 - \frac{1}{3}\delta\right)\left(\frac{\|\mu\|_3^3}{\|\mu\|_2^4} - 1\right) - \frac{1}{3}\delta
        \ge \frac{1}{2} t_\mu - \delta \]
\end{proof}

\begin{proof}
    For correctness: by the union bound, with probability at least $1-4(\eta'/4) \ge 1-\eta$:
    \begin{itemize}
        \item $\ell_{22} \in (1 \pm \delta/40)\|\mu\|_2^2$.
        \item $\ell_{33} \in (1 \pm \delta/30)\|\mu\|_3^3$.
        \item $\ell'_{22} \le (3/2) \|\mu\|_2^2$.
        \item $\ell'_{33} \ge (1/2) \|\mu\|_3^3$.
    \end{itemize}
    Therefore, by Lemma \reflemma{estimate-t-directly--combined-estimation}, $Y_1 \ge \max\{0, \frac{1}{2}t_\mu - \delta\}$ and $Y_1 \le 2t_\mu + \delta$. Also, by directly using the definition, $Y_2 \ge (2 \cdot (1/2)\|\mu\|_3^3) / ((3/2)\|\mu\|_2^2 / (3/2))^2 = \|\mu\|_3^3 / \|\mu\|_2^4 = t_\mu + 1$. Combined, $\min\{2(Y_1 + \delta), Y_2\} \ge t_\mu$.

    For expected output: since $Y_1$ and $Y_2$ are independent,
    \begin{eqnarray*}
        \E[X]
        = \E[\min\{2(Y_1 + \delta), Y_2\}]
        &\le& (4t_\mu + 4\delta) + \Pr[Y_1 > 2t_\mu + \delta]\E[Y_2] \\
        &\le& (4t_\mu + 4\delta) + \eta' \cdot \E[\ell'_{33}]\E[1/(\ell'_{22})^2] \\
        (*) &\le& (4t_\mu + 4\delta) + \delta \cdot O(\|\mu\|_3^3) \cdot O(1/\|\mu\|_2^4) \\
        &=& (4t_\mu + 4\delta) + O(\delta) \cdot O(t_\mu + 1)
        = O(t_\mu + \delta)
    \end{eqnarray*}
    $(*)$: for this transition we use Observation \ref{obs:estimate-L3-amplified} (preserving the first moment of $\|\mu\|_3^3$) and Lemma \reflemma{estimate-L2-moments} (preserving the second negative moment of $\|\mu\|_2^2$).
    
    For complexity, we consider the cost of each estimation:
    \begin{itemize}
        \item $\ell_{22}$: $O(\log (1/\eta \delta) / \delta^2 \|\mu\|_2)$ (Lemma \reflemma{estimate-L2-moments}).
        \item $\ell_{33}$: $O(\log (1/\eta \delta) / \delta^2 \|\mu\|_2^{4/3})$ (Observation \ref{obs:estimate-L3-amplified}).
        \item $\ell'_{22}$: $O(\log (1/\eta \delta) / \|\mu\|_2)$ (Lemma \reflemma{estimate-L2-moments}).
        \item $\ell'_{33}$: $O(\log (1/\eta \delta) / \|\mu\|_2^{4/3})$ (Observation \ref{obs:estimate-L3-amplified}).
    \end{itemize}
\end{proof}

\paragraph{Finding an advice}
First, we roughly estimate $\|\mu\|_2^2$ and use it to define $\delta \approx \eps/\|\mu\|_2$ (both with high probability and in expectation). Then we use this $\delta$ to estimate $t_\mu$ through \proc{estimate-t-directly}. \algforprocshort{find-advice-medium-mu2}

\begin{proc-algo}{find-advice-medium-mu2}{\eta; \mu, \eps}
    \algoutput{$X \ge t_\mu$ with probability $\ge 1-\eta$}
    \algmoments{$\E[X] = O(t + \eps/\|\mu\|_2)$}
    \algcomplexity{$O(\log \frac{1}{\eta \eps} \cdot (\|\mu\|_2^{2/3} / \eta \eps^2 + 1/\eta \|\mu\|_2^{4/3}))$}
    \begin{code}
        \algitem $\ell_2 \gets \proc{estimate-L2-moments}(1/3; \mu, 1/10)$.
        \algitem Let $\delta \gets \min\{1, \eps / \sqrt{\ell_2}\}$.
        \algitem Return $\proc{estimate-t-directly}(\eta; \mu, \delta)$.
    \end{code}
\end{proc-algo}

\repprovelemma{find-advice-medium-mu2}
\begin{proof}
    For correctness, observe that by Lemma \reflemma{estimate-t-directly}, the result is not-smaller than $t_\mu$ with probability at least $1-\eta$, regardless of $\delta$.

    For expected value:
    \[  \E[X]
        = O(t_\mu + \E[\delta])
        = O(t_\mu + \eps \E[1/\sqrt{\ell_2}]
        = O(t_\mu + \eps / \|\mu\|_2)
    \]
    Where the first transition is by Lemma \reflemma{estimate-t-directly} and the second transition is by Lemma \reflemma{estimate-L2-moments} (preserving the negative moments).

    For complexity: by Lemma \reflemma{estimate-L2-moments}, the cost of estimating $\delta$ is $O(1/\|\mu\|_2)$. By Lemma \reflemma{estimate-t-directly}, the expected sample complexity of the $t$-estimation call is:
    \begin{eqnarray*}
        \E\left[\frac{\log (1/\eta \delta)}{\delta^2 \|\mu\|_2^{4/3}}\right]
        \le \frac{\log (1/\eta)}{\|\mu\|_2^{4/3}} + \frac{\log (1/\eta \eps)}{\|\mu\|_2^{4/3}} \cdot \frac{\E[\ell_2]}{\eps^2}
        = O\left(\frac{\log (1/\eta)}{\|\mu\|_2^{4/3}} + \frac{\log (1/\eta \eps)\|\mu\|_2^{2/3}}{\eps^2}\right)
    \end{eqnarray*}
    Where in the second transition we use Lemma \reflemma{estimate-L2-moments} (preserving the first moment).
\end{proof}

\section{Distributions over a finite domain}
\label{sec:finite-domain}

\subsection{Algebraic behavior of $t$}
\label{sec:finite-domain:subsec:algebraic-behavior}

In this subsection we prove the statements about the algebraic behavior of $t$.

\repprovelemma{mu22-explicit-by-deltas}
\begin{proof}
    Observe that $\sum_{i=1}^N \delta_i = 0$, and therefore,
    \begin{eqnarray*}
        \|\mu\|_2^2
        = \sum_{i=1}^N \frac{(1+\delta_i)^2}{N^2}
        = \frac{1}{N} + \frac{1}{N^2} \sum_{i=1}^N \delta_i^2
        = \frac{1}{N}\left(1 + \frac{1}{N} \sum_{i=1}^N \delta_i^2\right)
    \end{eqnarray*}
\end{proof}

\begin{lemma}{mu33-explicit-by-deltas}
    $\|\mu\|_3^3 = \frac{1}{N^2}\left(1 + \frac{3}{N}\sum_{i=1}^N \delta_i^2 + \frac{1}{N}\sum_{i=1}^N \delta_i^3\right)$
\end{lemma}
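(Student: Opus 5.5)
The identity follows by direct expansion, in the same way as the proof of Lemma \reflemma{mu22-explicit-by-deltas}. We substitute $\mu(i) = (1+\delta_i)/N$ into the definition $\|\mu\|_3^3 = \sum_{i=1}^N \mu(i)^3$. This gives
\[ \|\mu\|_3^3 = \frac{1}{N^3}\sum_{i=1}^N (1+\delta_i)^3 = \frac{1}{N^3}\sum_{i=1}^N \left(1 + 3\delta_i + 3\delta_i^2 + \delta_i^3\right). \]

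Next we sum the expansion term by term. The constant terms contribute $N$. The linear terms vanish, because $\sum_{i=1}^N \delta_i = N\sum_{i=1}^N \mu(i) - N = N - N = 0$; this is the same observation used in the proof of Lemma \reflemma{mu22-explicit-by-deltas}. The quadratic and cubic terms remain as they are. Hence
\[ \|\mu\|_3^3 = \frac{1}{N^3}\left(N + 3\sum_{i=1}^N \delta_i^2 + \sum_{i=1}^N \delta_i^3\right). \]

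Finally, factoring $N$ out of the parentheses turns $1/N^3$ into $1/N^2$ and yields exactly the stated form $\frac{1}{N^2}\left(1 + \frac{3}{N}\sum_{i=1}^N \delta_i^2 + \frac{1}{N}\sum_{i=1}^N \delta_i^3\right)$. The only point needing any care is the cancellation of the linear term via $\sum_{i=1}^N \delta_i = 0$; everything else is routine bookkeeping. This lemma, together with Lemma \reflemma{mu22-explicit-by-deltas}, is then what one would substitute into $t_\mu = \|\mu\|_3^3/\|\mu\|_2^4 - 1$ to derive Lemma \reflemma{t-explicit-by-deltas}.
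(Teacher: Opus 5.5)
Your proof is correct and follows the paper's argument: substitute $\mu(i) = (1+\delta_i)/N$, expand $(1+\delta_i)^3$, drop the linear term using $\sum_{i=1}^N \delta_i = 0$, and factor out $1/N^2$.
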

\begin{proof}
    Observe that $\sum_{i=1}^N \delta_i = 0$, and therefore,
    \[\|\mu\|_3^3
        = \sum_{i=1}^N \frac{(1+\delta_i)^3}{N^3}
        = \frac{1}{N^2} + \frac{3}{N^3} \sum_{i=1}^N \delta_i^2 + \frac{1}{N^3} \sum_{i=1}^N \delta_i^3
        = \frac{1}{N^2}\left(1 + \frac{3}{N} \sum_{i=1}^N \delta_i^2 + \frac{1}{N} \sum_{i=1}^N \delta_i^3 \right)
    \]
\end{proof}

\repprovelemma{t-explicit-by-deltas}
\begin{proof}
    Just arithmetics over Lemma \reflemma{mu22-explicit-by-deltas} and Lemma \reflemma{mu33-explicit-by-deltas}.
\end{proof}

\subsection{Estimating the sum of squares}
\label{sec:finite-domain:subsec:sum-of-squares}

Let $k = \ceil{\log \eps^{-1}}$ be the number of iterations in the algorithm. For $0 \le i \le k$, we define a sequence of resolutions $\eps_i = 2^{-i}$, and for $1 \le i \le k$ we also define a sequence $t_i = \eps_{i-1} \sqrt{N}$.

Before the first iteration we estimate $\|\mu\|_2^2$ within a $(1 \pm 1/12)$-multiplicative factor. If the result indicates that $\|\mu\|_2^2 = (1 + \Omega(1))/N$, then we deduce that $\frac{1}{N} \sum_{i=1}^N \delta_i = \Theta(N\|\mu\|_2^2)$, and use an additional estimation of $\|\mu\|_2^2$.

For $1 \le i \le k$, the $i$th iteration considers the $i-1$st estimation and distinguishes between the cases where $\|\mu\|_2^2 < (1+\frac{3}{4}\eps_{i-1})/N$ and $\|\mu\|_2^2 \ge (1+\frac{1}{4}\eps_{i-1})/N$ (note the overlap).

If we believe that $\|\mu\|_2^2 < (1 + \frac{3}{4}\eps_{i-1}) / N$, then we estimate $\|\mu\|_2^2$ again, now with the better accuracy $\eps_i / 12$. We also observe that, in this case, $t_\mu \le t_i$ (see Lemma \reflemma{bound-iteration-t-by-iteration-mu22}), which reduces the growing rate of the $t/\eps^2$-part of the estimation complexity. More precisely, since $t_i = O(\eps_{i-1}\sqrt{N}) = O(\eps_i \sqrt{N})$, the term $t_i / \eps_i^2$ is $O(\sqrt{N}/\eps_i)$, whose dependence on $1/\eps_i$ is only linear rather than quadratic.

If we believe that $\|\mu\|_2^2 \ge (1 + \frac{1}{4}\eps_{i-1}) / N$ for the first time, which means that $\|\mu\|_2^2 < (1 + \frac{3}{4}\eps_{i-1}) / N$ as well, then we deduce that $\|\mu\|_2^2 = (1 + \Theta(\eps_{i-1})) / N$. Therefore, we return $2\eps_{i-1}$ as our estimation. If the algorithm does not terminate within $k$ iterations, then we deduce that $\|\mu\|_2^2 = (1 + O(\eps))/N$, and return $2\eps_k \in [\eps, 2\eps]$ as our estimation.

\begin{proc-algo}{estimate-sum-squares}{\eta; \mu, \eps}
    \alginput{A friendly distribution $\mu$ over $\Omega = \{1,\ldots,N\}$}
    \algoutput{$X \ge \frac{1}{N}\sum_{i=1}^N \delta_i^2$ with probability $\ge 1-\eta$}
    \algmoments{$\E[X] = O(\frac{1}{N}\sum_{i=1}^N \delta_i^2 + \eps)$}
    \algcomplexity{$O(\log (1/\eta) \cdot (1/\eps\|\mu\|_2 + t/\eps^2 + \log(1/\eps\|\mu\|_2)/\|\mu\|_2^2))$}
    \begin{code}
        \algitem Let $\eps_0 \gets 1$.
        \algitem Let $p_0 \gets \proc{estimate-L2-BC}(3^{-k} \eta/32; \mu, \eps_0 / 12, \bot)$.
        \begin{If}{$p_0 \ge (1 + 1/4)/N$}
            \algitem Let $p' \gets \proc{estimate-L2-BC}(\min\{\eta/2,\eps\}; \mu, 1/12, \bot)$.
            \algitem Return $\max\{0, 3(Np' - 1)\}$.
        \end{If}
        \algitem Let $k \gets \ceil{\log_2 (1/\eps)}$.
        \begin{For}{$i$ from $1$ to $k$}
            \algitem Let $\eps_i \gets \eps_{i-1} / 2$.
            \begin{If}{$p_{i-1} < \frac{1 + \frac{1}{2}\eps_{i-1}}{N}$}
                \algitem Let $t_i \gets \eps_{i-1} \sqrt{N}$.
                \algitem Let $p_i \gets \proc{estimate-L2-BC}(3^{i-k} \eta / 32; \mu, \eps_i / 12, t_i)$.
            \end{If}
            \begin{Else}
                \algitem Return $2 \eps_{i-1}$.
            \end{Else}
        \end{For}
        \algitem Return $2 \eps_k$. \algcomment{(The ``$k+1$st'' iteration)}
    \end{code}
\end{proc-algo}

\begin{lemma}{bound-iteration-t-by-iteration-mu22}
    Consider \proc*{estimate-sum-squares}. If, for some $1 \le i \le k$, we believe that $\|\mu\|_2^2 \le \frac{1 + \frac{3}{4}\eps_{i-1}}{N}$, then we can deduce the belief that $t_\mu \le t_i$.
\end{lemma}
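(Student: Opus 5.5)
The plan is to argue deterministically: assume $\|\mu\|_2^2 \le (1+\frac{3}{4}\eps_{i-1})/N$ actually holds, and show $t_\mu \le \eps_{i-1}\sqrt N$. Write $S = \frac{1}{N}\sum_{j=1}^N \delta_j^2$. By Lemma \reflemma{mu22-explicit-by-deltas}, $N\|\mu\|_2^2 = 1 + S$, so the hypothesis is exactly $S \le \frac{3}{4}\eps_{i-1}$. The word ``belief'' in the statement is then just the conditioning on the event that the $(i-1)$st estimate correctly certifies this bound.

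Next, start from the exact formula of Lemma \reflemma{t-explicit-by-deltas}:
\[ t_\mu = \frac{S + \frac{1}{N}\sum_j \delta_j^3 - S^2}{(1+S)^2} \le S + \frac{1}{N}\sum_{j : \delta_j > 0} \delta_j^3 . \]
Here I drop the denominator (it is at least $1$), drop the non-positive term $-S^2$, and drop the negative cubes. For the positive cubes, bound them by $\delta_{\max}\sum_{j:\delta_j>0}\delta_j^2 \le \delta_{\max} \cdot NS$, where $\delta_{\max} = \max_j \delta_j$. To control $\delta_{\max}$, use $\sum_j \delta_j = 0$. The other $N-1$ deviations sum to $-\delta_{\max}$, so by Cauchy--Schwarz their squares sum to at least $\delta_{\max}^2/(N-1)$. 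This gives $\delta_{\max}^2 \le \frac{N-1}{N}\cdot NS$, i.e.\ $\delta_{\max} \le \sqrt{(N-1)S} \le \sqrt{NS}$. Altogether,
\[ t_\mu \le S + \sqrt{N}\,S^{3/2}. \]

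Finally, plug in $S \le \frac{3}{4}\eps_{i-1}$ and $\eps_{i-1} \le 1$. Then $\sqrt N\, S^{3/2} \le (3/4)^{3/2}\eps_{i-1}^{3/2}\sqrt N \le 0.65\,\eps_{i-1}\sqrt N$. The remaining term $S \le \frac{3}{4}\eps_{i-1}$ has to be absorbed into the leftover $0.35\,\eps_{i-1}\sqrt N$.

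This absorption is the main obstacle, because the crude bound gives roughly $1.4\,\eps_{i-1}\sqrt N$ when $N$ is small. I would handle it in the following order.
\begin{enumerate}
\item If $N = 1$, then $S = 0$ and $t_\mu = 0$, so there is nothing to prove.
\item For moderately large $N$ (already $N \ge 5$ suffices), $S \le \frac{3}{4}\eps_{i-1} \le 0.35\,\eps_{i-1}\sqrt N$, so the two terms fit.
\item For the few small values of $N$ in between, first try the sharper bound $\delta_{\max} \le \sqrt{(N-1)S}$ instead of $\sqrt{NS}$.
\item If that is still not enough, stop discarding terms: keep the $-S^2$ term and the factor $(1+S)^{-2}$, and use friendliness ($\delta_j \ge -6/13$) to lower-bound the negative-cube contribution.
\end{enumerate}
If the constant still does not close for those small $N$, the fallback is to note that the algorithm only needs $t_\mu = O(t_i)$. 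Proving $t_\mu \le 2t_i$ and adjusting the advice passed to \proc{estimate-L2-BC} by a factor $2$ changes nothing downstream.
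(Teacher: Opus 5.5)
Your argument is correct, and your step 3 already finishes it; you only needed to do the arithmetic. With $\delta_{\max}\le\sqrt{(N-1)S}$, $S\le\frac34\eps_{i-1}$ and $\eps_{i-1}\le 1$, you get $t_\mu\le\eps_{i-1}\bigl(\frac34+(\frac34)^{3/2}\sqrt{N-1}\bigr)$. The inequality $\frac34+0.65\sqrt{N-1}<\sqrt N$ holds for every $N\ge 2$. It is tightest at $N=2$, where it reads $1.40<1.414$, and the gap $\sqrt N-0.65\sqrt{N-1}$ increases for $N\ge 2$.

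So the split at $N\ge 5$ is unnecessary, and so is step 4 (friendliness is not needed at all). The fallback is also unnecessary, and in any case it would not prove the lemma as stated, because it changes the algorithm. A uniform way to close your version: by Cauchy--Schwarz, $1+\sqrt{(N-1)S}\le\sqrt{N(1+S)}$. Hence $t_\mu\le S\sqrt{N(1+S)}\le\frac34\sqrt{7/4}\,\eps_{i-1}\sqrt N<t_i$.

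The paper starts from the same numerator bound, $S(1+\delta_{\max})=S\cdot N\max_j\mu(j)$, but it keeps the denominator. It uses $(1+S)^2\ge 1+S=N\|\mu\|_2^2$ together with $\max_j\mu(j)\le\|\mu\|_2$. This gives $t_\mu\le S/\|\mu\|_2\le\sqrt N\,S\le\frac34\eps_{i-1}\sqrt N<t_i$ in one line. That argument needs no case analysis on $N$, no use of $\eps_{i-1}\le 1$, and no control of $\delta_{\max}$ through the zero-sum constraint.

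Your route gives a sharper intermediate bound, $t_\mu\le S\bigl(1+\sqrt{(N-1)S}\bigr)$, which for instance is at most $2S$ when $S\le 1/N$. That extra sharpness is not needed here, and dropping the factor $(1+S)^{-2}$ is what created the small-$N$ bookkeeping.
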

\begin{proof}
    By Lemma \reflemma{t-explicit-by-deltas},
    \[  t
        = \frac{\|\mu\|_3^3}{\|\mu\|_2^4} - 1
        = \frac{\frac{1}{N}\left(\sum_{i=1}^N \delta_i^2 + \sum_{i=1}^N \delta_i^3 - \frac{1}{N}\left(\sum_{i=1}^N \delta_i^2\right)^2\right)}{\left(1 + \frac{1}{N}\sum_{i=1}^N \delta_i^2\right)^2} \]

    For the numerator:
    \begin{eqnarray*}
        \frac{1}{N}\left(\sum_{i=1}^N \delta_i^2 + \sum_{i=1}^N \delta_i^3 - \frac{1}{N}\left(\sum_{i=1}^N \delta_i^2\right)^2\right)
        &\le& \frac{1}{N}\left(\sum_{i=1}^N \delta_i^2 + (\max_i \delta_i) \sum_{i=1}^N \delta_i^2\right) \\
        &=& \frac{1}{N}\left(\sum_{i=1}^N \delta_i^2 + (N \max_ \mu(i) - 1) \sum_{i=1}^N \delta_i^2\right) \\
        &=& \frac{1}{N} (N \max_i \mu(i)) \cdot \sum_{i=1}^N \delta_i^2
        = \max_ \mu(i) \cdot \sum_{i=1}^N \delta_i^2
        \le \|\mu\|_2 \sum_{i=1}^N \delta_i^2
    \end{eqnarray*}

    For the denominator, by Lemma \reflemma{mu22-explicit-by-deltas}:
    \begin{eqnarray*}
        \left(1 + \frac{1}{N}\sum_{i=1}^N \delta_i^2\right)^2
        = N\|\mu\|_2^2 \cdot \left(1 + \frac{1}{N}\sum_{i=1}^N \delta_i^2\right)
        \ge N\|\mu\|_2^2 \cdot 1
    \end{eqnarray*}

    For the whole expression, using $\|\mu|_2^2 \ge 1/N$:
    \begin{eqnarray*}
        t
        \le \frac{\|\mu\|_2 \sum_{i=1}^N \delta_i^2}{N\|\mu\|_2^2}
        = \frac{1}{N\|\mu\|_2} \cdot \sum_{i=1}^N \delta_i^2
        = \frac{1}{\|\mu\|_2} \cdot \frac{1}{N}\sum_{i=1}^N \delta_i^2
        \le \sqrt{N} \cdot \frac{3}{4}\eps_{i-1}
        < t_i
    \end{eqnarray*}
\end{proof}

\begin{lemma}{estimate-t-friendly-termination-goal}
    Let $i^*$ be the smallest non-negative integer for which $(1 - 2^{-i^*} / 12) \|\mu\|_2^2 \ge (1 + 2^{-i^*} / 2) / N$, or infinite if $\|\mu\|_2^2 = 1/N$. With probability at least $1 - \eta/6$, \proc{estimate-sum-squares} terminates after at most $i^*+1$ iterations, and for every $0 \le i \le i_\mathrm{max}-1$, where $i_\mathrm{max} \le i^* + 1$ is the iteration causing termination, $p_i \in (1 \pm \eps_i / 12)\|\mu\|_2^2$.
\end{lemma}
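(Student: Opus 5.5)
The plan is to condition on a ``good event'' $G$: every estimation $p_i$ computed in iterations $0,\ldots,\min\{i^*+1,k\}$ satisfies $p_i \in (1 \pm \eps_i/12)\|\mu\|_2^2$. Then we show deterministically that on $G$ the loop stops by iteration $i^*+1$.

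\textbf{Bounding the failure probability.} Iteration $i$ calls \proc{estimate-L2-BC} with error parameter $3^{i-k}\eta/32$ (and $3^{-k}\eta/32$ for $p_0$). By Lemma \reflemma{estimate-L2-BC}, amplified via Lemma \reflemma{amplify-1/3-to-eta}, each call fails with at most that probability. For the advised calls with $t_i$ we need the advice to be valid. By Lemma \reflemma{bound-iteration-t-by-iteration-mu22}, entering iteration $i$ with $p_{i-1}<(1+\frac12\eps_{i-1})/N$ and $p_{i-1}$ accurate gives $\|\mu\|_2^2 \le (1+\frac34\eps_{i-1})/N$. Hence $t_\mu\le t_i$, so the advice is valid whenever $G$ holds up to step $i-1$. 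An induction on $i$ therefore keeps the failure events nested. Summing the errors gives $\sum_{i=0}^{k} 3^{i-k}\eta/32 \le \frac{3}{2}\cdot\frac{\eta}{32} < \eta/6$.

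\textbf{Termination on $G$.} Suppose the loop reaches iteration $i^*+1$, meaning it has computed $p_{i^*}$. On $G$ we have $p_{i^*} \ge (1-\eps_{i^*}/12)\|\mu\|_2^2 \ge (1+\eps_{i^*}/2)/N$ by the definition of $i^*$. So the test ``$p_{i^*} < (1+\frac12\eps_{i^*})/N$'' fails, and the algorithm returns at iteration $i^*+1$.

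Two edge cases need separate handling.
\begin{itemize}
\item If $i^*=0$, the check happens before the loop, where the preamble test is $p_0\ge (1+1/4)/N$. We need $p_0 \ge (1+1/2)/N$ to imply this, which it does, so termination happens at step $0$ or $1$.
\item If $i^* \ge k$ (including $i^*$ infinite), the loop runs out and returns $2\eps_k$, trivially within $k+1 \le i^*+1$ iterations.
\end{itemize}

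\textbf{Main obstacle.} The delicate step is the circular dependence between the advice validity and the accuracy of earlier estimates. I plan to resolve it through the induction described above, making the event ``$p_0,\ldots,p_{i-1}$ are accurate'' the hypothesis under which iteration $i$'s call is analyzed. Here Lemma \reflemma{bound-iteration-t-by-iteration-mu22} uses the $\frac34$ threshold, which is implied by the $\frac12$ test plus the $\eps_{i-1}/12$ accuracy.
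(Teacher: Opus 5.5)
Your proposal is correct and follows the paper's proof. Both arguments use a union bound over the calls with error parameters $3^{i-k}\eta/32$, a geometric sum below $\eta/6$, and then observe that an accurate $p_{i^*}$ is at least $(1+\eps_{i^*}/2)/N$ by the definition of $i^*$, which forces termination at iteration $i^*+1$. The one difference is how the advice $t_\mu \le t_i$ is certified: you derive it inductively from the accuracy of $p_{i-1}$ plus the branch test, whereas the paper gets it deterministically for every $1 \le i \le i^*$ from the minimality of $i^*$ (each $j < i^*$ violates the defining inequality, so $\|\mu\|_2^2 < (1+\frac{3}{4}\eps_j)/N$), and the circularity you flagged as the main obstacle never arises there.
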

\begin{proof}
    For every $0 \le i \le i^*-1$,
    \[  \|\mu\|_2^2
        <   \frac{1 + 2^{-i}/2}{1 - 2^{-i}/12} \cdot \frac{1}{N}
        \le \frac{1 + \frac{3}{4} \cdot 2^{-i}}{N}
        =   \frac{1 + \frac{3}{4} \eps_i}{N}
    \]
    By Lemma \reflemma{bound-iteration-t-by-iteration-mu22}, $t_\mu \le t_i$ for every $1 \le i \le i^*$.

    The probability that $p_i \notin (1 \pm \eps_i/12)\|\mu\|_2^2$ even for one $0 \le i \le \min\{i^*,k\}$ is bounded by $\sum_{i=0}^{\min\{i^*,k\}} 3^{i-k}\eta / 32 \le \eta/6$.

    In the $i^* + 1$st iteration, if it is executed (in particular, we can assume that $i^*+1 \le k$), we have
    \[ p_{i^*}
        \ge (1 - \eps_{i^*}/12) \|\mu\|_2^2
        =   (1 - 2^{-i^*}/12) \|\mu\|_2^2
        \ge (1 + 2^{-i^*} / 2) / N
        =   (1 + \eps_{i^*} / 2) / N \]
    And therefore, the algorithm branches into an estimation of $t_\mu$ (through $s$) and terminates.
\end{proof}

\begin{lemma}{estimate-sum-squares--correctness}
    Let $\mu$ be a distribution over $N$ elements. For every $0 < \eta \le 1/3$ and $0 < \eps < 1$, the output of $\proc*{estimate-sum-squares}(\eta; \mu, \eps)$ is greater than $X \ge \frac{1}{N}\sum_{i=1}^N \delta_i^2$ with probability at least $1-\eta$.
\end{lemma}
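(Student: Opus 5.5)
We will condition on the good event $G$ of Lemma \reflemma{estimate-t-friendly-termination-goal}: with probability at least $1-\eta/6$, every estimation $p_i$ computed before termination satisfies $p_i \in (1 \pm \eps_i/12)\|\mu\|_2^2$. By Lemma \reflemma{mu22-explicit-by-deltas}, the target is $D := \frac{1}{N}\sum_{i=1}^N \delta_i^2 = N\|\mu\|_2^2 - 1$. So the plan is to show that each of the three possible return statements outputs a value at least $N\|\mu\|_2^2 - 1$. The only extra failure event is the estimation $p'$ in the early branch, which fails with probability at most $\eta/2$. The union bound then gives total failure probability at most $\eta/6 + \eta/2 < \eta$.

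\emph{Early branch.} Suppose $p_0 \ge (1+1/4)/N$. Under $G$ we have $\|\mu\|_2^2 \ge p_0/(1+1/12) \ge (15/13)/N$, so $N\|\mu\|_2^2 \ge 15/13$. With probability at least $1-\eta/2$, $p' \ge (1-1/12)\|\mu\|_2^2$. Hence
\[3(Np'-1) \ge \tfrac{11}{4}N\|\mu\|_2^2 - 3 \ge N\|\mu\|_2^2 - 1,\]
where the last inequality is equivalent to $\tfrac{7}{4}N\|\mu\|_2^2 \ge 2$, i.e.\ $N\|\mu\|_2^2 \ge 8/7$, which holds since $15/13 > 8/7$. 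The $\max\{0,\cdot\}$ only increases the output.

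\emph{Return $2\eps_{i-1}$ inside the loop.} This return happens at iteration $i$ only because the test $p_{i-1} < (1+\frac12\eps_{i-1})/N$ failed at iteration $i$. It passed at every earlier iteration $j<i$, i.e.\ $p_{j-1} < (1+\frac12\eps_{j-1})/N$, and $p_{j}$ was then computed. The key case is $i \ge 2$, so that $p_{i-2}$ passed its test. Under $G$,
\[\|\mu\|_2^2 \le \frac{p_{i-2}}{1-\eps_{i-2}/12} < \frac{1+\frac12\eps_{i-2}}{(1-\eps_{i-2}/12)N} \le \frac{1+\frac34\eps_{i-2}}{N} = \frac{1+\frac32\eps_{i-1}}{N}.\]
Hence $D < \frac32\eps_{i-1} \le 2\eps_{i-1}$. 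For $i=1$ we instead use that the early branch was not taken: $p_0 < (5/4)/N$ gives $\|\mu\|_2^2 < (5/4)/(11/12 \cdot N) < 2/N$, so $D < 1 < 2\eps_0 = 2$.

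\emph{Final return $2\eps_k$.} All $k$ tests passed, in particular the test at iteration $k$ on $p_{k-1}$. The same computation with $i-2$ replaced by $k-1$ gives $D < \frac32\eps_{k-1} = 3\eps_k$. This is \emph{not} enough, since the output is only $2\eps_k$. Fixing it is the step I expect to be the main obstacle. I would resolve it by using the finer estimate $p_k \in (1\pm\eps_k/12)\|\mu\|_2^2$, which is computed at iteration $k$ and lies in $G$ because the procedure has not yet terminated. Applying the ``$(k+1)$st'' test to $p_k$ implicitly yields $\|\mu\|_2^2 < (1+\frac34\eps_k)/N$, hence $D < \frac34\eps_k \le 2\eps_k$. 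If the pseudocode's final return is not guarded by this test, the fallback is to bound the output by $3\eps_k = \frac32\eps_{k-1}$, which is harmless for the stated expectation bound, and to read the statement with that constant. I would also double-check the edge case where $\eps_i/12$ is passed with a nonempty advice $t_i$, which the $G$ event already covers via Lemma \reflemma{bound-iteration-t-by-iteration-mu22}.
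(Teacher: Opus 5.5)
\textbf{Gap in the final-return case (an arithmetic slip).} Your early-branch and in-loop cases are correct. The final case is not closed as written. You claim the test at iteration $k$ only gives $D<\frac32\eps_{k-1}=3\eps_k$, but your own computation gives something stronger. The test at iteration $k$ passed on $p_{k-1}$, so under $G$
\[ \|\mu\|_2^2 \le \frac{p_{k-1}}{1-\eps_{k-1}/12} < \frac{1+\frac12\eps_{k-1}}{(1-\eps_{k-1}/12)N} \le \frac{1+\frac34\eps_{k-1}}{N}. \]
Hence $D=\frac{1}{N}\sum_i\delta_i^2 < \frac34\eps_{k-1}=\frac32\eps_k<2\eps_k$. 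The substitution $i-2\mapsto k-1$ sends $\frac32\eps_{i-1}$ to $\frac32\eps_k$, not to $\frac32\eps_{k-1}$. So there is no obstacle, and this case works exactly like the in-loop case.

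The fix you proposed instead would not work. The procedure performs no test on $p_k$: after the loop it returns $2\eps_k$ unconditionally. So the accuracy of $p_k$ gives no upper bound on $\|\mu\|_2^2$. Your fallback of outputting $3\eps_k$ would change the procedure rather than prove the lemma as stated. With the corrected computation, your argument is complete.

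\textbf{Comparison with the paper.} The paper argues differently. It defines the deterministic index $i^*$, the least $i$ with $(1-\eps_i/12)\|\mu\|_2^2\ge(1+\eps_i/2)/N$. It then uses Lemma \reflemma{estimate-t-friendly-termination-goal} to get termination by iteration $i^*+1$, so the output is at least $2\eps_{\min\{i^*,k\}}$. Minimality of $i^*$ then gives $D\le 2\eps_{i^*}$. This handles the final return uniformly as the ``$(k+1)$st iteration''.

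Your ``last passed test'' argument is more local. It needs only the accuracy of the estimate checked by the last test that passed, not the termination bound. You also account explicitly for the failure probability of $p'$ in the early branch, which the paper leaves implicit.
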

\begin{proof}
    Let $i^*$ be the smallest non-negative integer for which $(1 - 2^{-i^*} / 12) \|\mu\|_2^2 \ge (1 + 2^{-i^*} / 2) / N$, or infinite if $\|\mu\|_2^2 = 1/N$. By Lemma \reflemma{estimate-t-friendly-termination-goal}, with probability at least $1-\eta/6$, the algorithm terminates after executing the $i_\mathrm{max}$th iteration for some $i_\mathrm{max} \le i^*+1$, and additionally, $p_0,\ldots,p_{i_\mathrm{max}-1}$ are all in their desired ranges.

    If $p_0 \ge (1 + 1/4)/N$, then we deduce that $\|\mu\|_2^2 \ge (15/13)/N$. In this case, we return:
    \begin{eqnarray*}
        3(Np' - 1)
        &\ge& 3((1 - 1/12)N\|\mu\|_2^2 - 1) \\
        &=& 3((1 - 1/12)(N\|\mu\|_2^2 - 1) - 1/12) \\
        &\ge& 3((1 - 1/12)(N\|\mu\|_2^2 - 1) - (13/24)(N\|\mu\|_2^2 - 1)) \\
        &=& 3 \cdot (1 - 1/12 - 13/24)(N\|\mu\|_2^2 - 1)
        \ge (N\|\mu\|_2^2 - 1)
    \end{eqnarray*}
    
    If $\|\mu\|_2^2 \ge 2/N$, then $p_0$ should be greater than $(1+1/4)/N$. The rest of the analysis assumes that $\|\mu\|_2^2 \le 2/N$.

    If $i^* = 0$, then we deduce that:
    \[  \frac{1}{N}\sum_{i=1}^N \delta_i^2
        = N\|\mu\|_2^2 - 1
        \le N \cdot (2/N) - 1
        = 1
        = \eps_{i^*} \]
    
    If $i^* \ge 1$, then we deduce that:
    \[  \frac{1}{N}\sum_{i=1}^N \delta_i^2
        = N\|\mu\|_2^2 - 1
        \le 2^{-(i^*-1)}
        = 2\eps_{i^*} \]

    Since the terminating iteration is at most $i^* + 1$, its output is at least $2 \eps_{(i^*+1)-1} = 2 \eps_{i^*} \ge (N\|\mu\|_2^2-1)$.
\end{proof}

\begin{lemma}{estimate-sum-squares--expected-value}
    Let $\mu$ be a distribution over $N$. For every $0 < \eta \le 1/3$ and $0 < \eps < 1$, the expected output of $\proc*{estimate-sum-squares}(\eta; \mu, \eps)$ is $O(\frac{1}{N}\sum_{i=1}^N \delta_i^2 + \eps)$.
\end{lemma}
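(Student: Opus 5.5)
Write $D = \frac{1}{N}\sum_{i=1}^N \delta_i^2 = N\|\mu\|_2^2 - 1$ (Lemma \reflemma{mu22-explicit-by-deltas}). The output $X$ of \proc*{estimate-sum-squares} takes one of three forms: $\max\{0, 3(Np'-1)\}$ from the early branch; $2\eps_{i-1}$ when the loop stops at iteration $i$; or $2\eps_k \le 2\eps$ when the loop runs to completion. The last form already contributes $O(\eps)$. I would therefore bound $\E[X]$ by splitting on which branch is taken and bounding each contribution separately.

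\textbf{Early branch.} On this branch $X \le 3\max\{0, Np'-1\} \le 3Np'$. Since $p'$ comes from \proc{estimate-L2-BC}, I would use its moment guarantees (or replace it by a moment-preserving estimate, Lemma \reflemma{estimate-L2-moments}) to get $\E[Np'] = O(N\|\mu\|_2^2) = O(1 + D)$. I then split on the size of $D$. If $\|\mu\|_2^2 \ge (1+1/8)/N$, then $D \ge 1/8$, so $O(1+D) = O(D)$. Otherwise the branch is taken only if $p_0$ overestimates $\|\mu\|_2^2$ by a factor larger than $1+1/12$. By construction this happens with probability at most $3^{-k}\eta/32 = O(\eps)$, since $3^{-k} \le 2^{-k} \le \eps$. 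Because $p_0$ and $p'$ are independent, the contribution is at most $O(\eps) \cdot O(1+D) = O(\eps + D)$.

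\textbf{Loop branches.} The loop returns $2\eps_{i-1}$ when it stops at iteration $i$, so I need $\Pr[\text{stop at } i] \cdot \eps_{i-1}$ to sum to $O(D + \eps)$.
\begin{itemize}
\item If the loop stops at iteration $i$, then $p_{i-1} \ge (1+\eps_{i-1}/2)/N$. If $p_{i-1}$ is accurate, meaning $p_{i-1} \in (1 \pm \eps_{i-1}/12)\|\mu\|_2^2$, this forces $N\|\mu\|_2^2 \ge 1 + \Omega(\eps_{i-1})$, that is, $\eps_{i-1} = O(D)$. These terms sum to $O(D)$ over the range of $i$ where this holds: $\eps_{i-1}$ is geometric in $i$ and bounded by $O(D)$, and the stopping probabilities sum to at most $1$.
\item The remaining events are stops at some $i$ with $\eps_{i-1} \gg D$. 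Such a stop can only occur if some estimate $p_j$ with $j \le i-1$ was inaccurate. The per-iteration error probabilities are $3^{j-k}\eta/32$. So the contribution is at most $\sum_i 2\eps_{i-1} \sum_{j < i} 3^{j-k} \lesssim \sum_j 3^{j-k} 2^{-j}$. This sum is dominated by $j = k$ and is $O(2^{-k}) = O(\eps)$, because the ratio $3/2 > 1$ makes the terms increase geometrically toward $j = k$.
\end{itemize}

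\textbf{Main obstacle.} The hard part is the mismatch between the error schedule $3^{i-k}$ and the return value $\eps_{i-1} = 2^{-(i-1)}$. Early iterations return large values ($\Theta(1)$), yet their error probability is only $3^{-k} \approx \eps^{\log_2 3}$, which suffices. I must make sure that an early mistake does not cascade. After a wrong estimate, the later $p_j$ are still fresh independent estimates with their own small error probabilities, so I would union-bound over the first inaccurate index $j$ and charge the largest possible return value $2\eps_{j}$ or $2\eps_{j-1}$. This costs only a constant factor. I also need the ``advice'' argument $t_i$ in the calls not to affect accuracy once an earlier estimate was wrong. I would sidestep this by charging every such event to the error probability of the first bad index, since a wrong advice only affects complexity and accuracy after that index is already charged.
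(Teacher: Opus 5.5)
Your proposal is correct and follows essentially the same route as the paper. The large-$p_0$ branch is handled by splitting on whether $D=N\|\mu\|_2^2-1$ is bounded below by a constant and by using the independence of $p_0$ and $p'$. In the loop, a stop at iteration $i$ either returns $2\eps_{i-1}=O(D)$ or requires an error event of probability $3^{(i-1)-k}\eta/32$, and the errors are balanced by $\sum_j 3^{j-k}2^{-j}=O(2^{-k})=O(\eps)$. The only difference is bookkeeping: you split on the accuracy of $p_{i-1}$ at the stopping iteration and use disjointness of the stopping events, whereas the paper introduces the threshold index $i^*$ and sums $\eps_{i-1}$ geometrically over the iterations beyond it; your first-bad-index charging also makes explicit why the advice $t_j$ is valid where it is used, which the paper leaves implicit.
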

\begin{proof}
    We first consider the contribution of the large-$p_0$ branch (``if $p_0 \ge (1+1/4)/N$'') to the expected output.

    If $\|\mu\|_2^2 \ge (1 + 1/12)/N$, then $N\|\mu\|_2^2 = O(N\|\mu\|_2^2 - 1)$. Hence, by Lemma \reflemma{estimate-L2-BC}:
    \[  \E[\max\{0, 3(Np' - 1)\}]
        \le 3N\E[p']
        = 3N \cdot O(\|\mu\|_2^2)
        = O(N\|\mu\|_2^2)
        = O(N\|\mu\|_2^2 - 1)
    \]

    If $\|\mu\|_2^2 < (1 + 1/12)/N$, then the probability to enter the large-$p_0$ branch is bounded by $\eps$. If we enter this branch, then the expected output is bounded by $3N\|\mu\|_2^2 \le 4$. Therefore, the contribution of the large-$p_0$ branch is bounded by $\eps \cdot 4 = O(\eps)$.
    
    Therefore, it suffices to show that the expected output of the main branch is $O(\frac{1}{N}\sum_{i=1}^N \delta_i^2 + \eps)$.

    Let $i^*$ be the smallest non-negative integer for which $(1 - 2^{-i^*} / 12) \|\mu\|_2^2 \ge (1 + 2^{-i^*} / 2) / N$, or infinite if $\|\mu\|_2^2 = 1/N$.

    For every $0 \le i \le i^*-1$,
    \[  \|\mu\|_2^2
        <   \frac{1 + 2^{-i}/2}{1 - 2^{-i}/12} \cdot \frac{1}{N}
        \le \frac{1 + \frac{3}{4} \cdot 2^{-i}}{N}
        =   \frac{1 + \frac{3}{4} \eps_i}{N} \]
    And therefore, by Lemma \reflemma{estimate-L2-BC}, for every $1 \le i \le i^* + 1$,
    \[  \Pr\left[p_{i-1} \ge \frac{1+\frac{1}{2}\eps_{i-1}}{N} \right]
        \le \Pr\left[p_{i-1} \ge (1 + \eps_{i-1} / 12)\|\mu\|_2^2 \right]
        \le 3^{(i-1)-k}\eta/32 \]

    The expected output is bounded by:
    \begin{eqnarray*}
        \E[X]
        &\le& \sum_{i=1}^k 2\eps_{i-1} \Pr\left[p_{i-1} \ge \frac{1+\frac{1}{2}\eps_{i-1}}{N}\right] + 2\eps_k \\
        &\le& 2\sum_{i=1}^{\min\{i^*+1,k\}} \eps_{i-1} \cdot \frac{3^{(i-1)-k}\eta}{32} + \sum_{i=\min\{i^*+1,k\}+1}^k \eps_{i-1} + O(\eps) \\
        &\le& \frac{3^{-k}}{16}\eta \cdot \sum_{i=1}^{\min\{i^*+1,k\}} 2^{-(i-1)} \cdot 3^{i-1} + \sum_{i=\min\{i^*+1,k\}+1}^k \eps_{i-1} + O(\eps) \\
        &\le& \frac{3^{-k}}{16}\eta \cdot \sum_{i=1}^{\min\{i^*+1,k\}} (3/2)^{i-1} + O(2^{-\min\{i^*,k\}}) + O(\eps) \\
        &\le& \frac{3^{-k}}{8}\eta \cdot (3/2)^{\min\{i^*+1,k\}} + O(2^{-i^*} + \eps)
    \end{eqnarray*}

    If $i^* \le k-1$, then:
    \begin{eqnarray*}
        \E[X]
        &\le& \frac{3^{-k}}{8}\eta \cdot (3/2)^{i^*+1} + O(2^{-i^*} + \eps) \\
        &\le& \frac{1}{8}\eta \cdot 2^{-(i^*+1)} + O(2^{-i^*} + \eps)
        = O((1+\eta)2^{-i^*} + \eps)
        = O(2^{-i^*} + \eps)
    \end{eqnarray*}

    If $i^* \ge k$, then:
    \begin{eqnarray*}
        \E[X]
        &\le& \frac{3^{-k}}{8}\eta \cdot (3/2)^{k} + O(2^{-i^*} + \eps) \\
        &\le& O(2^{-k}\eta + 2^{-i^*} + \eps)
        = O(\eps \eta + 2^{-i^*} + \eps)
        = O(2^{-i^*} + \eps)
    \end{eqnarray*}
    
    In both cases, $\E[X] = O(2^{-i^*} + \eps)$. By Lemma \reflemma{mu22-explicit-by-deltas}, $\frac{1}{N}\sum_{i=1}^N \delta_i^2 = N\|\mu\|_2^2 - 1 = \Omega(2^{-i^*})$, and therefore, $\E[X] = O(\frac{1}{N}\sum_{i=1}^N + \eps)$ as required.
\end{proof}

\begin{lemma}{estimate-sum-squares--complexity}
    Let $\mu$ be a distribution over $N$. For every $0 < \eta \le 1/3$ and $0 < \eps < 1$, the expected sample complexity of $\proc*{estimate-sum-squares}(\eta; \mu, \eps)$ is $O\left(\log \frac{1}{\eta} \cdot \left(\frac{\sqrt{N}}{\eps} + \frac{1}{\eps\|\mu\|_2}\right)\right)$.
\end{lemma}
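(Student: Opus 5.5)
The expected sample complexity is the sum of three contributions: the initial call producing $p_0$, the large-$p_0$ branch producing $p'$, and the iterative loop. I would bound each separately, using the complexity bounds of Lemma \reflemma{estimate-L2-BC} and Lemma \reflemma{estimate-L2-base}. I read the calls with a fourth (advice) argument as advised base-type estimations, so that an advice $t_i$ gives cost $O(\log(1/\eta_i)(1/\eps_i\|\mu\|_2 + t_i/\eps_i^2))$.

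The two one-time calls are routine. The call for $p_0$ uses constant accuracy and error $3^{-k}\eta/32$. Its cost is therefore $O(\log(3^k/\eta)/\|\mu\|_2) = O((\log(1/\eta)+\log(1/\eps))/\|\mu\|_2)$, and this is $O(\log(1/\eta)/\eps\|\mu\|_2)$ because $\log(1/\eps) = O(1/\eps)$. The call for $p'$ has constant accuracy and error $\min\{\eta/2,\eps\}$, so it similarly costs $O(\log(1/\eta\eps)/\|\mu\|_2)$, which is within the target.

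For the loop, iteration $i$ (for $1\le i\le k$) draws $p_i$ with accuracy $\eps_i/12$, error $\eta_i = 3^{i-k}\eta/32$, and advice $t_i = \eps_{i-1}\sqrt N = 2\eps_i\sqrt N$. Its expected cost, if executed, is
\[ O\left(\log\frac{3^{k-i}}{\eta}\cdot\left(\frac{1}{\eps_i\|\mu\|_2} + \frac{\sqrt N}{\eps_i}\right)\right). \]
I would not condition on which iterations actually execute. Instead I would bound the sum over all $i\le k$: since $1/\eps_i = 2^i$, the sum $\sum_{i=1}^k 2^i\,(\log(1/\eta) + (k-i)\log 3)$ is dominated by its last terms. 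Writing $j=k-i$, it equals $2^k\sum_j 2^{-j}(\log(1/\eta)+j\log 3) = O(2^k\log(1/\eta)) = O(\log(1/\eta)/\eps)$, because $\log(1/\eta)\ge\log 3$ for $\eta\le1/3$. Multiplying by $(1/\|\mu\|_2 + \sqrt N)$ gives the claimed $O(\log(1/\eta)(\sqrt N/\eps + 1/\eps\|\mu\|_2))$. This geometric weighting is exactly why the error parameters were chosen as $3^{i-k}\eta$: the logarithmic penalties $\log 3^{k-i}$ are absorbed by the decay $2^{-(k-i)}$.

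The main obstacle is validity of the advice. The base-estimator complexity bound with advice $s$ holds regardless of whether $s\ge t_\mu$, since $m$ is determined by $s$ and $\ell$; the advice affects only correctness. So complexity holds unconditionally, and I would note this explicitly and cite Lemma \reflemma{bound-iteration-t-by-iteration-mu22} only for correctness. A second subtlety is the $1/\sqrt{\eta_i}$ or $1/\eta_i$ dependence in Lemma \reflemma{estimate-L2-base}. If that dependence were polynomial rather than logarithmic, the sum would blow up. I would therefore treat these calls as amplified via Lemma \reflemma{amplify-1/3-to-eta} and Lemma \reflemma{median-expected-value}, so that each costs $O(\log(1/\eta_i))$ times the constant-error cost, which is what the computation above uses.
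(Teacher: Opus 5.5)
Your proposal is correct and follows the paper's proof essentially step for step. The decomposition is the same: the $p_0$ call, the $p'$ branch, and the loop, which is bounded by summing the per-iteration cost over all $1\le i\le k$ without conditioning on which iterations run. The key estimate is also the same, the geometric sum $\sum_{i=1}^k (1+k-i)/\eps_i = O(2^k) = O(1/\eps)$. The paper's computation tacitly assumes your two explicit remarks: that the advised calls must be read as amplified base estimators whose cost is logarithmic in $1/\eta_i$, and that the cost bound does not require the advice to be valid. Stating them is a worthwhile clarification of the pseudocode.
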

\begin{proof}
    The sample complexity required for $p_0$ is:
    \[  O(\log (3^k/\eta)/\|\mu\|_2)
        = O((k + \log (1/\eta))/\|\mu\|_2)
        = O(\log (1/\eps \eta) / \|\mu\|_2) \]

    If we execute the large-$p_0$ branch (``if $p_0 \ge (1 + 1/4)/N$''), then the sample complexity of the additional estimation is $O(\log (1/\eps\eta) / \|\mu\|_2)$ (Lemma \reflemma{estimate-L2-BC}), which can be relaxed to $O(\log (1/\eta) / \eps \|\mu\|_2)$.
    
    For $1 \le i \le k$, the expected query complexity of the $i$th iteration (if it is executed) is:
    \begin{eqnarray*}
        O(\log (32 \cdot 3^{k-i}/\eta)) \cdot \left(\frac{1}{\eps_i \|\mu\|_2} + \frac{t_i}{\eps_i^2} \right)
        &=& O((1 + k - i) \log (1/\eta)) \cdot \left(\frac{1}{\eps_i \|\mu\|_2} + \frac{\sqrt{N}}{\eps_i} \right) \\
        &=& O\left(\log (1/\eta) \cdot (1 + k - i) \cdot \left(\sqrt{N} + \frac{1}{\|\mu\|_2}\right) \cdot \frac{1}{\eps_i}\right)
    \end{eqnarray*}

    The cost of the loop is bounded by:
    \begin{eqnarray*}
        O\left(\log \frac{1}{\eta} \cdot \left(\sqrt{N} + \frac{1}{\|\mu\|_2}\right)\right) \cdot \sum_{i=1}^k \frac{1 + k - i}{\eps_i}
        &=& O\left(\log \frac{1}{\eta} \cdot \left(\sqrt{N} + \frac{1}{\|\mu\|_2}\right)\right) \cdot \sum_{i=1}^k \frac{i}{\eps_{k+1-i}} \\
        &=& O\left(\log \frac{1}{\eta} \cdot \left(\sqrt{N} + \frac{1}{\|\mu\|_2}\right)\right) \cdot \sum_{i=1}^k 2^{k+1-i} i \\
        &=& O\left(\log \frac{1}{\eta} \cdot \left(\sqrt{N} + \frac{1}{\|\mu\|_2}\right) \cdot 2^k\right) \cdot \sum_{i=1}^k 2^{-i} i \\
        \text{[Since $\sum_{i=1}^\infty 2^{-i}i = 2$]} &=& O\left(\log \frac{1}{\eta} \cdot \left(\sqrt{N} + \frac{1}{\|\mu\|_2}\right) \cdot 2^k\right) \\
        \text{[Since $k = \log \eps^{-1} + O(1)$]} &=& O\left(\log \frac{1}{\eta} \cdot \left(\frac{\sqrt{N}}{\eps} + \frac{1}{\eps\|\mu\|_2}\right)\right)
    \end{eqnarray*}
\end{proof}

\begin{lemma}{estimate-sum-squares}
    Let $X$ be the output of $\proc*{estimate-sum-squares}(\eta; \mu, \eps)$ for a distribution $\mu$ over $N$ elements and $0 < \eps < 1$.
    \begin{itemize}
        \item $X \ge \frac{1}{N} \sum_{i=1}^N \delta_i^2$ with probability $1-\eta$.
        \item $\E[X] = O(\frac{1}{N} \sum_{i=1}^N \delta_i^2 + \eps)$.
        \item The expected sample complexity is $O\left(\log \frac{1}{\eta} \cdot \left(\frac{\sqrt{N}}{\eps} + \frac{1}{\eps\|\mu\|_2}\right)\right)$.
    \end{itemize}
\end{lemma}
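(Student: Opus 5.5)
This lemma is the conjunction of the three preceding lemmas about \proc*{estimate-sum-squares}, so the plan is to assemble them rather than reprove anything. The first bullet (with probability $1-\eta$, $X \ge \frac{1}{N}\sum_{i=1}^N \delta_i^2$) is exactly Lemma \reflemma{estimate-sum-squares--correctness}. The second bullet ($\E[X] = O(\frac{1}{N}\sum_{i=1}^N \delta_i^2 + \eps)$) is exactly Lemma \reflemma{estimate-sum-squares--expected-value}. The third bullet (the expected sample complexity) is Lemma \reflemma{estimate-sum-squares--complexity}. All three hold for every distribution over $N$ elements with $0<\eta\le 1/3$ and $0<\eps<1$, which is precisely the setting here.

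I will briefly recall why the three pieces compose. Correctness comes from Lemma \reflemma{estimate-t-friendly-termination-goal}. With probability at least $1-\eta/6$, all intermediate estimates $p_i$ stay in their $(1\pm\eps_i/12)$ ranges, and termination happens by iteration $i^*+1$. The returned value $2\eps_{i^*}$, or $3(Np'-1)$ in the large-$p_0$ branch, therefore dominates $N\|\mu\|_2^2-1 = \frac{1}{N}\sum_i\delta_i^2$, using Lemma \reflemma{mu22-explicit-by-deltas}. The large-$p_0$ branch adds at most another $\eta/2$ of failure probability.

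The expectation bound comes from two facts:
\begin{itemize}
\item The per-iteration failure probabilities $3^{i-k}\eta/32$ decay fast enough to beat the geometric growth of the possible outputs $2\eps_{i-1}$, so early wrong terminations contribute only $O(2^{-i^*}+\eps)$.
\item $2^{-i^*} = O(N\|\mu\|_2^2-1)$.
\end{itemize}

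The complexity bound rests on Lemma \reflemma{bound-iteration-t-by-iteration-mu22}. It lets each iteration pass the advice $t_i = \eps_{i-1}\sqrt N$, so iteration $i$ costs $O((1+k-i)\log(1/\eta)(\sqrt N + 1/\|\mu\|_2)/\eps_i)$. Summing over $i$ gives a geometric series dominated by $2^k = O(1/\eps)$.

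The only step needing care is a mismatch between the pseudocode and the complexity lemma. The pseudocode calls \proc{estimate-L2-BC} with an advice argument, but that advice only has an effect in \proc{estimate-L2-base}. I will read these calls as advised base-estimator calls amplified via Lemma \reflemma{amplify-1/3-to-eta}, as the complexity lemma already does. The advice $t_i$ is only guaranteed valid, i.e.\ $t_\mu \le t_i$, while the algorithm's belief $\|\mu\|_2^2 \le (1+\frac34\eps_{i-1})/N$ holds. On the complementary event, the failure is already charged to the $\eta/6$ budget of Lemma \reflemma{estimate-t-friendly-termination-goal}. So the three lemmas combine directly, with a union bound over the failure events, to give the statement.
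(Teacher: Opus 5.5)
Your proposal is correct and is essentially the paper's proof, which consists only of pointing to Lemmas \reflemma{estimate-sum-squares--correctness}, \reflemma{estimate-sum-squares--expected-value} and \reflemma{estimate-sum-squares--complexity} for the three bullets. Your extra remark is also a fair reading: the advised calls inside the loop must be interpreted as amplified calls to the base estimator rather than the BC-estimator, and this is exactly what the paper's complexity computation implicitly assumes.
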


\begin{proof}
    See Lemma \reflemma{estimate-sum-squares--correctness} (correctness), Lemma \reflemma{estimate-sum-squares--expected-value} (expected output) and Lemma \reflemma{estimate-sum-squares--complexity} (complexity).
\end{proof}

\subsection{Estimating the sum of large cubes}
\label{sec:friendly:subsec:sum-of-cubes}

\defproc{estimate-sum-cubes}{Estimate-Sum-Cubes}

To estimate the sum of large cubes, we learn $\mu$ using $O(N \log N)$ samples. This suffices to detect all elements with $\delta_i < 1/4$ (to ignore) and a superset of all elements with $\delta_i \ge 1$ (to consider), for which we also obtain a multiplicative bound.

\begin{proc-algo}{estimate-sum-cubes}{\eta; \mu}
    \alginput{A distribution $\mu$ over $\Omega = \{1,\ldots,N\}$}
    \algoutput{If $X \ge \frac{1}{N}\sum_{i : \delta_i \ge 1} \delta_i^3$ with probability $\ge 1-\eta$}
    \algmoments{$\E[X] = O(\frac{1}{N}\sum_{i=1}^N \delta_i^2 +\frac{1}{N}\sum_{i : \delta_i \ge 1} \delta_i^3 + \eta)$}
    \algcomplexity{$O(N \log (N/\eta))$}
    \begin{code}
        \algitem Let $q \gets \ceil{1000N \ln (N^4/\eta)}$.
        \algitem Draw $q$ independent samples from $\mu$.
        \begin{For}{every $i \in \Omega$}
            \algitem Let $P_i$ be the number of $i$ samples drawn.
            \algitem Let $\hat p_i \gets P_i / q$.
            \algitem Let $\hat\delta_i \gets N\hat p_i - 1$.
        \end{For}
        \algitem Return $\frac{8}{N} \sum_{i : \hat\delta_i \ge 1/2} \hat\delta_i^3$.
    \end{code}
\end{proc-algo}

\begin{lemma}[Technical lemma]{technical:estimate-sum-cubes--large-delta-chernoff}
    Considering a run of $\proc*{estimate-sum-cubes}(\eta; \mu)$ for $0 < \eta \le 1/3$ and a distribution $\mu$ over $\Omega$ of size $N$. For every $i \in \Omega$ for which $\delta_i > 1/4$, the probability that $\hat{\delta}_i \notin (1 \pm 1/2)\delta_i$ is bounded by $2\eta/N^4$. Moreover, $\E\left[\hat\delta_i^3\right] \le 4\delta_i^3 + 2\eta/N$
\end{lemma}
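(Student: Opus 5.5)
Observe that $P_i \sim \Bin(q, \mu(i))$ with $\mu(i) = (1+\delta_i)/N$, so $\E[P_i] = q(1+\delta_i)/N \ge 1000\ln(N^4/\eta)(1+\delta_i)$. The event $\hat\delta_i \notin (1\pm 1/2)\delta_i$ is the event $\abs{N\hat p_i - (1+\delta_i)} > \delta_i/2$. Equivalently, $P_i$ deviates from its mean by a relative amount exceeding $\gamma = \frac{\delta_i}{2(1+\delta_i)}$. Since $\delta_i > 1/4$, we get $\gamma \ge \frac{1/4}{2 \cdot 5/4} = 1/10$, and always $\gamma < 1/2$. I would therefore apply the two-sided multiplicative Chernoff bound:
\[ \Pr\left[\abs{P_i - \E[P_i]} > \gamma \E[P_i]\right] \le 2e^{-\gamma^2 \E[P_i]/3}. \]
With $\gamma^2 \ge 1/100$ and $\E[P_i] \ge 1000\ln(N^4/\eta)$, the exponent is at least $\frac{10}{3}\ln(N^4/\eta) \ge \ln(N^4/\eta)$. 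This gives the bound $2\eta/N^4$, which is the first claim.

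For the moment bound, I would split on the good event $G$ that $\hat\delta_i \in (1\pm1/2)\delta_i$. On $G$, $\hat\delta_i \le (3/2)\delta_i$, so $\hat\delta_i^3 \le (27/8)\delta_i^3 \le 4\delta_i^3$. On the complement, I would use the crude deterministic bound $\hat\delta_i \le N\hat p_i \le N$, so $\hat\delta_i^3 \le N^3$. Since $\hat\delta_i \ge -1$, any negative contribution only helps the upper bound. The complement contributes at most $N^3 \cdot 2\eta/N^4 = 2\eta/N$. Together,
\[ \E[\hat\delta_i^3] \le 4\delta_i^3 \Pr[G] + 2\eta/N \le 4\delta_i^3 + 2\eta/N. \]

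The only delicate point is checking constants in Chernoff. The relative deviation $\gamma$ must be bounded below uniformly, which is exactly where $\delta_i > 1/4$ is used. The mean must absorb the $1/\gamma^2 = 100$ factor, which the $1000$ in $q$ comfortably covers. I expect no real obstacle beyond this bookkeeping.
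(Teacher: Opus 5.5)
Your proposal is correct and follows essentially the same route as the paper. Both arguments apply a two-sided multiplicative Chernoff bound to $P_i \sim \Bin(q,\mu(i))$ with relative deviation $\delta_i/(2N\mu(i))$, made uniformly large by $\delta_i > 1/4$. Both then split $\E[\hat\delta_i^3]$ into the good event, where $\hat\delta_i^3 \le (27/8)\delta_i^3$, and its complement, where $\hat\delta_i < N$. Your version is slightly more explicit than the paper's in isolating $\gamma \ge 1/10$ and in noting that negative values of $\hat\delta_i^3$ only help the upper bound.
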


\begin{lemma}[Technical lemma]{technical:estimate-sum-cubes--small-delta-chernoff}
    Considering a run of $\proc*{estimate-sum-cubes}(\eta; \mu)$ for $0 < \eta \le 1/3$ and a distribution $\mu$ over $\Omega$ of size $N$. For every $i \in \Omega$ for which $\delta_i \le 1/4$, $\E\left[\hat\delta_i^3 \cdot \mathbf{1}_{\hat\delta_i \ge 1/2}\right] \le 2\eta/N$.
\end{lemma}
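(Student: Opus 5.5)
We need to prove the technical lemma for small $\delta_i \le 1/4$: $\E[\hat\delta_i^3 \mathbf{1}_{\hat\delta_i\ge 1/2}] \le 2\eta/N$. Here $P_i \sim \Bin(q,\mu(i))$ with $\mu(i) = (1+\delta_i)/N \le (5/4)/N$, and the event $\hat\delta_i \ge 1/2$ means $P_i \ge (3/2)q/N$.

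The plan is to bound the expectation by a dyadic (layer-cake) decomposition of the tail together with Chernoff. Write $\lambda = q/N \ge 1000\ln(N^4/\eta)$, so $\E[P_i] \le (5/4)\lambda$. For $j \ge 0$, consider the layers $L_j = \{2^{j-1}\cdot 1 \le \hat\delta_i+1 ... \}$; concretely I would use the layers $\hat\delta_i \in [a_j, a_{j+1})$ with $a_0 = 1/2$ and $a_j = 2^j$ for $j\ge1$. On layer $j$ the integrand is at most $a_{j+1}^3 \le 8^{j+1}$, so
\[ \E\left[\hat\delta_i^3 \mathbf{1}_{\hat\delta_i \ge 1/2}\right] \le \sum_{j\ge 0} 8^{j+1} \Pr\left[\hat\delta_i \ge a_j\right]. \]
Note $\hat\delta_i \le N-1$, so the sum is finite, but I will not need that.

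For each $j$, $\Pr[\hat\delta_i \ge a_j] \le \Pr[\Bin(q,(5/4)/N) \ge (1+a_j)\lambda]$ by monotonicity in the success probability. For $j=0$ the threshold is $(3/2)\lambda = (1+1/5)\cdot(5/4)\lambda$, so the multiplicative Chernoff bound gives at most $e^{-(1/5)^2(5/4)\lambda/3} = e^{-\lambda/60}$. For $j \ge 1$ the threshold $(1+2^j)\lambda$ exceeds $(5/4)\lambda$ by a factor $1+\beta_j$ with $\beta_j \ge 2^{j-1}$. Here I use the $\beta\ge1$ form, $e^{-\beta\mu/3}$, which gives at most $e^{-2^{j-1}(5/4)\lambda/3} \le e^{-2^{j}\lambda/5}$. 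Since $\lambda \ge 1000\ln(N^4/\eta)$, the $j=0$ term contributes $8e^{-\lambda/60} \le 8(\eta/N^4)^{16}$. The $j\ge1$ terms contribute $\sum_j 8^{j+1}(\eta/N^4)^{200\cdot 2^j}$, which is dominated geometrically by its first term, since $8(\eta/N^4)^{200\cdot 2^{j}}\ll 1$ for $\eta\le1/3$. The total is far below $2\eta/N$.

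The main (minor) obstacle is bookkeeping: making sure the Chernoff form used matches the regime ($\beta<1$ versus $\beta\ge1$), and handling the fact that the cube weights $8^{j}$ grow. The doubly exponential decay $(\eta/N^4)^{2^j}$ easily absorbs these weights, so this step should be routine. If $\delta_i$ is negative, the same bound applies a fortiori, since only an upper bound on $\mu(i)$ was used.
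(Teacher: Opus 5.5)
Your argument is correct, but it takes a different route from the paper. The paper does not decompose the tail. It applies one Chernoff bound at the threshold $\hat\delta_i \ge 1/2$ to get $\Pr[\hat\delta_i \ge 1/2] \le \eta/N^4$. It then uses the deterministic bound $\hat\delta_i \le N-1 < N$ (because $\hat p_i \le 1$) to conclude $\E[\hat\delta_i^3 \mathbf{1}_{\hat\delta_i \ge 1/2}] \le N^3 \cdot \eta/N^4 = \eta/N$. The $N^4$ inside the logarithm defining $q$ is exactly what pays for the worst-case cube $N^3$.

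Your dyadic decomposition instead controls each layer $[a_j,a_{j+1})$ with the $\beta \ge 1$ form of Chernoff. The layer probabilities then decay like $(\eta/N^4)^{200\cdot 2^j}$, which swamps the $8^{j+1}$ weights. This never uses the almost-sure bound on $\hat\delta_i$, so it would still work for, e.g., Poissonized counts, and it does not need the $N^4$ to absorb an $N^3$ factor. It also yields a much smaller bound, roughly $8(\eta/N^4)^{16}$. The cost is more bookkeeping than the paper's two-line argument.

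Incidentally, your first-layer exponent $e^{-\lambda/60}$ is the correct one for a relative deviation of $1/5$. The paper's displayed $e^{-\frac{1}{12}\cdot q \cdot 5/(4N)}$ uses the coefficient appropriate for a deviation of $1/2$. This is harmless only because the factor $1000$ in $q$ leaves ample slack.
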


\begin{lemma}{estimate-sum-cubes}
    Let $X$ be the output of $\proc*{estimate-sum-cubes}(\eta; \mu, \eps)$ for a distribution $\mu$ over $N$ elements and $0 < \eps < 1$.
    \begin{itemize}
        \item $X \ge \frac{1}{N} \sum_{i : \delta_i \ge 1} \delta_i^3$ with probability $1-\eta$.
        \item $\E[X] = O(\frac{1}{N}\sum_{i=1}^N \delta_i^3 + \frac{1}{N}\sum_{i=1}^N \delta_i^2 + \eta)$.
        \item The expected sample complexity is $O(N \log (N/\eta))$.
    \end{itemize}
\end{lemma}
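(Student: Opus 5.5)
The plan is to derive all three items from the two technical lemmas just stated, Lemma \reflemma{technical:estimate-sum-cubes--large-delta-chernoff} and Lemma \reflemma{technical:estimate-sum-cubes--small-delta-chernoff}, together with a union bound over the $N$ elements. The sample complexity is immediate: the procedure draws exactly $q = \ceil{1000 N \ln(N^4/\eta)} = O(N\log(N/\eta))$ samples, deterministically.

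\textbf{Correctness.} I apply the first technical lemma to every $i$ with $\delta_i \ge 1$ (in particular $\delta_i > 1/4$). For each such $i$, with probability at least $1 - 2\eta/N^4$ we have $\hat\delta_i \in (1 \pm 1/2)\delta_i$. On this event $\hat\delta_i \ge \delta_i/2 \ge 1/2$, so $i$ is counted in the returned sum, and $\hat\delta_i^3 \ge \delta_i^3/8$. A union bound over at most $N$ elements gives failure probability at most $2\eta/N^3 \le \eta$. Every counted term is nonnegative, since $\hat\delta_i \ge 1/2$, so dropping the extra counted elements only decreases the sum. Hence $X = \frac{8}{N}\sum_{\hat\delta_i \ge 1/2}\hat\delta_i^3 \ge \frac{8}{N}\sum_{\delta_i\ge 1}\delta_i^3/8$, which is the required bound.

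\textbf{Expected value.} I split $\E[X] = \frac{8}{N}\sum_i \E[\hat\delta_i^3 \mathbf 1_{\hat\delta_i\ge 1/2}]$ into three groups of elements.
\begin{itemize}
    \item For $\delta_i \le 1/4$, the second technical lemma bounds each term by $2\eta/N$. The total contribution is $\frac{8}{N}\cdot N \cdot 2\eta/N = O(\eta)$.
    \item For $1/4 < \delta_i < 1$, the first technical lemma (using that the indicator is at most $1$ on the nonnegative part) bounds each term by $4\delta_i^3 + 2\eta/N \le 4\delta_i^2 + 2\eta/N$. This contributes $O(\frac1N\sum_i\delta_i^2 + \eta)$.
    \item For $\delta_i \ge 1$, the same bound $4\delta_i^3 + 2\eta/N$ contributes $O(\frac1N\sum_{\delta_i\ge1}\delta_i^3 + \eta)$.
\end{itemize}
Summing the three groups gives the stated $\E[X] = O(\frac1N\sum_i\delta_i^2 + \frac1N\sum_{\delta_i\ge1}\delta_i^3 + \eta)$. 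This matches the algorithm's moment declaration; the statement's $\sum_{i}\delta_i^3$ should be read as the sum over large cubes.

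\textbf{Main obstacle.} The only delicate point is justifying $\E[\hat\delta_i^3\mathbf 1_{\hat\delta_i\ge1/2}] \le \E[\hat\delta_i^3]$ in the middle and large groups. This is where the first technical lemma is needed, and the inequality requires the integrand to be nonnegative. The indicator already kills the negative values of $\hat\delta_i$, so I will apply the lemma's moment bound to $\hat\delta_i^3\mathbf 1_{\hat\delta_i \ge 0}$. If the lemma is literally stated for the signed cube, I note that $\hat\delta_i \ge -1$ always, so the negative part is bounded by $1$ and occurs only on the Chernoff-unlikely event. Its correction is therefore at most $2\eta/N^4$, which is absorbed in the $O(\eta)$ term.
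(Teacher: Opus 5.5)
Your proposal is correct and follows the paper's proof essentially step for step: correctness comes from the large-$\delta$ technical lemma plus a union bound, and the expectation from the split $\delta_i\le 1/4$, $1/4<\delta_i<1$, $\delta_i\ge 1$ with $\delta_i^3\le\delta_i^2$ on the middle range. Your care about the sign of $\hat\delta_i^3$ justifies a step the paper writes as an unexplained equality, namely replacing $\E[\hat\delta_i^3\mathbf{1}_{\hat\delta_i\ge 1/2}]$ by $\E[\hat\delta_i^3]$. Also, you do not need to reinterpret the statement: $\delta_i\ge -1$ gives $-\delta_i^3\le\delta_i^2$ for $\delta_i<1$, hence $\sum_{i:\delta_i\ge 1}\delta_i^3\le\sum_{i}\delta_i^3+\sum_{i}\delta_i^2$, so your bound already implies the literal one.
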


\begin{proof}
    Sample complexity is trivial since the number of samples is explicitly $O(N \log (N/\eta))$.

    For correctness: by Lemma \reflemma{technical:estimate-sum-cubes--large-delta-chernoff}, with probability at least $1 -N \cdot 2\eta/N^4 \ge 1-\eta$ (unless $N=1$, and then all bounds are trivial), $\hat\delta_i \ge \frac{1}{2}\delta_i \ge \frac{1}{2}$ for every $i$ for which $\delta_i \ge 1$, in which case the output is correct since:
    \[  \sum_{i : \hat\delta_i \ge 1/2} \hat\delta_i^3
        \ge \sum_{i : \delta_i \ge 1} (\delta_i / 2)^3
        \ge \frac{1}{8}\sum_{i : \delta_i \ge 1} \delta_i^3 \]

    For the expected value:
    \begin{eqnarray*}
        \E\left[\sum_{i\in \Omega, \hat \delta_i \ge 1/2} \hat\delta_i^3 \right]
        &=& \E\left[\sum_{i\in \Omega} \hat\delta_i^3 \cdot \mathbf{1}_{\hat \delta_i \ge 1/2} \right] \\
        &=& \sum_{i\in \Omega} \E\left[\hat\delta_i^3 \cdot \mathbf{1}_{\hat \delta_i \ge 1/2} \right]
        = \left(\sum_{i\in \Omega, \delta_i > 1/4} \E\left[\hat\delta_i^3\right] + \sum_{i\in \Omega, \delta_i \le 1/4} \E\left[\hat\delta_i^3 \cdot \mathbf{1}_{\hat \delta_i \ge 1/2} \right]\right)
    \end{eqnarray*}

    By Lemma \reflemma{technical:estimate-sum-cubes--large-delta-chernoff} and Lemma \reflemma{technical:estimate-sum-cubes--small-delta-chernoff},
    \begin{eqnarray*}
        \E\left[\sum_{i\in \Omega, \hat \delta_i \ge 1/2} \hat\delta_i^3 \right]
        &\le& 8\left(\sum_{i\in \Omega, \delta_i > 1/4} (4\delta_i^3 + 2\eta/N) + \sum_{i\in \Omega, \delta_i \le 1/4} (\eta/N)\right) \\
        &=& \left(\sum_{i : \delta_i \ge 1} (4\delta_i^3 + 2\eta/N) + \sum_{i : 1/4 < \delta_i < 1} (4\delta_i^3 + 2\eta/N) + \sum_{i : \delta_i \le 1/4} (\eta/N)\right) \\
        (*) &\le& \left(4 \sum_{i : \delta_i \ge 1} \delta_i^3 + 4 \sum_{i : 1/4 \le \delta_i < 1} \delta_i^2 + 2\eta\right)
    \end{eqnarray*}
    $(*)$: Since $\delta_i^3 \le \delta_i^2$ for $\delta_i < 1$.

    As a result, the expected output is bounded by:
    \[ \frac{8}{N} \cdot \left(4 \sum_{i : \delta_i \ge 1} \delta_i^3 + 4 \sum_{i : 1/4 \le \delta_i < 1} \delta_i^2 + 2\eta\right)
        = O\left(\frac{1}{N}\sum_{i=1}^N \delta_i^2 + \frac{1}{N}\sum_{i : \delta_i \ge 1} \delta_i^3 + \eta \right) \]
\end{proof}

\subsection{Deferred proofs of technical lemmas}
The next section begins at Page \pageref{sec:friendly}.

\repprovelemma{technical:estimate-sum-cubes--large-delta-chernoff}
\begin{proof}
    Consider some $i$ for which $\delta_i > 1/4$. By Chernoff's bound,
    \begin{eqnarray*}
        \Pr\left[\hat\delta_i \notin \left(1 \pm \frac{1}{2}\right)\!\delta_i\right]
        &=& \Pr\left[\hat p_i \notin p_i \pm \frac{1}{2N}\delta_i\right] \\
        &=& \Pr\left[\Bin(q, p_i) \notin qp_i\left(1 \pm \frac{\delta_i}{2N \cdot p_i}\right)\right]
        \le 2 e^{-\frac{1}{3} \cdot \frac{\delta_i^2}{4N^2p_i^2} \cdot p_i q}
        = 2 e^{-\frac{1}{12} \cdot \frac{\delta_i^2}{N^2p_i} \cdot q}
    \end{eqnarray*}

    We use $q \ge 1000N \ln (N^4 / \eta)$ and $p_i = (1 + \delta_i) / N$ to obtain that:
    \[  \Pr\left[\hat\delta_i \notin \left(1 \pm \frac{1}{2}\right)\delta_i\right]
        \le 2 e^{-\frac{1}{12} \cdot \frac{\delta_i^2}{N (1 + \delta_i)} \cdot (1000N \ln (N^4 / \eta)}
        \le 2 e^{-\frac{1000}{12} \cdot \frac{\delta_i^2}{1 + \delta_i} \cdot \ln (N^4 / \eta)} \]

    Since $\delta_i > 1/4$, we can use $(1000/12) \delta_i^2 / (1+\delta_i) \ge 1$, and therefore,
    \[  \Pr\left[\hat\delta_i \notin \left(1 \pm \frac{1}{2}\right)\delta_i\right]
        \le 2 e^{-\ln (N^4 / \eta)}
        = 2\eta / N^4 \]

    For the expected value, observe that $\hat\delta_i \le N-1 < N$ with probability $1$. Therefore,
    \[  \E[\hat\delta_i^3]
        \le ((3/2)\delta_i)^3 + \Pr\left[\hat\delta_i > (3/2)\delta_i^3\right] \cdot (N)^3
        \le 4\delta_i^3 + (2\eta/N^4) \cdot N^3
        \le 4\delta_i^3 + 2\eta/N \]
\end{proof}

\repprovelemma{technical:estimate-sum-cubes--small-delta-chernoff}
\begin{proof}
    For every $i$ for which $\delta_i \le 1/4$, $p_i \le (1 + 1/4)/N = 5/(4N)$. By Chernoff's bound:
    \[  \Pr\left[\hat\delta_i \ge \frac{1}{2} \right]
        = \Pr\left[\hat p_i \ge \frac{3}{2N}\right]
        = \Pr\left[\Bin(q, p_i) \ge \frac{3}{2N}\right]
        \le \Pr\left[\Bin\left(q, \frac{5}{4N}\right) \ge \frac{3}{2N}\right]
        \le e^{-\frac{1}{12} \cdot q \cdot 5/(4N)} \]

    Since $q \ge 1000N \ln (N^4/\eta)$,
    \[  \Pr\left[\hat\delta_i \ge \frac{1}{2} \right]
        \le e^{-(5/48) \cdot 1000N \ln (N^4/\eta) }
        \le e^{-\ln (N^4 / \eta)}
        = \eta/N^4 \]

    For the expected value:
    \[  \E[\hat\delta_i^3 \cdot \mathbf{1}_{\hat\delta_i^3 \ge 1/2}]
        = \max_{\text{w.p. 1}} \hat\delta_i^3 \cdot \Pr\left[\hat\delta_i^3 \ge 1/2\right]
        \le \frac{\eta}{N^4} \cdot N^3
        \le 2\eta/N \]
\end{proof}

\section{Friendly distributions}
\label{sec:friendly}

We first prove the algebraic lower-bound of $t_\mu$ for friendly distributions.

\begin{lemma}[Technical lemma]{technical:estimate-t-friendly-advice--deltas-bound}
    Let $\mu$ be a friendly distribution over $\Omega = \{1,\ldots,N\}$. For every $i \in \Omega$, let $\delta_i = N\mu(i) - 1$. In this setting, 
    $\sum_{i=1}^N \delta_i^2 + \sum_{i=1}^N \delta_i^3 - \frac{1}{N}\left(\sum_{i=1}^N \delta_i^2\right)^2
        \ge \frac{1}{6}\sum_{i=1}^N \delta_i^2$.
\end{lemma}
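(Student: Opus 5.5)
The plan is to recognize the left-hand side as $N$ times a variance under $\mu$ itself, and then compare that variance with the variance under the uniform distribution, using friendliness.

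\textbf{Step 1 (rewrite the left-hand side as a variance).} Write $S = \sum_{i=1}^N \delta_i^2$ and recall $\sum_{i=1}^N \delta_i = 0$. Since $\mu(i) = (1+\delta_i)/N$, we have $\delta_i^2 + \delta_i^3 = N\mu(i)\delta_i^2$, so
\[ \sum_{i=1}^N \delta_i^2 + \sum_{i=1}^N \delta_i^3 = N \, \E_{i\sim\mu}[\delta_i^2]. \]
Moreover
\[ \E_{i\sim\mu}[\delta_i] = \frac{1}{N}\sum_{i=1}^N (1+\delta_i)\delta_i = \frac{S}{N}, \]
so $\frac{1}{N}S^2 = N\left(\E_{i\sim\mu}[\delta_i]\right)^2$. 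Therefore the left-hand side of the lemma equals $N \cdot \Var_{i\sim\mu}[\delta_i]$, which is in particular non-negative.

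\textbf{Step 2 (reduce to uniform weights).} I would use the variational characterization $\Var_{i\sim\mu}[\delta_i] = \min_{c\in\mathbb R} \E_{i\sim\mu}[(\delta_i - c)^2]$. Friendliness gives $\mu(i) \ge \frac{7}{13}\cdot\frac{1}{N}$ for every $i$. Hence, for every $c$,
\[ \E_{i\sim\mu}[(\delta_i-c)^2] \ge \frac{7}{13}\cdot\frac{1}{N}\sum_{i=1}^N(\delta_i - c)^2 \ge \frac{7}{13}\cdot\frac{1}{N}\sum_{i=1}^N \delta_i^2. \]
The last inequality holds because, under the uniform distribution, the mean of $\delta$ is $0$, so $c=0$ minimizes the uniform second moment about $c$.

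\textbf{Step 3 (conclude).} Taking the minimum over $c$ and multiplying by $N$, the left-hand side is at least $\frac{7}{13}S$, which exceeds $\frac{1}{6}S$.

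The only nontrivial step is spotting the identity in Step 1, i.e. that the cubic expression is exactly a $\mu$-weighted variance. Once that is seen, the remainder is a one-line comparison of weighted versus uniform second moments. I would double-check the expansion of $\E_{i\sim\mu}[\delta_i]$, where the cancellation relies on $\sum_i \delta_i = 0$.
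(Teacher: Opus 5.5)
Your proof is correct, but it takes a different route from the paper.

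The paper never identifies the left-hand side as a variance. Instead it splits according to whether $\sum_i \delta_i^2 \le N/3$:
\begin{itemize}
\item In the first case it uses the pointwise bound $\delta_i^3 \ge -\frac{1}{2}\delta_i^2$, which holds because friendliness gives $\delta_i \ge -\frac{1}{2}$. Together with $\frac{1}{N}\bigl(\sum_i \delta_i^2\bigr)^2 \le \frac{1}{3}\sum_i \delta_i^2$, this yields exactly the constant $\frac{1}{6}$.
\item In the second case it uses $\delta_i^3 \ge |\delta_i|^3 - \frac{1}{4}$ and the Cauchy--Schwarz estimate $\frac{1}{N}\bigl(\sum_i \delta_i^2\bigr)^2 \le \frac{1}{N}\sum_i |\delta_i| \cdot \sum_i |\delta_i|^3 \le \sum_i |\delta_i|^3$, via $\sum_i |\delta_i| \le N$. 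This yields $\frac{1}{4}$.
\end{itemize}

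Your identity says the left-hand side equals $N \Var_{i\sim\mu}[\delta_i]$. This is just the numerator in Lemma \reflemma{t-explicit-by-deltas} seen through the Chebyshev viewpoint of Lemma \reflemma{t-is-chebyshev}, since $\mu(i) = (1+\delta_i)/N$. It buys you three things:
\begin{itemize}
\item There is no case analysis.
\item You get the sharper constant $\frac{7}{13}$.
\item The hypothesis is weaker: any bound $\min_i \mu(i) \ge c/N$ with $c > 0$ gives the inequality with constant $c$. The paper's pointwise bounds need $\delta_i \ge -\frac{1}{2}$, i.e.\ $c \ge \frac{1}{2}$.
\end{itemize}

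The paper's approach, in turn, stays with pointwise inequalities and a Cauchy--Schwarz step of the same kind it uses again in the companion Lemma \reflemma{technical:estimate-t-friendly-advice--delta-cubes-bound}. There the threshold $c > \frac{1}{2}$ is genuinely needed, through $\lambda \le 6/13$. For the present lemma, however, your argument is the cleaner one.
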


\begin{lemma}[Technical lemma]{technical:estimate-t-friendly-advice--delta-cubes-bound}
    Let $\mu$ be a friendly distribution over $\Omega = \{1,\ldots,N\}$. For every $i \in \Omega$, let $\delta_i = N\mu(i) - 1$. In this setting, 
    $\sum_{i=1}^N \delta_i^2 + \sum_{i=1}^N \delta_i^3 - \frac{1}{N}\left(\sum_{i=1}^N \delta_i^2\right)^2 \ge \frac{1}{45}\sum_{i : \delta_i \ge 1} \delta_i^3$.
\end{lemma}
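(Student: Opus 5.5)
The plan is to rewrite the left-hand side as a variance and then compare that variance with the large-cube sum by a case split on the mean. Put $w_i = 1+\delta_i = N\mu(i)$; friendliness gives $w_i \ge 7/13$, i.e.\ $\delta_i \ge -6/13$, and we also have $\sum_i \delta_i = 0$. Let $S = \frac{1}{N}\sum_i \delta_i^2$. Then $\E_{i\sim\mu}[\delta_i] = \frac1N\sum_i w_i\delta_i = S$ and $\E_{i\sim\mu}[\delta_i^2] = \frac1N\sum_i w_i\delta_i^2$. Since $\sum_i \delta_i^2 + \sum_i \delta_i^3 = \sum_i w_i\delta_i^2$, the left-hand side $F$ satisfies
\[ F = \sum_i w_i\delta_i^2 - N S^2 = N\cdot \Var_{i\sim\mu}[\delta_i]. \]
On the right-hand side, $\delta_i^3 \le w_i\delta_i^2$, so $\frac{1}{N}\sum_{\delta_i\ge1}\delta_i^3 \le \E_\mu[\delta^2\,\mathbf 1_{\delta\ge1}]$. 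It therefore suffices to show
\[ \Var_\mu[\delta] \;\ge\; \tfrac{1}{45}\,\E_\mu[\delta^2\mathbf 1_{\delta\ge 1}]. \]

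\textbf{Case $S\le 1/2$.} Every $i$ with $\delta_i\ge1$ has $\delta_i - S \ge \delta_i/2$. Hence $\Var_\mu[\delta] = \E_\mu[(\delta-S)^2] \ge \frac14\E_\mu[\delta^2\mathbf 1_{\delta\ge1}]$, which is far stronger than needed.

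\textbf{Case $S>1/2$.} The idea is to split the large elements at $\delta_i \ge 4S$. For those, $(\delta_i-S)^2\ge \frac{9}{16}\delta_i^2$, so they are handled directly by $\Var_\mu[\delta]\ge\E_\mu[(\delta-S)^2\mathbf 1_{\delta\ge4S}]$. The remaining part $\E_\mu[\delta^2\mathbf 1_{\delta<4S}]$ is at most $4S\cdot\E_\mu[|\delta|]$ (using $\delta\ge 1$ there, so $\delta^2 < 4S\,\delta$), and I will bound it by $O(S^2)$ plus $O(\Var_\mu[\delta])$. Then it remains to show $S^2 = O(\Var_\mu[\delta])$. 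For this I use the non-positive elements, which contribute $\Var_\mu[\delta]\ge S^2\,\mu(\{\delta\le 0\})$. I then need a lower bound on $\mu(\{\delta\le0\})$, equivalently on the number of non-positive elements relative to how much mass $S$ is carried by large elements.

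Here I use $\sum\delta_i=0$ together with $\delta_i\ge -6/13$: this gives $\sum_{\delta_i>0}\delta_i \le \frac{6}{13}\,\#\{\delta_i\le 0\}$, and each such element has $\mu$-mass at least $\frac{7}{13N}$. Comparing this with $S\cdot\sum_{\delta_i>0}\delta_i \ge \dots$ is where the bookkeeping sits. As a backup, I would invoke Lemma \reflemma{technical:estimate-t-friendly-advice--deltas-bound}, which gives $F\ge\frac16\sum\delta_i^2 = \frac N6 S$, i.e.\ $\Var_\mu[\delta]\ge S/6$. This controls the middle range $1\le\delta_i<4S$ whenever those terms are dominated by $O(S)$ times their count divided by $N$.

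I expect the main obstacle to be this second case: showing that the elements with $1\le\delta_i<4S$ cannot carry much of $\E_\mu[\delta^2]$ without also forcing a proportionate variance. I would first try the combination $\Var_\mu[\delta]\ge \max\{S/6,\ S^2\mu(\delta\le0)\}$ against the bound $\E_\mu[\delta^2\mathbf 1_{1\le\delta<4S}]\le 4S\,\E_\mu[\delta^+]$. I would track the explicit constants, aiming for an overall factor of at most $45$, and tune the split point $4S$ if the arithmetic is tight.
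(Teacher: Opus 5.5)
Your reformulation is correct, and it is cleaner than the paper's $\alpha,\beta,\gamma,\lambda$ bookkeeping. With $S=\frac1N\sum_i\delta_i^2$, the left-hand side is exactly $N\Var_{i\sim\mu}[\delta_i]$. The reduction to $\Var_\mu[\delta]\ge\frac1{45}\E_\mu[\delta^2\mathbf 1_{\delta\ge1}]$ costs at most a factor $2$, and your case $S\le\frac12$ is complete.

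The gap is the case $S>\frac12$. It is only sketched, and the tools you propose for it do not work. You bound the middle range $1\le\delta<4S$ by $O(S^2)$ and then want $S^2=O(\Var_\mu[\delta])$ from $\Var_\mu[\delta]\ge S^2\mu(\{\delta\le0\})$ plus a lower bound on $\mu(\{\delta\le 0\})$. No such lower bound holds. Take:
one element with $\delta=D$, where $D\equiv 5 \pmod 6$;
$\frac{13}{6}(D+1)$ elements with $\delta=-\frac{6}{13}$;
$r=\lceil D^2/S_0\rceil$ elements with $\delta=1/r$.
This is friendly, with $\sum_i\delta_i=0$ and $S\approx S_0$, yet $\mu(\{\delta\le0\})=\frac{7(D+1)}{6N}\approx\frac{7S_0}{6D}$. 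For $D\ge S_0^3$ and $S_0\ge1$, your combination $\max\{S/6,\,S^2\mu(\{\delta\le0\})\}$ is $O(S_0)$, far below $S^2\approx S_0^2$. The backup $\Var_\mu[\delta]\ge S/6$ only helps while $S=O(1)$. In this example the middle range is empty and the variance is about $S_0D$. So the information is lost exactly when you charge the middle range a multiple of $S^2$ and then try to pay for $S^2$ with these weak variance bounds.

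The claim $S^2=O(\Var_\mu[\delta])$ is true, but it needs an idea your plan lacks. In its case $\alpha=S\ge2$, the paper applies Cauchy--Schwarz, $(\sum_i\delta_i^2)^2\le\sum_i|\delta_i|\cdot\sum_i|\delta_i|^3$, with $\sum_i|\delta_i|=2\lambda N\le\frac{12}{13}N$ by friendliness. This gives $\frac1N\sum_i\delta_i^3\ge\frac{40}{39}S^2$, hence a left-hand side of at least $\frac1{40}\sum_i\delta_i^3$. For $S\le2$ the paper splits on $\beta\ge4\alpha$ and otherwise invokes the previous lemma, much like your backup.

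A simpler repair within your framework is to keep the weight of the middle range instead of collapsing it to $O(S^2)$:
for $1\le\delta_i<4S$ one has $(1+\delta_i)\delta_i^2\le32S^2\delta_i$;
$\sum_{1\le\delta_i<4S}\delta_i\le\sum_{\delta_i<0}|\delta_i|\le\frac{6}{13}\#\{i:\delta_i<0\}\le\frac67N\mu(\{\delta<0\})$;
hence $\E_\mu[\delta^2\mathbf 1_{1\le\delta<4S}]\le\frac{192}{7}S^2\mu(\{\delta<0\})\le\frac{192}{7}\E_\mu[(\delta-S)^2\mathbf 1_{\delta<0}]$.
Since $\{\delta<0\}$ and $\{\delta\ge4S\}$ are disjoint, adding your high-range estimate gives $\Var_\mu[\delta]\ge\frac{7}{192}\E_\mu[\delta^2\mathbf 1_{\delta\ge1}]$ for every $S$. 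This beats the required $\frac1{45}$ and needs no case split. It is the ``relative'' comparison you gestured at, but it must be made before the middle range is replaced by $O(S^2)$, not after.
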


\repprovelemma{t-lbnd-by-sum-squares-sum-cubes}
\begin{proof}
    By Lemma \reflemma{technical:estimate-t-friendly-advice--deltas-bound},
    \[  \sum_{i=1}^N \delta_i^2 + \sum_{i=1}^N \delta_i^3 - \frac{1}{N}\left(\sum_{i=1}^N \delta_i^2\right)^2
        \ge \frac{1}{6}\sum_{i=1}^N \delta_i^2 \]
    
    By Lemma \reflemma{technical:estimate-t-friendly-advice--delta-cubes-bound},
    \[  \sum_{i=1}^N \delta_i^2 + \sum_{i=1}^N \delta_i^3 - \frac{1}{N}\left(\sum_{i=1}^N \delta_i^2\right)^2
        \ge \frac{1}{45}\sum_{i : \delta_i \ge 1} \delta_i^3 \]

    Combined,
    \[ \sum_{i=1}^N \delta_i^2 + \sum_{i=1}^N \delta_i^3 - \frac{1}{N}\left(\sum_{i=1}^N \delta_i^2\right)^2
        \ge \frac{1}{90}\left(\frac{1}{6}\sum_{i=1}^N \delta_i^2 + \sum_{i : \delta_i \ge 1} \delta_i^3\right) \]

    Therefore, by Lemma \reflemma{t-explicit-by-deltas}
    \begin{eqnarray*}
        t_\mu &=& \frac{\frac{1}{N}\left(\sum_{i=1}^N \delta_i^2 + \sum_{i=1}^N \delta_i^3 - \frac{1}{N}\left(\sum_{i=1}^N \delta_i^2\right)^2\right)}{\left(1 + \frac{1}{N}\sum_{i=1}^N \delta_i^2\right)^2} \\
        &\ge& \frac{1}{90} \cdot \frac{\frac{1}{N}\left(\sum_{i=1}^N \delta_i^2 + \sum_{i : \delta_i \ge 1}^N \delta_i^3\right)}{\left(1 + \frac{1}{N}\sum_{i=1}^N \delta_i^2\right)^2} \\
        \text{[Lemma \reflemma{mu22-explicit-by-deltas}]} &=& \frac{1}{90(N\|\mu\|_2^2)^2} \cdot \frac{1}{N}\left(\sum_{i=1}^N \delta_i^2 + \sum_{i : \delta_i \ge 1}^N \delta_i^3\right)
    \end{eqnarray*}
\end{proof}

To estimate $t_\mu$, we estimate the sum of squares $\sum_{i=1}^N \delta_i^2$ and the sum of large cubes $\sum_{i : \delta_i \ge 1} \delta_i^3$, and combine them according to Lemma \reflemma{t-lbnd-by-sum-squares-sum-cubes}.

\begin{proc-algo}{estimate-t-friendly}{\eta; \mu}
    \alginput{A distribution $\mu$ over $\Omega = \{1,\ldots,N\}$}
    \algoutput{If $\mu$ is friendly, then $X \ge t_\mu$ with probability $\ge 1-\eta$}
    \algmoments{$\E[X] = O(t_\mu + \eps)$}
    \algcomplexity{$O(N \log (N/\eta \eps) + \log (1/\eta) / \|\mu\|_2)$}
    \begin{code}
        \algitem Let $\ell_2 \gets \proc{estimate-L2-BC}(\eta/3;\mu,1/2)$.
        \algitem Let $a \gets \proc{estimate-sum-squares}(\eta/3; \mu)$.
        \algitem Let $b \gets \proc{estimate-sum-cubes}(\min\{\eta/3,\eps\}; \mu)$.
        \algitem Return $\min\{360(a + b) / (N\ell_2)^2, \sqrt{N}\}$.
    \end{code}
\end{proc-algo}

\repprovelemma{estimate-t-friendly}
\begin{proof}
    By Lemma \reflemma{estimate-L2-BC}:
    \begin{itemize}
        \item With probability at least $1-\eta/3$, $\ell_2 \le \frac{3}{2}\|\mu\|_2^2$.
        \item $\E[1/\ell_2^2] = O(1/\|\mu\|_2^4)$.
        \item The cost of obtaining $\ell_2$ is $O(\log (1/\eta) / \eps\|\mu\|_2)$.
    \end{itemize}
    
    By Lemma \reflemma{estimate-sum-squares}:
    \begin{itemize}
        \item With probability at least $1-\eta/3$, $a \ge \frac{1}{N} \sum_{i=1}^N \delta_i^2$.
        \item $\E[a] = O(\frac{1}{N} \sum_{i=1}^N \delta_i^2 + \eps)$.
        \item The cost of obtaining $a$ is $O(\log (1/\eta) \cdot (\sqrt{N}/\eps + 1/\eps\|\mu\|_2))$.
    \end{itemize}
    
    By Lemma \reflemma{estimate-sum-cubes}:
    \begin{itemize}
        \item With probability at least $1-\eta/3$, $b \ge \frac{1}{N} \sum_{i : \delta_i \ge 1} \delta_i^3$.
        \item $\E[b] = O(\frac{1}{N} \sum_{i : \delta_i \ge 1} \delta_i^3 + \frac{1}{N} \sum_{i=1}^N \delta_i^3 + \eps)$.
        \item The cost of obtaining $b$ is $O(N \log (N/\eta\eps))$.
    \end{itemize}

    By the union bound and Lemma \reflemma{t-lbnd-by-sum-squares-sum-cubes} (if $\mu$ is friendly), with probability at least $1-\eta$,
    \begin{eqnarray*}
        360(a + b) / (N\ell_2)^2
        \ge 360\frac{\frac{1}{N}\left(\sum_{i=1}^N \delta_i^2 + \sum_{i : \delta_i \ge 1} \delta_i^3\right)}{((3/2)N\|\mu\|_2^2)^2}
        \ge 160\frac{\frac{1}{N}\left(\sum_{i=1}^N \delta_i^2 + \sum_{i : \delta_i \ge 1} \delta_i^3\right)}{(N\|\mu\|_2^2)^2}
        \ge t_\mu
    \end{eqnarray*}
    If, for some reason, this expression is too high, then we use $\sqrt{N} \ge 1/\|\mu\|_2 \ge t_\mu$ instead.

    For the expected value,
    \begin{eqnarray*}
        \E[a+b]\E[1/\ell_2^2]/N^2
        = O\left(\frac{1}{N} \sum_{i=1}^N \delta_i^2 + \frac{1}{N} \sum_{i : \delta_i \ge 1} \delta_i^3 + \eps\right) \cdot O(1/\|\mu\|_2^4) / N^2
    \end{eqnarray*}
    Which is $O(t_\mu + \eps)$ by Lemma \reflemma{t-lbnd-by-sum-squares-sum-cubes}, assuming that $\mu$ is friendly. Otherwise, we cannot use the Lemma, and we can only guarantee that the output is bounded by $\sqrt{N}$.
\end{proof}

\subsection{Deferred proofs of technical lemmas}
The next section begins at Page \pageref{sec:large-mu2}.

\repprovelemma{technical:estimate-t-friendly-advice--deltas-bound}
\begin{proof}
    Since $\delta_i \ge -1/2$ for every $1 \le i \le N$, we can use the bounds $\delta_i^3 \ge -\frac{1}{2}\delta_i^2$ and $\delta_i^3 \ge \abs{\delta_i}^3 - \frac{1}{4}$.
    
    If $\sum_{i=1}^N \delta_i^2 \le \frac{1}{3}N$, then:
    \[  \sum_{i=1}^N \delta_i^2 + \sum_{i=1}^N \delta_i^3 - \frac{1}{N}\left(\sum_{i=1}^N \delta_i^2\right)^2
        \ge \sum_{i=1}^N \delta_i^2 - \frac{1}{2}\sum_{i=1}^N \delta_i^2 - \frac{1}{N} \cdot \frac{1}{3}N \cdot \sum_{i=1}^N \delta_i^2
        = \frac{1}{6} \sum_{i=1}^N \delta_i^2
    \]

    For higher sums, we need a few intermediate bounds. Since $\sum_{i=1}^N \delta_i = 0$, we can deduce that:
    \[  \sum_{i=1}^N \abs{\delta_i}
        = 2\sum_{i : \delta_i < 0} \abs{\delta_i} \le 2 \cdot N \cdot \frac{1}{2} = N \]

    By Cauchy-Schwartz inequality,
    \[  \frac{1}{N} \left(\sum_{i=1}^N \delta_i^2\right)^2
        = \frac{1}{N} \left(\sum_{i=1}^N \abs{\delta_i}^{1/2} \abs{\delta_i}^{3/2} \right)^2
        \le \frac{1}{N} \underbrace{\sum_{i=1}^N \abs{\delta_i}} \sum_{i=1}^N \abs{\delta_i}^3
        \le \frac{1}{N} \cdot N \cdot \sum_{i=1}^N \abs{\delta_i}^3
        = \sum_{i=1}^N \abs{\delta_i}^3 \]
    
    If $\sum_{i=1}^N \delta_i^2 \ge \frac{1}{3}N$, then:
    \begin{eqnarray*}
        \sum_{i=1}^N \delta_i^2 + \sum_{i=1}^N \delta_i^3 - \frac{1}{N}\left(\sum_{i=1}^N \delta_i^2\right)^2
        &\ge& \sum_{i=1}^N \delta_i^2 + \left(\underbrace{\sum_{i=1}^N \abs{\delta_i}^3} - \frac{1}{4}N\right) - \underbrace{\frac{1}{N}\left(\sum_{i=1}^N \delta_i^2\right)^2} \\
        &\ge& \sum_{i=1}^N \delta_i^2 - \frac{1}{4}N \\
        &\ge& \left(1 - \frac{1/4}{1/3}\right)\sum_{i=1}^N \delta_i^2
        = \frac{1}{4}\sum_{i=1}^N \delta_i^2
    \end{eqnarray*}
\end{proof}

\repprovelemma{technical:estimate-t-friendly-advice--delta-cubes-bound}
\begin{proof}
    Let:
    \[  \alpha = \frac{1}{N}\sum_{i=1}^N \delta_i^2, \qquad
        \beta = \frac{1}{N}\sum_{i : \delta_i \ge 1} \delta_i^3, \qquad
        \gamma = \frac{1}{N}\sum_{i : \delta_i < 1} \delta_i^3, \qquad
        \lambda = \frac{1}{N}\sum_{i : \delta_i < 0} \abs{\delta_i} = \frac{1}{N}\sum_{i : \delta_i > 0} \abs{\delta_i}
    \]

    Note that $\lambda \le \frac{6}{13}$, since $\delta_i \ge 7/13-1 \ge -6/13$ for every $i \in \Omega$.

    Case I. $\alpha \le 2$.

    Lower bound for $\gamma$:
    \[ \gamma
        = \frac{1}{N}\sum_{i : \delta_i < 1} \delta_i^3
        \ge \frac{1}{N} \cdot (-1/2) \sum_{i : \delta_i < 1} \delta_i^2
        = -\frac{1}{2}\alpha \]

    \begin{eqnarray*}
        \sum_{i=1}^N \delta_i^2 + \sum_{i=1}^N \delta_i^3 - \frac{1}{N}\left(\sum_{i=1}^N \delta_i^2\right)^2
        &=& \alpha + (\beta + \gamma) - \alpha^2 \\
        &\ge& \alpha + (\beta - \alpha/2) - \alpha^2 \\
        &=& (1/2 - \alpha)\alpha + \beta
        \ge (-3/2)\alpha + \beta
    \end{eqnarray*}

    Case I.a. If $\alpha \le 2$ and $\beta \ge 4\alpha$, then
    \begin{eqnarray*}
        \frac{1}{N}\left(\sum_{i=1}^N \delta_i^2 + \sum_{i=1}^N \delta_i^3 - \frac{1}{N}\left(\sum_{i=1}^N \delta_i^2\right)^2\right)
        &\ge& (-3/2)\alpha + \beta \\
        &=& (-3/2)\alpha + \frac{3}{8}\beta + \frac{5}{8}\beta
        \ge (-3/2)\alpha + (3/2)\alpha + \frac{5}{8}\beta
        = \frac{5}{8}\beta
    \end{eqnarray*}
    
    Case I.b. If $\alpha \le 2$ and $\beta < 4\alpha$, then we can use \reflemma{technical:estimate-t-friendly-advice--deltas-bound} to obtain that
    \[  \frac{1}{N}\left(\sum_{i=1}^N \delta_i^2 + \sum_{i=1}^N \delta_i^3 - \frac{1}{N}\left(\sum_{i=1}^N \delta_i^2\right)^2\right)
        \ge \frac{1}{6}\alpha
        > \frac{1}{24}\beta \]

    In both subcases of Case I, $\sum_{i=1}^N \delta_i^2 + \sum_{i=1}^N \delta_i^3 - \frac{1}{N}\left(\sum_{i=1}^N \delta_i^2\right)^2 \ge \frac{1}{24}\beta = \frac{1}{24}\sum_{i : \delta_i \ge 1} \delta_i^3$.
    
    Case II. $\alpha \ge 2$.

    Since $\sum_{i : \delta_i \ge 1} \delta_i^3 \ge \sum_{i : \delta_i \ge 1} \delta_i^2 = \alpha N \ge 2N$,
    \begin{eqnarray*}
        \sum_{i=1}^N \delta_i^3
        \ge \sum_{i : \delta_i \ge 1} \delta_i^3 - \frac{1}{8} N
        \ge \frac{15}{16}\sum_{i : \delta_i \ge 1} \delta_i^3
    \end{eqnarray*}

    Therefore,
    \begin{eqnarray*}
        \alpha^2 N
        &=& \frac{1}{N}\left(\sum_{i=1}^N \delta_i^2\right)^2 \\
        \text{[Cauchy-Schwartz]} &\le& \frac{1}{N} \cdot \sum_{i=1}^N \abs{\delta_i} \cdot \sum_{i=1}^N \abs{\delta_i}^3 \\
        &\le& \frac{1}{N} \cdot \left(2\sum_{i : \delta_i < 0} \abs{\delta_i}\right) \cdot \left(\sum_{i=1}^N \delta_i^3 + 2 \cdot \sum_{i : \delta_i < 0} \abs{\delta_i}^3\right) \\
        &\le& \frac{1}{N} \cdot \left(2\sum_{i : \delta_i < 0} \abs{\delta_i}\right) \cdot \left(\sum_{i=1}^N \delta_i^3 + \frac{1}{2} \cdot \sum_{i : \delta_i < 0} \abs{\delta_i}\right)
        = 2\lambda\left(\sum_{i=1}^N \delta_i^3 + \frac{1}{2}\lambda N\right)
    \end{eqnarray*}

    That is,
    \begin{eqnarray*}
        \sum_{i=1}^N \delta_i^3
        \ge \frac{\alpha^2 N}{2\lambda} - \frac{1}{2}\lambda N
        &=& \left(\frac{1}{2\lambda} - \frac{\lambda}{2\alpha^2} \right)\alpha^2 N \\
        \text{[Since $\alpha \ge 2$]} &\ge& \left(\frac{1}{2\lambda} - \frac{1}{8}\lambda \right)\alpha^2 N \\
        \text{[Since $\lambda \le 6/13$]} &\ge& \left(\frac{1}{2 \cdot (6/13)} - \frac{1}{8} \cdot (6/13)\right)\alpha^2 N
        > \frac{40}{39}\alpha^2 N
    \end{eqnarray*}

    Therefore, if $\alpha \ge 2$,
    \[  \sum_{i=1}^N \delta_i^2 + \sum_{i=1}^N \delta_i^3 - \frac{1}{N}\left(\sum_{i=1}^N \delta_i^2\right)^2
        \ge \alpha N + \sum_{i=1}^N \delta_i^3 - \alpha^2 N
        \ge 0 + \sum_{i=1}^N \delta_i^3 - \frac{39}{40}\sum_{i=1}^N \delta_i^3
        \ge \frac{1}{40} \sum_{i=1}^N \delta_i^3
        \ge \frac{1}{45} \sum_{i : \delta_i \ge 1} \delta_i^3 \]
\end{proof}

\section{Finding an advice when $\|\mu\|_2$ is large}
\label{sec:large-mu2}

Recall the key lemma relating $t_\mu$ to a large-deviation bound.

\repprovelemma{t-is-chebyshev}
\begin{proof}
    Assume that we draw a sample $i$ according to $\mu$, and let $X = \mu(i)$. Clearly, $\E[X] = \sum_{i \in \Omega} \mu(i) \cdot \mu(i) = \|\mu\|_2^2$. For the second moment, $\E[X^2] = \sum_{i \in \Omega} \mu(i) \cdot (\mu(i))^2 = \|\mu\|_3^3$. Therefore, $\Var[X] = \|\mu\|_3^3 - \|\mu\|_2^4$. By Chebyshev's inequality, for every $\alpha > 0$,
    \begin{eqnarray*}
        \Pr\left[\mu(i) \notin (1 \pm \alpha)\|\mu\|_2^2\right]
        = \Pr\left[X \notin (1 \pm \alpha)\E[X]\right]
        \le \frac{\Var[X]}{\alpha^2 (\E[X])^2}
        = \frac{\|\mu\|_3^3 - \|\mu\|_2^4}{\alpha^2 \|\mu\|_2^4}
    \end{eqnarray*}
    Therefore, $\frac{\|\mu\|_3^3}{\|\mu\|_2^4} - 1 = \frac{\|\mu\|_3^3 - \|\mu\|_2^4}{\|\mu\|_2^4} \ge \sup_{\alpha > 0} \alpha^2 \Pr\left[\mu(i) \notin (1 \pm \alpha)\|\mu\|_2^2\right]$.
\end{proof}

\subsection{Good partitions}
\label{sec:find-advice-large-mu2:subsec:reduction-facts}

In this technical subsection we prove the statements about good partitions. Let $\mu$ be a discrete distribution over a domain $\Omega$. Recall that a partition $A \cup B = \Omega$ is \emph{good} if:
\begin{itemize}
    \item $A \subseteq \{ i : \mu(i) > \frac{11}{20} \|\mu\|_2^2 \}$.
    \item $B \subseteq \{ i : \mu(i) < \frac{2}{3} \|\mu\|_2^2 \}$.
\end{itemize}

\repprovelemma{good-partition--mu-B-small}
\begin{proof}
    Recall Lemma \reflemma{t-is-chebyshev}:
    \[  t_\mu
        \ge \frac{1}{3^2} \cdot \Pr\left[\mu(i) \notin \left(1 \pm \frac{1}{3}\right)\|\mu\|_2^2 \right]
        \ge \frac{1}{9} \mu(B) \]

    Therefore, $\mu(B) \le 9t_\mu$.
\end{proof}

\repprovelemma{good-partition--A-is-friendly}
\begin{proof}
    By Lemma \reflemma{good-partition--mu-B-small}, $\mu(A) = 1 - \mu(B) \ge 1 - 9t_\mu \ge 99/100$. Observe that $\|\mu\|_2^2 \ge (1 - \mu(A))^2\|\mu_A\|_2^2 \ge \frac{99^2}{100^2}\|\mu_A\|_2^2$. For every $i \in A$,
    \begin{eqnarray*}
        \mu(i)
        \ge \frac{11}{20}\|\mu\|_2^2 
        \ge \frac{11}{20} \cdot \frac{99^2}{100^2} \|\mu_A\|_2^2 
        \ge \frac{7}{13}\|\mu_A\|_2^2
        \ge \frac{7}{13\abs{A}}
    \end{eqnarray*}
    
    Therefore, the distribution $\mu_A$ is friendly.
\end{proof}

\begin{lemma}{sub-m22-bounded-by-1/2}
    Let $\mu$ be a discrete distribution and let $i$ be an element. If $\mu(i) < \|\mu\|_2^2$, then $\mu(i) \le \frac{1}{2}$.
\end{lemma}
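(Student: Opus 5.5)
We argue by contrapositive: assume $\mu(i) > \frac{1}{2}$ and show that then $\mu(i) \ge \|\mu\|_2^2$. The idea is that a single heavy element dominates the collision norm, because the remaining mass is too small to contribute much.

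Write $p = \mu(i)$. The other elements have total mass $1-p$. Each of them has mass at most $1-p$, so their squares contribute at most
\[ \sum_{j \ne i} \mu(j)^2 \le \Bigl(\max_{j\ne i}\mu(j)\Bigr)\sum_{j\ne i}\mu(j) \le (1-p)^2. \]
This gives $\|\mu\|_2^2 \le p^2 + (1-p)^2$.

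It remains to check that $p^2 + (1-p)^2 \le p$ whenever $p \ge \frac{1}{2}$. Expanding, this inequality is $2p^2 - 3p + 1 \le 0$, which factors as $(2p-1)(p-1) \le 0$. This holds for every $p \in [\frac{1}{2}, 1]$. Hence $p > \frac{1}{2}$ implies $\|\mu\|_2^2 \le \mu(i)$, which contradicts the hypothesis $\mu(i) < \|\mu\|_2^2$.

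There is no real obstacle here. The one step to get right is the bound on the off-element contribution, $\sum_{j\ne i}\mu(j)^2 \le (1-p)^2$; the rest is the quadratic factorization. Equality cases, such as $p = \frac{1}{2}$ with one other element of mass $\frac{1}{2}$, cause no trouble, since the claimed conclusion is the non-strict $\mu(i) \le \frac{1}{2}$.
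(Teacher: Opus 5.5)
Your proof is correct, and it takes a slightly different route from the paper. The paper notes that $\mu(i) > \frac{1}{2}$ forces $\mu(i)$ to be the largest mass, since all other masses sum to less than $\frac{1}{2}$. It then concludes directly from $\|\mu\|_2^2 = \E_{j\sim\mu}[\mu(j)] \le \max_j \mu(j) = \mu(i)$. In effect, the paper applies your ``max times sum'' bound to the whole distribution at once, so no quadratic has to be checked.

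You apply that bound only to the elements other than $i$, each capped by $1-p$. You then need the factorization $(2p-1)(p-1)\le 0$ to absorb the $p^2$ term.

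The paper's version is shorter and isolates the underlying fact: the maximum-mass element always has mass at least $\|\mu\|_2^2$, and the threshold $\frac{1}{2}$ is used only to guarantee maximality. Yours gives the slightly sharper bound $\|\mu\|_2^2 \le p^2+(1-p)^2 = p-(2p-1)(1-p)$, which is not needed for the lemma.
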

\begin{proof}
    Let $i$ be an element for which $\mu(i) > \frac{1}{2}$. The mass $\mu(i)$ must be maximal, since the sum of all other masses is strictly smaller than $1 - 1/2 = 1/2$. Recall that $\|\mu\|_2^2 = \E_{i \sim \mu}[\mu(i)]$. Therefore, $\mu(i) = \max_i \mu(i) \ge \E[\mu(i)] = \|\mu\|_2^2$, and as a result, if $\mu(i) < \|\mu\|_2^2$ then $\mu(i) \le \frac{1}{2}$.
\end{proof}

\begin{lemma}[Technical lemma]{technical:t-is-linear-by-erasing-small-element}
    Let $\mu$ be a discrete distribution, and let $i$ be an element for which $\mu(i) < \|\mu\|_2^2$. Let $\tau$ be the distribution obtained by conditioning $\mu$ on ``not $i$''. In this setting, if $\|\mu\|_3^3 \le 2\|\mu\|_2^4$, then $\frac{\|\tau\|_3^3}{\|\tau\|_2^4} = \frac{\|\mu\|_3^3}{\|\mu\|_2^4} \pm 4\mu(i)$. (In other words, if $t_\mu \le 1$ then $t_\tau = t_\mu \pm 4\mu(i)$).
\end{lemma}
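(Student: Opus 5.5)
Write $p = \mu(i)$, $S_2 = \|\mu\|_2^2$, $S_3 = \|\mu\|_3^3$. Conditioning on ``not $i$'' gives closed forms:
\[ \|\tau\|_2^2 = \frac{S_2 - p^2}{(1-p)^2}, \qquad \|\tau\|_3^3 = \frac{S_3 - p^3}{(1-p)^3}, \]
so
\[ \frac{\|\tau\|_3^3}{\|\tau\|_2^4} = \frac{(S_3 - p^3)(1-p)}{(S_2-p^2)^2}. \]
The plan is to compare this directly with $S_3/S_2^2$ by writing it as
\[ \frac{S_3}{S_2^2}\cdot \frac{1 - p^3/S_3}{(1 - p^2/S_2)^2}\cdot (1-p) \]
and bounding each factor as $1 \pm O(p)$. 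Lemma \reflemma{sub-m22-bounded-by-1/2} gives $p \le 1/2$, which keeps the denominators bounded away from zero.

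\textbf{Factor bounds.} Since $p < S_2$, we have $p^2/S_2 < p$ and $p^3/S_3 \le p^3/S_2^2 < p$, using $S_3 \ge S_2^2$. So:
\begin{itemize}
\item the factor $(1-p)(1-p^3/S_3)$ lies in $[1-2p, 1]$;
\item the factor $(1-p^2/S_2)^{-2}$ lies in $[1, (1-p)^{-2}]$.
\end{itemize}
A sharper estimate helps here: $p^2/S_2 = p\cdot(p/S_2) < p$. The ratio therefore lies in $\frac{S_3}{S_2^2}\cdot[1-2p,\ (1-p)^{-2}]$.

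\textbf{Converting to an additive bound.} Using the hypothesis $S_3/S_2^2 \le 2$, the lower deviation is at most $2\cdot 2p = 4p$.

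The upper side is the main obstacle. We have $(1-p)^{-2} - 1 \le 2p/(1-p)^2$, which can reach $8p$ at $p=1/2$, giving an additive error of up to $16p$, too large. So the plan is to retain the helpful $(1-p)(1-p^3/S_3)$ factor in the upper bound as well. That yields the upper multiplier $(1-p)/(1-p^2/S_2)^2$. Then I would use $p^2/S_2 < p$ only partially: write $q = p^2/S_2 \le p$, so the multiplier is at most $(1-p)/(1-q)^2$. Since $(1-p)/(1-q)^2 \le 1/(1-p)$ when $q \le p$, the upper multiplier is at most $1 + p/(1-p) \le 1+2p$. Multiplying by $S_3/S_2^2 \le 2$ gives deviation at most $4p$.

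\textbf{Conclusion.} Combining both sides gives $\frac{\|\tau\|_3^3}{\|\tau\|_2^4} \in \frac{S_3}{S_2^2} \pm 4p$. The parenthetical follows by subtracting $1$.

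The delicate step is verifying $(1-p)/(1-q)^2 \le 1+2p$ for $q \le p \le 1/2$, using $(1-q)^2 \ge (1-p)^2$. If the constant turns out slightly off at $p$ near $1/2$, I would refine via $q \le p\cdot(p/S_2)$ together with $S_2 \ge p^2$.
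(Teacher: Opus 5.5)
Your proof is correct and is essentially the paper's argument. It uses the same closed form $(1-p)(S_3-p^3)/(S_2-p^2)^2$. The lower bound comes from $(1-p)(1-p^3/S_3)\ge(1-p)^2\ge 1-2p$, and the upper bound from $(1-p^2/S_2)^2\ge(1-p)^2$ and $1/(1-p)\le 1+2p$ for $p\le 1/2$, with both converted to additive error via $S_3/S_2^2\le 2$. The step you flag as delicate is immediate, since $1/(1-p)\le 1+2p$ is equivalent to $p(1-2p)\ge 0$, so no refinement is needed.
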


\repprovelemma{good-partition--tmu-by-tmuA-muB}
\begin{proof}
    Let $B \subseteq \{ i : \mu(i) < \frac{2}{3}\mu(i) \}$ be a set of small elements. For convenience, without loss of generality, we assume that $B$ is infinite, and denote it by $B = \{b_1,b_2,\ldots\}$. If it is finite, then we can add infinitely many arbitrary zero-probability elements.

    Let $B = \{b_1,\ldots\}$. For every $i \ge 0$, let $B_i = \{b_1,\ldots,b_i\}$ (empty for $i=0$) and $\tau_i = \mu_{\neg B_i}$ (the same as $\mu$ for $i=0$).

    By Lemma \reflemma{good-partition--mu-B-small}, $\mu(B) \le 9t \le 1/10$. Therefore, for every $i \ge 1$:
    \begin{eqnarray*}
        \tau_{i-1}(b_i)
        = \frac{\mu(b_i)}{1 - \mu(B_{i-1})}
        \le \frac{1}{1-1/10}\mu(b_i)
        = \frac{10}{9}\mu(b_i)
    \end{eqnarray*}

    For every $i \ge 1$, $\|\mu\|_2^2 \le \frac{10}{9}\|\tau_{i-1}\|_2^2$ since:
    \begin{eqnarray*}
        \|\mu\|_2^2
        = (1 - \mu(B_i))^2 \|\mu_{\neg B_i}\|_2^2 + \sum_{j\in B_i} (\mu(j))^2
        &\le& \|\tau_i\|_2^2 + \mu(B_i) \max_{j\in B_i} \mu(i) \\
        &\le& \|\tau_i\|_2^2 + 9t \cdot \frac{2}{3}\|\mu\|_2^2
        \le \|\tau_i\|_2^2 + \frac{1}{10}\|\mu\|_2^2
    \end{eqnarray*}
    
    And therefore, for every $i \ge 1$,
    \begin{eqnarray*}
        \tau_{i-1}(b_i)
        \le \frac{10}{9}\mu(b_i)
        <   \frac{10}{9} \cdot \frac{2}{3} \|\mu\|_2^2
        \le \frac{10}{9} \cdot \frac{2}{3} \cdot \frac{10}{9} \|\tau_{i-1}\|_2^2
        \le \|\tau_{i-1}\|_2^2
    \end{eqnarray*}

    Therefore, we can apply Lemma \reflemma{technical:t-is-linear-by-erasing-small-element} for every $i \ge 1$, as long as $t_{\tau_{i-1}} \le 1$.
    
    By Definition, $t_{\tau_0} = t_\mu \le 1/90 = 1/90 + 5 \mu(B_0) \le 1/90 + 5 \mu(B) < 1$. For every $i \ge 1$, inductively by applying Lemma \reflemma{technical:t-is-linear-by-erasing-small-element}:
    \begin{eqnarray*}
        t_{\tau_i}
        = t_{\tau_{i-1}} \pm 4\tau_{i-1}(b_i)
        = t_{\tau_{i-1}} \pm 4 \cdot \frac{10}{9} \mu(b_i)
        = t_{\tau_{i-1}} \pm 5 \mu(b_i)
        = t_\mu \pm 5 \mu(B_i)
        \le t_\mu + 5 \mu(B) < 1
    \end{eqnarray*}

    By considering the limit, $t_{\mu_{\neg B}} = \lim_{i \to \infty} t_{\tau_{\neg B_i}} \in t_\mu \pm 5\mu(B)$.
\end{proof}

\subsection{The reduction}

We estimate $t_\mu$ with constant accuracy (using \proc{estimate-t-directly}) to test whether or not $t_\mu \le 1/900$, since one of our good-partition lemmas require this assumption. If $t_\mu$ is found to be larger, then we estimate it again (independently, for the ease of the analysis, using the same parameters) and return the result as an advice.

Next, we learn the input distribution $\mu$ using $\tilde{O}(1 / \|\mu\|_2^2)$ samples to construct a set $A$ such that $A \cup (\Omega\setminus A)$ is a good partition with high probability. More precisely, the algorithm tests every individual element for having mass greater than $\frac{11}{20}\|\mu\|_2^2$ or smaller than $\frac{2}{3}\|\mu\|_2^2$ (note the overlap). Note that since only the set $A$ is constructed, the algorithm can access $B = \Omega\setminus A$ only through the belonging oracle ($i \in^? B$), which is implemented as the negation of the corresponding oracle for $A$ ($i \in B \leftrightarrow i \notin A$).

Finally, we estimate $\mu(B)$ (through \proc{estimate-indicator-additive}) and $t_{\mu_A}$ (through \proc{estimate-t-friendly}, using rejection-sampling according to Lemma \reflemma{rejection-sampling-concentration}). The result advice is $\est(t_{\mu_A}) + 5(\est(\mu(B)) + \eps)$, since we have to cancel the potential error in $\mu(B)$ (at most $\eps$, additive) and then extract $t_\mu$ from the inequality promised by Lemma \reflemma{good-partition--tmu-by-tmuA-muB}. \algforprocshort{find-advice-large-mu2}

\begin{proc-algo}{find-advice-large-mu2}{\eta; \mu, \eps}
    \begin{code}
        \algitem Let $t_1 \gets \proc{estimate-t-directly}(\min\{\eta/10,\eps\};\mu,1/10^4)$.
        \begin{If}{$t_1 \ge 1/900$}
            \algitem Return $\proc{estimate-t-directly}(\eta/10;\mu,1/10^4)$.
        \end{If}
        \algitem Let $\ell \gets \proc{estimate-L2-BC}(\min\{\eta/10,\eps\}; \mu,1/100)$.
        \algitem Let $q \gets \ceil{10000 \ln (100/\eta \eps \ell) / \ell}$.
        \algitem Draw $q$ independent samples from $\mu$.
        \begin{For}{every $i \in \Omega$ (only finitely many are non-zero)}
            \algitem Let $X_i$ be the number of $i$ samples drawn.
            \algitem Let $\hat p_i \gets X_i / q$.
        \end{For}
        \algitem Let $A \gets \{ i : \hat p_i > (3/5)\ell \}$ and (virtually) $B \gets \{ i : \hat p_i \le (3/5)\ell \}$.
        \algitem Let $r_B \gets \proc{estimate-indicator-additive}(\eta/10; \bullet, \eps)$.
        \begin{Codeblock*}
            \algitem Sample: draw $i \sim \mu$, the result is an indicator for $i \notin A$.
        \end{Codeblock*}
        \algitem Let $N \gets \abs{A}$.
        \algitem Let $r_A \gets \proc{estimate-t-friendly}(\eta/10; \mu_A, \eps)$.
        \begin{Codeblock*}
            \item By rejection sampling.
            \begin{If}{The first $i$ $A$-samples require more than $4(i + \ceil{12 \ln(10/\eta)})$ $\mu$-samples}
                \algitem Terminate and return $0$.
            \end{If}
        \end{Codeblock*}
        \algitem Return $r_A + 5(r_B + \eps)$.
    \end{code}
\end{proc-algo}

\begin{lemma}{friendly-reduction-A-must-have}
    Considering \proc*{find-advice-large-mu2} (Algorithm \refalg{find-advice-large-mu2}), if $\ell$ is correctly in the range $(1 \pm 1/100)\|\mu\|_2^2$, then with probability at least $1-\eps \eta/10$, for every $i \in \Omega$ for which $\mu(i) \ge \frac{2}{3}\|\mu\|_2^2$, $i \in A$.
\end{lemma}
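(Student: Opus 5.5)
We fix an element $i$ with $\mu(i) \ge \frac{2}{3}\|\mu\|_2^2$ and bound the probability that it is missing from $A$. Recall that $A = \{ i : \hat p_i > (3/5)\ell \}$ and $\hat p_i = X_i/q$, where $X_i \sim \Bin(q,\mu(i))$. Throughout we condition on $\ell \in (1 \pm 1/100)\|\mu\|_2^2$, so that $q$ is fixed given $\ell$.

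\textbf{Step 1: a single element.} Since $(3/5)\ell \le (3/5)(101/100)\|\mu\|_2^2 < (0.61)\|\mu\|_2^2 \le \frac{0.61}{2/3}\mu(i) < \frac{11}{12}\mu(i)$, the bad event $\hat p_i \le (3/5)\ell$ implies $X_i \le (1-1/12)q\mu(i)$. By the multiplicative Chernoff lower-tail bound,
\[ \Pr\left[X_i \le (1-1/12)q\mu(i)\right] \le e^{-\frac{1}{2}(1/12)^2 q \mu(i)}. \]
We then plug in $q \ge 10000 \ln(100/\eta\eps\ell)/\ell$ together with $\mu(i) \ge \frac{2}{3}\|\mu\|_2^2 \ge \frac{2}{3}\cdot\frac{100}{101}\ell$. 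This gives $q\mu(i) \ge 6000\ln(100/\eta\eps\ell)$, so the exponent is at least $\frac{6000}{288}\ln(100/\eta\eps\ell) \ge 20\ln(100/\eta\eps\ell)$. Hence the failure probability for $i$ is at most $(\eta\eps\ell/100)^{20}$, and in particular at most $\eta\eps\ell/100$.

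\textbf{Step 2: union bound over heavy elements.} The number of elements with $\mu(i) \ge \frac{2}{3}\|\mu\|_2^2$ is at most $\frac{3}{2\|\mu\|_2^2} \le \frac{3}{2}\cdot\frac{101}{100}\cdot\frac{1}{\ell} \le 2/\ell$. A union bound therefore bounds the total failure probability by
\[ \frac{2}{\ell}\cdot\frac{\eta\eps\ell}{100} = \frac{\eta\eps}{50} \le \frac{\eps\eta}{10}. \]
Because we kept the full power $(\eta\eps\ell/100)^{20}$ in Step 1, there is ample slack here; the factor $1/\ell$ inside the logarithm in $q$ is exactly what absorbs the count of heavy elements.

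\textbf{Main obstacle.} The only delicate step is the constant bookkeeping in Step 1. We must check that the threshold $(3/5)\ell$ is bounded away from $\mu(i)$ by a constant factor, which holds because $0.61 < 2/3$ even after the $1\%$ error in $\ell$. We must also check that the Chernoff exponent with deviation $1/12$ beats $\ln(1/(\eta\eps\ell))$ with a constant coefficient large enough to pay for the union bound over at most $2/\ell$ elements. With $q$'s constant $10000$ this holds comfortably.
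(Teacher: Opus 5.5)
Your proof is correct and follows the same route as the paper. It applies a per-element multiplicative Chernoff lower-tail bound, using that the threshold $(3/5)\ell \le 0.606\|\mu\|_2^2$ lies a constant factor below $\mu(i)$, and then a union bound over the at most $\frac{3}{2\|\mu\|_2^2}$ heavy elements, with the $\ln(1/\ell)$ term in $q$ paying for that union bound. The only difference is cosmetic: you compare the threshold directly to $\mu(i)$ with deviation $1/12$ and keep $\ell$ inside the logarithm, whereas the paper dominates by $\Bin(q,\frac{2}{3}\|\mu\|_2^2)$ with deviation $0.091$ and rewrites the logarithm in terms of $\|\mu\|_2^2$. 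Your constants check out.
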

\begin{proof}
    \begin{eqnarray*}
        \ell &\le& (1 + 1/100)\|\mu\|_2^2 \\
        1/\ell &\ge& 1/(1 + 1/100)/\|\mu\|_2^2 \ge (1 - 1/100)/\|\mu\|_2^2 \\
        q &\ge& 10000(1-1/100) \ln(100(1-1/100)/\eta \eps \|\mu\|_2^2) / \|\mu\|_2^2
        \ge 9900 \ln(99/\eta \eps \|\mu\|_2^2) / \|\mu\|_2^2
    \end{eqnarray*}

    The threshold probability is $(3/5)\ell \le (3/5)(1+1/100)\|\mu\|_2^2$. For every $i$ for which $\mu(i) \ge \frac{2}{3}\|\mu\|_2^2$,
    \begin{eqnarray*}
        \Pr\left[\hat p_i \le (3/5)\ell \right]
        &\le& \Pr\left[\Bin(q, (2/3)\|\mu\|_2^2) \le (3/5)(1+1/100)\|\mu\|_2^2 \right] \\
        &\le& \Pr\left[\Bin(q, (2/3)\|\mu\|_2^2) \le 0.909\E[\Bin(q, (2/3)\|\mu\|_2^2)] \right] \\
        &\le& e^{-\frac{1}{3} \cdot (0.091)^2 \|\mu\|_2^2 q}
        \le e^{-\frac{1}{3} \cdot (0.091)^2 \cdot 9900 \ln(99/\eta \eps \|\mu\|_2^2)}
        \le e^{-\ln(99/\eta \eps \|\mu\|_2^2)}
        = \frac{1}{99} \eta \eps \|\mu\|_2^2
    \end{eqnarray*}
    
    There are at most $1 / ((2/3) \|\mu\|_2^2)$ such elements, and therefore, by the union bound, the probability that even one of them does not belong to $A$ is bounded by $\eps \eta/10$.
\end{proof}

\begin{lemma}{friendly-reduction-B-must-have}
    Considering \proc*{find-advice-large-mu2} (Algorithm \refalg{find-advice-large-mu2}), if $\ell$ is correctly in the range $(1 \pm 1/100)\|\mu\|_2^2$, then with probability at least $1-\eps\eta/10$, for every $i \in \Omega$ for which $\mu(i) < \frac{11}{20}\|\mu\|_2^2$, $i \in B$.
\end{lemma}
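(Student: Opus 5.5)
The plan mirrors the proof of Lemma \reflemma{friendly-reduction-A-must-have}, now bounding the upper tail of $\hat p_i$ for the small elements. Since every element with $\mu(i) < \frac{11}{20}\|\mu\|_2^2$ must land in $B$, we need $\hat p_i \le (3/5)\ell$ for all such $i$. Under the assumption $\ell \ge (1-1/100)\|\mu\|_2^2$, the threshold satisfies $(3/5)\ell \ge (3/5)(99/100)\|\mu\|_2^2 = 0.594\|\mu\|_2^2$. This is a constant factor (about $1.08$) above $\frac{11}{20}\|\mu\|_2^2 = 0.55\|\mu\|_2^2$. For the number of samples, the upper bound $\ell \le (1+1/100)\|\mu\|_2^2$ gives, exactly as before,
$$q \ge 9900 \ln(99/\eta\eps\|\mu\|_2^2)/\|\mu\|_2^2 .$$

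For an individual $i$ with $\mu(i) < \frac{11}{20}\|\mu\|_2^2$, first use monotonicity of the binomial tail in $p$. This gives
$$\Pr[\hat p_i > (3/5)\ell] \le \Pr\left[\Bin\left(q, \tfrac{11}{20}\|\mu\|_2^2\right) > 1.08\,\E\left[\Bin\left(q, \tfrac{11}{20}\|\mu\|_2^2\right)\right]\right].$$
The upper Chernoff bound then gives
$$\Pr[\hat p_i > (3/5)\ell] \le e^{-\frac{1}{3}(0.08)^2 \cdot \frac{11}{20}\|\mu\|_2^2 q}.$$
Plugging in $q$, the exponent is at least $\frac{1}{3}(0.0064)(0.55)(9900)\ln(99/\eta\eps\|\mu\|_2^2) \approx 11.6\ln(\cdots)$. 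So the per-element failure probability is at most $(\eta\eps\|\mu\|_2^2/99)^{c}$ for some $c>1$; in fact $c=2$ suffices.

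The main obstacle is the union bound. Unlike the previous lemma, the number of elements with $\mu(i) < \frac{11}{20}\|\mu\|_2^2$ can be unbounded, even infinite, so we cannot simply multiply by a count. Instead, the union bound must be weighted by the masses. For very small $\mu(i)$, bound $\Pr[\Bin(q,\mu(i)) > (3/5)\ell q]$ by the probability that element $i$ is sampled at all, which is at most $q\mu(i)$; more sharply, by $\Pr[\Bin(q,\mu(i)) \ge k] \le (eq\mu(i)/k)^k$ with $k = \lceil (3/5)\ell q\rceil \ge 5000\ln(\cdots)$.

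We therefore split the elements into dyadic classes by mass. Class $j$ contains the elements with $\mu(i) \in (2^{-j-1}, 2^{-j}]\cdot\frac{11}{20}\|\mu\|_2^2$ and has at most $2^{j+1}\cdot\frac{20}{11}/\|\mu\|_2^2$ elements. For $j \ge 1$ the threshold is at least $2^j$ times the mean, so the tail per element is at most $(e/2^j)^{k}$, which decays super-exponentially in $j$. Summing over the classes, $\sum_j 2^{j+1}(e\cdot 2^{-j}\cdot 1.08^{-1})^k$ is dominated by the first few terms. Each of those is at most $O(1/\|\mu\|_2^2)\cdot(\eta\eps\|\mu\|_2^2/99)^{c}$ with $c \ge 2$, which is at most $\eps\eta/10$. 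The $j=0$ class is handled by the Chernoff computation above together with the count bound $\frac{40}{11}/\|\mu\|_2^2$. Elements of mass zero are never sampled and trivially lie in $B$. This completes the plan.
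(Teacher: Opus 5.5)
Your argument is correct in substance, but it takes a different route from the paper.

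\textbf{The paper's route.} The paper never union-bounds over individual light elements. It packs the set $\{i : \mu(i) < \frac{11}{20}\|\mu\|_2^2\}$ into at most $5/\|\mu\|_2^2$ groups $S_1,\ldots,S_m$, each of total mass below $\frac{11}{20}\|\mu\|_2^2$. It then notes that $\hat p_i \le \hat p_{S_j}$ for every $i \in S_j$. One multiplicative Chernoff bound per group, with the same factor $1.08$ you use for your class $j=0$, plus a union bound over the $O(1/\|\mu\|_2^2)$ groups, finishes the proof. This grouping makes the possibly infinite number of light elements harmless and needs nothing beyond the basic Chernoff bound.

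\textbf{Your route.} Your dyadic decomposition with the tail $\Pr[\Bin(q,p)\ge k] \le (eqp/k)^k$ gets the same result by making the per-element tails summable against mass-weighted class sizes. It costs more bookkeeping and an infinite sum, but gives a much smaller failure probability. You also keep the $\eps$ inside the logarithm, which the stated $\eps\eta/10$ bound needs; the paper's write-up drops it there.

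\textbf{A slip to fix.} The bound $(e\cdot 2^{-j}/1.08)^k$ is nontrivial only when $1.08\cdot 2^j > e$, that is, for $j \ge 2$. At $j=1$ it equals $(e/2.16)^k > 1$. So your claim that the leading terms of $\sum_j 2^{j+1}(e\cdot 2^{-j}\cdot 1.08^{-1})^k$ are each $O(1/\|\mu\|_2^2)\cdot(\eta\eps\|\mu\|_2^2/99)^{c}$ does not follow from that formula.

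The repair is immediate, because your monotonicity-plus-Chernoff bound applies to every element with $\mu(i) < \frac{11}{20}\|\mu\|_2^2$:
\begin{itemize}
\item Use the Chernoff bound for classes $j \in \{0,1\}$. Class $1$ has at most $\frac{80}{11}/\|\mu\|_2^2$ elements, so its contribution is still negligible.
\item Reserve the $(eqp/k)^k$ bound for $j \ge 2$. There $e/4.32 < 1$, and the series over classes is dominated by its first term.
\end{itemize}
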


\begin{proof}
    \begin{eqnarray*}
        \ell &\ge& (1 - 1/100)\|\mu\|_2^2 \\
        1/\ell &\ge& 1/(1 + 1/100)/\|\mu\|_2^2 \ge (1 - 1/100)/\|\mu\|_2^2 \\
        q &\ge& 10000(1-1/100) \ln(100(1-1/100)/\eta\|\mu\|_2^2) / \|\mu\|_2^2
        \ge 9900 \ln(99/\eta\|\mu\|_2^2) / \|\mu\|_2^2
    \end{eqnarray*}

    We partition the set $\{ i : \mu(i) < (11/20) \|\mu\|_2^2 \}$ into $m \le \frac{1}{\frac{1}{2} \cdot \frac{11}{20}\|\mu\|_2^2} + 1 \le 5 / \|\mu\|_2^2$ sets $S_1,\ldots,S_m$, each have probability mass smaller than $(11/20) \|\mu\|_2^2$. For every $1 \le j \le m$, let $X_{S_j} = \sum_{i \in S_j} \mu(i)$ and $p_{S_j} = X_{S_j} / q$.

    The threshold probability is $(3/5)\ell \ge (3/5)(1-1/100)\|\mu\|_2^2$. For every $1 \le j \le m$,
    \begin{eqnarray*}
        \Pr\left[\hat p_{S_j} > (3/5)\ell \right]
        &\le& \Pr\left[\Bin(q, (11/20)\|\mu\|_2^2) > (3/5)(1-1/100)\|\mu\|_2^2 \right] \\
        &\le& \Pr\left[\Bin(q, (11/20)\|\mu\|_2^2) > 1.08\E[\Bin(q, (11/20)\|\mu\|_2^2)] \right] \\
        &\le& e^{-\frac{1}{3} \cdot (0.08)^2 \|\mu\|_2^2 q}
        \le e^{-\frac{1}{3} \cdot (0.08)^2 \cdot 9900 \ln(99/\eta\|\mu\|_2^2)}
        \le e^{-\ln(99/\eta\|\mu\|_2^2)}
        = \frac{1}{99} \eta \|\mu\|_2^2
    \end{eqnarray*}

    By the union bound over $m \le 5/\|\mu\|_2^2$ sets, the probability that $S_j \cap A \ne \emptyset$ is bounded by $\eta/10$.
\end{proof}

\begin{lemma}{find-advice-large-mu2--large-t}
    Consider the call to $\proc*{find-advice-large-mu2}(\eta; \mu, \eps)$, for $0 < \eta \le 1/3$ and $0 < \eps < 1$. If $t_\mu \ge 1/900$, then with probability at least $1-\eta$, the output is not smaller than $t_\mu$.
\end{lemma}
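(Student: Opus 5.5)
The plan is to split on the first branch of \proc*{find-advice-large-mu2}. The algorithm either returns early, through the call to \proc{estimate-t-directly} inside the ``if $t_1 \ge 1/900$'' branch, or proceeds to the partition-based part. Since $t_\mu \ge 1/900$, the only safe exit is the early return, so I will show two things. First, with $t_\mu \ge 1/900$ the algorithm takes the early branch except with probability $\eta/10$. Second, the second call to \proc{estimate-t-directly} is then correct except with probability $\eta/10$. A union bound gives failure probability at most $\eta/5 \le \eta$.

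For the first step, invoke Lemma \reflemma{estimate-t-directly} for the call producing $t_1$, with error parameter $\min\{\eta/10,\eps\}$ and $\delta = 1/10^4$. That lemma only gives the one-sided guarantee $t_1 \ge t_\mu$ with probability $1-\min\{\eta/10,\eps\} \ge 1-\eta/10$. Here one-sidedness is exactly what is needed: $t_1 \ge t_\mu \ge 1/900$ forces the condition $t_1 \ge 1/900$ to hold, so the branch is taken. I do not need the two-sided range $t_\mu \le X \le 15t_\mu + 3\delta$ here.

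For the second step, the returned value is the output of an independent call $\proc{estimate-t-directly}(\eta/10;\mu,1/10^4)$. By Lemma \reflemma{estimate-t-directly} again, with probability at least $1-\eta/10$ this output is at least $t_\mu$. Since the two calls use independent samples, I will not need to condition one on the other; the union bound over the two failure events suffices. On the complement of both failure events, the algorithm enters the early branch and returns a number at least $t_\mu$, which proves the claim.

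I expect no real obstacle. The one point to check carefully is that Lemma \reflemma{estimate-t-directly} is indeed valid for $\delta = 1/10^4 \le 1$ and any $\eta$, and that its correctness guarantee (output at least $t_\mu$) does not depend on $t_\mu$ being small or on $\mu$ being in any particular range. Reading its proof, the lower bound comes from $Y_2 \ge t_\mu + 1$ together with $2(Y_1+\delta) \ge t_\mu$. The latter follows from $Y_1 \ge t_\mu/2 - \delta$ for every $t_\mu$, so the guarantee holds unconditionally.
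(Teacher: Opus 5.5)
Your proposal is correct and matches the paper's proof. With probability at least $1-\eta/10$, the first call gives $t_1 \ge t_\mu \ge 1/900$, which forces the early-return branch; the second call returns a value at least $t_\mu$ except with probability $\eta/10$; and a union bound gives success probability at least $1-\eta$. Your additional check that the lower-bound guarantee of Lemma \reflemma{estimate-t-directly} holds for every value of $t_\mu$, via $2(Y_1+\delta) \ge t_\mu$ and $Y_2 \ge t_\mu + 1$, is sound and is something the paper leaves implicit.
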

\begin{proof}
    With probability at least $1 - \eta/10$, $t_1 \ge t \ge 1/900$ (Lemma \reflemma{estimate-t-directly}), and then we take the if-true branch. This is also the probability that the second estimation of $t$ results in a value at least $t$. By the union bound, the probability of success is at least $1-2\cdot \eta/10 \ge 1-\eta$.
\end{proof}

\begin{lemma}{find-advice-large-mu2--small-t}
    Consider the call to $\proc*{find-advice-large-mu2}(\eta; \mu, \eps)$, for $0 < \eta \le 1/3$ and $0 < \eps < 1$. If $t_\mu < 1/900$, then with probability at least $1-\eta$, the output is not smaller than $t_\mu$.
\end{lemma}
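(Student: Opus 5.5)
We would prove this by collecting a constant number of ``good events'', each failing with probability at most $\eta/10$, and showing that on their intersection the returned value $r_A + 5(r_B+\eps)$ is at least $t_\mu$; the union bound then gives total failure probability at most $\eta$. One caveat first. If the if-true branch is taken ($t_1 \ge 1/900$), the output is an independent call to \proc{estimate-t-directly}, which by Lemma \reflemma{estimate-t-directly} is at least $t_\mu$ with probability $1-\eta/10$. So we only need to handle the branch where the algorithm proceeds to the partition.

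\textbf{Step 1: the estimate $\ell$.} By Lemma \reflemma{estimate-L2-BC}, with probability at least $1-\eta/10$ we have $\ell \in (1\pm 1/100)\|\mu\|_2^2$.

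\textbf{Step 2: the partition $A\cup B$ is good.} Conditioned on Step 1, Lemma \reflemma{friendly-reduction-A-must-have} and Lemma \reflemma{friendly-reduction-B-must-have} each fail with probability at most $\eta/10$. When both hold:
\begin{itemize}
\item every $i$ with $\mu(i)\ge \frac23\|\mu\|_2^2$ lies in $A$, so $B\subseteq\{i:\mu(i)<\frac23\|\mu\|_2^2\}$;
\item every $i$ with $\mu(i)<\frac{11}{20}\|\mu\|_2^2$ lies in $B$, so $A\subseteq\{i:\mu(i)\ge\frac{11}{20}\|\mu\|_2^2\}$.
\end{itemize}
There is a boundary issue: equality $\mu(i)=\frac{11}{20}\|\mu\|_2^2$ is not excluded by the second lemma. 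It is harmless, because all downstream constants have slack; we could also tighten the threshold slightly, since the Chernoff bound still applies. Hence $A\cup B$ is a good partition.

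\textbf{Step 3: consequences of $t_\mu<1/900$.} By Lemma \reflemma{good-partition--mu-B-small}, $\mu(B)\le 9t_\mu<1/100$, so $\mu(A)>1/2$. Three facts follow:
\begin{itemize}
\item By Lemma \reflemma{good-partition--A-is-friendly}, $\mu_A$ is friendly.
\item By Lemma \reflemma{rejection-sampling-concentration}, applied with parameter $\eta/10$, the rejection sampler never crashes except with probability $\eta/10$. Without a crash, the simulated samples are exact i.i.d. samples from $\mu_A$.
\item By Lemma \reflemma{good-partition--tmu-by-tmuA-muB}, which applies because $t_\mu\le 1/90$, we get $t_\mu\le t_{\mu_A}+5\mu(B)$. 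Here $\mu_{\neg B}=\mu_A$.
\end{itemize}

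\textbf{Step 4: the two estimates.} By Lemma \reflemma{estimate-t-friendly}, since $\mu_A$ is friendly, $r_A\ge t_{\mu_A}$ with probability $1-\eta/10$. By Lemma \reflemma{estimate-indicator-additive}, $r_B\ge \mu(B)-\eps$ with probability $1-\eta/10$. On all these events,
\[ r_A+5(r_B+\eps)\ \ge\ t_{\mu_A}+5\mu(B)\ \ge\ t_\mu. \]

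\textbf{Step 5: counting failures.} The failure events are: the branch outcome (only when it leads to the first branch), $\ell$, the two partition lemmas, the rejection sampler, $r_A$ and $r_B$. That is at most seven events of probability $\le\eta/10$ each, so the total is at most $\eta$.

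\textbf{Main obstacle.} The delicate point is the conditioning structure in Step 2 and Step 4. The partition lemmas hold only conditioned on $\ell$ being accurate, and $A$ is itself random. So Lemmas \reflemma{estimate-t-friendly} and \reflemma{estimate-indicator-additive} must be applied conditionally on the realized $A$, using fresh samples, and this is legitimate because those procedures draw independent samples. The crash event also needs care: a crash returns $0$, so it must be counted as a failure rather than ignored.
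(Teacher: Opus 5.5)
Your proof is correct and follows the same route as the paper. It disposes of the if-true branch via the independent second direct estimate of $t_\mu$, and otherwise union-bounds the failure events for $\ell$, the two partition lemmas, $r_B$, the rejection sampler and $r_A$, concluding with Lemma~\reflemma{good-partition--tmu-by-tmuA-muB}. Your remark about the boundary case $\mu(i)=\frac{11}{20}\|\mu\|_2^2$ addresses a point the paper's table passes over (it writes $\ge$ where a good partition asks for $>$), and, as you say, it is harmless.
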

\begin{proof}
    With some probability, the algorithm takes the first if-true branch. In this case, by Lemma \reflemma{estimate-t-directly}, the output is at least $t$ with probability at least $1-\eta/10 \ge 1-\eta$. The rest of the analysis refers to the case in which the algorithm does not take this branch.

    Consider the following list for the ranges of intermediate results and the probability to obtain them correctly. In every row, the ``resolved range'' is obtained by expanding the ``range'' expression assuming that all rows above are correct as well.
    
    \[\begin{array}{lllll}
         \textbf{Result} & \textbf{Range} & \textbf{Probability} & \textbf{Lemma} & \textbf{Resolved Range}  \\
         \ell & (1 \pm 1/100)\|\mu\|_2^2 & 1-\eta/10 & \text{L\reflemma{estimate-L2-BC}} & \ell \in (1 \pm 1/100)\|\mu\|_2^2 \\
         q & \ceil{10000 \ln (100/\eta \eps \ell) / \ell} & 1 & \text{explicit} &q \ge 9900 \ln (99/\eta \eps |\mu\|_2^2) / \|\mu\|_2^2 \\
         A & \supseteq \{ i : \mu(i) > (\approx (3/5))\ell \} & 1-\eps\eta/10 & \text{L\reflemma{friendly-reduction-A-must-have}} & \neg A \subseteq \{ i : \mu(i) < \frac{2}{3}\|\mu\|_2^2 \} \\
         \neg A & \supseteq \{ i : \mu(i) \le (\approx (3/5))\ell \} & 1-\eps\eta/10 & \text{L\reflemma{friendly-reduction-B-must-have}} & A \subseteq \{ i : \mu(i) \ge \frac{11}{20}\|\mu\|_2^2 \} \\
         r_B & \mu(B) \pm \eps & 1-\eta/10 & \text{L\reflemma{estimate-indicator-additive}} & r_B \in \mu(B) \pm \eps \\
         N & \abs{A} & 1 & \text{Markov} & N \le \frac{20}{11\|\mu\|_2^2} \\
         \text{crash} & 0 & 1-\eta/10 & \text{L\reflemma{rejection-sampling-concentration}} & \text{No crash due to rejection sampling} \\
         r_A & \ge t_{\mu_A} & 1-\eta/10 & \text{L\reflemma{estimate-t-friendly}} & r_A \ge t_{\mu_A} \\
         \text{output} & r_A + 5(r_B + \eps) & 1 & \text{explicit} & \text{See below}
    \end{array}\]

    With probability at least $1 - 6\eta/10 \ge 1-\eta$, the output is at least:
    \[ r_A + 5(\mu(B) + \eps)
        \ge t_{\mu_A} + 5((\mu(B) - \eps) + \eps)
        \ge (t_\mu - 5\mu(B)) + 5\mu(B)
        = t_\mu \]
    Where the last $\ge$-transition is correct by Lemma \reflemma{good-partition--tmu-by-tmuA-muB}.
\end{proof}

\begin{lemma}{find-advice-large-mu2--expval-mu-B}
    Consider the call to $\proc*{find-advice-large-mu2}(\eta; \mu, \eps)$, for $0 < \eta \le 1/3$ and $0 < \eps < 1$. The expected value of $\mu(\neg A)$ is bounded by $9t_\mu + \eps$.
\end{lemma}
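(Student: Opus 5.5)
We bound $\E[\mu(\neg A)]$ by splitting on whether the partition $A \cup \neg A$ produced by the algorithm is good. Let $G$ be the event that $\ell \in (1 \pm 1/100)\|\mu\|_2^2$ and that the conclusions of Lemma \reflemma{friendly-reduction-A-must-have} and Lemma \reflemma{friendly-reduction-B-must-have} both hold. On $G$, every element with $\mu(i) \ge \frac{2}{3}\|\mu\|_2^2$ lies in $A$, so $\neg A \subseteq \{i : \mu(i) < \frac{2}{3}\|\mu\|_2^2\}$. Also, every element with $\mu(i) < \frac{11}{20}\|\mu\|_2^2$ lies in $\neg A$, so $A \subseteq \{i : \mu(i) \ge \frac{11}{20}\|\mu\|_2^2\}$. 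Hence $A \cup \neg A$ is a good partition, up to the strict versus non-strict inequality at $\frac{11}{20}$. Lemma \reflemma{good-partition--mu-B-small} only uses the $B$-side condition $B \subseteq \{ i : \mu(i) < \frac{2}{3}\|\mu\|_2^2\}$, so on $G$ we get $\mu(\neg A) \le 9t_\mu$ deterministically. The cleanest route is to invoke Lemma \reflemma{t-is-chebyshev} directly with $\alpha = 1/3$.

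On the complement of $G$ we use the trivial bound $\mu(\neg A) \le 1$, so $\E[\mu(\neg A)] \le 9t_\mu + \Pr[\neg G]$. It remains to show $\Pr[\neg G] \le \eps$. We have $\Pr[\ell \notin (1\pm 1/100)\|\mu\|_2^2] \le \min\{\eta/10,\eps\}$, since \proc{estimate-L2-BC} is called with error parameter $\min\{\eta/10,\eps\}$. Conditioned on $\ell$ being correct, the two learning lemmas fail with probability at most $\eps\eta/10$ each. The union bound gives $\Pr[\neg G] \le \min\{\eta/10,\eps\} + 2\eps\eta/10$. This is not quite $\le \eps$ as written, since for $\eta \le 1/3$ we only get $\eps + \eps/15$.

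This is the step I expect to need care. I would close the gap in one of two ways. One option is to tighten the analysis of Lemma \reflemma{friendly-reduction-B-must-have}: its Chernoff bound actually yields failure probability $\frac{1}{99}\eta\eps\|\mu\|_2^2$ per set, which is far smaller than needed. Combining this with the $\eps$-dependence of $q$ through $\ln(100/\eta\eps\ell)$ gives total failure $O(\eta\eps)$ for both lemmas. The other option is to note that on the event where $\ell$ is wrong, $\mu(\neg A) \le 1$ contributes at most $\min\{\eta/10,\eps\} \le \eps(1-\eta/5)$ when paired carefully. Neither adjustment affects the order of the bound. If the exact constant matters, I would restate the bound as $9t_\mu + (1+\eta/5)\eps$, which is all that is needed downstream for $\E[r_B] = O(t_\mu + \eps)$.

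One detail must also be settled: the event $G$ is determined before the branch on $t_1$ matters. The quantity $\mu(\neg A)$ is only defined when the algorithm reaches the construction of $A$. So the expectation is taken over runs that construct $A$ (or $\mu(\neg A)$ is set to $0$ otherwise), and the bound above holds pointwise on either event. Note that we do not need $t_\mu \le 1/900$ here, because Lemma \reflemma{t-is-chebyshev} holds unconditionally.
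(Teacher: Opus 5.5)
Your route is the paper's route. You split on whether $A\cup\neg A$ is a good partition, use $\mu(\neg A)\le 9t_\mu$ on that event, and use $\mu(\neg A)\le 1$ otherwise. Your observation that Lemma \reflemma{good-partition--mu-B-small} only needs the $B$-side condition is correct, so the $\ge$ versus $>$ at $11/20$ is harmless. You are also right that the bad event is not shown to have probability at most $\eps$. The paper writes ``with probability at least $1-2\eps\eta/10\ge 1-\eps$''. But Lemmas \reflemma{friendly-reduction-A-must-have} and \reflemma{friendly-reduction-B-must-have} are conditional on $\ell\in(1\pm 1/100)\|\mu\|_2^2$, and the failure probability $\min\{\eta/10,\eps\}$ of $\ell$ is silently dropped. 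The paper's proof has exactly the hole you found.

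Neither of your two repairs works, and you should discard them. Sharpening the learning lemmas cannot help. Your remark that restoring the $\eps$ inside $q$ gives $\tfrac{5}{99}\eta\eps$ in Lemma \reflemma{friendly-reduction-B-must-have} is correct, and is in fact needed to justify that lemma's stated $\eps\eta/10$. But when $\eps\le\eta/10$, the typical regime, the $\ell$-term alone already equals $\eps$. Any positive learning-failure probability then pushes the total above $\eps$. For the same reason, your inequality $\min\{\eta/10,\eps\}\le\eps(1-\eta/5)$ is false whenever $\eps\le\eta/10$, where the left side is $\eps$. It also fails for $\eps$ slightly above $\eta/10$.

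Treating the subroutine lemmas as black boxes, your argument actually establishes $\E[\mu(\neg A)]\le 9t_\mu+\min\{\eta/10,\eps\}+\eps\eta/5\le 9t_\mu+(1+\eta/5)\eps$. That is all Lemma \reflemma{find-advice-large-mu2} uses, since it only needs $\E[\mu(B)]=O(t_\mu+\eps)$. If you want the constant exactly as stated, the clean fix is in the algorithm, not the analysis. Call the estimator for $\ell$ with error parameter $\min\{\eta/10,\eps/2\}$; the bad event then has probability at most $\eps/2+\eps\eta/5<\eps$.
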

\begin{proof}
    By Lemma \reflemma{friendly-reduction-A-must-have} and \reflemma{friendly-reduction-B-must-have}, with probability at least $1-2\eps\eta/10 \ge 1-\eps$, $A \subseteq \{ i : \mu(i) > \frac{11}{20}\|\mu\|_2^2 \}$ and $\neg A \subseteq \{ i : \mu(i) < \frac{2}{3}\|\mu\|_2^2 \}$. In this case, by Lemma \reflemma{good-partition--mu-B-small}, $\mu(\neg A) \le 9t_\mu$.

    By the law of total expectation,
    \[  \E[\mu(B)]
        \le \Pr[\text{good}] \times (\max \mu(B) \text{ if good}) + \Pr[\neg \text{good}] \cdot (\max \mu(B))
        \le 1 \cdot 9t_\mu + \eps \cdot 1
        = 9t_\mu + \eps \]
\end{proof}

\begin{lemma}{find-advice-large-mu2--expval-N}
    Consider the call to $\proc*{find-advice-large-mu2}(\eta; \mu, \eps)$, for $0 < \eta \le 1/3$ and $0 < \eps < 1$. The expected value of $N$ is bounded by $O(1/\|\mu\|_2)$ and $\E[N \log N] = O(\log(1/\|\mu\|_2)/\|\mu\|_2)$.
\end{lemma}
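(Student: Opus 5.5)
Throughout, I read $\|\mu\|_2$ in the bound as the collision norm $\|\mu\|_2^2$. The Preliminaries allow the symbol to denote either entity, and the row ``$N \le \frac{20}{11\|\mu\|_2^2}$'' in the table of Lemma \reflemma{find-advice-large-mu2--small-t} shows this is the intended scale. The literal root-norm reading cannot hold: for $\mu$ uniform on $M$ elements we have $\ell \approx 1/M$, every sampled element lands in $A$ with high probability, and so $N \approx M = 1/\|\mu\|_2^2 \gg \sqrt{M}$. So the targets are $\E[N] = O(1/\|\mu\|_2^2)$ and $\E[N\log N] = O(\log(1/\|\mu\|_2)/\|\mu\|_2^2)$. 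This is the scale at which the result is consumed: the $N\log(N/\eta\eps)$ term of Lemma \reflemma{estimate-t-friendly} produces the $\log(1/\eta\eps\|\mu\|_2)/\|\mu\|_2^2$ term of Lemma \reflemma{find-advice-large-mu2}.

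\textbf{Step 1: a deterministic bound on $N$.} The empirical frequencies satisfy $\sum_i \hat p_i = 1$. Every $i \in A$ has $\hat p_i > (3/5)\ell$. Hence, for every outcome,
\[ N = \abs{A} < \frac{5}{3\ell}. \]
This holds with no appeal to correctness of $\ell$ or of the learning phase. It reduces the whole lemma to moments of the single random variable $1/\ell$, where $\ell$ is the output of $\proc*{estimate-L2-BC}(\min\{\eta/10,\eps\};\mu,1/100)$.

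\textbf{Step 2: $\E[N]$.} By Lemma \reflemma{estimate-L2-BC}, all negative moments are preserved, and the amplified version also preserves them by Lemma \reflemma{median-expected-value}. Thus $\E[1/\ell] = O(1/\|\mu\|_2^2)$, and therefore $\E[N] \le \frac{5}{3}\E[1/\ell] = O(1/\|\mu\|_2^2)$.

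\textbf{Step 3: $\E[N\log N]$, the main obstacle.} The map $x \mapsto x\log x$ is convex, so Jensen's inequality goes the wrong way. A Cauchy--Schwarz split on the event that $\ell$ is accurate would lose a factor depending on $\eps$. Instead I would use the tail structure from the proof of Lemma \reflemma{estimate-L2-BC}.

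Set $L = 1/\|\mu\|_2^2$. For a single (unamplified) BC run, the proof shows $\Pr[k/\ell \ge \lambda m_{\mathrm{high}}] \le (\Pr[k/\ell \ge m_{\mathrm{high}}])^\lambda$ with base probability at most $1/15$. Since $k/\ell = \binom{M}{2}$ and $m_{\mathrm{high}}^2 = \Theta(kL)$, this yields a geometric tail: $\Pr[1/\ell \ge 2^j C L] \le 2^{-j}$ for all integers $j \ge 1$, for a suitable constant $C$.

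Summing over dyadic shells then gives
\[ \E\!\left[\tfrac1\ell\log\tfrac1\ell\right] \le CL\log(CL) + \sum_{j\ge1} 2^{-(j-1)}\cdot 2^{j}CL\,\log(2^{j}CL). \]
This crude form does not yet converge. To fix this, I would use the stronger tail from Lemma \reflemma{exponential-tail}, namely $\Pr[1/\ell \ge \lambda a] \le p^{\lambda}$ with $a = \Theta(L)$ and $p < 1$. The shell at $\lambda a$ then contributes $p^{\lambda}\,\lambda a\log(\lambda a)$, and summing over $\lambda$ gives $O(L\log L)$.

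For the amplified estimator (the lower median of $q$ copies), the event that the median exceeds $\lambda a$ requires at least half of the copies to exceed $\lambda a$. Its probability is therefore at most the single-copy tail, so the same bound holds.

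Finally, since $x\log x$ is monotone for $x \ge 1$, Step 1 gives
\[ \E[N\log N] \le \E\!\left[\tfrac{5}{3\ell}\log\tfrac{5}{3\ell}\right] = O\!\left(\frac{\log(1/\|\mu\|_2)}{\|\mu\|_2^2}\right). \]
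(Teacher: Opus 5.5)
Your Steps 1 and 2 are exactly the paper's proof: the deterministic bound $N \le 5/(3\ell)$ from $\sum_i \hat p_i = 1$, followed by the negative moments of $\ell$ from Lemma \reflemma{estimate-L2-BC}. Your reading of the target as $O(1/\|\mu\|_2^2)$ is also what the paper's proof actually derives. Your uniform-distribution example correctly shows that the literal $O(1/\|\mu\|_2)$ cannot hold.

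\textbf{Step 3: a shorter route.} For $\E[N\log N]$ the paper again just cites Lemma \reflemma{estimate-L2-BC}, and you can do the same in two lines instead of building a tail argument. That lemma gives $\E[1/\ell^2] = O(L^2)$ with $L = 1/\|\mu\|_2^2 \ge 1$. For $X = 5/(3\ell)$ we have $X\log X = X\log L + X\log(X/L) \le X\log L + X^2/L$. Hence $\E[X\log X] = O(L\log L + L)$. Jensen going the wrong way does not matter once a higher moment is controlled; this is the ``strongly preserves'' remark in the Preliminaries.

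\textbf{Step 3: an incorrect tail claim.} As written, your Step 3 asserts a tail bound that does not hold.
\begin{itemize}
\item The sub-multiplicativity $\Pr[S_{\lambda m}<k] \le (\Pr[S_m<k])^\lambda$ concerns the stopping time $M$. It gives $\Pr[M > \lambda m_{\mathrm{high}}] \le p^\lambda$.
\item Since $1/\ell = \binom{M}{2}/k \asymp M^2/k$, this only yields $\Pr[1/\ell \ge c\lambda^2 L] \le p^\lambda$. It does not yield $\Pr[1/\ell \ge \lambda a] \le p^\lambda$ with $a = \Theta(L)$.
\item The stronger claim is false uniformly in $\mu$. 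Take an atom of mass $1/2$ and spread the remaining mass uniformly over $n$ elements. Then $\Pr[M>m] \ge 2^{-m-1}$ for all $m \le \sqrt n$, since with probability $2^{-m}$ no sample hits the atom, and then there is no collision with probability at least $1/2$. Consequently $\Pr[1/\ell \ge \lambda a] \ge 2^{-O(\sqrt{\lambda k})}$, which exceeds $p^\lambda$ for large $\lambda$ once $n$ is large.
\end{itemize}
The line you are quoting from the proof of Lemma \reflemma{estimate-L2-BC} has the same $M$ versus $\binom{M}{2}$ slip.

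The correct, weaker tail still suffices, because $\sum_{\lambda\ge1} p^\lambda\, c(\lambda+1)^2L\log\bigl(c(\lambda+1)^2L\bigr) = O(L(1+\log L))$. Your median argument is unaffected. So the conclusion stands, but Step 3 needs this repair or should be replaced by the moment argument above.
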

\begin{proof}
    Recall that $N$ is the size of the set $A = \{ \hat{p}_i : \hat{p}_i > (3/5)\ell \}$. Since $\sum_{i=1}^N \hat{p}_i = 1$ by definition and all of them are non-negative, $N = \abs{A} \le (5/3)/\ell$. Therefore, $\E[N] = O(1/\ell) = O(1/\|\mu\|_2^2)$ and $\E[N \log N] = O(\log (1/\|\mu\|_2) / \|\mu\|_2^2)$ (both by Lemma \reflemma{estimate-L2-BC}).
\end{proof}

\begin{lemma}{find-advice-large-mu2--expval-tmuA}
    Consider the call to $\proc*{find-advice-large-mu2}(\eta; \mu, \eps)$, for $0 < \eta \le 1/3$ and $0 < \eps < 1$. The expected value of $t_{\mu_A}$ is $O(t_\mu + \eps/\|\mu\|_2)$.
\end{lemma}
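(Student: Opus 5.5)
We split on the ``good partition'' event $G$: $A \subseteq \{ i : \mu(i) > \frac{11}{20}\|\mu\|_2^2 \}$ and $\neg A \subseteq \{ i : \mu(i) < \frac{2}{3}\|\mu\|_2^2 \}$. We bound $\E[t_{\mu_A}] \le \E[t_{\mu_A} \mid G]\Pr[G] + \E[t_{\mu_A} \cdot \mathbf{1}_{\neg G}]$ and treat the two terms separately. Throughout we may assume the algorithm reached the reduction branch. There is one caveat: $t_\mu$ itself may exceed $1/900$ while $t_1 < 1/900$. This happens with probability at most $\min\{\eta/10,\eps\} \le \eps$ by Lemma \reflemma{estimate-t-directly}, so it can be absorbed into the bad-event term together with $\neg G$.

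\textbf{Good event.} Assume $t_\mu \le 1/900$ and $G$ holds. Lemma \reflemma{good-partition--tmu-by-tmuA-muB} gives $t_{\mu_A} = t_{\mu_{\neg B}} \le t_\mu + 5\mu(B)$. Lemma \reflemma{good-partition--mu-B-small} gives $\mu(B) \le 9t_\mu$. Hence $t_{\mu_A} \le 46\, t_\mu$ deterministically on this event, so this term contributes $O(t_\mu)$.

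\textbf{Bad event.} We need a deterministic cap on $t_{\mu_A}$ and a bound on the probability of the bad event. Norm monotonicity gives $t_\nu \le \|\nu\|_3^3/\|\nu\|_2^4 \le 1/\|\nu\|_2$ for any $\nu$. For $\nu = \mu_A$, with $A$ finite, $\|\mu_A\|_2 \ge 1/\sqrt{|A|}$, so $t_{\mu_A} \le \sqrt{N}$. As in Lemma \reflemma{find-advice-large-mu2--expval-N}, $N \le (5/3)/\ell$, so $t_{\mu_A} \le \sqrt{5/(3\ell)}$.

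The failure of $G$ occurs in one of two ways. First, $\ell \notin (1 \pm 1/100)\|\mu\|_2^2$, which has probability at most $\min\{\eta/10,\eps\}$. Second, $\ell$ is correct but a membership test fails, which by Lemmas \reflemma{friendly-reduction-A-must-have} and \reflemma{friendly-reduction-B-must-have} has probability at most $2\eps\eta/10$. In the second case $\sqrt{5/(3\ell)} = O(1/\|\mu\|_2)$, so the contribution is $O(\eps/\|\mu\|_2)$.

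\textbf{Main obstacle.} The first case is the difficulty, because there $\ell$ may be tiny and the cap $\sqrt{1/\ell}$ is large exactly when $\ell$ is wrong. We avoid conditioning on $\ell$'s failure. Cauchy--Schwarz gives
\[ \E\left[\mathbf{1}_{\text{bad}} \sqrt{1/\ell}\right] \le \sqrt{\Pr[\text{bad}]}\,\sqrt{\E[1/\ell]} = O\!\left(\sqrt{\eps}/\|\mu\|_2\right), \]
using the negative-moment preservation of Lemma \reflemma{estimate-L2-BC}. This is slightly weaker than needed. Instead we apply H\"older with a high negative moment: $\E[\mathbf{1}_{\text{bad}}\,\ell^{-1/2}] \le \Pr[\text{bad}]^{1-1/p}\,\E[\ell^{-p/2}]^{1/p} = O(\eps^{1-1/p}/\|\mu\|_2)$. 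Alternatively, we note that the amplified BC-estimator's lower-tail failure events have exponentially decaying probability in the overshoot, so $\E[\ell^{-1/2}\mathbf{1}_{\text{bad}}] = O(\eps/\|\mu\|_2)$ via the tail structure used in the proof of Lemma \reflemma{estimate-L2-BC} (Lemma \reflemma{exponential-tail}).

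If the clean $\eps$ factor proves elusive, a fallback is available. We are in the large-$\|\mu\|_2$ regime, where $\|\mu\|_2 = \Omega(\eps)$, so the extra slack can be absorbed into the weakened bound $O(t_\mu + \eps/\|\mu\|_2)$. The excess-$t_\mu$ event (when $t_\mu > 1/900$ but $t_1 < 1/900$) is handled identically, using the same $\sqrt{N}$ cap. Summing the good-event and bad-event contributions yields $\E[t_{\mu_A}] = O(t_\mu + \eps/\|\mu\|_2)$.
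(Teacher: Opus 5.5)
Your decomposition is the paper's: $t_{\mu_A}\le 46t_\mu$ on the good-partition event (when $t_\mu\le 1/900$), and $t_{\mu_A}\le 1/\|\mu_A\|_2\le\sqrt N$ otherwise. The paper then writes the bad contribution as $\eps\,\E[\sqrt N]$. That step treats the bad event as independent of $N$, and it ignores that Lemmas \reflemma{friendly-reduction-A-must-have} and \reflemma{friendly-reduction-B-must-have} presuppose a correct $\ell$. You are right that this needs an argument, since $N\le 5/(3\ell)$ is largest exactly when $\ell$ underestimates.

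Your treatment of that term, however, does not close. H\"older only yields $O(\eps^{1-1/p}/\|\mu\|_2)$. The ``exponential tail'' alternative is only asserted. Lemma \reflemma{exponential-tail} bounds $\E[X^r]$ by $O(a^r)$, not by $O(\Pr[X\ge a]\,a^r)$. Moreover, $\ell$ is a median of BC-runs, and the hypothesis $\Pr[X\ge\lambda a]\le\Pr[X\ge a]^\lambda$ has not been established for a median, so you would have to redo that tail computation. The fallback is false. The loss is a factor $\eps^{-1/p}$ on the term $\eps/\|\mu\|_2$ itself, so no lower bound on $\|\mu\|_2$ can absorb it (take $t_\mu=0$), and the lemma carries no regime assumption anyway. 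The loss also matters downstream: the advice costs $\E[s]/\eps^2$ samples, which would become $\eps^{-1-1/p}/\|\mu\|_2$.

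A simple repair avoids any $\ell$-dependent cap on the low-$\ell$ side. Let $H=\{i:\mu(i)\ge\frac23\|\mu\|_2^2\}$. Since $\sum_{i\notin H}\mu(i)^2\le\frac23\|\mu\|_2^2$, we get $\sum_{i\in H}\mu(i)^2\ge\frac13\|\mu\|_2^2$. Hence whenever $H\subseteq A$, $\|\mu_A\|_2^2\ge\sum_{i\in A}\mu(i)^2\ge\frac13\|\mu\|_2^2$ and $t_{\mu_A}\le\sqrt3/\|\mu\|_2$, however small $\ell$ is.

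For every $\ell\le(1+1/100)\|\mu\|_2^2$, the threshold $(3/5)\ell$ is no larger and $q$ is no smaller than in the proof of Lemma \reflemma{friendly-reduction-A-must-have}. So $\Pr[H\not\subseteq A\mid\ell]\le\eps\eta/10$ uniformly in $\ell$. Conditioning on $\ell$ first bounds the residual by $(\eps\eta/10)\,\E[\sqrt{5/(3\ell)}]=O(\eps/\|\mu\|_2)$. When $\ell$ overestimates, $\sqrt{5/(3\ell)}=O(1/\|\mu\|_2)$ deterministically. Combined with $\Pr[\ell\text{ wrong}]\le\eps$, this gives the needed $O(\eps/\|\mu\|_2)$. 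Your $t_\mu>1/900$ term is fine as it stands, because $t_1$ is independent of $N$.
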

\begin{proof}
    By Lemma \reflemma{friendly-reduction-A-must-have} and \reflemma{friendly-reduction-B-must-have}, with probability at least $1-2\eps\eta/10 \ge 1-\eps$, $A \subseteq \{ i : \mu(i) > \frac{11}{20}\|\mu\|_2^2 \}$ and $\neg A \subseteq \{ i : \mu(i) < \frac{2}{3}\|\mu\|_2^2 \}$.
    
    If this happens, then by Lemma \reflemma{good-partition--mu-B-small}, $\mu(\neg A) \le 9t_\mu$, and by Lemma \reflemma{good-partition--A-is-friendly}, $\mu_A$ is friendly. In this case, $t_{\mu_A} \le t_\mu + 5\mu(\neg A) \le 46t_\mu$, unless $t_\mu \ge 1/900$, in which case $t_{\mu_A} \le \eps/\|\mu_A\|_2 \le \eps \sqrt{N}$.

    If this does not happen, then $t_{\mu_A} \le 1/\|\mu_A\|_2 \le \sqrt{N}$.

    Note that, by Jensen's inequality and Lemma \reflemma{find-advice-large-mu2--expval-N}, $\E[\sqrt{N}] \le \sqrt{\E[N]} = O(1/\|\mu\|_2)$. By the law of total expectation,
    \begin{eqnarray*}
        \E[t_{\mu_A}]
        &\le& 1 \cdot (46t_\mu + \eps \cdot \E[\sqrt{N}]) + \eps \cdot \E[\sqrt{N}] \\
        &\le& 1 \cdot (46t_\mu + \eps \cdot O(1/\|\mu\|_2)) + \eps \cdot O(1/\|\mu\|_2)
        = O(t_\mu + \eps/\|\mu\|_2)
    \end{eqnarray*}
\end{proof}

\repprovelemma{find-advice-large-mu2}
\begin{proof}
    For correctness, see Lemma \reflemma{find-advice-large-mu2--large-t} ($t_\mu \ge 1/900$) and Lemma \reflemma{find-advice-large-mu2--small-t} ($t_\mu < 1/900$).

    In the following we analyze both the expected sample complexity and the expected output.

    The cost of \proc{estimate-t-directly} with a fixed accuracy parameter (two calls) is $O(\log(1/\eta)/\|\mu\|_2^{4/3})$ (Lemma \reflemma{estimate-t-directly}).

    The cost of \proc{estimate-L2-BC} (for estimating $\ell$) with a fixed accuracy parameter is $O(\log (1/\eta) / \|\mu\|_2)$. Additionally, $\ell$ preserves all negative moments (Lemma \reflemma{estimate-L2-BC}).

    Determining the value of $q$ is sample-free. The expected value of $q$ is $\E[q] = O(\E[\ln(1/\eta \eps \ell) / \ell]) = O(\ln (1/\eta \eps \|\mu\|_2) / \|\mu\|_2^2)$, where the last transition is correct since $\ell$ preserves all negative moments.

    Constructing $A$ and (virtually) $B$ is sample-free. By Lemma \reflemma{find-advice-large-mu2--expval-mu-B}, $\E[\mu(B)] = \E[\mu(\neg A)] \le 9t_\mu + \eps = O(t_\mu + \eps)$.

    The cost of \proc{estimate-indicator-additive} (for estimating $r_B$) is (Lemma \reflemma{estimate-indicator-additive}):
    \[  O\left(\log \frac{1}{\eta} \cdot \left(\frac{1}{\eps} + \frac{\E[\mu(B)]}{\eps^2}\right)\right)
        = O\left(\log \frac{1}{\eta} \cdot \left(\frac{1}{\eps} + \frac{t_\mu + \eps}{\eps^2}\right)\right)
        = O\left(\log \frac{1}{\eta} \cdot \left(\frac{1}{\eps} + \frac{t_\mu}{\eps^2}\right)\right) \]

    Determining the value of $N$ is sample-free. The expected value of $N$ is $O(1/\|\mu\|_2^2)$, and additionally $\E[N \log N] = O(\log (1/\|\mu\|_2) / \|\mu\|_2^2)$ (Lemma \reflemma{find-advice-large-mu2--expval-N}).

    By Lemma \reflemma{find-advice-large-mu2--expval-tmuA}, the expected value of $t_{\mu_A}$ is $O(t_\mu + \eps/\|\mu\|_2)$.

    The cost of \proc{estimate-t-friendly} applied to $\mu_A$, according to the specified rejection-sampling mechanism, is an $O(\log (1/\eta))$ additive penalty, plus the baseline cost (Lemma \reflemma{estimate-t-friendly}):
    \begin{eqnarray*}
        O\!\left(\E[N \log (N/\eta\eps)] + \frac{\log(1/\eta) \E[\sqrt{N}]}{\eps} + \frac{\log(1/\eta)}{\eps \|\mu\|_2}\right)
        &\!\!\!=\!\!\!& O\!\left(\frac{\log (1/\eta\eps\|\mu\|_2)}{\|\mu\|_2^2} + \frac{\log(1/\eta)}{\eps \|\mu\|_2} + \frac{\log(1/\eta)}{\eps \|\mu\|_2}\right) \\
        &\!\!\!=\!\!\!& O\!\left(\frac{\log (1/\eta\eps\|\mu\|_2)}{\|\mu\|_2^2} + \frac{\log (1/\eta)}{\eps \|\mu\|_2}\right)
    \end{eqnarray*}

    With probability at least $1-\eps$, The partition $A \cup B$ is good. In this case, by Lemma \reflemma{estimate-t-friendly}, $\E[r_A | A] = O(t_{\mu_A} + \eps) = O(t_\mu + \mu(B) + \eps) = O(t_\mu + \eps)$. If the partition is not good, then by Lemma \reflemma{estimate-t-friendly}, $\E[r_A | A] \le \sqrt{N}$.
    
    Combined,
    \[  \E[r_A]
        \le 1 \cdot O(t_\mu + \eps) + \eps \cdot O(\E[\sqrt{N}])
        =  O(t_\mu + \eps/\|\mu\|_2) \]

    Therefore, the expected output is:
    \[ \E[r_A + 5(r_B + \eps)]
        = O(t_\mu + \eps/\|\mu\|_2) + O(t_\mu + \eps) + O(\eps)
        = O(t_\mu + \eps/\|\mu\|_2) \]
\end{proof}

\subsection{Deferred proofs of technical lemmas}
The next section begins at Page \pageref{sec:lbnd-tools}.

\repprovelemma{technical:t-is-linear-by-erasing-small-element}
\begin{proof}
    Recall that:
    \begin{eqnarray*}
        \|\mu\|_2^2 &=& (1 - \mu(i))^2 \|\tau\|_2^2 + (\mu(i))^2 \\
        \|\mu\|_3^3 &=& (1 - \mu(i))^3 \|\tau\|_2^2 + (\mu(i))^3
    \end{eqnarray*}

    For the upper bound:
    \begin{eqnarray*}
        \frac{\|\tau\|_3^3}{\|\tau\|_2^4}
        &=& \frac{(\|\mu\|_3^3 - (\mu(i))^3) / (1-\mu(i))^3}{((\|\mu\|_2^2 - (\mu(i))^2) / (1 - \mu(i))^2)^2} \\
        &=& (1 - \mu(i)) \frac{\|\mu\|_3^3 - (\mu(i))^3}{(\|\mu\|_2^2 - (\mu(i))^2)^2} \\
        &\le& (1 - \mu(i)) \frac{\|\mu\|_3^3}{(\|\mu\|_2^2 - (\mu(i))^2)^2} \\
        &=& (1 - \mu(i)) \frac{\|\mu\|_3^3}{\|\mu\|_2^4 (1 - (\mu(i))^2 / \|\mu\|_2^2)^2} \\
        \text{[Since $\mu(i) < \|\mu\|_2^2$]} &\le& (1 - \mu(i)) \frac{\|\mu\|_3^3}{\|\mu\|_2^4 (1 - \mu(i))^2} \\
        &=& \frac{1}{1 - \mu(i)} \cdot \frac{\|\mu\|_3^3}{\|\mu\|_2^4} \\
        \text{[Lemma \reflemma{sub-m22-bounded-by-1/2}, $\mu(i) \le 1/2$]} &\le& (1 + 2\mu(i)) \cdot \frac{\|\mu\|_3^3}{\|\mu\|_2^4} \\
        \text{[Since $\|\mu\|_3^3 \le 2\|\mu\|_2^4$]} &\le& \frac{\|\mu\|_3^3}{\|\mu\|_2^4} + 4\mu(i)
    \end{eqnarray*}

    For the lower bound:
    \begin{eqnarray*}
        \frac{\|\tau\|_3^3}{\|\tau\|_2^4}
        &=& \frac{(\|\mu\|_3^3 - (\mu(i))^3) / (1-\mu(i))^3}{((\|\mu\|_2^2 - (\mu(i))^2) / (1 - \mu(i))^2)^2} \\
        &=& (1 - \mu(i)) \frac{\|\mu\|_3^3 - (\mu(i))^3}{(\|\mu\|_2^2 - (\mu(i))^2)^2} \\
        &=& (1 - \mu(i)) \frac{\|\mu\|_3^3 (1 - (\mu(i))^3 / \|\mu\|_3^3)}{\|\mu\|_2^4 (1 - (\mu(i))^2 / \|\mu\|_2^2)^2} \\
        &\ge& (1 - \mu(i)) \frac{\|\mu\|_3^3 (1 - (\mu(i))^3 / \|\mu\|_3^3)}{\|\mu\|_2^4} \\
        \text{[Since $\mu(i) < \|\mu\|_2^2$]} &\ge& (1 - \mu(i)) \frac{\|\mu\|_3^3}{\|\mu\|_2^4} \cdot \left(1 - (\|\mu\|_2^2)^2 \mu(i) / \|\mu\|_3^3\right) \\
        \text{[Since $\|\mu\|_2^4/\|\mu\|_3^3 \le 1$]} &\ge& (1 - \mu(i))^2 \frac{\|\mu\|_3^3}{\|\mu\|_2^4}\\
        &\ge& (1 - 2\mu(i)) \frac{\|\mu\|_3^3}{\|\mu\|_2^4} \\
        \text{[Since $\|\mu\|_3^3 \le 2\|\mu\|_2^4$]} &\ge& \frac{\|\mu\|_3^3}{\|\mu\|_2^4} - 4\mu(i)
    \end{eqnarray*}
\end{proof}

\section{Tools for the lower bound}
\label{sec:lbnd-tools}

We Recall a few well-known statements:

\begin{lemma}[Pinsker's inequality]{pinsker}
    For two distributions $\mu$ and $\tau$, $\DKL{\tau}{\mu} \ge 2 (\dtv(\tau,\mu))^2$.
\end{lemma}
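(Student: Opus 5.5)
The statement to prove is Pinsker's inequality, $\DKL{\tau}{\mu} \ge 2(\dtv(\tau,\mu))^2$ with the natural logarithm. The plan is the standard two-stage argument: first prove the two-point (Bernoulli) case, then lift it to arbitrary discrete distributions with the data-processing inequality for KL divergence under a deterministic coarse-graining.

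\textbf{Reduction to two points.} Let $S = \{ i : \tau(i) > \mu(i) \}$, so that $\dtv(\tau,\mu) = \tau(S) - \mu(S)$. Set $p = \tau(S)$ and $q = \mu(S)$. Applying the log-sum inequality $\sum_i a_i \log(a_i/b_i) \ge (\sum_i a_i)\log(\sum_i a_i / \sum_i b_i)$ separately to the terms over $S$ and over $\Omega \setminus S$ gives
\[ \DKL{\tau}{\mu} \ge p\log\frac{p}{q} + (1-p)\log\frac{1-p}{1-q}. \]
Conventions: $0\log 0 = 0$, and if $\mu(i)=0<\tau(i)$ for some $i$ the divergence is infinite and there is nothing to prove. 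Countable domains only require that the log-sum inequality hold for convergent series. It does, as a limit of the finite case, or directly from Jensen's inequality applied to the convex function $x\log x$.

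\textbf{Binary case.} For fixed $p \in [0,1]$, define
\[ g(q) = p\log\frac{p}{q} + (1-p)\log\frac{1-p}{1-q} - 2(p-q)^2 \]
on $(0,1)$. Clearly $g(p) = 0$. Differentiating,
\[ g'(q) = -\frac{p}{q} + \frac{1-p}{1-q} + 4(p-q) = (q-p)\left(\frac{1}{q(1-q)} - 4\right). \]
Since $q(1-q) \le 1/4$, the second factor is nonnegative. Hence $g$ is nonincreasing for $q \le p$ and nondecreasing for $q \ge p$, so $g \ge 0$. The endpoints $q \in \{0,1\}$ either give an infinite divergence or are handled by continuity.

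Combining the two steps yields $\DKL{\tau}{\mu} \ge 2(p-q)^2 = 2(\dtv(\tau,\mu))^2$. The only delicate point is bookkeeping: the zero-mass and infinite-support edge cases, and the fact that the constant $2$ requires the natural logarithm. I do not expect a real obstacle; the derivative computation is the heart of the argument.
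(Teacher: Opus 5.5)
Your proof is correct. It is the standard argument: coarse-grain to the two-point distributions on $S=\{i:\tau(i)>\mu(i)\}$ and its complement via the log-sum inequality, then show $g\ge 0$ from $g'(q)=(q-p)\bigl(\frac{1}{q(1-q)}-4\bigr)$. Your handling of the edge cases (infinite divergence when some $\mu(i)=0<\tau(i)$, countable sums, endpoints by continuity) is adequate.

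The paper gives no proof of this lemma. It lists Pinsker's inequality among ``well-known statements'' in its lower-bound toolbox, so there is no competing route to compare.

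One small correction to your closing remark: the constant $2$ does not \emph{require} the natural logarithm, only a logarithm base of at most $e$. Since $\log_b x=\ln x/\ln b$ and $\ln b\le 1$ only increases the divergence, your natural-log proof also gives the statement in base $2$. The paper appears to use base $2$, judging by the factor $\frac{1}{\ln 2}$ in Lemma~\reflemma{dkl-bounded-by-chi-sqr}.
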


\begin{lemma}{dkl-bounded-by-chi-sqr}
    For every two distributions $\mu$ and $\tau$, $\DKL{\tau}{\mu} \le \frac{1}{\ln 2}\chi^2(\tau,\mu)$.
\end{lemma}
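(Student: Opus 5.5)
The plan is to compare the two quantities term by term, using $\log x \le x - 1$ (natural logarithm). Converting to base $2$ then costs exactly the factor $1/\ln 2$. Write $\DKL{\tau}{\mu} = \sum_{i} \tau(i)\log(\tau(i)/\mu(i))$. We may assume $\tau \ll \mu$, that is, $\tau(i)=0$ whenever $\mu(i)=0$. Otherwise both sides are $+\infty$ under the convention that the $\chi^2$-divergence is taken with $\mu$ in the denominator, matching the orientation of the KL term. Terms with $\tau(i)=0$ contribute $0$ to the KL sum, by the convention $0\log 0=0$.

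The key step applies concavity of the logarithm to each term, in natural units:
\[
\ln\frac{\tau(i)}{\mu(i)} \le \frac{\tau(i)}{\mu(i)} - 1 .
\]
Multiplying by $\tau(i)\ge 0$ and summing over $i$ gives
\[
\sum_i \tau(i)\ln\frac{\tau(i)}{\mu(i)} \le \sum_i \frac{\tau(i)^2}{\mu(i)} - \sum_i \tau(i) = \sum_i \frac{\tau(i)^2}{\mu(i)} - 1 .
\]
Then expand
\[
\sum_i \frac{(\tau(i)-\mu(i))^2}{\mu(i)} = \sum_i \frac{\tau(i)^2}{\mu(i)} - 2\sum_i\tau(i) + \sum_i\mu(i) = \sum_i \frac{\tau(i)^2}{\mu(i)} - 1 .
\]
This shows the right-hand side equals the $\chi^2$-divergence with $\mu$ in the denominator, i.e.\ $\chi^2$ of $\tau$ relative to $\mu$. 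For infinite domains, the rearrangement of the series is justified because all summands in the expansion are nonnegative or absolutely summable whenever the $\chi^2$ value is finite.

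This yields $\DKL{\tau}{\mu} \le \chi^2$ in nats. If the paper's $\log$ is base $2$, dividing by $\ln 2$ gives exactly the stated bound. If it is natural, the stated bound is weaker and follows a fortiori since $1/\ln 2 > 1$.

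The only real obstacle is bookkeeping, not mathematics. The preliminaries define $\chi^2(\mu,\tau)$ with the first argument in the denominator, whereas the bound needed here places $\mu$, the reference distribution of the KL divergence, in the denominator. I will state explicitly that $\chi^2(\tau,\mu)$ is read with denominator $\mu$. This is the orientation the inequality requires, and the one used later when bounding $\DKL{\nu^q}{\mu^q}$.
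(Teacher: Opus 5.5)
Your proof is correct. The paper gives no proof of this lemma: it is listed among the ``well-known statements'' recalled at the start of Section~\ref{sec:lbnd-tools}, so there is no argument of the paper's to compare against.

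Yours is one of the two textbook proofs. The other applies Jensen's inequality to the concave logarithm under $\tau$, namely $\DKL{\tau}{\mu}=\E_{i\sim\tau}[\log(\tau(i)/\mu(i))]\le\log\E_{i\sim\tau}[\tau(i)/\mu(i)]=\log(1+\chi^2)$, and then uses $\ln(1+x)\le x$. That route gives the sharper intermediate bound $\log(1+\chi^2)$. Your termwise use of $\ln x\le x-1$ is fully elementary and makes the infinite-sum bookkeeping immediate. The paper only ever needs the linear bound, so the two routes are interchangeable here.

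Your orientation remark is more than bookkeeping. Under the preliminaries' literal definition, $\chi^2(\tau,\mu)=\sum_i(\mu(i)-\tau(i))^2/\tau(i)$, and with that reading the lemma is false. For $\tau=(1/2,1/2)$ and $\mu=(1-\delta,\delta)$, that quantity equals $4(1/2-\delta)^2\le 1$. Meanwhile $\DKL{\tau}{\mu}\ge\frac12\log\frac{1}{2\delta}-\frac12\log 2\to\infty$ as $\delta\to 0$.

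So the only reading under which the statement holds is the one with $\mu$ in the denominator. This is also the reading that Lemma~\reflemma{chi-sqr-plus-one} and the proofs of Lemmas~\reflemma{poisson-dual-eps-dkl} and~\reflemma{lbnd-t-over-eps2-hardness-pair} actually use. You were right to make it explicit.
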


\begin{lemma}{mgf-poisson}
    For every $\lambda > 0$ and $\alpha > 0$, if $X$ distributes like $\Poi(\lambda)$, then $\E[\alpha^X] = e^{\lambda(\alpha - 1)}$.
\end{lemma}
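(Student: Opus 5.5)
This is the moment generating function of the Poisson distribution, and I will compute it directly from the probability mass function stated in the preliminaries, $\Pr[X=i] = \frac{\lambda^i}{i!}e^{-\lambda}$ for $i \in \mathbb N$ (including zero). By the definition of expectation for a non-negative integer-valued variable,
\[ \E[\alpha^X] = \sum_{i=0}^\infty \alpha^i \cdot \frac{\lambda^i}{i!}e^{-\lambda} = e^{-\lambda}\sum_{i=0}^\infty \frac{(\alpha\lambda)^i}{i!}. \]

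Since $\alpha > 0$ and $\lambda > 0$, every term is non-negative. The series is therefore absolutely convergent, so no interchange or convergence issue arises. Recognizing the Taylor series of the exponential function, $\sum_{i\ge 0} x^i/i! = e^x$ (valid for every real $x$), with $x = \alpha\lambda$, the sum equals $e^{\alpha\lambda}$. Multiplying by $e^{-\lambda}$ gives $e^{\lambda(\alpha-1)}$, which is the claim.

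The only step that might need justification is the convergence of the series. This is immediate from the ratio test, or simply from the fact that the exponential series converges everywhere. The positivity assumptions on $\alpha$ and $\lambda$ are used only to keep every term non-negative; the identity in fact holds for every real $\alpha$.
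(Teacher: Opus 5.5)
Your proof is correct: summing $\alpha^i \lambda^i e^{-\lambda}/i!$ and recognizing the exponential series is the standard computation. The paper gives no proof of this lemma (it is listed among ``well-known statements'' in the lower-bound tools section), so your derivation from the Poisson mass function is exactly the argument one would supply, and your remark that the identity holds for every real $\alpha$ is also right.
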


\begin{lemma}{chi-sqr-plus-one}
    For two discrete distributions over $\Omega$, $1 + \chi^2(\tau, \mu) = \E_{i \sim \mu}\left[(\tau(i) / \mu(i))^2\right]$.
\end{lemma}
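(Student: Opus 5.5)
We need to prove $1+\chi^2(\tau,\mu)=\E_{i\sim\mu}[(\tau(i)/\mu(i))^2]$. This is a direct expansion of the definition, with a little care about supports. Recall the paper's definition $\chi^2(\tau,\mu)=\E_\tau[(\mu(i)/\tau(i)-1)^2]=\sum_{i\in\Omega_\tau}(\mu(i)-\tau(i))^2/\tau(i)$, taken with respect to the first argument. In the intended usage the roles are evidently those of the standard $\chi^2(\tau\|\mu)=\sum_{i}(\tau(i)-\mu(i))^2/\mu(i)$, with the reference distribution in the denominator. So the first step is to fix the convention that makes the identity meaningful: $\chi^2(\tau,\mu)=\sum_{i\in\supp(\mu)}(\tau(i)-\mu(i))^2/\mu(i)$, with the sum over the support of $\mu$. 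We will assume $\supp(\tau)\subseteq\supp(\mu)$; otherwise both sides are $+\infty$ by convention, since the right-hand side is then also only defined on $\supp(\mu)$.

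Under this convention, expand the square termwise:
\[ \sum_{i\in\supp(\mu)}\frac{(\tau(i)-\mu(i))^2}{\mu(i)}=\sum_{i\in\supp(\mu)}\frac{\tau(i)^2}{\mu(i)}-2\sum_{i\in\supp(\mu)}\tau(i)+\sum_{i\in\supp(\mu)}\mu(i). \]
The second sum equals $1$, using $\supp(\tau)\subseteq\supp(\mu)$, and the third sum equals $1$. Hence
\[ \chi^2(\tau,\mu)=\sum_{i}\tau(i)^2/\mu(i)-1. \]
Finally, rewrite $\sum_i\tau(i)^2/\mu(i)=\sum_i\mu(i)\,(\tau(i)/\mu(i))^2=\E_{i\sim\mu}[(\tau(i)/\mu(i))^2]$, which gives the claim.

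The only subtlety is justifying the rearrangement of a possibly infinite series on a countable domain. Here all the individual series converge absolutely: $\sum\tau=\sum\mu=1$, and $\sum_i\tau(i)^2/\mu(i)$ is a series of nonnegative terms. Splitting the sum is therefore valid whenever that last series is finite. If it is infinite, both sides are $+\infty$, since $(\tau-\mu)^2/\mu\ge\tau^2/\mu-2\tau$ termwise and the subtracted terms sum to a finite quantity. This support and convergence bookkeeping is the main (mild) obstacle; the algebra itself is routine.
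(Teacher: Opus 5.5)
Your proof is correct and is the same direct expansion the paper uses: write $\chi^2(\tau,\mu)=\sum_i(\tau(i)-\mu(i))^2/\mu(i)$, expand the square, and use $\sum_i\tau(i)=\sum_i\mu(i)=1$. Your choice to fix the convention with $\mu$ in the denominator is justified, because the paper's proof and all its applications use that convention, even though its preliminaries literally define $\chi^2$ with the first argument as the reference. The one blemish is your aside on the case $\supp(\tau)\not\subseteq\supp(\mu)$: read as an expectation under $\mu$, the right-hand side gives zero weight to $\mu$-null points and so need not be $+\infty$, so ``both sides are $+\infty$'' is not automatic. Nothing is lost, however, because the paper implicitly assumes $\supp(\tau)\subseteq\supp(\mu)$ (it divides by $\mu(i)$ on all of $\Omega$), and every application satisfies this.
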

\begin{proof}
    \begin{eqnarray*}
        \chi^2(\tau, \mu)
        = \E_{i \sim \mu}\left[\frac{(\tau(i) - \mu(i))^2}{(\mu(i))^2}\right]
        &=& \sum_{i \in \Omega} \frac{1}{\mu(i)} \left((\tau(i))^2 - 2\tau(i)\mu(i) + (\mu(i))^2\right) \\
        &=& \sum_{i \in \Omega} \frac{(\tau(i))^2}{\mu(i)} - 2 \sum_{i \in \Omega} \tau(i) + \sum_{i \in \Omega} \mu(i) \\
        &=& \E_{i \sim \mu} \left[\frac{(\tau(i))^2}{(\mu(i))^2}\right] - 2 + 1
        = \E_{i \sim \mu} \left[\frac{(\tau(i))^2}{(\mu(i))^2}\right] - 1
    \end{eqnarray*}
\end{proof}

The following ad-hoc lemma allows avoiding the use of large-deviation inequalities when bounding the collision norm of a randomly constructed distribution.

\begin{lemma}{disjoint-lambdas-for-eps-mu2-construction}
    For every $\lambda > 0$, let $f_\lambda(x) = \lambda x + \lambda^2$ and $K_\lambda = { x : \abs{f_\lambda(x)} \le 5 }$. In this setting, the ranges $K_2$, $K_6$, $K_8$ and $K_{10}$ are pairwise disjoint.
\end{lemma}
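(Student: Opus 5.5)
Each $K_\lambda$ is an interval: since $f_\lambda(x) = \lambda x + \lambda^2$ is increasing and affine in $x$, the condition $\abs{f_\lambda(x)} \le 5$ is equivalent to $-5 \le \lambda x + \lambda^2 \le 5$. Dividing by $\lambda > 0$ gives
\[ K_\lambda = \left[ -\lambda - \frac{5}{\lambda},\; -\lambda + \frac{5}{\lambda} \right]. \]
The plan is therefore to compute the four intervals explicitly and check that, ordered by $\lambda$, each one lies strictly to the left of the previous one.

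The endpoints are as follows:
\begin{itemize}
    \item $K_2 = [-2 - 5/2,\, -2 + 5/2] = [-4.5,\, 0.5]$;
    \item $K_6 = [-6 - 5/6,\, -6 + 5/6] \approx [-6.83,\, -5.17]$;
    \item $K_8 = [-8 - 5/8,\, -8 + 5/8] = [-8.625,\, -7.375]$;
    \item $K_{10} = [-10.5,\, -9.5]$.
\end{itemize}
Consecutive intervals are separated. The right endpoint $-5.17$ of $K_6$ is less than the left endpoint $-4.5$ of $K_2$. The right endpoint $-7.375$ of $K_8$ is less than the left endpoint $-6.83$ of $K_6$. The right endpoint $-9.5$ of $K_{10}$ is less than the left endpoint $-8.625$ of $K_8$.

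Hence $K_{10} < K_8 < K_6 < K_2$ as intervals on the real line. Consecutive intervals are disjoint, and by transitivity of the ordering all pairs are disjoint.

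There is no real obstacle here. The only point needing care is the tightest gap, between $K_8$ and $K_{10}$ (and similarly between $K_6$ and $K_8$). In general the condition for $\lambda_1 < \lambda_2$ is $-\lambda_2 + 5/\lambda_2 < -\lambda_1 - 5/\lambda_1$, which is equivalent to $\lambda_2 - \lambda_1 > 5/\lambda_1 + 5/\lambda_2$. For the pairs $(8,10)$, $(6,8)$ and $(2,6)$ this becomes $2 > 1.125$, $2 > 1.458$ and $4 > 3.33$ respectively. All three hold, so the exact rational endpoints confirm the claim.
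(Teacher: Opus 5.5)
Your proof is correct and is essentially the paper's: the paper also just writes out $K_2=[-9/2,1/2]$, $K_6=[-41/6,-31/6]$, $K_8=[-69/8,-59/8]$, $K_{10}=[-21/2,-19/2]$ and reads off that they are pairwise disjoint. One small correction, which does not affect validity: the tightest gap is between $K_8$ and $K_6$ (about $0.54$), not between $K_{10}$ and $K_8$ (which is $0.875$).
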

\begin{proof}
    Explicitly:
    \[  K_2 = [-9/2, 1/2], \quad
        K_6 = [-41/6, -31/6], \quad
        K_8 = [-69/8, -59/8], \quad
        K_{10} = [-21/2,-19/2]   \]
\end{proof}

In the following we show a technical bound about Poisson distribution, which arises (with multiple variants) in distribution-testing lower-bound analysis.

\begin{lemma}{poisson-dual-eps-dkl}
    Let $\lambda_1 \ge \lambda_2 > 0$. For $0 < \eps < 1$ and $0 \le \delta \le \eps \lambda_2$, let $\lambda_1^{\pm} = \lambda_1 \pm \delta$ and $\lambda_2^{\pm} = \lambda_2 \pm \delta$. In this setting,
    \[\DKL{\frac{1}{2}\left(\Poi(\lambda_1^+) \times \Poi(\lambda_2^-)\right) +
    \frac{1}{2} \left(\Poi(\lambda_1^-) \times \Poi(\lambda_2^+)\right) }{\Poi(\lambda_1) \times \Poi(\lambda_2)}
    \le 2\eps^4 \lambda_2^2 \]
\end{lemma}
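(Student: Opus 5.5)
The plan is to compute the $\chi^2$-divergence of the mixture from the product Poisson distribution \emph{exactly}, and then convert it to a KL bound through a logarithm rather than a linear bound. Write $P = \Poi(\lambda_1)\times\Poi(\lambda_2)$, $Q_+ = \Poi(\lambda_1^+)\times\Poi(\lambda_2^-)$, $Q_- = \Poi(\lambda_1^-)\times\Poi(\lambda_2^+)$ and $Q = \frac12 Q_+ + \frac12 Q_-$. By Jensen's inequality (concavity of $\ln$), with KL in natural logarithms,
\[ \DKL{Q}{P} = \E_{Q}\left[\ln \frac{Q}{P}\right] \le \ln \E_{Q}\left[\frac{Q}{P}\right] = \ln \E_{P}\left[\left(\frac{Q}{P}\right)^2\right] = \ln\left(1+\chi^2(Q,P)\right), \]
where the last equality is Lemma \reflemma{chi-sqr-plus-one}. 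This is sharper than Lemma \reflemma{dkl-bounded-by-chi-sqr}, and the sharpening is essential: it is what makes the final bound hold without any assumption on the size of $\eps^2\lambda_2$.

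Next I compute $\E_P[(Q/P)^2]$ in closed form. For a single coordinate, the likelihood ratio is
\[ \frac{\Poi(\lambda+d)(k)}{\Poi(\lambda)(k)} = e^{-d}\left(1+\frac{d}{\lambda}\right)^k. \]
Hence, for $X\sim\Poi(\lambda)$, Lemma \reflemma{mgf-poisson} gives
\[ \E\left[e^{-d-d'}\left(\left(1+\frac{d}{\lambda}\right)\left(1+\frac{d'}{\lambda}\right)\right)^X\right] = \exp\left(-d-d'+\lambda\left(\frac{d+d'}{\lambda}+\frac{dd'}{\lambda^2}\right)\right) = e^{dd'/\lambda}. \]
Note that $\lambda_2^-\ge(1-\eps)\lambda_2>0$, so all the Poisson parameters are valid.

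Expanding $(Q/P)^2 = \frac14\left(Q_+/P + Q_-/P\right)^2$ and using independence across the two coordinates, I will evaluate each term:
\begin{itemize}
    \item The two diagonal terms each equal $e^{\delta^2/\lambda_1 + \delta^2/\lambda_2}$.
    \item The two cross terms each equal $e^{-\delta^2/\lambda_1-\delta^2/\lambda_2}$.
\end{itemize}
Setting $u = \delta^2\left(\frac{1}{\lambda_1}+\frac{1}{\lambda_2}\right)$, this yields $1+\chi^2(Q,P) = \cosh u$.

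Finally, I use the elementary inequality $\ln\cosh u \le u^2/2$ for all real $u$. It holds because both sides vanish at $0$, are even, and for $u\ge 0$ the derivatives satisfy $\tanh u \le u$. Since $\lambda_1\ge\lambda_2$ and $\delta\le\eps\lambda_2$, we have $u \le 2\delta^2/\lambda_2 \le 2\eps^2\lambda_2$. Therefore
\[ \DKL{Q}{P} \le \ln\cosh u \le \frac{u^2}{2} \le 2\eps^4\lambda_2^2, \]
with no case distinction on the size of $\eps^2\lambda_2$.

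The only delicate step is the first one: choosing the $\ln(1+\chi^2)$ bound instead of $\chi^2$ itself. Using $\chi^2 = \cosh u - 1$ would only give $O(u^2)$ for bounded $u$. The exact mgf computation in the middle step is routine given Lemma \reflemma{mgf-poisson}.
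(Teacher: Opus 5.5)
Your proof is correct. It shares its core with the paper's proof but differs at the step that converts $\chi^2$ into KL.

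\textbf{Shared part.} The paper does the same exact Poisson moment-generating-function computation and obtains $\chi^2 = \cosh u - 1$, with $u = \delta^2(1/\lambda_1 + 1/\lambda_2)$.

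\textbf{Paper's route.} It then applies the linear bound $\DKL{Q}{P} \le \chi^2/\ln 2$ (Lemma \reflemma{dkl-bounded-by-chi-sqr}) together with the estimate $\cosh u - 1 \le u^2/2$. That estimate goes the wrong way: $\cosh u - 1 = \sum_{k \ge 1} u^{2k}/(2k)! \ge u^2/2$. So the paper's route is valid only up to a constant factor, and only when $u$ is bounded. For example, $\cosh u - 1 \le u^2$ for $\abs{u} \le 1$, which suffices for the application in Lemma \reflemma{base-deviation-construction-for-eps-mu2}, where $u$ is small. In addition, the paper concludes with $3\eps^4\lambda_2^2$ rather than the stated $2\eps^4\lambda_2^2$.

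\textbf{Your route.} You use Jensen to get $\DKL{Q}{P} \le \ln(1+\chi^2) = \ln\cosh u$, and then $\ln\cosh u \le u^2/2$, which holds for all real $u$. This proves the lemma exactly as stated: constant $2$, and no restriction on the size of $\eps^2\lambda_2$. It is precisely the step that avoids the exponential growth of $\cosh u - 1$.

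\textbf{Log base.} The only cost is that you must take KL in natural logarithms; the paper leaves the base unspecified. If KL were in bits, your argument would give the constant $2/\ln 2 < 3$ instead. The natural-log convention is the one under which the paper's Pinsker inequality (Lemma \reflemma{pinsker}) is stated, so your bound feeds directly into the downstream argument.
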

\begin{proof}
    We use the following short-hand notations:
    \begin{eqnarray*}
        \mu^0 &:& \Poi(\lambda_1) \times \Poi(\lambda_2), \\
        \mu^{\pm} &:& \Poi(\lambda_1 + \delta) \times \Poi(\lambda_2 - \delta), \\
        \mu^{\mp} &:& \Poi(\lambda_1 - \delta) \times \Poi(\lambda_2 + \delta), \\
        \mu^* &:& \frac{1}{2}\mu^{\pm} + \frac{1}{2}\mu^{\mp}
    \end{eqnarray*}
    Also, for two integers $k_1 \ge 0$, $k_2 \ge 0$ and a label $\sigma \in \{0, \pm, \mp, *\}$, we use $p^\sigma(k_1, k_2) = \mu^\sigma(k_1,k_2)$.
    
    For every $k_1, k_2 \ge 0$:
    \begin{eqnarray*}
        p_0(k_1,k_2) &=& \frac{\lambda_1^{k_1} \lambda_2^{k_2} e^{-(\lambda_1+\lambda_2)}}{k_1! k_2!} \\
        p^{\pm}(k_1,k_2) &=& \frac{(\lambda_1 + \delta)^{k_1} (\lambda_2 - \delta)^{k_2} e^{-(\lambda_1+\lambda_2)}}{k_1! k_2!}
        = p_0(k_1,k_2) \cdot \left(1 + \frac{\delta}{\lambda_1}\right)^{k_1} \left(1 - \frac{\delta}{\lambda_2}\right)^{k_2} \\
        p^{\mp}(k_1,k_2) &=& \frac{(\lambda_1 - \delta)^{k_1} (\lambda_2 + \delta)^{k_2} e^{-(\lambda_1+\lambda_2)}}{k_1! k_2!}
        =  p_0(k_1,k_2) \cdot \left(1 - \frac{\delta}{\lambda_1}\right)^{k_1} \left(1 + \frac{\delta}{\lambda_2}\right)^{k_2} \\
        p^*(k_1,k_2) &=& p_0(k_1,k_2) \cdot \frac{1}{2}\left(\left(1 + \frac{\delta}{\lambda_1}\right)^{k_1} \left(1 - \frac{\delta}{\lambda_2}\right)^{k_2} + \left(1 - \frac{\delta}{\lambda_1}\right)^{k_1} \left(1 + \frac{\delta}{\lambda_2}\right)^{k_2}\right)
    \end{eqnarray*}

    For convenience, let:
    \[  \psi^+(k_1,k_2) = \left(1 + \frac{\delta}{\lambda_1}\right)^{k_1} \left(1 - \frac{\delta}{\lambda_2}\right)^{k_2}, \qquad
        \psi^-(k_1,k_2) = \left(1 - \frac{\delta}{\lambda_1}\right)^{k_1} \left(1 + \frac{\delta}{\lambda_2}\right)^{k_2}
    \]
    So that $p^*(k_1,k_2) = p_0(k_1,k_2) \cdot \frac{1}{2}(\psi^+(k_1,k_2) + \psi^-(k_1,k_2))$.

    For $(X,Y)$ distributing like $\Poi(\lambda_1) \times \Poi(\lambda_2)$:
    \begin{eqnarray*}
        \chi^2(\mu^*, \mu^0)
        &=& \sum_{k_1=0}^\infty \sum_{k_2=0}^\infty \frac{(p^*(k_1, k_2) - p^0(k_1, k_2))^2}{p^0(k_1, k_2)} \\
        &=& \sum_{k_1=0}^\infty \sum_{k_2=0}^\infty p_0(k_1,k_2) \cdot \left(\frac{1}{2}\psi^+(k_1,k_2) + \frac{1}{2}\psi^-(k_1,k_2) - 1\right)^2 \\
        &=& \E\left[\left(\frac{1}{2}\psi^+(X,Y) + \frac{1}{2}\psi^-(X,Y) - 1\right)^2\right] \\
        &=& \frac{1}{4} \E\left[\left(\psi^+(X,Y) + \psi^-(X,Y)\right)^2\right] - \E\left[\psi^+(X,Y) + \psi^-(X,Y)\right] + 1 \\
        &=& \frac{1}{4} \E\left[\left(\psi^+(X,Y)\right)^2\right] + \frac{1}{2} \E\left[\psi^+(X,Y) \psi^-(X,Y)\right] + \frac{1}{4} \E\left[\left(\psi^-(X,Y)\right)^2\right] + \cdots\\&& - \E\left[\psi^+(X,Y)\right] - \E\left[\psi^-(X,Y)\right] + 1
    \end{eqnarray*}

    We use the moment-generating function (Lemma \reflemma{mgf-poisson}) to resolve each component individually.

    For $s \in \{+1, -1\}$:
    \begin{eqnarray*}
        \E\left[\psi^s(X,Y)\right]
        &=& \E\left[\left(1 + \frac{s \delta}{\lambda_1}\right)^X \left(1 - \frac{s \delta}{\lambda_2}\right)^Y\right] \\
        &=& \E\left[\left(1 + \frac{s \delta}{\lambda_1}\right)^X\right]\E\left[\left(1 - \frac{s \delta}{\lambda_2}\right)^Y\right]
        = e^{\lambda_1 \cdot s\delta/\lambda_1} e^{\lambda_2 \cdot (-s\delta/\lambda_2)}
        = 1
    \end{eqnarray*}
    
    For $s \in \{+1, -1\}$:
    \begin{eqnarray*}
        \E\left[\left(\psi^s(X,Y)\right)^2\right]
        &=& \E\left[\left(1 + \frac{s \delta}{\lambda_1}\right)^{2X} \left(1 - \frac{s \delta}{\lambda_2}\right)^{2Y}\right] \\
        &=& \E\left[\left(1 + \frac{s \delta}{\lambda_1}\right)^{2X}\right]\E\left[\left(1 - \frac{s \delta}{\lambda_2}\right)^{2Y}\right] \\
        &=& e^{\lambda_1 (2s\delta/\lambda_1 + \delta^2/\lambda_1^2)} e^{\lambda_2 (-2s\delta/\lambda_2 + \delta^2/\lambda_2^2)}
        = e^{\delta^2(1 /\lambda_1 + 1 / \lambda_2)}
    \end{eqnarray*}

    For the ``composite'' component:
    \begin{eqnarray*}
        \E\left[\psi^+(X,Y) \cdot \psi^-(X,Y)\right]
        &=& \E\left[\left(1 + \frac{\delta}{\lambda_1}\right)^{X}\left(1 - \frac{\delta}{\lambda_2}\right)^{Y} \cdot \left(1 - \frac{\delta}{\lambda_1}\right)^{X}\left(1 + \frac{\delta}{\lambda_2}\right)^{Y}\right] \\
        &=& \E\left[\left(1 - \frac{\delta^2}{\lambda_1^2}\right)^{X}\right] \E\left[\left(1 - \frac{\delta^2}{\lambda_2^2}\right)^{Y}\right] \\
        &=& e^{-\delta^2/\lambda_1} e^{-\delta^2/\lambda_2}
        = e^{-\delta^2(1/\lambda_1 + 1/\lambda_2)}
    \end{eqnarray*}

    Combining all components:
    \begin{eqnarray*}
        \chi^2(\mu^*, \mu^0)
        &=& \frac{1}{4} \cdot e^{\delta^2 (1/\lambda_1 + 1/\lambda_2)} + \frac{1}{2} e^{-\delta^2 (1/\lambda_1 + 1/\lambda_2)} + \frac{1}{4} \cdot e^{\delta^2 (1/\lambda_1 + 1/\lambda_2)} - 1 - 1 + 1 \\
        &=& \frac{1}{2} e^{\delta^2 (1/\lambda_1 + 1/\lambda_2)} + \frac{1}{2} e^{-\delta^2 (1/\lambda_1 + 1/\lambda_2)} - 1 \\
        &=& \cosh (\delta^2 (1/\lambda_1 + 1/\lambda_2)) - 1
        \le \frac{1}{2} \cdot (\delta^2 (1/\lambda_1 + 1/\lambda_2))^2
        \le 2 \delta^4 / \lambda_2^2
    \end{eqnarray*}
    Where the last transition follows from $\lambda_1 \ge \lambda_2 > 0$.

    By Lemma \reflemma{dkl-bounded-by-chi-sqr},
    \[  \DKL{\mu^*}{\mu^0}
        \le \frac{1}{\ln 2}\chi^2(\mu^*,\mu^0)
        \le \frac{2}{\ln 2}\delta^4 / \lambda_2^2
        \le 3(\eps \lambda_2)^4 / \lambda_2^2
        =   3\eps^4 \lambda_2^2
    \]
\end{proof}

\section{An $\Omega(1 / \eps \|\mu\|_2)$ lower-bound}
\label{sec:lbnd-eps-mu2}

We recall the lemmas stated in the technical-overview and prove them.

\repprovelemma{eps2-lower-bound-extreme-mu}
\begin{proof}
    Let $a \in \Omega$ be an element for which $\mu(i) \ge \frac{1}{8}\|\mu\|_2$ and let $\bot \notin \Omega$ be a new element. We define $\nu$ as the distribution over $\Omega \cup \{\bot\}$ for which $\nu(a) = \mu(a) - 10 \eps \|\mu\|_2$ and $\nu(i) = \mu(i)$ for every $i \in \Omega \setminus \{a\}$ and $\nu(\bot) = 10 \eps \|\mu\|_2$.

    The total-variation distance between $\mu$ and $\nu$ is $10 \eps \|\mu\|_2$, and therefore, there is a trivial lower bound of $\Omega(1 / \eps \|\mu\|_2)$ samples to distinguish between them. Moreover, if $\eps \le 1/200$, then:
    \begin{eqnarray*}
        \|\nu\|_2^2
        &=& \|\mu\|_2^2 + ((\mu(a) - 10 \eps \|\mu\|_2)^2 - (\mu(a))^2) + ((10 \eps \|\mu\|_2)^2 - 0) \\
        &=& \|\mu\|_2^2 - 20 \eps \|\mu\|_2 \mu(a) + \eps^2 \|\mu\|_2^2 + 100 \eps^2 \|\mu\|_2^2 \\
        &\le& \|\mu\|_2^2 - \frac{5}{2} \eps \|\mu\|_2^2 + 101\eps^2 \|\mu\|_2^2 \\
        &=& \|\mu\|_2^2 - \eps \|\mu\|_2^2(5/2 - 101\eps) \\
        \text{[Since $\|\mu\|_2 \le 1$ and $\eps \le 1/500$]} &<& (1 - (9/4)\eps)\|\mu\|_2^2 \\
        &<& \frac{1-\eps}{1+\eps}\|\mu\|_2^2
    \end{eqnarray*}
    In other words, any algorithm that computes the collision norm within $(1 \pm \eps)$-factor fails to distinguish between $\nu$ and $\mu$ (since $(1 + \eps)\|\nu\|_2^2 < (1 - \eps)\|\mu\|_2^2$) unless it draws $\Omega(1 / \eps \|\mu\|_2)$ samples.
\end{proof}

\repprovelemma{base-deviation-construction-for-eps-mu2}
\begin{proof}
    Let $\lambda \in \{2,6,8,10\}$ be chosen according to $\eps$ and $\mu$, as described later in the proof, and let $\hat\eps = \lambda \sqrt{\eps}$. Additionally, for every $j \ge 1$, let $\hat\eps_j = \hat\eps \mu(2j)$.

    Draw $s_1,s_2,\ldots \in \{+1, -1\}$ uniformly and independently, and let $\nu$ be defined as $\nu(i) = \mu(i) + (-1)^i s_j \hat\eps_j$, where $j = \ceil{i/2}$. Clearly, the entries of $\nu$ sum to $1$ and they are non-negative since $\hat\eps \le 1$.

    The collision norm of $\nu$ is:
    \begin{eqnarray*}
        \|\nu\|_2^2
        = \sum_{i=1}^\infty (\nu(i))^2
        &=& \sum_{j=1}^\infty \left((\mu(2j-1) - s_j \hat\eps_j)^2 + (\mu(2j) + s_j \hat\eps_j)^2\right) \\
        &=& \sum_{j=1}^\infty \left((\mu(2j-1))^2 + (\mu(2j))^2 + 2s_j \hat\eps_j(\mu(2j) - \mu(2j-1)) + 2\hat\eps_j^2\right) \\
        &=& \|\mu\|_2^2 + 2 \hat\eps \sum_{j=1}^\infty s_j \mu(2j) (\mu(2j) - \mu(2j-1)) + 2 \hat\eps^2 \sum_{j=1}^\infty (\mu(2j))^2 \\
        &=& \|\mu\|_2^2 + 2\lambda\sqrt{\eps} \sum_{j=1}^\infty s_j \mu(2j) (\mu(2j) - \mu(2j-1)) + 2\lambda^2 \eps \sum_{j=1}^\infty (\mu(2j))^2 \\
        &=& \|\mu\|_2^2 + 2\eps\left(\lambda \cdot \frac{1}{\sqrt{\eps}} \sum_{j=1}^\infty s_j \mu(2j) (\mu(2j) - \mu(2j-1)) + \lambda^2 \sum_{j=1}^\infty (\mu(2j))^2\right)
    \end{eqnarray*}

    Since $\frac{1}{\sqrt{2}}\mu(2j-1) \le \mu(2j) \le \mu(2j-1)$ for every $j \ge 1$, there exists some $1/3 \le \beta_\mu \le 1$ for which:
    \begin{eqnarray*}
        \|\nu\|_2^2
        &=& \|\mu\|_2^2 + 2\eps\left(\lambda \cdot \frac{1}{\sqrt{\eps}} \sum_{j=1}^\infty s_j \mu(2j) (\mu(2j) - \mu(2j-1)) + \lambda^2 \beta_\mu \|\mu\|_2^2\right) \\
        &=& \|\mu\|_2^2 + 2\eps \|\mu\|_2^2 \left(\lambda \cdot \frac{1}{\sqrt{\eps} \|\mu\|_2^2} \sum_{j=1}^\infty s_j \mu(2j) (\mu(2j) - \mu(2j-1)) + \lambda^2 \beta_\mu\right)
    \end{eqnarray*}

    Therefore, there exists some $1/3 \le \alpha_\nu \le 1$ for which:
    \[  \|\nu\|_2^2
        = \|\mu\|_2^2 + 2\alpha_\nu \eps\|\mu\|_2^2 \left(\lambda \cdot \frac{1}{\sqrt{\eps} \|\mu\|_2^2}\sum_{j=1}^\infty s_j \mu(2j) (\mu(2j) - \mu(2j-1)) + \lambda^2\right) \]

    Let $X = \frac{1}{\sqrt{\eps} \|\mu\|_2^2}\sum_{j=1}^\infty s_j \mu(2j) (\mu(2j) - \mu(2j-1))$. By Lemma \reflemma{disjoint-lambdas-for-eps-mu2-construction}, the four events for ``$\abs{\lambda X + \lambda^2} \le 5$'', when considering $\lambda$ in $\{2,6,8,10\}$, are disjoint. Therefore, there exists a choice of $\lambda$ in this set for which $\abs{\lambda X + \lambda^2} \ge 5$ with probability at least $3/4$. For this choice of $\lambda$:
    \[  \Pr\left[\|\nu\|_2^2 \notin \left(1 \pm \frac{5}{2}\eps\right)\|\mu\|_2^2\right]
        \ge \Pr\left[\|\nu\|_2^2 \notin (1 \pm 10\alpha_\nu \eps)\|\mu\|_2^2\right]
        \ge \frac{3}{4} \]

    For indistinguishability, observe that $\nu(i) \in (1 \pm \hat{\eps}_j)\mu(i)$ (for $j = \ceil{i/2}$) for every $i \ge 1$. Therefore, by independence of the $s_j$s and by Lemma \reflemma{poisson-dual-eps-dkl}, for an algorithm drawing $\Poi(q)$ samples from its input distribution,
    \begin{eqnarray*}
        \DKL{\nu^{\Poi(q)}}{\mu^{\Poi(q)}}
        &\le& 3 \sum_{j=1}^\infty \left(\frac{\hat\eps_j}{\mu(2j)}\right)^4 (\mu(2j))^2 q^2 \\
        &=& 3 \sum_{j=1}^\infty \left(\frac{\hat\eps \mu(2j)}{\mu(2j)}\right)^4 (\mu(2j))^2 q^2 \\
        &=& 3\hat\eps^4 q^2 \sum_{j=1}^\infty (\mu(2j))^2
        \le 3 \cdot \lambda^4 \eps^2 q^2 \cdot 2\|\mu\|_2^2
        \le 6 \cdot 10^4 \eps^2 \|\mu\|_2^2 q^2
    \end{eqnarray*}
    
    If $q \le \frac{1}{10^4\eps \|\mu\|_2}$, then:
    \[  \DKL{\nu^{\Poi(q)}}{\mu^{\Poi(q)}}
        \le 6\cdot 10^4 \cdot \eps^2 \|\mu\|_2^2 \cdot \frac{1}{10^8 \eps^2 \|\mu\|_2^2}
        \le \frac{1}{1600} \]
    And therefore, by Pinsker's inequality, the total-variation distance of the algorithm's answer when given $\mu$ or $\nu$ is bounded by $\sqrt{2/1600} < 1/12$.
\end{proof}

\repprovelemma{lbnd-eps-mu2}
\begin{proof}
    Without loss of generality, let the domain of $\mu$ be $\Omega = \{1,2,\ldots\}$ (with zero-probability elements allowed), and assume that $\mu(1) \ge \mu(2) \ge \cdots$. Also, we assume that $16\eps \le 1/8000$ (we can use $\eps' = \min\{\eps, 1/128000\}$).

    If $\mu(1) \ge \frac{1}{8}\|\mu\|_2^2$, then the lower bound is covered by Lemma \reflemma{eps2-lower-bound-extreme-mu}. We proceed assuming that $\mu(1) < \frac{1}{8}\|\mu\|_2$.

    Let $B_1$ be the set of odd indexes for which $\mu(i) > \sqrt{2}\mu(i+1)$. Observe that $B_1$ is exponentially decreasing (if $i_1 < i_2$ and both belong to $B_1$ then $\mu(i_2) \le \frac{1}{\sqrt{2}}\mu(i_1)$), and therefore, $\sum_{i\in B_1} (\mu(i))^2 \le \sum_{r=0}^\infty 2^{-i} \max_i (\mu(i))^2 = 2 \cdot \left(\frac{1}{8}\|\mu\|_2\right)^2 \le \frac{1}{32} \|\mu\|_2^2$. Also, $\mu(B) \le (\sum_{r=0}^\infty 2^{-i/2}) \max_{i\in B} \mu(i) \le (2 + \sqrt{2}) \cdot \frac{1}{8}\|\mu\|_2 \le \frac{3}{7}\|\mu\|_2$.

    Let $B_2 = \{ i : i - 1 \in B_1 \}$ be the set of even indexes paired with the indexes of $B_1$. By definition of $B_2$:
    \begin{eqnarray*}
        \mu(B_2) &\le& \sum_{i\in B_1} \frac{1}{\sqrt{2}}\mu(i) = \frac{1}{\sqrt{2}}\mu(B_1) \le \frac{(2+\sqrt{2})/8}{\sqrt{2}}\|\mu\|_2 \le \frac{4}{13}\|\mu\|_2 \\
        \sum_{i\in B_2} (\mu(i))^2 &\le& \sum_{i\in B_1} \left(\frac{1}{\sqrt{2}}\mu(i)\right)^2 \le \frac{1}{64}\|\mu\|_2^2
    \end{eqnarray*}

    Let $A = \Omega \setminus (B_1 \cup B_2)$, so that $\|\mu_A\|_2^2 \ge \sum_{i\in A} (\mu(i))^2 \ge \frac{15}{16}\|\mu\|_2^2$. By the bound on $\mu(B_1)$ and $\mu(B_2)$, $\mu(A) \ge 1 - \frac{67}{91}\|\mu\|_2 \ge \frac{1}{4}$.

    We can apply Lemma \reflemma{base-deviation-construction-for-eps-mu2} on the restricted distribution $\mu_A$ using $\eps' = 16\eps$, to obtain a distribution $\mathcal D_A$ over distributions over $A_1$. We define the distribution $\mathcal D$ over distributions over $\Omega$ by drawing $\nu_A$ from $\mathcal D_A$ and then let $\nu = \mu(A) \times \nu_A + \mu(B_1 \cup B_2) \times \mu_{B_1 \cup B_2}$. It holds that:
    \begin{itemize}
        \item When drawing $\nu$ from $\mathcal D$, with probability at least $3/4$, $\abs{\|\nu\|_2^2 - \|\mu\|_2^2} \ge (\mu(A))^2 \cdot \frac{9}{4} \cdot 16\eps \cdot \|\mu_A\|_2^2 \ge 2.1 \eps \|\mu\|_2^2 > \max\left\{ 1 - \frac{1-\eps}{1+\eps}, \frac{1+\eps}{1-\eps} - 1 \right\} \cdot \|\mu\|_2^2$.
        \item Any algorithm that distinguishes between $\mu$ and an input distribution $\nu$ drawn from $\mathcal D$, and therefore, must distinguish between $\mu_A$ and $\nu_A$, with total-variation distance greater than $1/12$, must draw $\Omega(1/(16\eps) \|\mu_A\|_2) = \Omega(1/\eps \|\mu\|_2)$ samples.
    \end{itemize}
    Therefore, $\mu$ is indistinguishable from distributions for which $(1 + \eps)\|\nu\|_2^2 < (1 - \eps)\|\mu\|_2^2$ or $(1 - \eps)\|\nu\|_2^2 > (1 + \eps)\|\mu\|_2^2$ when using fewer than $\Omega(1 / \eps \|\mu\|_2)$ samples.
\end{proof}

\newpage

\phantomsection

\section{An $\Omega(t_\mu / \eps^2)$ lower-bound}
We recall our notations: $\eps > 0$ is the threshold parameter, $\mu$ is a discrete distribution over a (possibly infinite) domain $\Omega$ and $t_\mu = \|\mu\|_3^3 / \|\mu\|_2^4 - 1$.

In this section we provide a few additional notations:
\begin{itemize}
    \item $a > 0$ is a construction parameter.
    \item $\eps_a(i) = \frac{a \eps}{t_\mu}\left(\frac{\mu(i)}{\|\mu\|_2^2} - 1\right)$.
    \item $\delta_a(i) = \mu(i) \cdot \eps_a(i)$.
\end{itemize}

We recall the construction described in the technical overview. For $\eps > 0$, $s \in \{+1,-1\}$ and $a > 0$, we define the distribution $\nu_{s,a}$ over the same domain $\Omega$ as $\nu_{s,a}(i) = \mu(i) + s \delta_a(i) = \mu(1 + s \eps_a(i))$.

This construction is valid if $t_\mu \ge a\eps/\|\mu\|_2$, as described in the following lemmas.

\begin{lemma}{lbnd-t-over-eps2-construction-non-negative}
    Let $\eps > 0$ and $a > 0$. Let $\mu$ be a discrete distribution over a domain $\Omega$ for which $t_\mu \ge a \eps/\|\mu\|_2$. For every $s \in \{+1, -1\}$ and $i \in \Omega$, $\nu_{s,a}(i) \ge 0$.
\end{lemma}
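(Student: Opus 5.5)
Since $\nu_{s,a}(i) = \mu(i)\,(1 + s\,\eps_a(i))$ and $\mu(i) \ge 0$, it suffices to show that $\abs{\eps_a(i)} \le 1$ for every $i \in \Omega$; then $1 + s\,\eps_a(i) \ge 0$ for both signs $s \in \{+1,-1\}$. The hypothesis gives $t_\mu \ge a\eps/\|\mu\|_2 > 0$, so $\eps_a(i)$ is well defined and
\[ \abs{\eps_a(i)} = \frac{a\eps}{t_\mu}\abs{\frac{\mu(i)}{\|\mu\|_2^2} - 1} \le \|\mu\|_2 \cdot \abs{\frac{\mu(i)}{\|\mu\|_2^2} - 1}. \]
So the whole argument reduces to the uniform bound $\abs{\mu(i)/\|\mu\|_2^2 - 1} \le 1/\|\mu\|_2$.

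I will prove this bound separately on each side of $1$.

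\textbf{Lower side.} Since $\mu(i) \ge 0$, we have $\mu(i)/\|\mu\|_2^2 - 1 \ge -1$. Moreover $1 \le 1/\|\mu\|_2$, because $\|\mu\|_2 \le \|\mu\|_1 = 1$.

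\textbf{Upper side.} We have $\mu(i) \le \max_j \mu(j) = \|\mu\|_\infty \le \|\mu\|_2$. Hence
\[ \frac{\mu(i)}{\|\mu\|_2^2} - 1 \le \frac{1}{\|\mu\|_2} - 1 < \frac{1}{\|\mu\|_2}. \]

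Combining the two sides gives $\abs{\mu(i)/\|\mu\|_2^2 - 1} \le 1/\|\mu\|_2$. Substituting into the display above yields $\abs{\eps_a(i)} \le 1$, and therefore $\nu_{s,a}(i) \ge 0$.

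There is no real obstacle here. The only point needing care is the upper deviation, which is handled by the elementary norm comparison $\|\mu\|_\infty \le \|\mu\|_2 \le 1$. Note that normalization, $\sum_i \nu_{s,a}(i) = 1$, is not part of this statement. It follows separately from $\sum_i \mu(i)\bigl(\mu(i)/\|\mu\|_2^2 - 1\bigr) = 0$.
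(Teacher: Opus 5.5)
Your proof is correct and follows the paper's argument. Use the hypothesis to bound $a\eps/t_\mu$ by $\|\mu\|_2$. Then control $\mu(i)/\|\mu\|_2^2 - 1$ from below via $\mu(i)\ge 0$ and $\|\mu\|_2\le 1$, and from above via $\mu(i)\le\|\mu\|_2$. This gives $|\eps_a(i)|\le 1$.

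Your bookkeeping is also cleaner than the paper's. The two sides correctly come out as $-\|\mu\|_2$ and $1$, whereas the paper's displays carry a stray factor of $\eps$ and an unnecessary appeal to $\eps\le 1$.
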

\begin{proof}
    It suffices to show that $\abs{\eps_a(i)} \le 1$ (so that $\nu_{s,a}(i) \ge (1 - \eps_a(i))\mu(i) \ge 0$) for every $i \in \Omega$.
    
    For the lower bound:
    \[  \eps_a(i)
        = \frac{a \eps}{t_\mu}\left(\frac{\mu(i)}{\|\mu\|_2^2} - 1\right)
        \ge -\frac{a \eps}{a\eps / \|\mu\|_2}
        = -\|\mu\|_2 \cdot \eps
        \ge -1
    \]
    
    For the upper bound, note that $\mu(i) \le \|\mu\|_2$ for every $i \in \Omega$, since $\|\mu\|_2^2 = \sum_{i \in \Omega} (\mu(i))^2$.
    \[  \eps_a(i)
        = \frac{a \eps}{t_\mu}\left(\frac{\mu(i)}{\|\mu\|_2^2} - 1\right)
        \le \frac{a \eps}{a \eps / \|\mu\|_2} \cdot \frac{\|\mu\|_2}{\|\mu\|_2^2}
        = \eps
        \le 1
    \]
\end{proof}

\begin{lemma}{lbnd-t-over-eps-construction-sums-to-1}
    Let $\eps > 0$ and $a > 0$. Let $\mu$ be a discrete distribution over a domain $\Omega$. For every $s \in \{+1, -1\}$, $\sum_{i \in \Omega} \nu_{s,a}(i) = 1$.
\end{lemma}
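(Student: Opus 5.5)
The claim follows by direct summation, using only that $\mu$ is a distribution together with the identity $\sum_{i\in\Omega}\mu(i)^2 = \|\mu\|_2^2$. By definition, $\nu_{s,a}(i) = \mu(i) + s\,\delta_a(i)$ with $\delta_a(i) = \mu(i)\eps_a(i)$. Summing over $\Omega$ therefore gives
\[ \sum_{i\in\Omega}\nu_{s,a}(i) = \sum_{i\in\Omega}\mu(i) + s\sum_{i\in\Omega}\delta_a(i) = 1 + s\sum_{i\in\Omega}\delta_a(i), \]
so it suffices to show that $\sum_{i\in\Omega}\delta_a(i) = 0$.

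To see this, pull out the constant factor $a\eps/t_\mu$:
\[ \sum_{i\in\Omega}\delta_a(i) = \frac{a\eps}{t_\mu}\sum_{i\in\Omega}\left(\frac{\mu(i)^2}{\|\mu\|_2^2} - \mu(i)\right) = \frac{a\eps}{t_\mu}\left(\frac{\|\mu\|_2^2}{\|\mu\|_2^2} - 1\right) = 0. \]
This vanishing is just the fact that, for $i\sim\mu$, $\E[\mu(i)] = \|\mu\|_2^2$, the same observation used in the proof of Lemma \reflemma{t-is-chebyshev}.

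Two routine points remain. First, splitting the sum is legitimate when $\Omega$ is infinite, because both $\sum_i \mu(i)$ and $\sum_i \mu(i)^2$ converge absolutely. Second, the construction implicitly requires $t_\mu > 0$; if $t_\mu = 0$ then $\mu$ is uniform on its support and $\eps_a$ is undefined, so this case is excluded by the setting of the section. There is no real obstacle in this argument.
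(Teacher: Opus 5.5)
Your proof is correct and is the same as the paper's: both sum the definition of $\nu_{s,a}$ directly and use $\sum_{i\in\Omega}\mu(i)^2 = \|\mu\|_2^2$ to make the perturbation term vanish. The paper's write-up sets $c_{s,a} = sa\eps$ and silently drops the factor $1/t_\mu$; your version keeps $a\eps/t_\mu$ and is the accurate one, though this changes nothing because the bracketed sum is zero.
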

\begin{proof}
    Let $c_{s,a} = s \cdot a \cdot \eps$. Directly by definition:
    \begin{eqnarray*}
        \sum_{i \in \Omega} \nu_{s,a}(i)
        &=& \sum_{i \in \Omega} \mu(i) \left(1 + c_{s,a}\left(\frac{\mu(i)}{\|\mu\|_2^2} - 1\right)\right) \\
        &=& 1 + c_{s,a} \sum_{i \in \Omega} \left(\frac{(\mu(i))^2}{\|\mu\|_2^2} - \mu(i)\right)
        = 1 + c_{s,a} \left(\frac{1}{\|\mu\|_2^2} \sum_{i \in \Omega} (\mu(i))^2 - 1\right)
        = 1 + c_{s,a} \cdot 0
        = 1
    \end{eqnarray*}
\end{proof}

We recall Lemma \reflemma{three-of-four-nu-have-far-mu22} and prove it.

\repprovelemma{three-of-four-nu-have-far-mu22}
\begin{proof}
    Observe that:
    \begin{eqnarray*}
        \|\nu_{s,a}\|_2^2
        &=& \sum_{i \in \Omega} (\mu(i))^2 \cdot \left(1 + s \frac{a\eps}{t_\mu} \left(\frac{\mu(i)}{\|\mu\|_2^2} - 1\right)^2\right) \\
        &=& \sum_{i \in \Omega} (\mu(i))^2 + \frac{2sa\eps}{t_\mu} \sum_{i \in \Omega} (\mu(i))^2 \left(\frac{\mu(i)}{\|\mu\|_2^2} - 1\right) + \frac{a^2 \eps^2}{t_\mu^2} \sum_{i \in \Omega} (\mu(i))^2 \left(\frac{\mu(i)}{\|\mu\|_2^2} - 1\right)^2 \\
        &=& \|\mu\|_2^2 + \frac{2sa\eps}{t_\mu} \left(\frac{\|\mu\|_3^3}{\|\mu\|_2^2} - \|\mu\|_2^2\right) + \frac{a^2 \eps^2}{t_\mu^2} \sum_{i \in \Omega} (\mu(i))^2 \left(\frac{\mu(i)}{\|\mu\|_2^2} - 1\right)^2 \\
        &=& \|\mu\|_2^2 + 2sa\eps\|\mu\|_2^2 + a^2 \eps\|\mu\|_2^2 \cdot \underbrace{\frac{\eps}{t_\mu^2 \|\mu\|_2^2} \sum_{i \in \Omega} (\mu(i))^2 \left(\frac{\mu(i)}{\|\mu\|_2^2} - 1\right)^2}
    \end{eqnarray*}

    We let $K = \frac{\eps}{t_\mu^2 \|\mu\|_2^2} \sum_{i \in \Omega} (\mu(i))^2 \left(\frac{\mu(i)}{\|\mu\|_2^2} - 1\right)^2$, and rephrase the result as $\|\mu\|_2^2(1 + \eps (2sa + a^2 K))$.
    
    Note that $K$, which depends on $\eps$ and $\mu$, is non-negative and independent of $a$. For $s=+1$ and $a \in \{3,8\}$, $2sa + a^2 K \ge 2a \ge 3$, and therefore, $\|\nu_{+1,a}\|_2^2 \ge (1 + 3\eps)\|\mu\|_2^2$.
    
    For $s=-1$:
    \begin{itemize}
        \item If $K_{\mu,\eps} \le 1/3$, then for $a=3$, $-2a + a^2 K_{\mu,\eps} = -6 + 9 K_{\mu,\eps} \le -3$.
        \item If $K_{\mu,\eps} \ge 1/3$, then for $a=8$, $-2a + a^2 K_{\mu,\eps} = -8 + 64K_{\mu,\eps} \ge 3$.
    \end{itemize}
    That is, at least one $\nu \in \{\nu_{-1,3}, \nu_{-1,8}\}$ has $\abs{\|\nu\|_2^2 - \|\mu\|_2^2} \ge 3\eps\|\mu\|_2^2$.
\end{proof}

Before we show that it is hard to distinguish between $\mu$ and the constructed distributions, we state a few technical lemmas, whose proofs are deferred to the last subsection.

\begin{lemma}{lbnd-over-eps2-expval-eps-a}
    $\E_{i \sim \mu}[\eps_a(i)] = 0$.
\end{lemma}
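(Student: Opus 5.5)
The claim $\E_{i\sim\mu}[\eps_a(i)]=0$ follows from a single linearity-of-expectation computation, using only the definition $\eps_a(i)=\frac{a\eps}{t_\mu}\bigl(\frac{\mu(i)}{\|\mu\|_2^2}-1\bigr)$ and the identity $\E_{i\sim\mu}[\mu(i)]=\|\mu\|_2^2$. The latter was already noted in the proof of Lemma \reflemma{t-is-chebyshev}.

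Concretely, I would write
\[ \E_{i\sim\mu}[\eps_a(i)] = \sum_{i\in\Omega}\mu(i)\cdot\frac{a\eps}{t_\mu}\left(\frac{\mu(i)}{\|\mu\|_2^2}-1\right) = \frac{a\eps}{t_\mu}\left(\frac{1}{\|\mu\|_2^2}\sum_{i\in\Omega}(\mu(i))^2 - \sum_{i\in\Omega}\mu(i)\right) = \frac{a\eps}{t_\mu}(1-1)=0. \]
This is the same cancellation as in the proof of Lemma \reflemma{lbnd-t-over-eps-construction-sums-to-1}. In fact, the present statement is equivalent to that lemma, since $\sum_i \nu_{s,a}(i) = 1 + s\,\E_{i\sim\mu}[\eps_a(i)]$, so one could alternatively cite it directly.

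There is one technicality rather than a real obstacle. On an infinite domain, splitting the series into two sums requires both to converge. They do: $\sum_i \mu(i)^2=\|\mu\|_2^2\le 1$ and $\sum_i\mu(i)=1$, so each series converges absolutely and linearity applies. I would also note that $t_\mu>0$ is implicitly assumed, as it is throughout this section, so that $\eps_a$ is well defined.
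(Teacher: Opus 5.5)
Your proof is correct and matches the paper's argument: write the expectation as $\sum_i \mu(i)\eps_a(i)$ and split it into $\frac{1}{\|\mu\|_2^2}\sum_i \mu(i)^2 - \sum_i \mu(i) = 1-1 = 0$. Your side remarks are accurate and are things the paper leaves implicit: absolute convergence on infinite domains, the need for $t_\mu>0$, and the equivalence with the sums-to-one lemma via $\sum_i \nu_{s,a}(i) = 1 + s\,\E_{i\sim\mu}[\eps_a(i)]$.
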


\begin{lemma}{lbnd-over-eps2-expval-eps-a-squared}
    $\E_{i \sim \mu}[(\eps_a(i))^2] = a^2 \eps^2 / t_\mu$.
\end{lemma}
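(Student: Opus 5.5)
This is a direct computation from the definitions, with no probabilistic argument needed. Write $\eps_a(i) = \frac{a\eps}{t_\mu}\bigl(\frac{\mu(i)}{\|\mu\|_2^2} - 1\bigr)$ and pull the constant $\frac{a^2\eps^2}{t_\mu^2}$ out of the expectation. This reduces the claim to showing
\[ \E_{i\sim\mu}\Bigl[\Bigl(\tfrac{\mu(i)}{\|\mu\|_2^2} - 1\Bigr)^2\Bigr] = t_\mu . \]
Once that identity holds, the prefactor gives $\frac{a^2\eps^2}{t_\mu^2}\cdot t_\mu = a^2\eps^2/t_\mu$, as stated.

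To prove the identity, I would reuse the random-variable viewpoint from the proof of Lemma \reflemma{t-is-chebyshev}. Draw $i\sim\mu$ and set $X=\mu(i)$. Then $\E[X]=\|\mu\|_2^2$ and $\E[X^2]=\|\mu\|_3^3$. Consequently $\E\bigl[(X/\E[X]-1)^2\bigr] = \Var[X]/\E[X]^2 = (\|\mu\|_3^3-\|\mu\|_2^4)/\|\mu\|_2^4 = t_\mu$.

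As a cross-check, I would also expand the square directly:
\[ \sum_i \mu(i)\Bigl(\frac{\mu(i)^2}{\|\mu\|_2^4} - \frac{2\mu(i)}{\|\mu\|_2^2} + 1\Bigr) = \frac{\|\mu\|_3^3}{\|\mu\|_2^4} - 2 + 1 = t_\mu . \]
This uses only $\sum_i\mu(i)=1$ and the definitions of the two norms.

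No step here is a real obstacle. The only care needed is to assume $t_\mu>0$ so that $\eps_a$ is defined; this is implicit in the construction, which requires $t_\mu \ge a\eps/\|\mu\|_2 > 0$. For an infinite domain, all the series involved converge absolutely because $\|\mu\|_3^3\le 1$, so exchanging the sum and the expansion is justified. Lemma \reflemma{lbnd-over-eps2-expval-eps-a} is the analogous first-moment computation, and I would present the two proofs side by side.
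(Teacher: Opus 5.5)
Your proposal is correct and matches the paper's proof. The paper pulls out $\frac{a^2\eps^2}{t_\mu^2}$ and expands the square to get $\frac{\|\mu\|_3^3}{\|\mu\|_2^4} - 2 + 1 = t_\mu$, which is exactly your cross-check; your variance phrasing borrowed from Lemma \reflemma{t-is-chebyshev} is the same computation in different words.
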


\begin{lemma}{lbnd-t-over-eps2-expval-same-sign}
    For $s \in \{+1, -1\}$, $\E_{i \sim \mu}[(1 + s \eps_a(i))^2] \le e^{a^2 \eps^2 / t_\mu}$.
\end{lemma}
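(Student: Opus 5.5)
We need to prove that $\E_{i\sim\mu}[(1+s\eps_a(i))^2] \le e^{a^2\eps^2/t_\mu}$. The plan is to expand the square and use the two previous technical lemmas, followed by the elementary inequality $1+x\le e^x$.

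First expand by linearity of expectation:
\[
  \E_{i\sim\mu}\left[(1+s\eps_a(i))^2\right] = 1 + 2s\,\E_{i\sim\mu}[\eps_a(i)] + s^2\,\E_{i\sim\mu}\left[(\eps_a(i))^2\right].
\]
Since $s\in\{+1,-1\}$, we have $s^2=1$.

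Next evaluate the two expectations:
\begin{itemize}
    \item By Lemma \reflemma{lbnd-over-eps2-expval-eps-a}, the middle term vanishes. This lemma itself reduces to the identity $\E_{i\sim\mu}[\mu(i)/\|\mu\|_2^2]=1$, which is the same computation as in Lemma \reflemma{lbnd-t-over-eps-construction-sums-to-1}.
    \item By Lemma \reflemma{lbnd-over-eps2-expval-eps-a-squared}, the last term equals $a^2\eps^2/t_\mu$. This follows from $\E_{i\sim\mu}[(\mu(i)/\|\mu\|_2^2-1)^2] = \|\mu\|_3^3/\|\mu\|_2^4 - 1 = t_\mu$, so that $\E_{i\sim\mu}[(\eps_a(i))^2] = (a\eps/t_\mu)^2\, t_\mu$.
\end{itemize}

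Substituting, the expectation equals exactly $1 + a^2\eps^2/t_\mu$. Applying $1+x\le e^x$ with $x = a^2\eps^2/t_\mu$ gives the claimed bound.

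There is no real obstacle here. The only points to check are that the sums converge absolutely over a possibly infinite domain, so that linearity applies, and that $t_\mu>0$ for the expressions to be defined. Both hold because $\mu(i)\le 1$ gives $\sum_i \mu(i)^3<\infty$, and because the construction is only used under the assumption $t_\mu \ge a\eps/\|\mu\|_2 > 0$.
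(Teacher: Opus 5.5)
Your proof is correct and is essentially identical to the paper's. Both expand the square, kill the linear term with Lemma \reflemma{lbnd-over-eps2-expval-eps-a}, evaluate the quadratic term as $a^2\eps^2/t_\mu$ via Lemma \reflemma{lbnd-over-eps2-expval-eps-a-squared}, and finish with $1+x\le e^x$. Your added remarks on absolute convergence and on $t_\mu>0$ are appropriate details that the paper leaves implicit.
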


\begin{lemma}{lbnd-t-over-eps2-expval-pm}
    $\E_{i \sim \mu}[1 - (\eps_a(i))^2] \le e^{-a^2 \eps^2/t_\mu}$.
\end{lemma}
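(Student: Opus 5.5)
The plan is to reduce the statement to the second-moment identity of Lemma \reflemma{lbnd-over-eps2-expval-eps-a-squared} and then apply the elementary inequality $1 - x \le e^{-x}$, which holds for every real $x$. This expectation is the cross term $\E_{i\sim\mu}[(1+\eps_a(i))(1-\eps_a(i))]$ that will appear in the $\chi^2$-computation for the mixture $\frac{1}{2}\nu_{+1,a}^q + \frac{1}{2}\nu_{-1,a}^q$. Its integrand factors as $1 - (\eps_a(i))^2$, so only the second moment of $\eps_a$ matters.

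First, by linearity of expectation, $\E_{i \sim \mu}[1 - (\eps_a(i))^2] = 1 - \E_{i\sim\mu}[(\eps_a(i))^2]$. This is legitimate for an infinite domain: $\eps_a$ is bounded by Lemma \reflemma{lbnd-t-over-eps2-construction-non-negative} when $t_\mu \ge a\eps/\|\mu\|_2$, and in any case $\E_{i\sim\mu}[(\eps_a(i))^2]$ is finite by the computation below.

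Second, I invoke Lemma \reflemma{lbnd-over-eps2-expval-eps-a-squared} to replace the second moment by $a^2\eps^2/t_\mu$. For the reader's convenience I would also recall why that identity holds. Let $X = \mu(i)$ for $i \sim \mu$. Then $\E[X] = \|\mu\|_2^2$ and $\Var[X] = \|\mu\|_3^3 - \|\mu\|_2^4$, as in the proof of Lemma \reflemma{t-is-chebyshev}. Hence
\[ \E_{i\sim\mu}[(\eps_a(i))^2] = \frac{a^2\eps^2}{t_\mu^2}\cdot\frac{\Var[X]}{\|\mu\|_2^4} = \frac{a^2\eps^2}{t_\mu^2}\cdot t_\mu = \frac{a^2\eps^2}{t_\mu}. \]

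Finally, substituting gives $\E_{i\sim\mu}[1 - (\eps_a(i))^2] = 1 - a^2\eps^2/t_\mu \le e^{-a^2\eps^2/t_\mu}$, using $1 - x \le e^{-x}$ with $x = a^2\eps^2/t_\mu$.

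I expect no genuine obstacle. The one point to handle carefully is not to confuse this bound with the same-sign bound of Lemma \reflemma{lbnd-t-over-eps2-expval-same-sign}. There the linear term vanishes by Lemma \reflemma{lbnd-over-eps2-expval-eps-a}, and the second moment enters with a plus sign. Here it enters with a minus sign, which is exactly what produces the decaying exponential.
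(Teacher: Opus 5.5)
Your proof is correct and matches the paper's. Both apply linearity of expectation, substitute $\E_{i\sim\mu}[(\eps_a(i))^2] = a^2\eps^2/t_\mu$ from Lemma \reflemma{lbnd-over-eps2-expval-eps-a-squared}, and finish with $1-x \le e^{-x}$.
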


At this point we restate Lemma \reflemma{lbnd-t-over-eps2-hardness-pair}, about distinguishing between $\mu$ and two constructed distributions, $\nu_{\pm 1, a}$, and prove it.

\repprovelemma{lbnd-t-over-eps2-hardness-pair}
\begin{proof}
    We use Lemma \reflemma{chi-sqr-plus-one} to bound the $\chi^2$ divergence. Our random variable is $\vec{x} = (x_1,\ldots,x_q)$.
    
    \begin{eqnarray*}
        1 + \chi^2\left(\frac{1}{2}\nu_{+1,a}^q + \frac{1}{2}\nu_{-1,a}^q, \mu^q\right)
        &=& \E_{\vec{x} \sim \mu^q} \left[\left(\frac{\frac{1}{2}\nu_{+1,a}^q(\vec{x}) + \frac{1}{2}\nu_{-1,a}^q(\vec{x})}{\mu^q(\vec{x})}\right)^2\right] \\
        &=& \frac{1}{4} \E_{\vec{x} \sim \mu^q} \left[\left(\frac{\nu_{+1,a}^q(\vec{x}) + \nu_{-1,a}^q(\vec{x})}{\mu^q(\vec{x})}\right)^2\right]    
    \end{eqnarray*}

    For the inner expression, we expand the square to obtain:
    \[ \frac{\prod_{\ell=1}^q (\nu_{+1,a} (x_\ell))^2 + 2 \prod_{\ell=1}^q \nu_{+1,a} (x_\ell) \nu_{-1,a}(x_\ell) + \prod_{\ell=1}^q (\nu_{-1,a} (x_\ell))^2}{\prod_{\ell=1}^q (\mu(x_\ell))^2} \]
    
    We use the definition of $\nu_{s,a}(i) = \mu(i) \left(1 + s \eps_{a}(i)\right)$ to obtain:
    \[ \frac{\prod_{\ell=1}^q (\mu(x_\ell))^2 (1 + \eps_a(x_\ell))^2 + 2 \prod_{\ell=1}^q (\mu(x_\ell))^2 (1 + \eps_a(x_\ell)) (1 - \eps_a(x_\ell)) + \prod_{\ell=1}^q (\mu(x_\ell))^2 (1 - \eps_a(x_\ell))^2}{\prod_{\ell=1}^q (\mu(x_\ell))^2} \]

    The $\mu$-factors cancel:
    \[  1 + \chi^2\left(\frac{1}{2}\nu_{+1,a}^q + \frac{1}{2}\nu_{-1,a}^q, \mu^q\right)
        = \frac{1}{4} \E_{\vec{x} \sim \mu^q} \left[\prod_{\ell=1}^q (1 + \eps_a(x_\ell))^2 + 2 \prod_{\ell=1}^q (1 - (\eps_a(x_\ell))^2) + \prod_{\ell=1}^q (1 - \eps_a(x_\ell))^2\right]
    \]

    By linearity of expectation, we sum the expected value of each product separately. The $x_\ell$s are drawn from $\mu^q$ and therefore independent and identically distributed. Therefore, $1 + \chi^2\left(\frac{1}{2}\nu_{+1,a}^q + \frac{1}{2}\nu_{-1,a}^q, \mu^q\right)$ equals to:
    \[  \frac{1}{4} \left(\E_{i \sim \mu} \left[(1 + \eps_a(i))^2\right]\right)^q + \frac{1}{2} \left(\E_{i \sim \mu} \left[(1 - (\eps_a(i))^2)\right]\right)^q + \frac{1}{4} \left(\E_{i \sim \mu}\left[(1 - \eps_a(i))^2\right]\right)^q
    \]

    We use Lemma \reflemma{lbnd-t-over-eps2-expval-same-sign} and Lemma \reflemma{lbnd-t-over-eps2-expval-pm} to obtain that:
    \begin{eqnarray*}
        1 + \chi^2\left(\frac{1}{2}\nu_{+1,a}^q + \frac{1}{2}\nu_{-1,a}^q, \mu^q\right)
        &\le& \frac{1}{2} e^{(a^2 \eps^2 / t_\mu) q} + \frac{1}{2} e^{- (a^2 \eps^2 / t_\mu) q} \\
        &=& \cosh ((a^2 \eps^2 / t_\mu)q)
        \le \cosh (8^2 / 500)
        \le 1 + 1/121
    \end{eqnarray*}
    
    We cancel the $1$ term to obtain that $\chi^2\left(\frac{1}{2}\nu_{+1,a}^q + \frac{1}{2}\nu_{-1,a}^q, \mu^q\right) \le 1/121$.
    
    To conclude, we Pinsker's inequality (Lemma \reflemma{pinsker}) and the KL-$\chi^2$ bound (Lemma \reflemma{dkl-bounded-by-chi-sqr}):
    \[  \dtv
        \le \sqrt{\dkl / 2}
        \le \sqrt{\chi^2 / (2 \ln 2)}
        \le \sqrt{1 / (121 \cdot 2 \ln 2)}
        \le \frac{1}{12}
    \]
\end{proof}

We conclude by restating the lower bound (Lemma \reflemma{lbnd-t-eps2}) and proving it.

\repprovelemma{lbnd-t-eps2}
\begin{proof}
    If $t_\mu < 8\eps/\|\mu\|_2$, then we use Lemma \reflemma{lbnd-eps-mu2} to derive the lower bound $\Omega(1/\eps\|\mu\|_2) = \Omega(t_\mu/\eps^2)$. In the following, we assume that $t_\mu \ge 8\eps/\|\mu\|_2$ to show a lower bound of $t_\mu / 500\eps^2 = \Omega(t_\mu / \eps^2)$ samples.
    
    Let $\nu$ be a distribution uniformly chosen from $\{\nu_{+1,3}, \nu_{-1,3}, \nu_{+1,8}, \nu_{-1,8}\}$. With probability at least $3/4$, $\abs{\|\nu\|_2^2 - \|\mu\|_2^2} \ge 3\eps\|\mu\|_2^2$ (Lemma \reflemma{three-of-four-nu-have-far-mu22}). In other words, either $(1 + \eps)\|\nu\|_2^2 < (1 - \eps)\|\mu\|_2^2$ or $(1 - \eps)\|\nu\|_2^2 > (1 + \eps)\|\mu\|_2^2$.

    The total-variation distance between $q$ samples drawn from $\mu$ and $q$ samples drawn from the chosen $\nu$ is:
    \[  \dtv\left(\mu^q, \frac{1}{4}\nu_{+1,3}^q + \frac{1}{4}\nu_{-1,3}^q + \frac{1}{4}\nu_{+1,8}^q + \frac{1}{4}\nu_{-1,8}^q\right)
    \]

    By the triangle inequality, this is bounded by:
    \[ \frac{1}{2} \dtv\left(\mu^q, \frac{1}{2}\nu_{+1,3}^q + \frac{1}{2}\nu_{-1,3}^q\right) + \frac{1}{2} \dtv\left(\mu^q, \frac{1}{2}\nu_{+1,8}^q + \frac{1}{2}\nu_{-1,8}^q\right)
    \]

    If $q \le t_\mu/500\eps^2$, then By Lemma \reflemma{lbnd-t-over-eps2-hardness-pair}, this is bounded by $\frac{1}{2} \cdot \frac{1}{12} + \frac{1}{2} \cdot \frac{1}{12} = \frac{1}{12}$. Therefore, we cannot distinguish with probability $3/4 - 1/12 = 2/3$ between the given $\mu$ and distributions for which the ranges $(1 \pm \eps)\|\nu\|_2^2$ and $(1 \pm \eps)\|\mu\|_2^2$ are disjoint.
\end{proof}

\subsection{Deferred proofs of technical lemmas}
The next section begins at Page \pageref{sec:one-after-t-eps2-lbnd}.

\repprovelemma{lbnd-over-eps2-expval-eps-a}
\begin{proof}
    \[  \E_{i \sim \mu}[\eps_a(i)]
        = \frac{a\eps}{t_\mu} \sum_{i \in \Omega} \mu(i) \left(\frac{\mu(i)}{\|\mu\|_2^2} - 1\right)
        = \frac{a\eps}{t_\mu} \frac{1}{\|\mu\|_2^2} \sum_{i \in \Omega} (\mu(i))^2 - \sum_{i \in \Omega} \mu(i)
        = \frac{a\eps}{t_\mu} \cdot (1 - 1)
        = 0
    \]
\end{proof}

\repprovelemma{lbnd-over-eps2-expval-eps-a-squared}
\begin{proof}
    \begin{eqnarray*}
        \E_{i \sim \mu}[(\eps_a(i))^2]
        &=& \frac{a^2 \eps^2}{t_\mu^2} \sum_{i \in \Omega} \mu(i) \left(\frac{\mu(i)}{\|\mu\|_2^2} - 1\right)^2 \\
        &=& \frac{a^2 \eps^2}{t_\mu^2} \sum_{i \in \Omega} \frac{(\mu(i))^3}{\|\mu\|_2^4} - 2\sum_{i \in \Omega} \frac{(\mu(i))^2}{\|\mu\|_2^2} + \sum_{i \in \Omega} \mu(i)
        = \frac{a^2 \eps^2}{t_\mu^2} \frac{\|\mu\|_3^3}{\|\mu\|_2^4} - 2 + 1
        = \frac{a^2 \eps^2}{t_\mu}
    \end{eqnarray*}
\end{proof}

\repprovelemma{lbnd-t-over-eps2-expval-same-sign}
\begin{proof}
    By Lemma \reflemma{lbnd-over-eps2-expval-eps-a} and \reflemma{lbnd-over-eps2-expval-eps-a-squared},
    \[  \E_{i \sim \mu}[(1 + s \eps_a(i))^2]
        = 1 + 2s \E_{i \sim \mu}[\eps_a(i)] + \E_{i \sim \mu}[(\eps_a(i))^2]
        = 1 + 2s \cdot 0 + \frac{a^2 \eps^2}{t_\mu}
        \le e^{a^2 \eps^2 / t_\mu}
    \]
\end{proof}

\repprovelemma{lbnd-t-over-eps2-expval-pm}
\begin{proof}
    By Lemma \reflemma{lbnd-over-eps2-expval-eps-a-squared},
    \begin{eqnarray*}
        \E_{i \sim \mu}[1 - (\eps_a(i))^2]
        = 1 - \E_{i \sim \mu}[(\eps_a(i))^2]
        = 1 - \frac{a^2 \eps^2}{t_\mu}
        \le e^{-a^2 \eps^2/t_\mu}
    \end{eqnarray*}
\end{proof}

\label{sec:one-after-t-eps2-lbnd}

\section{Acknowledgements}
I would like to thank my PhD advisors, Eldar Fischer (Technion) and Amit Levi (University of Haifa), for giving me both the tools and the opportunity for doing this advise-free, independent research project.

\newpage
\appendix
\section{Standard notations and large-deviation bounds}
\label{apx:std}

Formal notations regarding distributions:
\begin{itemize}
    \item The notations $\mu(i)$ and $\Pr_\mu[i]$ are considered identical.
    \item For an element $i\notin \Omega$, we consider $\mu(i) = 0$.
    \item For a set $A$, the notations $\mu(A)$, $\Pr_\mu[A]$ and $\sum_{i\in A} \mu(i)$ are considered identical.
    \item For a set $A$ for which $\mu(A) > 0$, the \emph{conditional distribution} $\mu_A$ is defined as $\mu_A : A \to [0,1]$ using $\mu_A(i) = \mu(i) / \mu(A)$.
    \item The notations $\mu_A(i)$, $\Pr_{\mu_A}[i]$ and $\Pr_\mu[i | A]$ are considered identical.
    \item For a set $B$, the notations $\mu_A(B)$, $\Pr_{\mu_A}[B]$ and $\Pr_\mu[B | A]$ are considered identical.
\end{itemize}

For a \emph{random variable} $X$, represented as a function $X : \Omega \to \mathbb R$:
\begin{itemize}
    \item The \emph{expected value} of $X$ (according to $\mu$) is defined as $\E_\mu[X] = \sum_{i \in \Omega} \mu(i) \cdot X(i)$.
    \item The \emph{variance} of $X$ (according to $\mu$) is defined as $\Var_\mu[X] = \E_\mu[X^2] - (\E_\mu[X])^2$.
\end{itemize}

Large-deviation bounds:
\begin{itemize}
    \item Markov (first-moment): if $\Pr[X \ge 0] = 1$ and $a > 0$, then $\Pr[X \ge a\E[X]] \le 1/a$.
    \item Chebyshev (second-moment): If $a > 0$, then $\Pr[\abs{X - \E[X]} \ge a] \le \Var[X] / a$.
    \item Chernoff (additive): if $a > 0$, then $\Pr[\Bin(n,p) \le np - a], \Pr[\Bin(n,p) \ge np + a] \le e^{-2a^2/n}$.
    \item Chernoff (multiplicative): If $\delta > 0$, then $\Pr[\Bin(n,p) \ge (1+\delta)np] \le e^{-\frac{\delta^2}{2+\delta} np}$.
    \item Chernoff (multiplicative): If $0 < \delta < 1$, then $\Pr[\Bin(n,p) \le (1-\delta)np] \le e^{-\frac{1}{2}\delta^2 np}$.
\end{itemize}
Note that the multiplicative Chernoff bounds are also applicable to Poisson distribution (through $\lambda = np$), and that the coefficient $\frac{\delta^2}{2+\delta}$ is usually relaxed to $\frac{1}{3}\delta$ for $\delta \ge 1$ and to $\frac{1}{3}\delta^2$ for $0 < \delta \le 1$.

\newpage
\addcontentsline{toc}{section}{References}
% \label{sec:one-after-t-eps2-lbnd}

\bibliographystyle{alpha}
\bibliography{main}

\newcommand{\etalchar}[1]{$^{#1}$}
\begin{thebibliography}{DGPP16}

\bibitem[ADK15]{ADK15}
Jayadev Acharya, Constantinos Daskalakis, and Gautam Kamath.
\newblock Optimal testing for properties of distributions.
\newblock {\em Advances in Neural Information Processing Systems}, 28, 2015.

\bibitem[BC17]{BC17}
Tugkan Batu and Cl{\'{e}}ment~L. Canonne.
\newblock Generalized uniformity testing.
\newblock {\em arxiv}, abs/1708.04696, 2017.

\bibitem[BFF{\etalchar{+}}01]{batuFFKRW2001}
Tugkan Batu, Eldar Fischer, Lance Fortnow, Ravi Kumar, Ronitt Rubinfeld, and Patrick White.
\newblock Testing random variables for independence and identity.
\newblock In {\em Proceedings 42nd IEEE Symposium on Foundations of Computer Science}, pages 442--451. IEEE, 2001.

\bibitem[DGPP16]{DGPP16}
Ilias Diakonikolas, Themis Gouleakis, John Peebles, and Eric Price.
\newblock Collision-based testers are optimal for uniformity and closeness.
\newblock {\em arXiv preprint arXiv:1611.03579}, 2016.

\bibitem[DK16]{DK16}
Ilias Diakonikolas and Daniel~M Kane.
\newblock A new approach for testing properties of discrete distributions.
\newblock In {\em 2016 IEEE 57th Annual Symposium on Foundations of Computer Science (FOCS)}, pages 685--694. IEEE, 2016.

\bibitem[DKN15]{DKN15}
Ilias Diakonikolas, Daniel~M Kane, and Vladimir Nikishkin.
\newblock Testing identity of structured distributions.
\newblock In {\em Proceedings of the Annual ACM-SIAM Symposium on Discrete Algorithms}, pages 1841--1854, 2015.

\bibitem[Gol20]{goldreich2020uniform}
Oded Goldreich.
\newblock The uniform distribution is complete with respect to testing identity to a fixed distribution.
\newblock In {\em Computational Complexity and Property Testing: On the Interplay Between Randomness and Computation}, pages 152--172. Springer, 2020.

\bibitem[Pan08]{paninski2008coincidence}
Liam Paninski.
\newblock A coincidence-based test for uniformity given very sparsely sampled discrete data.
\newblock {\em IEEE Transactions on Information Theory}, 54(10):4750--4755, 2008.

\bibitem[Val11]{valiant2011testing}
Paul Valiant.
\newblock Testing symmetric properties of distributions.
\newblock {\em SIAM Journal on Computing}, 40(6):1927--1968, 2011.

\bibitem[VV13]{VV13}
Gregory Valiant and Paul Valiant.
\newblock Instance-by-instance optimal identity testing.
\newblock {\em Electron. Colloquium Comput. Complex.}, {TR13-111}, 2013.

\bibitem[VV17]{vv17automatic}
Gregory Valiant and Paul Valiant.
\newblock An automatic inequality prover and instance optimal identity testing.
\newblock {\em SIAM Journal on Computing}, 46(1):429--455, 2017.

\end{thebibliography}

\end{document}